\definecolor{linkcolor}{rgb}{0,0,0.5}
\begin{document}
\hyphenation{Brow-serID}
\hyphenation{in-fra-struc-ture}
\hyphenation{brow-ser}
\hyphenation{doc-u-ment}
\hyphenation{Chro-mi-um}
\hyphenation{meth-od}
\hyphenation{sec-ond-ary}
\hyphenation{Java-Script}
\hyphenation{Mo-zil-la}
\hyphenation{post-Mes-sage}

\title{The Web SSO Standard \emph{OpenID Connect}:\\In-Depth Formal Security Analysis\\and Security Guidelines}
 \author{\IEEEauthorblockN{Daniel Fett, Ralf K\"usters, and Guido Schmitz}
   \IEEEauthorblockA{
     University of Stuttgart,
     Germany\\
     Email: \texttt{\{daniel.fett,ralf.kuesters,guido.schmitz\}@sec.uni-stuttgart.de}
   }
 }

\maketitle

\ifdraft{
\listoftodos

}{ }

\begin{abstract}
  Web-based single sign-on (SSO) services such as \emph{Google Sign-In} and \emph{Log In with Paypal} are based on the \emph{OpenID Connect} protocol. This protocol enables so-called relying parties to delegate user authentication to so-called identity providers. OpenID Connect is one of the newest and most widely deployed single sign-on protocols on the web. Despite its importance, it has not received much attention from security researchers so far, and in particular, has not undergone any rigorous security analysis.

In this paper, we carry out the first in-depth security analysis of OpenID Connect. To this end, we use a comprehensive generic model of the web to develop a detailed formal model of OpenID Connect. Based on this model, we then precisely formalize and prove central security properties for OpenID Connect, including authentication, authorization, and session integrity properties.

In our modeling of OpenID Connect, we employ security measures in order to avoid attacks on OpenID Connect that have been discovered previously and new attack variants that we document for the first time in this paper. Based on these security measures, we propose security guidelines for implementors of OpenID Connect. Our formal analysis demonstrates that these guidelines are in fact effective and sufficient.

\end{abstract}

\section{Introduction}
\label{sec:introduction}

OpenID Connect is a protocol for delegated authentication in the
web: A user can log into a relying party (RP) by
authenticating herself at a so-called identity provider (IdP). For
example, a user may sign into the website \nolinkurl{tripadvisor.com}
using her Google account.

Although the names might suggest otherwise, OpenID Connect (or
\emph{OIDC} for short) is not based on the older OpenID
protocol. Instead, it builds upon the OAuth~2.0 framework, which
defines a protocol for delegated \emph{authorization} (e.g., a user
may grant a third party website access to her resources at Facebook).
While OAuth~2.0 was not designed to provide \emph{authentication},
it has often been used for this purpose as well, leading to several
severe security flaws in the past \cite{Bradley-OAuth-Authentication-2012,Wangetal-USENIX-Explicating-SDKs-2013}.

OIDC was created not only to retrofit authentication into OAuth~2.0 by
using cryptographically secured tokens and a precisely defined method
for user authentication, but also to enable additional important features. For
example, using the \emph{Discovery} extension, RPs can automatically
identify the IdP that is responsible for a given identity. With the
\emph{Dynamic Client Registration} extension, RPs do not need a manual set-up
process to work with a specific IdP, but can instead register
themselves at the IdP on the fly.

Created by the OpenID Foundation and standardized only in
November 2014, OIDC is already very widely used. Among others, it is used and
supported by Google, Amazon, Paypal, Salesforce, Oracle, Microsoft,
Symantec, Verizon, Deutsche Telekom, PingIdentity, RSA Security,
VMWare, and IBM. Many corporate and end-user single sign-on solutions
are based on OIDC, for example, well-known services such as
\emph{Google Sign-In} and \emph{Log In with Paypal}.

Despite its wide use, OpenID Connect has not received much attention
from security researchers so far (in contrast to OpenID and OAuth~2.0). In
particular, there have been no formal analysis efforts for OpenID
Connect until now. In fact, the only previous works on the security of
OpenID Connect are a large-scale study of deployments of Google's implementation of
OIDC performed by Li and Mitchell~\cite{LiMitchell-DIMVA-2016} and an 
informal evaluation by Mainka et
al.~\cite{MainkaMladenovSchwenkWich-EuroSP-2017}.

In this work, we aim to fill the gap and formally verify the security
of OpenID Connect.

\mpar{Contributions of this Paper.} We provide the first in-depth
formal security analysis of OpenID Connect. Based on a comprehensive
formal web model and strong attacker models, we analyze the security
of all flows available in the OIDC standard, including many of the
optional features of OIDC and the important Discovery and Dynamic
Client Registration extensions. More specifically, our contributions
are as follows.

\paragraph{Attacks on OIDC and Security Guidelines}
We first compile an overview of attacks on OIDC, common pitfalls, and
their respective mitigations. Most of these attacks were documented
before, but we point out new attack variants and aspects.

Starting from these attacks and pitfalls, we then derive security
guidelines for implementors of OIDC. Our guidelines are backed-up by
our formal security analysis, showing that the mitigations that
we propose are in fact effective and sufficient.

\paragraph{Formal model of OIDC} Our formal analysis of OIDC is based
on the expressive Dolev-Yao style model of the web infrastructure (FKS
model) proposed by Fett, K{\"u}sters, and
Schmitz~\cite{FettKuestersSchmitz-SP-2014}. This web model is designed
independently of a specific web application and closely mimics
published (de-facto) standards and specifications for the web, for
instance, the HTTP/1.1 and HTML5 standards and associated (proposed)
standards. It is the most comprehensive web model to date. Among
others, HTTP(S) requests and responses, including several headers,
such as cookie, location, referer, authorization, strict transport
security (STS), and origin headers, are modeled. The model of web
browsers captures the concepts of windows, documents, and iframes,
including the complex navigation rules, as well as modern
technologies, such as web storage, web messaging (via postMessage),
and referrer policies. JavaScript is modeled in an abstract way by
so-called \emph{scripts} which can be sent around and, among others,
can create iframes, access other windows, and initiate
XMLHttpRequests. Browsers may be corrupted dynamically by the
adversary.

The FKS model has already been used to analyze the security of the
BrowserID single sign-on
system~\cite{FettKuestersSchmitz-SP-2014,FettKuestersSchmitz-ESORICS-BrowserID-Primary-2015},
the security and privacy of the SPRESSO SSO
system~\cite{FettKuestersSchmitz-CCS-2015}, and the security of
OAuth~2.0~\cite{FettKuestersSchmitz-CCS-2016}, each time uncovering
new and severe attacks that have been missed by previous analysis
attempts.

Using the generic FKS model, we build a formal model of OIDC, closely
following the standard. We employ the defenses and mitigations
discussed earlier in order to create a model with state-of-the-art
security features in place. Our model includes RPs and IdPs that
(simultaneously) support all modes of OIDC and can be dynamically
corrupted by the adversary.

\paragraph{Formalization of security properties} Based on this model
of OIDC, we formalize four main security properties of OIDC:
authentication, authorization, session integrity for authentication,
and session integrity for authorization. We also formalize further
OIDC specific properties.

\paragraph{Proof of Security for OpenID Connect} Using the model and
the formalized security properties, we then show, by a manual yet
detailed proof, that OIDC in fact satisfies the security
properties. This is the first proof of security of OIDC. Being
based on an expressive and comprehensive formal model of the web,
including a strong attacker model, as well as on a modeling of OpenID
Connect which closely follows the standard, our security analysis
covers a wide range of attacks.

\mpar{Structure of this Paper.} We provide an informal description of
OIDC in Section~\ref{sec:openid-connect}. Attacks and security
guidelines are discussed in Section~\ref{sec:attacks-impl-guid}. In
Section~\ref{sec:fks-web-model}, we briefly recall the FKS model. The
model and analysis of OIDC are then presented in
Section~\ref{sec:analysis}. Related work is discussed in
Section~\ref{sec:related-work}. We conclude in
Section~\ref{sec:conclusion}. All details of our work, including the
proofs, are provided in the appendix.

\section{OpenID Connect}
\label{sec:openid-connect}

The OpenID Connect protocol allows users to authenticate to RPs using
their existing account at an IdP.\footnote{Note that the OIDC standard
  also uses the terms \emph{client} for RP and \emph{OpenID provider
    (OP)} for the IdP. We here use the more common terms RP and IdP.} (Typically, this is an email account at the IdP.)
OIDC was defined by the OpenID Foundation in a \emph{Core} document
\cite{openid-connect-core-1.0} and in extension documents (e.g.,
\cite{openid-connect-discovery-1.0,
  openid-connect-dynamic-client-registration-1.0}). Supporting
technologies were standardized at the IETF, e.g.,
\cite{rfc7519-jwt,rfc7033-webfinger}. (Recall that OpenID Connect is
not to be confused with the older OpenID standards, which are very
different to OpenID Connect.)

Central to OIDC is a cryptographically signed document, the \emph{id
  token}. It is created by the user's IdP and serves as a one-time
proof of the user's identity to the RP.

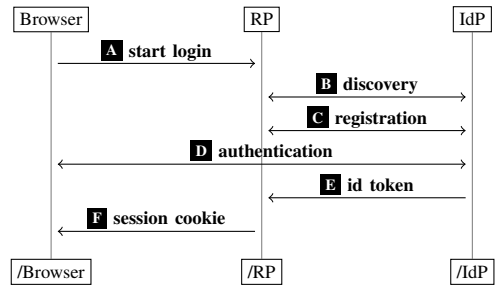
\begin{figure}[t!]
  \centering
   \scriptsize{ \newlength\blockExtraHeightAeHFJDHdfHddAECfGJGBeaaHCcCJfCaEE
\settototalheight\blockExtraHeightAeHFJDHdfHddAECfGJGBeaaHCcCJfCaEE{\parbox{0.4\linewidth}{}}
\setlength\blockExtraHeightAeHFJDHdfHddAECfGJGBeaaHCcCJfCaEE{\dimexpr \blockExtraHeightAeHFJDHdfHddAECfGJGBeaaHCcCJfCaEE - 4ex/4}
\newlength\blockExtraHeightBeHFJDHdfHddAECfGJGBeaaHCcCJfCaEE
\settototalheight\blockExtraHeightBeHFJDHdfHddAECfGJGBeaaHCcCJfCaEE{\parbox{0.4\linewidth}{}}
\setlength\blockExtraHeightBeHFJDHdfHddAECfGJGBeaaHCcCJfCaEE{\dimexpr \blockExtraHeightBeHFJDHdfHddAECfGJGBeaaHCcCJfCaEE - 4ex/4}
\newlength\blockExtraHeightCeHFJDHdfHddAECfGJGBeaaHCcCJfCaEE
\settototalheight\blockExtraHeightCeHFJDHdfHddAECfGJGBeaaHCcCJfCaEE{\parbox{0.4\linewidth}{}}
\setlength\blockExtraHeightCeHFJDHdfHddAECfGJGBeaaHCcCJfCaEE{\dimexpr \blockExtraHeightCeHFJDHdfHddAECfGJGBeaaHCcCJfCaEE - 4ex/4}
\newlength\blockExtraHeightDeHFJDHdfHddAECfGJGBeaaHCcCJfCaEE
\settototalheight\blockExtraHeightDeHFJDHdfHddAECfGJGBeaaHCcCJfCaEE{\parbox{0.4\linewidth}{}}
\setlength\blockExtraHeightDeHFJDHdfHddAECfGJGBeaaHCcCJfCaEE{\dimexpr \blockExtraHeightDeHFJDHdfHddAECfGJGBeaaHCcCJfCaEE - 4ex/4}
\newlength\blockExtraHeightEeHFJDHdfHddAECfGJGBeaaHCcCJfCaEE
\settototalheight\blockExtraHeightEeHFJDHdfHddAECfGJGBeaaHCcCJfCaEE{\parbox{0.4\linewidth}{}}
\setlength\blockExtraHeightEeHFJDHdfHddAECfGJGBeaaHCcCJfCaEE{\dimexpr \blockExtraHeightEeHFJDHdfHddAECfGJGBeaaHCcCJfCaEE - 4ex/4}
\newlength\blockExtraHeightFeHFJDHdfHddAECfGJGBeaaHCcCJfCaEE
\settototalheight\blockExtraHeightFeHFJDHdfHddAECfGJGBeaaHCcCJfCaEE{\parbox{0.4\linewidth}{}}
\setlength\blockExtraHeightFeHFJDHdfHddAECfGJGBeaaHCcCJfCaEE{\dimexpr \blockExtraHeightFeHFJDHdfHddAECfGJGBeaaHCcCJfCaEE - 4ex/4}

 \begin{tikzpicture}
   \tikzstyle{xhrArrow} = [color=blue,decoration={markings, mark=at
    position 1 with {\arrow[color=blue]{triangle 45}}}, preaction
  = {decorate}]

    \matrix [column sep={2.8cm,between origins}, row sep=3ex]
  {

    \node[draw,anchor=base](Browser-start-0){Browser}; & \node[draw,anchor=base](RP-start-0){RP}; & \node[draw,anchor=base](IdP-start-0){IdP};\\
\node(Browser-0){}; & \node(RP-0){}; & \node(IdP-0){};\\[\blockExtraHeightAeHFJDHdfHddAECfGJGBeaaHCcCJfCaEE]
\node(Browser-1){}; & \node(RP-1){}; & \node(IdP-1){};\\[\blockExtraHeightBeHFJDHdfHddAECfGJGBeaaHCcCJfCaEE]
\node(Browser-2){}; & \node(RP-2){}; & \node(IdP-2){};\\[\blockExtraHeightCeHFJDHdfHddAECfGJGBeaaHCcCJfCaEE]
\node(Browser-3){}; & \node(RP-3){}; & \node(IdP-3){};\\[\blockExtraHeightDeHFJDHdfHddAECfGJGBeaaHCcCJfCaEE]
\node(Browser-4){}; & \node(RP-4){}; & \node(IdP-4){};\\[\blockExtraHeightEeHFJDHdfHddAECfGJGBeaaHCcCJfCaEE]
\node(Browser-5){}; & \node(RP-5){}; & \node(IdP-5){};\\[\blockExtraHeightFeHFJDHdfHddAECfGJGBeaaHCcCJfCaEE]
\node[draw,anchor=base](Browser-end-1){/Browser}; & \node[draw,anchor=base](RP-end-1){/RP}; & \node[draw,anchor=base](IdP-end-1){/IdP};\\
};
\draw[->] (Browser-0) to node [above=2.6pt, anchor=base]{\alphprotostep{oichl-start} \textbf{start login}} node [below=-8pt, text width=0.5\linewidth, anchor=base]{\begin{center} \end{center}} (RP-0); 

\draw[<->] (RP-1) to node [above=2.6pt, anchor=base]{\alphprotostep{oichl-discovery} \textbf{discovery}} node [below=-8pt, text width=0.5\linewidth, anchor=base]{\begin{center} \end{center}} (IdP-1); 

\draw[<->] (RP-2) to node [above=2.6pt, anchor=base]{\alphprotostep{oichl-registration} \textbf{registration}} node [below=-8pt, text width=0.5\linewidth, anchor=base]{\begin{center} \end{center}} (IdP-2); 

\draw[<->] (Browser-3) to node [above=2.6pt, anchor=base]{\alphprotostep{oichl-auth} \textbf{authentication}} node [below=-8pt, text width=0.5\linewidth, anchor=base]{\begin{center} \end{center}} (IdP-3); 

\draw[->] (IdP-4) to node [above=2.6pt, anchor=base]{\alphprotostep{oichl-token} \textbf{id token}} node [below=-8pt, text width=0.5\linewidth, anchor=base]{\begin{center} \end{center}} (RP-4); 

\draw[->] (RP-5) to node [above=2.6pt, anchor=base]{\alphprotostep{oichl-set-service-cookie} \textbf{session cookie}} node [below=-8pt, text width=0.5\linewidth, anchor=base]{\begin{center} \end{center}} (Browser-5); 

\begin{pgfonlayer}{background}
\draw [color=gray] (Browser-start-0) -- (Browser-end-1);
\draw [color=gray] (RP-start-0) -- (RP-end-1);
\draw [color=gray] (IdP-start-0) -- (IdP-end-1);
\end{pgfonlayer}
\end{tikzpicture}}
  \caption{OpenID Connect --- high level overview.}
  \label{fig:oidc-high-level}
\end{figure}

A high-level overview of OIDC is given in
Figure~\ref{fig:oidc-high-level}. First, the user requests to be
logged in at some RP and provides her email
address~\refalphprotostep{oichl-start}. RP now retrieves operational
information (e.g., some URLs) for the remaining protocol flow
(\emph{discovery},~\refalphprotostep{oichl-discovery}) and registers
itself at the IdP~\refalphprotostep{oichl-registration}. The user is
then redirected to the IdP, where she authenticates
herself~\refalphprotostep{oichl-auth} (e.g., using a password). The
IdP issues an id token to RP~\refalphprotostep{oichl-token}, which RP
can then verify to ensure itself of the user's identity. (The way of
how the IdP sends the id token to the RP is subject to the different
modes of OIDC, which are described in detail later in this section. In
short, the id token is either relayed via the user's browser or it is
fetched by the RP from the IdP directly.) The id token includes an
identifier for the IdP (the \emph{issuer}),\footnote{The issuer
  identifier of an IdP is an HTTPS URL without any query or fragment
  components.} a user identifier (unique at the
respective IdP), and is signed by the IdP. The RP uses the issuer
identifier and the user identifier to determine the user's identity.
Finally, the RP may set a session cookie in the user's browser which
allows the user to access the services of
RP~\refalphprotostep{oichl-set-service-cookie}.

Before we explain the modes of operation of OIDC, we first present
some basic concepts used in OIDC. At the end of this section, we
discuss the relationship of OIDC to OAuth~2.0.

\subsection{Basic Concepts}\label{sec:oidc-basic-concepts}

We have seen above that id tokens are essential to OIDC. Also, to
allow users to use any IdP to authenticate to any RP, the RP needs to
\emph{discover} some information about the IdP. Additionally, the IdP
and the RP need to establish some sort of relationship between each
other. The process to establish such a relationship is called
\emph{registration}. Both, discovery and registration, can be either a
manual task or a fully automatic process. Further, OIDC allows users
to \emph{authorize} an RP to access user's data at IdP on the user's
behalf. All of these concepts are described in the following.

\subsubsection{Authentication and ID Tokens}\label{sec:oidc-basic-concepts-id-tokens}
The goal of OIDC is to \emph{authenticate} a user to an RP, i.e., the
RP gets assured of the identity of the user interacting with the RP.
This assurance is based on id tokens. As briefly mentioned before, an
id token is a document signed by the IdP. It contains several
\emph{claims}, i.e., information about the user and further meta
data. More precisely, an id token contains a user identifier (unique
at the respective IdP) and the issuer identifier of the IdP. Both
identifiers in combination serve as a global user identifier for
authentication. Also, every id token contains an identifier for the RP
at the IdP, which is assigned during registration (see below). The id
token may also contain a nonce chosen by the RP during the
authentication flow as well as an expiration timestamp and a timestamp
of the user's authentication at the IdP to prevent replay
attacks. Further, an id token may contain information about the
particular method of authentication and other claims, such as data
about the user and a hash of some data sent outside of the id token.

When an RP validates an id token, it checks in particular whether the
signature of the token is correct (we will explain below how RP obtains the public key of the IdP), the issuer identifier is the one of
the currently used IdP, the id token is issued for this RP, the nonce
is the one RP has chosen during this login flow, and the token has not
expired yet. If the id token is valid, the RP trusts the claims contained in the id token and is confident in the user's identity.

\subsubsection{Discovery and Registration}\label{sec:oidc-basic-concepts-discovery-registration}
The OIDC protocol is heavily based on redirection of the user's
browser: An RP redirects the user's browser to some IdP and
vice-versa. Hence, both parties, the RP and the IdP, need some
information about the respective URLs (so-called \emph{endpoints})
pointing to each other. Also, the RP needs a public key of the IdP to
verify the signature of id tokens. Further, an RP can contact the IdP
directly to exchange protocol information. This exchange may 
include authentication of the RP at the IdP.

More specifically, an RP and an IdP need to exchange the following
information: (1) a URL where the user can authenticate to the IdP
(\emph{authorization endpoint}), (2) one or more URLs at RP where the
user's browser can be redirected to by the IdP after authentication
(\emph{redirection endpoint}), (3) a URL where the RP can contact the
IdP in order to retrieve an id token (\emph{token endpoint}), (4) the
issuer identifier of the IdP, (5) the public key of the IdP to verify
the id token's signature, (6) an identifier of the RP at IdP
(\emph{client id}), and optionally (7) a secret used by RP to
authenticate itself to the token endpoint (\emph{client secret}).
(Recall that \emph{client} is another term for RP, and in particular
does not refer to the browser.)

This information can be exchanged manually by the administrator of the
RP and the administrator of the IdP, but OIDC also allows one to completely
automate the discovery of IdPs~\cite{openid-connect-discovery-1.0} and
dynamically register RPs at an
IdP~\cite{openid-connect-dynamic-client-registration-1.0}.

During the automated discovery, the RP first determines which IdP is responsible for
the email address provided by the user who wants to log in using the WebFinger protocol~\cite{rfc7033-webfinger}.
As a result, the RP learns the issuer identifier of the IdP and can retrieve the URLs
of the authorization endpoint and the token endpoint from the
IdP. Furthermore, the RP receives a URL where it can retrieve the public key
to verify the signature of the id token (\emph{JWKS URI}), and a URL
where the RP can register itself at the IdP (\emph{client registration
  endpoint}).

If the RP has not registered itself at this IdP before, it starts the
registration ad-hoc at the client registration endpoint: The
RP sends its redirection endpoint URLs to the IdP and receives a new
client id and (optionally) a client secret in return.

\subsubsection{Authorization and Access Tokens}\label{sec:oidc-basic-concepts-authorization}
OIDC allows users to authorize RPs to access the user's data stored at
IdPs or act on the user's behalf at IdPs. For example, a photo
printing service (the RP) might access or manage the user's photos on
Google Drive (the IdP). For authorization, the RP receives a so-called
\emph{access token} (besides the id token). Access tokens follow the
concept of so-called \emph{bearer tokens}, i.e., they are used as the
only authentication component in requests from an RP to an IdP. In our
example, the photo printing service would have to add the access token
to each HTTP request to Google Drive.

\subsection{Modes}\label{sec:oidc-modes}

OIDC defines three  modes: the
\emph{authorization code mode}, the \emph{implicit mode}, and the
\emph{hybrid mode}. While in the authorization code mode, the id token
is retrieved by an RP from an IdP in direct server-to-server communication
(back channel), in the implicit mode, the id token is relayed from an IdP
to an RP via the user's browser (front channel). The hybrid mode is a
combination of both modes and allows id tokens to be exchanged via the
front and the back channel at the same time.

We now provide a detailed description of all three modes.

\subsubsection{Authorization Code Mode}\label{sec:oidc-modes-auth-code}
\begin{figure}[t!]
  \centering
  \input{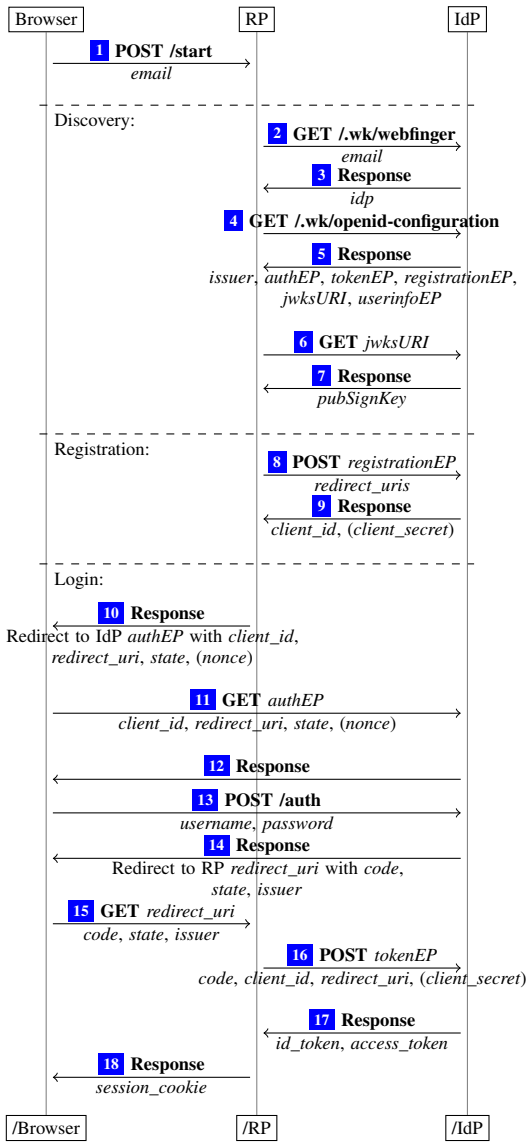}
  \caption{OpenID Connect authorization code mode. Note that data depicted
    below the arrows is either transferred in URI parameters, HTTP headers, or
    POST bodies.}
  \label{fig:oidc-auth-code-flow}
\end{figure}
In this mode, an RP redirects the user's browser to an IdP. At the IdP, the user
authenticates and then the IdP issues a so-called \emph{authorization
  code} to the RP. The RP now uses this code to obtain an id
token from the IdP.

\paragraph{Step-by-Step Protocol Flow}
The protocol flow is depicted in
Figure~\ref{fig:oidc-auth-code-flow}. First, the user starts the login
process by entering her email address\footnote{Note that OIDC also allows other types of user ids, such as
  personal URLs.} in her browser (at some web page of an RP), which
sends the email address to the RP in~\refprotostep{oicacf-start-req}.

Now, the RP uses the OIDC discovery extension
\cite{openid-connect-discovery-1.0} to gather information about the IdP:
As the first step (in this extension), the RP uses the WebFinger mechanism
\cite{rfc7033-webfinger} to discover information about which IdP is
responsible for this email address. For this discovery, the RP contacts the
server of the email domain in~\refprotostep{oicacf-wf-req} (in
the figure, the server of the user's email domain is depicted as the
same party as the IdP). The result of the WebFinger request
in~\refprotostep{oicacf-wf-resp} contains the issuer identifier of the
IdP (which is also a URL). With this information, the RP can continue the
discovery by requesting the OIDC configuration from the IdP
in~\refprotostep{oicacf-conf-req} and~\refprotostep{oicacf-conf-resp}. This configuration contains meta
data about the IdP, including all endpoints at the IdP and a URL where
the RP can retrieve the public key of the IdP (used to later verify the id
token's signature). If the RP does not know this public key yet, the RP
retrieves the key (Steps~\refprotostep{oicacf-jwks-req}
and~\refprotostep{oicacf-jwks-resp}). This concludes the OIDC
discovery in this login flow.

Next, if the RP is not registered at the IdP yet, the RP starts the OIDC dynamic client registration extension~\cite{openid-connect-dynamic-client-registration-1.0}: In
Step~\refprotostep{oicacf-reg-req} the RP contacts the IdP and
provides its redirect URIs. In return, the IdP issues a client id and
(optionally) a client secret to the RP in
Step~\refprotostep{oicacf-reg-resp}. This concludes the registration.

Now, the core part of the OIDC protocol starts: the RP
redirects the user's browser to the IdP
in~\refprotostep{oicacf-start-resp}. This redirect contains the
information that the authorization code mode is used. Also, this
redirect contains the client id of the RP, a redirect URI, and a $\mi{state}$
value, which serves as a Cross-Site Request Forgery (CSRF) token when the browser is later
redirected back to the RP. The redirect may also optionally include a
nonce, which will be included in the id token issued later in this
flow. This data is sent to the
IdP by the browser~\refprotostep{oicacf-idp-auth-req-1}. The user
authenticates to the IdP~\refprotostep{oicacf-idp-auth-resp-1},
\refprotostep{oicacf-idp-auth-req-2}, and the IdP redirects the user's
browser back to the RP in~\refprotostep{oicacf-idp-auth-resp-2}
and~\refprotostep{oicacf-redir-ep-req} (using the redirect URI from
the request in~\refprotostep{oicacf-idp-auth-req-1}). This redirect
contains an authorization code, the $\mi{state}$ value as received
in~\refprotostep{oicacf-start-resp}, and the issuer
identifier.\footnote{The issuer identifier will be included
  in this message in an upcoming revision of OIDC to mitigate the IdP Mix-Up attack, see
  Section~\ref{sec:idp-mix-up}.} If the $\mi{state}$ value and the issuer
identifier are correct, the RP contacts the IdP
in~\refprotostep{oicacf-token-req} at the token endpoint with the received authorization
code, its client id, its client secret (if any), and the redirect URI used to obtain the authorization code. If these values are correct, the IdP
responds with a fresh access token and an id token to the RP
in~\refprotostep{oicacf-token-resp}. %
If the id token is valid, then the RP considers the user to be logged
in (under the identifier composed from the user id in the id token and
the issuer identifier). Hence, the RP may set a session cookie at the
user's browser in~\refprotostep{oicacf-redir-ep-resp}.

\subsubsection{Implicit Mode}\label{sec:oidc-modes-implicit}
The implicit mode (depicted in Figure~\ref{fig:oidc-implicit-flow} in Appendix~\ref{app:idp-mix-up}) is similar to the authorization code mode, but instead of
providing an authorization code, the IdP issues an id token right away
to the RP (via the user's browser) when the user authenticates to the IdP. Hence, the
Steps~\refprotostep{oicacf-start-req}--\refprotostep{oicacf-idp-auth-req-2}
of the authorization code mode (Figure~\ref{fig:oidc-auth-code-flow})
are the same. After these steps, the IdP redirects the
user's browser to the redirection endpoint at the RP, providing an id
token, (optionally) an access token, the $\mi{state}$ value,
and the issuer identifier.
These values are not provided
as a URL parameter but in the URL fragment instead. Hence, the browser
does not send them to the RP at first. Instead, the RP has to provide a
JavaScript that retrieves these values from the fragment and sends them
to the RP. If the id token is valid, the issuer is correct, and the
$\mi{state}$ matches the one previously chosen by the RP, the RP considers the user to be
logged in and issues a session cookie.

\subsubsection{Hybrid Mode}\label{sec:oidc-modes-hybrid}
The hybrid mode (depicted in Figure~\ref{fig:oidc-hybrid-flow} in
Appendix~\ref{app:idp-mix-up}) is a combination of the authorization
code mode and the implicit mode: First, this mode works like the
implicit mode, but when IdP redirects the browser back to RP, the IdP issues an authorization
code, and either an id token or an access token or both.\footnote{The
  choice of the IdP to issue either an id token or an access token or
  both depends on the IdP's configuration and the request in
  Step~\refprotostep{oichf-idp-auth-req-1} in
  Figure~\ref{fig:oidc-hybrid-flow}.} The RP then retrieves these
values as in the implicit mode (as they are sent in the fragment like
in the implicit mode) and uses the authorization code to obtain a
(potentially second) id token and a (potentially second) access token
from IdP.

\subsection{Relationship to OAuth 2.0}\label{sec:oidc-rel-to-oauth}

Technically, OIDC is derived from OAuth~2.0. It goes, however, far
beyond what was specified in OAuth~2.0 and introduces many new
concepts: OIDC defines a method for authentication (while retaining
the option for authorization) using a new type of tokens, the id
token. Some messages and tokens in OIDC can be cryptographically
signed or encrypted while OAuth~2.0 does neither use signing nor
encryption. The new hybrid flow combines features of the implicit mode
and the authorization code mode. Importantly, with ad-hoc discovery and
dynamic registration, OIDC standardizes and automates a process that
is completely out of the scope of OAuth~2.0.
  
These new features and their interplay potentially introduce new
security flaws. It is therefore not sufficient to analyze the security
of OAuth~2.0 to derive any guarantees for OIDC. OIDC rather requires a
new security analysis. (See Section~\ref{sec:analysis} for a more
detailed discussion. In Section~\ref{sec:attacks-impl-guid} we
describe attacks that cannot be applied to OAuth~2.0.)

\section{Attacks and Security Guidelines}
\label{sec:attacks-impl-guid}

In this section, we present a concise overview of known attacks on
OIDC and present additions that have not been documented so far. We
also summarize mitigations and implementation guidelines that have to
be implemented to avoid these attacks.

The main focus of this work is to prove central security properties of
OIDC, by which these mitigations and implementation guidelines are
backed up. Moreover, further (potentially unknown types of) attacks on
OIDC that can be captured by our security analysis are ruled out as
well.

The rest of the section is structured as follows: we first present the
attacks, mitigations and guidelines, then point out differences to
OAuth~2.0, and finally conclude with a brief discussion.

\subsection{Attacks, Mitigations, and Guidelines}
\label{sec:att-mit-guide}

(Mitigations and guidelines are presented along with every class of attack.)

\subsubsection{IdP Mix-Up Attacks}
\label{sec:idp-mix-up}
In two previously reported
attacks~\cite{FettKuestersSchmitz-CCS-2016,MainkaMladenovSchwenkWich-EuroSP-2017},
the aim was to confuse the RP about the identity of the IdP. In both
attacks, the user was tricked into using an \emph{honest} IdP to
authenticate to an \emph{honest} RP, while the RP is made to believe
that the user authenticated to the attacker. The RP therefore, after
successful user authentication, tries to use the authorization code or
access token at the attacker, which then can impersonate the user or
access the user's data at the IdP. We present a detailed description
of an application of the IdP Mix-Up attack to OpenID Connect in
Appendix~\ref{app:idp-mix-up}.

The IETF OAuth Working Group drafted a proposal for a mitigation
technique \cite{rfc-draft-ietf-oauth-mix-up-mitigation-01} that is
based on a proposal in \cite{FettKuestersSchmitz-CCS-2016} and that
also applies to OpenID Connect. The proposal is that the IdP puts its
identity into the response from the authorization endpoint. (This is
already included in our description of OIDC above, see the issuer in
Step~\refprotostep{oicacf-idp-auth-resp-2} in
Figure~\ref{fig:oidc-auth-code-flow}.) The RP can then check that the
user authenticated to the expected IdP.

\subsubsection{Attacks on the State Parameter}
\label{sec:state-leak-attack}
The $\mi{state}$
parameter is used in OIDC to protect against attacks on session
integrity, i.e., attacks in which an attacker forces a user to be
logged in at some RP (under the attacker's account). Such attacks can
arise from session swapping or CSRF vulnerabilities.

OIDC recommends the use of the $\mi{state}$
parameter. It should contain a nonce that is bound to the user's
session. Attacks that can result from omitting or incorrectly using
$\mi{state}$
were described in the context of OAuth~2.0
in~\cite{rfc6819-oauth2-security,BansalBhargavanetal-JCS-2014,SantsaiBeznosov-CCS-2012-OAuth,LiMitchell-ISC-2014}.

The nonce for the $\mi{state}$
value should be chosen freshly for \emph{each login attempt} to
prevent an attack described in \cite{FettKuestersSchmitz-CCS-2016}
(Section~5.1) where the same state value is used first in a
user-initiated login flow with a malicious IdP and then in a login
flow with an honest IdP (forcefully initiated by the attacker with the
attacker's account and the user's browser).

\subsubsection{Code/Token/State Leakage}\label{sec:codet-leak}
Care should be taken that a value of $\mi{state}$
or an authorization code is not inadvertently sent to an untrusted
third party through the \emph{Referer} header. The $\mi{state}$
and the authorization code parameters are part of the redirection
endpoint URI (at the RP), the $\mi{state}$
parameter is also part of the authorization endpoint URI (at the IdP).
If, on either of these two pages, a user clicks on a link to an
external page, or if one of these pages embeds external resources
(images, scripts, etc.), then the third party will receive the full
URI of the endpoint, including these parameters, in the Referer header
that is automatically sent by the browser.

Documents delivered at the respective endpoints should therefore be
vetted carefully for links to external pages and resources. In modern
browsers, \emph{referrer policies} \cite{w3c-draft-referrer-policy}
can be used to suppress the Referer header. As a second line of
defense, both parameters should be made single-use, i.e., $\mi{state}$
should expire after it has been used at the redirection endpoint and
authorization code after it has been redeemed at IdP.

In a related attack, an attacker that has access to logfiles
or browsing histories (e.g., through malicious browser extensions) can
steal authentication codes, access tokens, id tokens, or $\mi{state}$
values and re-use these to impersonate a user or to break session
integrity. A subset of these attacks was dubbed \emph{Cut-and-Paste
  Attacks} by the IETF OAuth working group
\cite{rfc-draft-ietf-oauth-mix-up-mitigation-01}.

There are drafts for RFCs that tackle specific aspects of these
leakage attacks, e.g., \cite{rfc-draft-oauth-jwt-encoded-state} which
discusses binding the $\mi{state}$
parameter to the browser instance, and
\cite{rfc-draft-oauth-token-binding-01} which discusses binding the
access token to a TLS session. Since these mitigations are still very
early IETF drafts, subject to change, and not easy to implement in the
majority of the existing OIDC implementations, we did not model them.

In our analysis, we assume that implementations keep logfiles and
browsing histories (of honest browsers) secret and employ referrer
policies as described above.

\subsubsection{Na\"ive RP Session Integrity Attack}
\label{sec:naive-rp-session}
So far, we have assumed that after
Step~\refprotostep{oicacf-start-resp}
(Figure~\ref{fig:oidc-auth-code-flow}), the RP remembers the user's
choice (which IdP is used) in a session; more precisely, the user's
choice is stored in RP's session data. This way, in
Step~\refprotostep{oicacf-redir-ep-req}, the RP looks up the user's
selected IdP in the session data. In
\cite{FettKuestersSchmitz-CCS-2016}, this is called \emph{explicit
  user intention tracking}.

There is, however, an alternative to storing the IdP in the session.
As pointed out by \cite{FettKuestersSchmitz-CCS-2016}, some
implementations put the identity of the IdP into the
$\mi{redirect\_uri}$
(cf.~Step~\refprotostep{oicacf-start-resp}), e.g., by appending it as
the last part of the path or in a parameter. Then, in
Step~\refprotostep{oicacf-redir-ep-req}, the RP can retrieve this
information from the URI. This is called \emph{na\"ive user intention
  tracking}.

RPs that use na\"ive user intention tracking are susceptible to the
na\"ive RP session integrity attack described
in~\cite{FettKuestersSchmitz-CCS-2016}: An attacker obtains an
authorization code, id token, or access token for his own account at
an honest IdP (HIdP). He then waits for a user that wants to log in at
some RP using the attacker's IdP (AIdP) such that AIdP obtains a valid
$\mi{state}$
for this RP. AIdP then redirects the user to the redirection endpoint
URI of RP using the identity of HIdP plus the obtained $\mi{state}$
value and code or (id) token. Since the RP cannot see that the user
originally wanted to log in using AIdP instead of HIdP, the user will
now be logged in under the attacker's identity.

Therefore, an RP should always use sessions to store the user's chosen
IdP (explicit user intention tracking), which, as mentioned, is also
what we do in our formal OIDC model.

\subsubsection{307 Redirect Attack}
\label{sec:307-redirect-attack}
Although OIDC explicitly allows for any redirection method to be used
for the redirection in Step~\refprotostep{oicacf-idp-auth-resp-2} of
Figure~\ref{fig:oidc-auth-code-flow}, IdPs should not use an HTTP 307
status code for redirection. Otherwise, credentials entered by the
user at an IdP will be repeated by the browser in the request to RP
(Step~\refprotostep{oicacf-redir-ep-req} of
Figure~\ref{fig:oidc-auth-code-flow}), and hence, malicious RPs would
learn these credentials and could impersonate the user at the
IdP. This attack was presented in \cite{FettKuestersSchmitz-CCS-2016}.
In our model, we exclusively use the 303 status code, which prevents
re-sending of form data.

\subsubsection{Injection Attacks}
\label{sec:injection-attacks}
It is well known that Cross-Site Scripting (XSS) and SQL Injection
attacks on RPs or IdPs can lead to theft of access tokens, id tokens,
and authorization codes (see, for example,
\cite{rfc6749-oauth2,rfc6819-oauth2-security,BansalBhargavanetal-JCS-2014,SantsaiBeznosov-CCS-2012-OAuth,MainkaMladenovSchwenkWich-EuroSP-2017}).
XSS attacks can, for example, give an attacker access to session ids.
Besides using proper escaping (and Content Security Policies
\cite{w3c-content-security-policy} as a second line of defense), OIDC
endpoints should therefore be put on domains separate from other,
potentially more vulnerable, web pages on IdPs and
RPs.\footnote{\label{fn:origin-separation}Since scripts on one origin
  can often access documents on the same origin, origins of
  the OIDC endpoints should be free from untrusted scripts.}
(See \emph{Third-Party Resources} below for
another motivation for this separation.)

In OIDC implementations, data that can come from untrusted sources
(e.g., client ids, user attributes, $\mi{state}$
and $\mi{nonce}$
values, redirection URIs) must be treated as such: For example, a
malicious IdP might try to inject user attributes containing malicious
JavaScript to the RP. If the RP displays this data without applying
proper escaping, the JavaScript is executed.

We emphasize that in a similar manner, attackers can try to inject
additional parameters into URIs by appending them to existing
parameter values, e.g., the $\mi{state}$.
Since data is often passed around in OIDC, proper escaping of such
parameters can be overlooked easily.

As a result of such parameter injection attacks or independently, parameter
pollution attacks can be a threat for OIDC implementations. In these
attacks, an attacker introduces duplicate parameters into URLs
(see, e.g., \cite{Balduzzietal-NDSS-2011}). For example, a simple
parameter pollution attack could be launched as follows: A malicious
RP could redirect a user to an honest IdP, using a client id of some
honest RP but appending two redirection URI parameters, one pointing
to the honest RP and one pointing to the attacker's RP. Now, if the
IdP checks the first redirection URI parameter, but afterwards
redirects to the URI in the second parameter, the attacker learns
authentication data that belongs to the honest RP and can impersonate
the user.

Mitigations against all these kinds of injection
attacks are well known: implementations have to vet incoming data
carefully, and properly escape any output data. In our model, we assume that these mitigations are
implemented.

\subsubsection{CSRF Attacks and Third-Party Login Initiation}
\label{sec:csrf-attacks-tpli}
Some endpoints need protection against CSRF in addition to the
protection that the $\mi{state}$
parameter provides, e.g., by checking the origin header. Our analysis
shows that the RP only needs to protect the URI on which the login
flow is started (otherwise, an attacker could start a login flow using
his own identity in a user's browser) and for the IdP to protect the
URI where the user submits her credentials (otherwise, an attacker
could submit his credentials instead). 

In the OIDC Core standard \cite{openid-connect-core-1.0}, a so-called
\emph{login initiation endpoint} is described which allows a third
party to start a login flow by redirecting a user to this endpoint,
passing the identity of an IdP in the request. The RP will then start
a login flow at the given IdP. Members of the OIDC foundation
confirmed to us that this endpoint is essentially an intentional CSRF
protection bypass. We therefore recommend login initiation
endpoints not to be implemented (they are not a mandatory feature),
or to require explicit confirmation by the user.

\subsubsection{Server-Side Request Forgery (SSRF)}\label{sec:ssrf}
SSRF attacks can arise when an attacker can instruct a server to
send HTTP(S) requests to other hosts, causing unwanted side-effects or
revealing information~\cite{Pellegrinoetal-RAID-2016}. For example, if
an attacker can instruct a server behind a firewall to send
requests to other hosts behind this firewall, the attacker might be
able to call services or to scan the internal network (using timing
attacks). He might also instruct the server to retrieve very large
documents from other sources, thereby creating Denial of Service
attacks.

SSRF attacks on OIDC were described for the first time in
\cite{MainkaMladenovSchwenkWich-EuroSP-2017},
in the context of the OIDC Discovery extension: An attacker could set
up a malicious discovery service that, when queried by an RP, answers
with links to arbitrary, network-internal or external servers (in
Step~\refprotostep{oicacf-conf-resp} of
Figure~\ref{fig:oidc-auth-code-flow}).

We here, for the first time, point out that not
only RPs can be vulnerable to SSRF, but also IdPs. OIDC defines a way to
indirectly pass parameters for the authorization~request (cf.
Step~\refprotostep{oicacf-idp-auth-req-1} in
Figure~\ref{fig:oidc-auth-code-flow}). To this end, the IdP accepts a new
parameter, $\mi{request\_uri}$
in the authorization request. This parameter contains a URI from which
the IdP retrieves the additional parameters (e.g., $\mi{redirect\_uri}$).
The attacker can use this feature to easily mount an SSRF attack
against the IdP even without any OIDC extensions: He can put an
arbitrary URI in an authorization request causing the IdP to contact this
URI.

This new attack vector shows that not only RPs but also IdPs have to
protect themselves against SSRF by using appropriate filtering and
limiting mechanisms to restrict unwanted requests that originate from
a web server (cf.~\cite{Pellegrinoetal-RAID-2016}).

SSRF attacks typically depend on an application specific context, such
as the structure of and (vulnerable) services in an internal
network.
In our model, attackers can trigger SSRF requests, but the model does
not contain vulnerable applications/services aside from OIDC. (Our analysis focuses on the security of the OIDC standard itself,
rather than on specific applications and services.) Timing and
performance properties, while sometimes relevant for SSRF attacks, are
also outside of our analysis.

\subsubsection{Third-Party Resources}
\label{sec:third-party-reso}
RPs and IdPs that include third-party resources, e.g., tracking or
advertisement scripts, subject their users to token theft and other
attacks by these third parties. If possible, RPs and IdPs should
therefore avoid including third-party resources on any web resources
delivered from the same origins\textsuperscript{\ref{fn:origin-separation}} as the OIDC endpoints (see also Section~\ref{sec:discussion}). For newer browsers,
\emph{subresource integrity}~\cite{AkhaweBraunMarierWeinberger-w3c-subresource-integrity} can help to reduce the risks associated
with embedding third-party resources. With subresource integrity,
websites can instruct supporting web browsers to reject third-party
content if this content does not match a specific hash.
In our model, we assume that websites do not include untrusted third-party resources.

\subsubsection{Transport Layer Security}
\label{sec:transp-layer-secur}
The security of OIDC depends on the confidentiality and integrity of
the transport layer. In other words, RPs and IdPs should use HTTPS.
Endpoint URIs that are provided for the end user and that are
communicated, e.g., in the discovery phase of the protocol, should
only use the \texttt{https://} scheme. HTTPS Strict Transport Security
and Public Key Pinning can be used to further strengthen the security
of the OIDC endpoints. (In our model, we assume that users enter their
passwords only over HTTPS web sites because otherwise, any
authentication could be broken trivially.) %

\subsubsection{Session Handling}
\label{sec:session-handling}
Sessions are typically identified by a nonce that is stored in the user's browser as a cookie. It is a well known best practice that cookies should make use of the
\emph{secure} attribute (i.e., the cookie is only ever used over HTTPS
connections) and the \emph{HttpOnly} flag (i.e., the cookie is not
accessible by JavaScript). Additionally, after the login, the
RP should replace the session id of the user by a freshly chosen nonce
in order to prevent session fixation attacks: Otherwise, a network
attacker could set a login session cookie that is bound to a known
$\mi{state}$ value into the user's browser (see~\cite{Zhengetal-cookies-usenix-2015}), lure the user into logging in
at the corresponding RP, and then use the session cookie to access the
user's data at the RP (\emph{session fixation}, see
\cite{owasp-session-fixation}). In our model, RPs use two kinds of
sessions: Login sessions (which are valid until just before a user is
authenticated at the RP) and service sessions (which signify that a
user is already signed in to the RP). For both sessions, the secure
and HttpOnly flags are used.

\subsection{Relationship to OAuth~2.0}
\label{sec:discussion-att-mit-guide-oauth}

Many, but not all of the attacks described above can also be applied
to OAuth~2.0. The following attacks in particular are only applicable
to OIDC: (1)~Server-side request forgery attacks are facilitated by
the ad-hoc discovery and dynamic registration features. (2)~The same
features enable new ways to carry out injection attacks. (3)~The new
OIDC feature third-party login initiation enables new CSRF attacks.
(4)~Attacks on the id token only apply to OIDC, since there is no such
token in OAuth~2.0.

It is interesting to note that on the other hand, some attacks on
OAuth~2.0 cannot be applied to OIDC
(see~\cite{rfc6819-oauth2-security,FettKuestersSchmitz-CCS-2016} for
further discussions on these attacks): (1)~OIDC setups are less prone
to open redirector attacks since placeholders are not allowed in
redirection URIs. (2)~TLS is mandatory for some messages in OIDC,
while it is optional in OAuth~2.0. (3)~The nonce value can prevent
some replay attacks when the state value is not used or leaks to an
attacker.

\subsection{Discussion}
\label{sec:discussion-att-mit-guide}

In this section, our focus was to provide a concise overview of known
attacks on OIDC and present some additions, namely SSRF at IdPs and
third-party login initiation, along with mitigations and
implementation guidelines. Our formal analysis of OIDC, which is the
main focus of our work and is presented in the next sections, shows
that the mitigations and implementation guidelines presented above are
effective and that we can exclude other, potentially unknown types of
attacks.

\section{The FKS Web Model}
\label{sec:fks-web-model}

Our formal security analysis of OIDC is based on the FKS model, a
generic Dolev-Yao style web model proposed by Fett et al.
in~\cite{FettKuestersSchmitz-SP-2014}. Here, we only briefly recall
this model following the description in
\cite{FettKuestersSchmitz-CCS-2016} (see Appendices~\ref{app:web-model}\,ff. for a full description, and 
\cite{FettKuestersSchmitz-SP-2014,FettKuestersSchmitz-CCS-2015,FettKuestersSchmitz-ESORICS-BrowserID-Primary-2015}
for comparison with other models and discussion
of its scope and limitations).

The FKS model is designed independently of a specific web application and
closely mimics published (de-facto) standards and specifications for
the web, for example, the HTTP/1.1 and HTML5 standards and associated
(proposed) standards. The FKS model defines a general communication
model, and, based on it, web systems consisting of web browsers, DNS
servers, and web servers as well as web and network attackers. 

\paragraph{Communication Model}
The main entities in the model are \emph{(atomic) processes}, which
are used to model browsers, servers, and attackers. Each process
listens to one or more (IP) addresses.
Processes communicate via \emph{events}, which consist of a message as
well as a receiver and a sender address. In every step of a run, one
event is chosen non-deterministically from a ``pool'' of waiting
events and is delivered to one of the processes that listens to the
event's receiver address. The process can then handle the event and
output new events, which are added to the pool of events, and so on.

As usual in Dolev-Yao models (see, e.g.,
\cite{AbadiFournet-POPL-2001}), messages are expressed as formal terms
over a signature $\Sigma$.
The signature contains constants (for (IP) addresses, strings, nonces)
as well as sequence, projection, and function symbols (e.g., for
encryption/decryption and signatures). For example, in the web model,
an HTTP request is represented as a term $r$
containing a nonce, an HTTP method, a domain name, a path, URI
parameters, request headers, and a message body. For instance, an HTTP
request for the URI \url{http://ex.com/show?p=1} is represented as
$\mi{r} := \langle \cHttpReq, n_1, \mGet, \str{ex.com}, \str{/show},
\an{\an{\str{p},1}}, \an{}, \an{} \rangle$ where the body and the list
of request headers is empty. An HTTPS request for $r$
is of the form $\ehreqWithVariable{r}{k'}{\pub(k_\text{ex.com})}$,
where $k'$
is a fresh symmetric key (a nonce) generated by the sender of the
request (typically a browser); the responder is supposed to use this
key to encrypt the response.

The \emph{equational theory} associated
with $\Sigma$
is defined as usual in Dolev-Yao models. The theory induces a congruence
relation $\equiv$
on terms, capturing the meaning of the function symbols in $\Sigma$.
For instance, the equation in the equational theory which captures
asymmetric decryption is $\dec{\enc x{\pub(y)}}{y}=x$.
With this, we have that, for example, $\dec{\ehreqWithVariable{r}{k'}{\pub(k_\text{ex.com})}}{k_\text{ex.com}}\equiv
  \an{r,k'}\,,$
i.e., these two terms are equivalent w.r.t.~the equational theory.

A \emph{(Dolev-Yao) process} consists of a set of addresses the
process listens to, a set of states (terms), an initial state, and a
relation that takes an event and a state as input and
(non-deterministically) returns a new state and a sequence of events.
The relation models a computation step of the
process.
It is required that the output can be computed (formally, derived
in the usual Dolev-Yao style) from the input event and the state.

The so-called \emph{attacker process} is a Dolev-Yao process which
records all messages it receives and outputs all events it can
possibly derive from its recorded messages. Hence, an attacker process
carries out all attacks any Dolev-Yao process could possibly perform.
Attackers can corrupt other parties.

A \emph{script} models JavaScript running in a browser. Scripts are
defined similarly to Dolev-Yao processes. When triggered by a browser, a
script is provided with state information. The script then outputs a
term representing a new internal state and a command to be interpreted
by the browser (see also the specification of browsers below). We give
an annotated example for a script in Algorithm~\ref{alg:script-rp-index} in the
appendix. Similarly to an attacker process, the so-called
\emph{attacker script} outputs everything that is derivable from
the input.

A \emph{system} is a set of processes. A \emph{configuration} of this
system consists of the states of all processes in the system, the pool
of waiting events, and a sequence of unused nonces. Systems induce
\emph{runs}, i.e., sequences of configurations, where each
configuration is obtained by delivering one of the waiting events of
the preceding configuration to a process, which then performs a
computation step. The transition from one configuration to the next
configuration in a run is called a \emph{processing~step}. We write,
for example, $Q = (S, E, N) \xrightarrow[]{} (S', E', N')$
to denote the transition from the configuration $(S, E, N)$
to the configuration $(S', E', N')$,
where $S$
and $S'$
are the states of the processes in the system, $E$
and $E'$
are~pools of waiting events, and $N$
and $N'$ are sequences of unused~nonces.

A \emph{web system} formalizes the web infrastructure and web
applications. It contains a system consisting of honest and attacker
processes. Honest processes can be web browsers, web servers, or DNS
servers. Attackers can be either \emph{web attackers} (who can listen
to and send messages from their own addresses only) or \emph{network
  attackers} (who may listen to and spoof all addresses and therefore
are the most powerful attackers). A web system further contains a set
of scripts (comprising honest scripts and the attacker script).

In our analysis of OIDC, we consider either one network attacker or a
set of web attackers (see Section~\ref{sec:analysis}). In our OIDC
model, we need to specify only the behavior of servers and scripts.
These are not defined by the FKS model since they depend on the
specific application, unless they become corrupted, in which case they
behave like attacker processes and attacker scripts; browsers are
specified by the FKS model (see below). The modeling of OIDC servers
and scripts is outlined in Section~\ref{sec:analysis}
and with full details provided in Appendices~\ref{app:model-oidc-auth} and~\ref{app:model-oidc-auth-webattackers}.

\paragraph{Web Browsers}
An honest browser is thought to be used by one honest user, who is
modeled as part of the browser. User actions, such as following a link, are
modeled as non-deterministic actions of the web browser. User
credentials are stored in the initial state of the browser and are
given to selected web pages when needed. Besides user credentials, the
state of a web browser contains (among others) a tree of windows and
documents, cookies, and web storage data (localStorage and
sessionStorage).

A \emph{window} inside a browser contains a set of
\emph{documents} (one being active at any time), modeling the
history of documents presented in this window. Each represents one
loaded web page and contains (among others) a script and a list of
subwindows (modeling iframes). The script, when triggered by the
browser, is provided with all data it has access to, such as a
(limited) view on other documents and windows, certain cookies, and
web storage data. Scripts then output a command and a new state. This
way, scripts can navigate or create windows, send XMLHttpRequests and
postMessages, submit forms, set/change cookies and web storage data,
and create iframes. Navigation and security rules ensure that scripts
can manipulate only specific aspects of the browser's state, according
to the relevant web standards.

A browser can output messages on the network of different types,
namely DNS and HTTP(S) (including XMLHttpRequests), and it processes
the responses. Several HTTP(S) headers are modeled, including, for
example, cookie, location, strict transport security (STS), and origin
headers. A browser, at any time, can also receive a so-called trigger
message upon which the browser non-de\-ter\-min\-is\-tically choses an
action, for instance, to trigger a script in some document. The script
now outputs a command, as described above, which is then further
processed by the browser. Browsers can also become corrupted, i.e., be
taken over by web and network attackers. Once corrupted, a browser
behaves like an attacker process.

\section{Analysis}
\label{sec:analysis}

We now present our security analysis of OIDC, including a formal model
of OIDC, the specifications of central security properties, and our
theorem which establishes the security of OIDC in our model. 

More precisely, our formal model of OIDC uses the FKS model as a
foundation and is derived by closely following the OIDC standards
Core, Discovery, and Dynamic Client
Registration~\cite{openid-connect-core-1.0,openid-connect-discovery-1.0,openid-connect-dynamic-client-registration-1.0}.
(As mentioned above, the goal in this work is to analyze OIDC itself
instead of concrete implementations.) We then formalize the main
security properties for OIDC, namely authentication, authorization,
session integrity for authentication, and session integrity for
authorization. We also formalize secondary security properties that
capture important aspects of the security of OIDC, for example,
regarding the outcome of the dynamic client registration. We then
state and prove our main theorem. Finally, we discuss the relationship
of our work to the analysis of OAuth~2.0 presented
in~\cite{FettKuestersSchmitz-CCS-2016} and conclude with a discussion
of the results.

We refer the reader to
Appendices~\ref{app:model-oidc-auth}--\ref{app:proof-oidc} for full
details, including definitions, specifications, and proofs. To
provide an intuition of the abstraction level, syntax, and concepts
that we use for the modeling without reading all details, we
extensively annotated Algorithms~\ref{alg:rp-oidc-http-request},
\ref{alg:rp-check-id-token}, and~\ref{alg:script-rp-index} in
Appendix~\ref{app:model-oidc-auth}.

\subsection{Model}
\label{sec:model}

Our model of OIDC includes all features that are commonly found in
real-world implementations, for example, all three modes, a detailed
model of the Discovery mechanism~\cite{openid-connect-discovery-1.0} (including the WebFinger
protocol~\cite{rfc7033-webfinger}), and Dynamic Client
Registration~\cite{openid-connect-dynamic-client-registration-1.0}
(including dynamic exchange of signing keys). RPs, IdPs, and, as usual
in the FKS model, browsers can be corrupted by the adversary
dynamically.

We do not model less used features, in particular OIDC logout, self-issued
OIDC providers (``personal, self-hosted OPs that issue self-signed ID Tokens'',~\cite{openid-connect-core-1.0}), and
ACR/AMR (Authentication Class/Methods Reference) values that can be
used to indicate the level of trust in the authentication of the user
to the IdP.

Since the FKS model has no notion of time,
we overapproximate by never letting tokens, e.g., id tokens, expire.
Moreover, we subsume user claims (information about the user that can
be retrieved from IdPs) by user identifiers, and hence, in our model
users have identities, but no other properties.

We have two versions of our OIDC model, one with a network attacker
and one with an unbounded number of web attackers, as explained next.
The reason for having two versions is that while the authentication
and authorization properties can be proven assuming a network
attacker, such an attacker could easily break session integrity.
Hence, for session integrity we need to assume web attackers (see the
explanations for session integrity in
Section~\ref{sec:security-properties}).

\subsubsection{OIDC Web System with a Network Attacker}
\label{sec:proc-oidcw}
We model OIDC as a class of web systems (in the sense of
Section~\ref{sec:fks-web-model}) which can contain an unbounded finite number
of RPs, IdPs, browsers, and one network attacker.

More formally, an \emph{OIDC web system with a network attacker}
($\oidcwebsystem^n$) consists of a network attacker, a finite set of
web browsers, a finite set of web servers for the RPs, and a finite
set of web servers for the IdPs. Recall that in $\oidcwebsystem^n$,
since we have a network attacker, we do not need to consider web
attackers (as the network attacker subsumes all web attackers). All
non-attacker parties are initially honest, but can become corrupted
dynamically upon receiving a special message and then behave just like
a web attacker process.

As already mentioned in Section~\ref{sec:fks-web-model},  to model
OIDC based on the FKS model, we have to specify the protocol specific
behavior only, i.e., the servers for RPs and IdPs as well as the scripts
that they use. We start with a description of the servers.

\paragraph{Web Servers} Since RPs and IdPs both are web servers, we
developed a generic model for HTTPS server processes for the FKS
model. We call these processes \emph{HTTPS server base processes}.
Their definition covers decrypting received HTTPS messages and
handling HTTP(S) requests to external webservers, including DNS
resolution.

RPs and IdPs are derived from this HTTPS server base process. Their
models follow the OIDC standard closely and include the mitigations
discussed in Section~\ref{sec:attacks-impl-guid}.

An RP waits for users to start a login flow and then
non-deterministically decides which mode to use. If needed, it starts
the discovery and dynamic registration phase of the protocol, and
finally redirects the user to the IdP for user authentication.
Afterwards, it processes the received tokens and uses them according
to their type (e.g., with an access token, the RP would retrieve an id
token from the IdP). If an id token is received that passes all
checks, the user will be logged in. As mentioned briefly in
Section~\ref{sec:attacks-impl-guid}, RPs manage two kinds of sessions:
The \emph{login sessions}, which are used only during the user login
phase, and \emph{service sessions}.

The IdP provides several endpoints according to its role in the login
process, including its OIDC configuration endpoint and endpoints for
receiving authentication and token requests.

\paragraph{Scripts} Three scripts (altogether 30 lines of code) can be
sent from honest IdPs and RPs to web browsers. The script
\emph{script\_rp\_index} is sent by an RP when the user visits the
RP's web site. It starts the login process. The script
\emph{script\_rp\_get\_fragment} is sent by an RP during an implicit
or hybrid mode flow to retrieve the data from the URI fragment. It
extracts the access token, authorization code, and $\mi{state}$
from the fragment part of its own URI and sends this information in
the body of a POST request back to the RP. IdP sends the script
\emph{script\_idp\_form} for user authentication at the IdP.

\subsubsection{OIDC Web System with Web Attackers}
We also consider a class of web systems where the network attacker is
replaced by an unbounded finite set of web attackers and a DNS server
is introduced.
We denote
such a system by $\oidcwebsystem^w$
and call it an \emph{OIDC web system with web attackers}. Such web
systems are used to analyze session integrity, see below.

\subsection{Main Security Properties}
\label{sec:security-properties}

Our primary security properties capture authentication, authorization
and session integrity for authentication and authorization. We will
present these security properties in the following, with full details
in Appendix~\ref{app:form-secur-prop}.

\paragraph{Authentication Property} 
The most important property for OIDC is the authentication property.
In short, it captures that a network attacker (and therefore also web
attackers) should be unable to log in as an honest user at an honest
RP using an honest IdP. 

Before we define this property in more detail, recall that in our
modeling, an RP uses two kinds of sessions: login sessions, which are
only used for the login flow, and service sessions, which are used
after a user/browser was logged in (see
Section~\ref{sec:session-handling} for details). When a login session
has finished successfully (i.e., the RP received a valid id token),
the RP uses a fresh nonce as the service session id, stores this id in
the session data of the login session, and sends the service session
id as a cookie to the browser. In the same step, the RP also stores
the issuer, say $d$,
that was used in the login flow and the identity (email address) of
the user, say $\mi{id}$,
as a pair $\an{d, \mi{id}}$,
referred to as a global user identifier in
Section~\ref{sec:oidc-basic-concepts}.

Now, our authentication property defines that a network attacker
should be unable to get hold of a service session id by which the
attacker would be considered to be logged in at an honest RP under an
identity governed by an honest IdP for an honest user/browser.

In order to define the authentication property formally, we first need
to define the precise notion of a service session. In the following,
as introduced in Section~\ref{sec:fks-web-model}, $(S, E, N)$
denotes a configuration in the run $\rho$
with its components $S$,
a mapping from processes to states of these processes, $E$,
a set of events in the network that are waiting to be delivered to
some party, and $N$,
a set of nonces that have not been used yet. By $\mathsf{governor}(\mi{id})$
we denote the IdP that is responsible for a given user identity (email
address) $\mi{id}$,
and by $\mathsf{dom}(\mathsf{governor}(\mi{id}))$,
we denote the set of domains that are owned by this IdP. By
$S(r).\str{sessions}[\mi{lsid}]$
we denote a data structure in the state of $r$
that contains information about the login session identified by $\mi{lsid}$.
This data structure contains, for example, the identity for which the
login session with the id $\mi{lsid}$
was started and the service session id that was issued after the login
session. 

We can now define that there is a service session identified by a
nonce $n$
for an identity $\mi{id}$
at some RP $r$
iff there exists a login session (identified by some nonce
$\mi{lsid}$)
such that $n$
is the service session associated with this login session, and $r$
has stored that the service session is logged in for the id $\mi{id}$
using an issuer $d$
(which is some domain of the governor of $\mi{id}$).

\begin{definition}[Service Sessions]
  We say that there is a \emph{service session identified by a nonce
    $n$
    for an identity $\mi{id}$
    at some RP $r$}
  in a configuration $(S, E, N)$
  of a run $\rho$
  of an OIDC web system iff there exists some login session id
  $\mi{lsid}$
  and a domain $d \in \mathsf{dom}(\mathsf{governor}(\mi{id}))$
  such that
  $S(r).\str{sessions}[\mi{lsid}][\str{loggedInAs}] \equiv \an{d,
    \mi{id}}$ and
  $S(r).\str{sessions}[\mi{lsid}][\str{serviceSessionId}] \equiv n$.
\end{definition}

By $d_{\emptyset}(S(\fAP{attacker}))$
we denote all terms that can be computed (derived in the usual
Dolev-Yao style, see Section~\ref{sec:fks-web-model}) from the
attacker's knowledge in the state $S$.
We can now define that an OIDC web system with a network attacker is
secure w.r.t.~authentication iff the attacker can never get hold of a
service session id ($n$)
that was issued by an honest RP $r$
for an identity $\mi{id}$
of an honest user (browser) at some honest IdP (governor of
$\mi{id}$).

\begin{definition}[Authentication Property] Let $\oidcwebsystem^n$ be an OIDC web
  system with a network attacker. We say that \emph{$\oidcwebsystem^n$
    is secure w.r.t.~authentication} iff for every run $\rho$
  of $\oidcwebsystem^n$,
  every configuration $(S, E, N)$
  in $\rho$,
  every $r\in \fAP{RP}$
  that is honest in $S$,
  every browser $b$
  that is honest in $S$,
  every identity $\mi{id} \in \mathsf{ID}$
  with $\mathsf{governor}(\mi{id})$
  being an honest IdP, every service session identified by some nonce
  $n$
  for $\mi{id}$
  at $r$, we have that 
  $n$
  is not derivable from the attackers knowledge in $S$
  (i.e., $n \not\in d_{\emptyset}(S(\fAP{attacker}))$).
\end{definition}

\paragraph{Authorization Property} Intuitively, authorization for OIDC
means that a network attacker should not be able to obtain or use a
protected resource available to some honest RP at an IdP for some user
unless certain parties involved in the authorization process are
corrupted. As the access control for such protected resources relies
only on access tokens, we require that an attacker does not learn
access tokens that would allow him to gain unauthorized access to
these resources.

To define the authorization property formally, we need to reason about
the state of an honest IdP, say $i$.
In this state, $i$
creates \emph{records}  containing data about successful
authentications of users at $i$.
Such records are stored in $S(i).\str{records}$. One such record, say $x$,
contains the authenticated user's identity in $x[\str{subject}]$,
two\footnote{In the hybrid mode, IdPs can issue two access tokens, cf.~Section~\ref{sec:oidc-modes-hybrid}.} access
tokens in $x[\str{access\_tokens}]$,
and the client id of the RP in $x[\str{client\_id}]$.

We can now define the authorization property. It defines that an OIDC
web system with a network attacker is secure w.r.t.~authorization iff
the attacker cannot get hold of an access token that is stored in one
of $i$'s
records for an identity of an honest user/browser $b$
and an honest RP $r$.

\begin{definition}[Authorization Property]
  Let $\oidcwebsystem^n$
  be an OIDC web system with a network attacker. We say that
  \emph{$\oidcwebsystem^n$
    is secure w.r.t.~authorization} iff for every run $\rho$
  of $\oidcwebsystem^n$,
  every configuration $(S, E, N)$
  in $\rho$,
  every $r\in \fAP{RP}$
  that is honest in $S$,
  every $i\in \fAP{IdP}$
  that is honest in $S$,
  every browser $b$
  that is honest in $S$,
  every identity $\mi{id} \in \mathsf{ID}$
  owned by $b$
  and $\mathsf{governor}(\mi{id}) = i$,
  every nonce $n$,
  every term $x \in S(i).\str{records}$
  with $x[\str{subject}] \equiv \mi{id}$,
  $n \in x[\str{access\_tokens}]$,
  and the client id $x[\str{client\_id}]$
  having been issued by $i$
  to $r$,\footnote{See
    Definition~\ref{def:client-id-issued} in
    Appendix~\ref{sec:fprop-authorization}.} we have that $n$
  is not derivable from the attackers knowledge in $S$
  (i.e., $n \not\in d_{\emptyset}(S(\fAP{attacker}))$).
\end{definition}

\paragraph{Session Integrity for Authentication}
The two session integrity properties capture that an attacker should
be unable to forcefully log a user/browser in at some RP. This includes attacks such as CSRF
and session swapping. Note that we define these properties over
$\oidcwebsystem^w$, i.e., we consider web attackers instead of a
network attacker. The reason is that OIDC deployments typically use
cookies to track the login sessions of users. Since a network attacker
can put cookies into browsers over unencrypted connections and these
cookies are then also used for encrypted connections, cookies have no
integrity in the presence of a network attacker (see also
\cite{Zhengetal-cookies-usenix-2015}). In particular, a network
attacker could easily break the session integrity of typical OIDC
deployments.

For session integrity for authentication we say that a user/browser
that is logged in at some RP must have expressed her wish to be logged
in to that RP in the beginning of the login flow. Note that not even a
malicious IdP should be able to forcefully log in its users (more
precisely, its user's browsers) at an honest RP. If the IdP is honest,
then the user must additionally have authenticated herself at the IdP
with the same user account that RP uses for her identification. This
excludes, for example, cases where (1) the user is forcefully logged
in to an RP by an attacker that plays the role of an IdP, and (2)
where an attacker can force an honest user to be logged in at some RP
under a false identity issued by an honest IdP.

In our formal definition of session integrity for authentication
(below), $\mathsf{loggedIn}_\rho^Q(b, r, u, i, \mi{lsid})$
denotes that in the processing step $Q$ (see below),
the browser $b$
was authenticated (logged in) to an RP $r$
using the IdP $i$
and the identity $u$
in an RP login session with the session id $\mi{lsid}$.
(Here, the processing step $Q$
corresponds to Step~\refprotostep{oicacf-redir-ep-resp} in
Figure~\ref{fig:oidc-auth-code-flow}.) The user authentication in the
processing step $Q$
is characterized by the browser $b$
receiving the service session id cookie that results from the login
session $\mi{lsid}$.

By $\mathsf{started}_\rho^{Q'}(b, r, \mi{lsid})$
we denote that the browser $b$,
in the processing step $Q'$
triggered the script $\mi{script\_rp\_index}$
to start a login session which has the session id $\mi{lsid}$
at the RP $r$.
(Compare Section~\ref{sec:fks-web-model} on how browsers handle
scripts.) Here, $Q'$
corresponds to Step~\refprotostep{oicacf-start-req} in
Figure~\ref{fig:oidc-auth-code-flow}.

By $\mathsf{authenticated}_\rho^{Q''}(b, r, u, i, \mi{lsid})$
we denote that in the processing step $Q''$,
the user/browser $b$
authenticated to the IdP $i$.
In this case, authentication means that the user filled out the login
form (in $\mi{script\_idp\_form}$)
at the IdP $i$
and, by this, consented to be logged in at $r$
(as in~Step~\refprotostep{oicacf-idp-auth-req-2} in
Figure~\ref{fig:oidc-auth-code-flow}).

Using these notations, we can now define security w.r.t.~session
integrity for authentication of an OIDC web system with web attackers
in a straightforward way:

\begin{definition} [Session Integr. for
  Authentication]
  Let $\oidcwebsystem^w$
  be an OIDC web system with web attackers. We say that
  \emph{$\oidcwebsystem^w$
    is secure w.r.t.~session integrity for authentication} iff for
  every run $\rho$
  of $\oidcwebsystem^w$, every processing step $Q$ in $\rho$ with
  $Q = (S, E, N) \xrightarrow[]{}
  (S', E', N')$ (for some $S$,
  $S'$,
  $E$,
  $E'$,
  $N$,
  $N'$),
  every browser $b$
  that is honest in $S$,
  every $i\in \fAP{IdP}$,
  every identity $u$
  that is owned by $b$,
  every $r\in \fAP{RP}$
  that is honest in $S$,
  every nonce $\mi{lsid}$,
  with $\mathsf{loggedIn}_\rho^Q(b, r, u, i, \mi{lsid})$,
  we have that (1) there exists a processing step $Q'$
  in $\rho$
  (before $Q$)
  such that $\mathsf{started}_\rho^{Q'}(b, r, \mi{lsid})$,
  and (2) if $i$
  is honest in $S$,
  then there exists a processing step $Q''$
  in $\rho$
  (before $Q$)
  such that
  $\mathsf{authenticated}_\rho^{Q''}(b, r, u, i, \mi{lsid})$.
\end{definition}

\paragraph{Session Integrity for Authorization}
For session integrity for authorization we say that if an RP uses some
access token at some IdP in a session with a user, then that user
expressed her wish to authorize the RP to interact with \emph{some}
IdP. Note that one cannot guarantee that the IdP with which RP
interacts is the one the user authorized the RP to interact with. This
is because the IdP might be malicious. In this case, for example in
the discovery phase, the malicious IdP might just claim (in
Step~\refprotostep{oicacf-wf-resp} in
Figure~\ref{fig:oidc-auth-code-flow}) that some other IdP is
responsible for the authentication of the user. If, however, the IdP
the user is logged in with is honest, then it should be guaranteed that the
user authenticated to that IdP and that the IdP the RP interacts with
on behalf of the user is the one intended by the user.

For the formal definition, we use two additional predicates:
$\mathsf{usedAuthorization}_\rho^Q(b, r, i, \mi{lsid})$ means that the
RP $r$, in a login session (session id $\mi{lsid}$) with the browser
$b$ used some access token to access services at the IdP $i$.  By
$\mathsf{actsOnUsersBehalf}_\rho^Q(b, r, u, i, \mi{lsid})$ we denote
that the RP $r$ not only used \emph{some} access token, but used one
that is bound to the user's identity at the IdP $i$.

Again, starting from our informal definition above, we define security
w.r.t.~session integrity for authorization of an OIDC web system with
web attackers in a straightforward way (and similarly to session
integrity for authentication):

\begin{definition} [Session Integr. for
  Authorization]
  Let $\oidcwebsystem^w$
  be an OIDC web system with web attackers. We say that
  \emph{$\oidcwebsystem^w$
    is secure w.r.t.~session integrity for authentication} iff for
  every run $\rho$
  of $\oidcwebsystem^w$, every processing step $Q$ in $\rho$ with
  $Q = (S, E, N) \xrightarrow[]{}
  (S', E', N')$  (for some $S$, $S'$, $E$, $E'$, $N$, $N'$), every browser $b$
  that is honest in $S$,
  every $i\in \fAP{IdP}$,
  every identity $u$
  that is owned by $b$,
  every $r\in \fAP{RP}$
  that is honest in $S$,
  every nonce $\mi{lsid}$,
  we have that (1) if
  $\mathsf{usedAuthorization}_\rho^Q(b, r, i, \mi{lsid})$,
  then there exists a processing step $Q'$
  in $\rho$
  (before $Q$)
  such that $\mathsf{started}_\rho^{Q'}(b, r, \mi{lsid})$,
  and (2) if $i$
  is honest in $S$
  and $\mathsf{actsOnUsersBehalf}_\rho^Q(b, r, u, i, \mi{lsid})$,
  then there exists a processing step $Q''$
  in $\rho$
  (before $Q$)
  such that
  $\mathsf{authenticated}_\rho^{Q''}(b, r, u, i, \mi{lsid})$.
\end{definition}

\subsection{Secondary Security Properties}
\label{sec:second-secur-prop}

We define the following secondary security properties that capture
specific aspects of OIDC. We use these secondary security properties
during our proof of the above main security properties. Nonetheless,
these secondary security properties are important and interesting in
their own right.

We define and prove the following properties (see the corresponding
lemmas in Appendices~\ref{sec:proof-property-a}
and~\ref{sec:proof-session-integrity-all} for details):

\noindent\textbf{Integrity of Issuer Cache:} 
If a relying party requests the issuer identifier from an identity
provider
(cf.~Steps~\refprotostep{oicacf-wf-req}--\refprotostep{oicacf-wf-resp}
in Figure~\ref{fig:oidc-auth-code-flow}), then the RP will only
receive an origin that belongs to this IdP in the response. In other
words, honest IdPs do not use attacker-controlled domains as issuer
identifiers, and the attacker is unable to alter this information on
the way to the RP or in the \emph{issuer cache} at the RP.

\noindent\textbf{Integrity of OIDC Configuration Cache:} 
(1) Honest IdPs only use endpoints under their control in their OIDC
configuration document
(cf.~Steps~\refprotostep{oicacf-conf-req}--\refprotostep{oicacf-conf-resp}
in Figure~\ref{fig:oidc-auth-code-flow}) and (2) this information
(which is stored at the RP in the so-called \emph{OIDC configuration
  cache}) cannot be altered by an attacker.

\noindent\textbf{Integrity of JWKS Cache:} 
RPs receive only ``correct'' signing keys from honest IdPs, i.e., keys
that belong to the respective IdP
(cf.~Steps~\refprotostep{oicacf-jwks-req}--\refprotostep{oicacf-jwks-resp}
in Figure~\ref{fig:oidc-auth-code-flow}).

\noindent\textbf{Integrity of Client Registration:}
Honest RPs register only redirection URIs that point to themselves and
that these URIs always use HTTPS. Recall that when an RP registers at
an IdP, the IdP issues a freshly chosen client id to the RP and then
stores RP's redirection URIs.

\noindent\textbf{Third Parties Do Not Learn Passwords:}
Attackers cannot learn user passwords. More precisely, we define
that $\mathsf{secretOfID}(\mi{id})$,
which denotes the password for a given identity $\mi{id}$,
is not known to any party except for the browser $b$
owning the id and the identity provider $i$
governing the id (as long as $b$
and $i$
are honest).

\noindent\textbf{Attacker Does Not Learn ID Tokens:}
Attackers cannot learn id tokens that were issued by honest IdPs for
honest RPs and identities of honest browsers.

\noindent\textbf{Third Parties Do Not Learn State:}
If an honest browser logs in at an honest RP using an honest IdP, then
the attacker cannot learn the $\mi{state}$
value used in this login flow.

\subsection{Theorem}
\label{sec:theorem}

The following theorem states that OIDC is secure w.r.t.~authentication
and authorization in presence of the network attacker, and that OIDC is
secure w.r.t.~session integrity for authentication and authorization
in presence of web attackers. For the proof we refer the reader
to Appendix~\ref{app:proof-oidc}.

\begin{theorem}\label{thm:theorem-1}
  Let $\oidcwebsystem^n$ be an OIDC web system with a network
  attacker. Then, $\oidcwebsystem^n$ is secure w.r.t. authentication
  and authorization. Let $\oidcwebsystem^w$ be an OIDC web system with
  web attackers. Then, $\oidcwebsystem^w$ is secure w.r.t. session
  integrity for authentication and authorization.
\end{theorem}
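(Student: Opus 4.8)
The plan is to prove Theorem~\ref{thm:theorem-1} by a sequence of reductions, treating the four properties in two groups: first authentication and authorization (under the network attacker in $\oidcwebsystem^n$), then the two session integrity properties (under web attackers in $\oidcwebsystem^w$). The overall strategy for each property is the standard one for the FKS model: assume toward a contradiction that there is a run $\rho$ of the web system and a configuration (or processing step) in $\rho$ that violates the property; then trace backwards through the run, using the precise specification of the honest RP and IdP processes and the browser model, to show that the offending value (a service session id, an access token, a login state, etc.) could only have reached the attacker's knowledge $d_\emptyset(S(\fAP{attacker}))$ if some honest party were in fact corrupted, contradicting the hypotheses. The secondary security properties listed in Section~\ref{sec:second-secur-prop} are the workhorses here: I would first establish all of them as lemmas (Integrity of Issuer Cache, of OIDC Configuration Cache, of JWKS Cache, of Client Registration, Third Parties Do Not Learn Passwords, Attacker Does Not Learn ID Tokens, Third Parties Do Not Learn State), each itself proved by a backwards-trace / invariant argument over the run, and then use them as building blocks.

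\textbf{Authentication.} Given a service session identified by $n$ at an honest RP $r$ for an identity $\mi{id}$ governed by an honest IdP, I would unfold the definition of service session: there is a login session $\mi{lsid}$ with $S(r).\str{sessions}[\mi{lsid}][\str{loggedInAs}] \equiv \an{d,\mi{id}}$ and $\str{serviceSessionId} \equiv n$. By inspection of the RP specification, $r$ only writes $\str{loggedInAs}$ after it has received and successfully validated an id token that (i) is signed by the key associated with issuer $d$, (ii) contains $\mi{id}$ as subject, (iii) is addressed to $r$'s client id at that IdP, and (iv) carries the nonce $r$ chose for session $\mi{lsid}$. Using Integrity of JWKS Cache and Integrity of Issuer Cache, the signing key really is the honest IdP's, so the id token was minted by the honest governor of $\mi{id}$ — and the honest IdP only mints such a token for the browser $b$ that authenticated with $\mi{id}$'s password, which by Third Parties Do Not Learn Passwords is the honest owner $b$. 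It then remains to argue the attacker cannot intercept either the id token in transit (in the authorization code mode, via the code: the token endpoint response goes over HTTPS to $r$; in implicit/hybrid, via the fragment: the fragment is delivered only to $r$'s origin and extracted by $\mi{script\_rp\_get\_fragment}$) or the service session cookie itself (it is set with the \emph{secure} and \emph{HttpOnly} flags over HTTPS to the honest browser). Here Attacker Does Not Learn ID Tokens and the HTTPS confidentiality arguments of the FKS model close the case. \textbf{Authorization} is analogous: from $x \in S(i).\str{records}$ with an access token $n$ bound to honest $\mi{id}$ and to a client id issued to honest $r$, the IdP specification implies the access token was delivered either directly to $r$ over HTTPS (code / hybrid back channel) or via the browser in a fragment to $r$'s origin, and in all cases never exposed to the attacker; the code that would be needed to redeem it is likewise protected, and client authentication at the token endpoint (where a client secret exists) prevents the attacker from redeeming someone else's code.

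\textbf{Session integrity.} For these I work in $\oidcwebsystem^w$ (web attackers, so cookies have integrity). Suppose $\mathsf{loggedIn}_\rho^Q(b,r,u,i,\mi{lsid})$. Tracing the honest RP: to reach the step $Q$ (Step~\refprotostep{oicacf-redir-ep-resp}) the RP must have earlier, in handling Step~\refprotostep{oicacf-start-req}, created the login session record $\mi{lsid}$ — and by the RP script specification that only happens when $\mi{script\_rp\_index}$ running in $b$ posts the start request, i.e. $\mathsf{started}_\rho^{Q'}(b,r,\mi{lsid})$ for some earlier $Q'$; crucially the \emph{state} value bound to $\mi{lsid}$ is fresh per login attempt and, by Third Parties Do Not Learn State, unknown to the attacker, so the redirect back to $r$ carrying that state must have been driven by $b$, not forged — this is what rules out CSRF / session-swap and also pins down that it is \emph{the same} $b$. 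For part (2), if $i$ is honest, the valid id token (or, for authorization, the access token / code) that $r$ accepted for $\mi{lsid}$ can, by the same integrity-of-caches and id-token-secrecy lemmas, only have been issued by $i$ after the owner of $u$ submitted the $\mi{script\_idp\_form}$ consenting to log in at $r$, i.e. $\mathsf{authenticated}_\rho^{Q''}(b,r,u,i,\mi{lsid})$; here one additionally uses that the RP does \emph{explicit} user intention tracking (Section~\ref{sec:naive-rp-session}) so that the issuer recorded in $\mi{lsid}$ is the one the browser picked, and the CSRF protection on the IdP's credential-submission endpoint (Section~\ref{sec:csrf-attacks-tpli}) so that the attacker cannot have submitted his own credentials into $b$'s flow. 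Session integrity for authorization is the same argument with $\mathsf{usedAuthorization}$ / $\mathsf{actsOnUsersBehalf}$ in place of $\mathsf{loggedIn}$, tracking when $r$ uses an access token in session $\mi{lsid}$ rather than when it sets the service cookie.

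\textbf{The main obstacle} I expect is not any single property but the mutual dependency and sheer case-explosion of the backwards traces: the RP and IdP each simultaneously support all three modes plus dynamic discovery and registration plus dynamic signing-key exchange, so every "the RP only writes this after receiving that" step fans out over modes and over whether discovery/registration happened in this flow, and the secondary lemmas have to be stated as invariants strong enough to survive \emph{all} processing steps of \emph{all} parties (including dynamic corruption of RPs, IdPs, and browsers) simultaneously. Concretely, the hardest pieces are (a) proving Third Parties Do Not Learn State and Attacker Does Not Learn ID Tokens as run-wide invariants in the presence of the malicious-IdP and dynamic-corruption cases — one must carefully account for the IdP Mix-Up mitigation (the issuer returned in Step~\refprotostep{oicacf-idp-auth-resp-2}) and the referrer-policy assumption so that state/code do not leak via \emph{Referer} — and (b) in the session-integrity proofs, the argument that the browser that finishes the flow is the same browser that started it and authenticated, which requires reasoning jointly about the freshness of $\mi{state}$, the binding of login-session cookies, and the fragment-extraction script, i.e. precisely the points where the known attacks live. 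I would therefore structure the appendix proof as: (1) a set of general FKS-model lemmas about HTTPS confidentiality and honest-browser behavior, (2) the seven secondary-property invariant lemmas proved by simultaneous induction over the run, (3) the authentication and authorization proofs, (4) the two session-integrity proofs, each reusing (1)–(2).
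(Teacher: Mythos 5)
Your proposal follows essentially the same route as the paper's proof: establish the secondary properties (integrity of the issuer, configuration, JWKS and registration caches, secrecy of passwords, id tokens and $\mi{state}$) as run-wide lemmas, then prove authentication and authorization by assuming a violating run and tracing the offending service session id or access token back to a contradiction, and prove the two session-integrity properties by using the state-secrecy lemma to pin the redirect to the honest browser and tracking the processing steps back to $\mathsf{started}$ and $\mathsf{authenticated}$. The only point where you diverge in substance is the side remark that a client secret prevents the attacker from redeeming someone else's authorization code: the paper deliberately does \emph{not} rely on this (it notes that for every run with a client secret there is one without, so the secret gives no protection at the token endpoint), and instead the argument rests solely on the attacker never learning a valid code—which is also the main protection you cite, so your proof goes through once that parenthetical reliance is dropped.
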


\subsection{Comparison to OAuth~2.0}
\label{sec:comparison-to-oauth}

As described in Section~\ref{sec:oidc-rel-to-oauth}, OIDC is based on
OAuth~2.0. Since a formal proof for the security of OAuth~2.0 was
conducted in~\cite{FettKuestersSchmitz-CCS-2016}, one might be tempted
to think that a proof for the security of OIDC requires little more
than an extension of the proof
in~\cite{FettKuestersSchmitz-CCS-2016}. The specific set of features
of OIDC introduces, however, important differences that call for new
formulations of security properties and require new proofs:

\noindent\textbf{Dynamic Discovery and Registration:} Due to the dynamic discovery and registration, RPs can directly
influence and manipulate the configuration data that is stored in
IdPs. In OAuth, this configuration data is fixed and assumed to be
``correct'', greatly limiting the options of the attacker. See, for example, the variant~\cite{MainkaMladenovSchwenkWich-EuroSP-2017} of the IdP Mix-up attack that only works in OIDC (mentioned in
Section~\ref{sec:att-mit-guide}).

\noindent\textbf{Different set of modes:} Compared to OAuth, OIDC introduces the hybrid
mode, but does not use the resource owner password credentials mode
and the client credentials mode.

\noindent\textbf{New endpoints, messages, and parameters:} With additional endpoints (and associated HTTPS messages), the attack surface of OIDC is, also for this reason,
larger than that of OAuth. The registration endpoints, for example,
could be used in ways that break the security of the protocol, which
is not possible in OAuth where these endpoints do not exist. In a
similar vein, new parameters like nonce, request\_uri, and the id
token, are contained in several messages (some of which are also
present in the original OAuth flow) and potentially change the
security requirements for these messages.

\noindent\textbf{Authentication mechanism:} The authentication
mechanisms employed by OIDC and OAuth are quite different. This shows,
in particular, in the fact that OIDC uses the id token mechanism for
authentication, while OAuth uses a different, non-standardized
mechanism. Additionally, unlike in OAuth, authentication can happen
multiple times during one OIDC flow (see the description of the hybrid mode in Section~\ref{sec:oidc-modes}). This greatly influences (the formulation of)
security properties, and hence, also the security proofs.

\medskip

\noindent In summary, taking all these differences into account, our
security proofs had to be carried out from scratch. At the same time,
our proof is more modular than the one
in~\cite{FettKuestersSchmitz-CCS-2016} due to the secondary security
properties we identified. Moreover, our security properties are
similar to the ones by Fett et
al.~in~\cite{FettKuestersSchmitz-CCS-2016} only on a high level. The
underlying definitions in many aspects differ from the ones used for
OAuth.\footnote{As an example,
  in~\cite{FettKuestersSchmitz-CCS-2016}, the definitions rely on a
  notion of \emph{OAuth sessions} which are defined by \emph{connected
    HTTP(S) messages}, i.e., messages that are created by a browser or
  server in response to another message. In our model, the attacker is
  involved in each flow of the protocol (for providing the client id,
  without receiving any prior message), making it hard to apply the
  notion of OAuth sessions. We instead define the properties using the
  existing session identifiers. (See
  Definitions~\ref{def:client-id-issued}, \ref{def:service-sessions},
  \ref{def:user-logged-in}--\ref{def:rp-acts-on-users-behalf} in
  Appendix~\ref{app:form-secur-prop} for details.)}

\subsection{Discussion}
\label{sec:discussion}

Using our detailed formal model, we have shown that OIDC enjoys a high
level of security regarding authentication, authorization, and session
integrity. To achieve this security, it is essential that implementors
follow the security guidelines that we stated in
Section~\ref{sec:attacks-impl-guid}. Clearly, in practice, this is not
always feasible---for example, many RPs want to include third-party
resources for advertisement or user tracking on their origins. As
pointed out, however, not following the security guidelines we
outline can lead to severe attacks.

We have shown the security of OIDC in the most comprehensive
model of the web infrastructure to date. Being a model, however, some
features of the web are not included in the FKS model, for example
browser plugins. Such technologies can under certain circumstances
also undermine the security of OIDC in a manner that is not reflected
in our model. Also, user-centric attacks such as phishing or
clickjacking attacks are also not covered in the model. 

Nonetheless, our formal analysis and the guidelines (along with the
attacks when these guidelines are not followed) provide a clear
picture of the security provided by OIDC for a large class of
adversaries.

\section{Related Work}
\label{sec:related-work}

As already mentioned in the introduction, the only previous works on
the security of OIDC are
\cite{LiMitchell-DIMVA-2016,MainkaMladenovSchwenkWich-EuroSP-2017}.
None of these works establish security guarantees for the OIDC
standard: In~\cite{LiMitchell-DIMVA-2016}, the authors find
implementation errors in deployments of Google Sign-In (which, as
mentioned before, is based on OIDC).
In~\cite{MainkaMladenovSchwenkWich-EuroSP-2017}, the authors describe
a variant of the IdP Mix-Up attack (see
Section~\ref{sec:attacks-impl-guid}), highlight the possibility of
SSRF attacks at RPs, and show some implementation-specific flaws. In
our work, however, we aim at establishing and proving security
properties for OIDC.

In general, there have been only few formal analysis efforts for web
applications, standards, and browsers so far. Most of the existing
efforts are based on formal representations of (parts of) web browsers
or very limited models of web mechanisms and applications~\cite{Caoetal-RAID-2014,kerschbaum-SP-2007-XSRF-prevention,AkhawBarthLamMitchellSong-CSF-2010,Armandoetal-SAML-CS-2013,Armandoetal-FMSE-2008,HedinBelloSabelfeld-JCS-2016,GuhaFredriksonLivshitsSwamy-SP-2011,BohannonPierce-USENIX-2010,BauerCaiJiaPassaroStrouckenTian-NDSS-2015,BielovaDevrieseMassacciPiessens-NSS-2011,BugliesiCalzavaraFocardiKhanTempesta-CSFW-2014,YoshihamaTateishiTabuchiMatsumoto-IEICET-2009,GrossPfitzmannSadeghi-ESORICS-2005,CalzavaraFocardiGrimmMaffei-CSFW-2016,BugliesiCalzavaraFocardiKhan-JCS-2015}.

Only
\cite{BansalBhargavanetal-JCS-2014,BansalBhargavanetal-POST-2013-WebSpi}
and
\cite{FettKuestersSchmitz-CCS-2016,FettKuestersSchmitz-CCS-2015,FettKuestersSchmitz-SP-2014,FettKuestersSchmitz-ESORICS-BrowserID-Primary-2015}
were based on a generic formal model of the web infrastructure. In
\cite{BansalBhargavanetal-JCS-2014}, Bansal, Bhargavan,
Delignat-Lavaud, and Maffeis analyze the security of OAuth~2.0 with the
tool ProVerif in the applied pi-calculus and the WebSpi library.
They identify previously unknown attacks on the OAuth~2.0
implementations of Facebook, Yahoo, Twitter, and many other websites.
They do not, however, establish security guarantees for OAuth~2.0 and
their model is much less expressive than the FKS model.

The relationship of our work to
\cite{FettKuestersSchmitz-CCS-2015,FettKuestersSchmitz-SP-2014,FettKuestersSchmitz-ESORICS-BrowserID-Primary-2015,FettKuestersSchmitz-CCS-2016}
has been discussed in detail throughout the paper.

\section{Conclusion}
\label{sec:conclusion}

Despite being the foundation for many popular and critical login
services, OpenID Connect had not been subjected to a detailed security
analysis, let alone a formal analysis, before. In this work, we filled
this gap.

We developed a detailed and comprehensive formal model of OIDC based
on the FKS model, a generic and expressive formal model of the web
infrastructure. Using this model, we stated central security
properties of OIDC regarding authentication, authorization, and
session integrity, and were able to show that OIDC fulfills these
properties in our model. By this, we could, for the first time,
provide solid security guarantees for one of the most widely deployed single
sign-on systems.

To avoid previously known and newly described attacks, we analyzed
OIDC with a set of practical and reasonable security measures and best
practices in place. We documented these security measures so that they
can now serve as guidelines for secure implementations of OIDC.

\section{Acknowledgements} 
This work was partially supported by \textit{Deu\-tsche
  Forschungsgemeinschaft} (DFG) through  Grant KU\ 1434/10-1.

\bibliographystyle{abbrv}
\appendices
\onecolumn

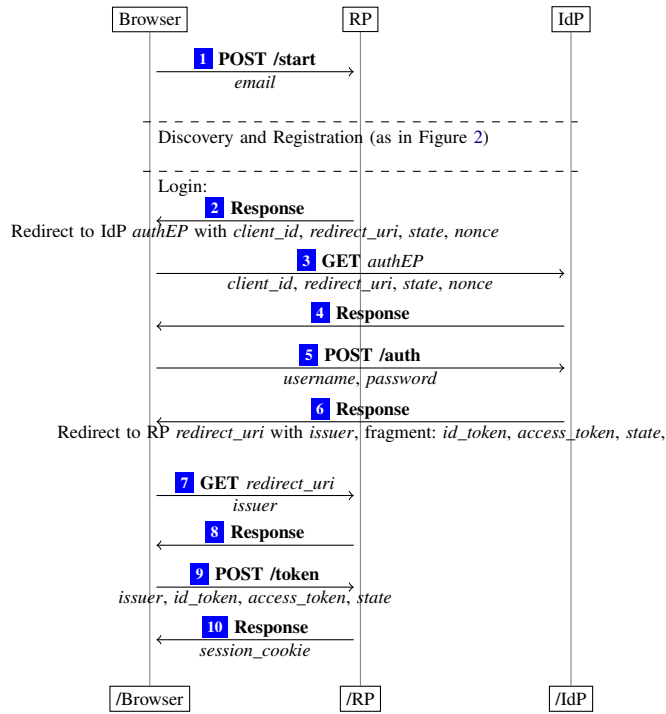
\begin{figure}[h]
  \centering
   \scriptsize{ \newlength\blockExtraHeightAAIeCIJBCBeDEEdabaJcGHdDBGIeCEccE
\settototalheight\blockExtraHeightAAIeCIJBCBeDEEdabaJcGHdDBGIeCEccE{\parbox{0.4\linewidth}{$\mi{email}$}}
\setlength\blockExtraHeightAAIeCIJBCBeDEEdabaJcGHdDBGIeCEccE{\dimexpr \blockExtraHeightAAIeCIJBCBeDEEdabaJcGHdDBGIeCEccE - 4ex/4}
\newlength\blockExtraHeightBAIeCIJBCBeDEEdabaJcGHdDBGIeCEccE
\settototalheight\blockExtraHeightBAIeCIJBCBeDEEdabaJcGHdDBGIeCEccE{\parbox{0.4\linewidth}{None}}
\setlength\blockExtraHeightBAIeCIJBCBeDEEdabaJcGHdDBGIeCEccE{\dimexpr \blockExtraHeightBAIeCIJBCBeDEEdabaJcGHdDBGIeCEccE - 4ex/4}
\newlength\blockExtraHeightCAIeCIJBCBeDEEdabaJcGHdDBGIeCEccE
\settototalheight\blockExtraHeightCAIeCIJBCBeDEEdabaJcGHdDBGIeCEccE{\parbox{0.4\linewidth}{None}}
\setlength\blockExtraHeightCAIeCIJBCBeDEEdabaJcGHdDBGIeCEccE{\dimexpr \blockExtraHeightCAIeCIJBCBeDEEdabaJcGHdDBGIeCEccE - 4ex/4}
\newlength\blockExtraHeightDAIeCIJBCBeDEEdabaJcGHdDBGIeCEccE
\settototalheight\blockExtraHeightDAIeCIJBCBeDEEdabaJcGHdDBGIeCEccE{\parbox{0.4\linewidth}{Redirect to IdP $\mi{authEP}$ with $\mi{client\_id}$, $\mi{redirect\_uri}$, $\mi{state}$, $\mi{nonce}$}}
\setlength\blockExtraHeightDAIeCIJBCBeDEEdabaJcGHdDBGIeCEccE{\dimexpr \blockExtraHeightDAIeCIJBCBeDEEdabaJcGHdDBGIeCEccE - 4ex/4}
\newlength\blockExtraHeightEAIeCIJBCBeDEEdabaJcGHdDBGIeCEccE
\settototalheight\blockExtraHeightEAIeCIJBCBeDEEdabaJcGHdDBGIeCEccE{\parbox{0.4\linewidth}{$\mi{client\_id}$, $\mi{redirect\_uri}$, $\mi{state}$, $\mi{nonce}$}}
\setlength\blockExtraHeightEAIeCIJBCBeDEEdabaJcGHdDBGIeCEccE{\dimexpr \blockExtraHeightEAIeCIJBCBeDEEdabaJcGHdDBGIeCEccE - 4ex/4}
\newlength\blockExtraHeightFAIeCIJBCBeDEEdabaJcGHdDBGIeCEccE
\settototalheight\blockExtraHeightFAIeCIJBCBeDEEdabaJcGHdDBGIeCEccE{\parbox{0.4\linewidth}{}}
\setlength\blockExtraHeightFAIeCIJBCBeDEEdabaJcGHdDBGIeCEccE{\dimexpr \blockExtraHeightFAIeCIJBCBeDEEdabaJcGHdDBGIeCEccE - 4ex/4}
\newlength\blockExtraHeightGAIeCIJBCBeDEEdabaJcGHdDBGIeCEccE
\settototalheight\blockExtraHeightGAIeCIJBCBeDEEdabaJcGHdDBGIeCEccE{\parbox{0.4\linewidth}{$\mi{username}$, $\mi{password}$}}
\setlength\blockExtraHeightGAIeCIJBCBeDEEdabaJcGHdDBGIeCEccE{\dimexpr \blockExtraHeightGAIeCIJBCBeDEEdabaJcGHdDBGIeCEccE - 4ex/4}
\newlength\blockExtraHeightHAIeCIJBCBeDEEdabaJcGHdDBGIeCEccE
\settototalheight\blockExtraHeightHAIeCIJBCBeDEEdabaJcGHdDBGIeCEccE{\parbox{0.4\linewidth}{Redirect to RP $\mi{redirect\_uri}$ with $\mi{issuer}$, fragment: $\mi{id\_token}$, $\mi{access\_token}$, $\mi{state}$,}}
\setlength\blockExtraHeightHAIeCIJBCBeDEEdabaJcGHdDBGIeCEccE{\dimexpr \blockExtraHeightHAIeCIJBCBeDEEdabaJcGHdDBGIeCEccE - 4ex/4}
\newlength\blockExtraHeightIAIeCIJBCBeDEEdabaJcGHdDBGIeCEccE
\settototalheight\blockExtraHeightIAIeCIJBCBeDEEdabaJcGHdDBGIeCEccE{\parbox{0.4\linewidth}{$\mi{issuer}$}}
\setlength\blockExtraHeightIAIeCIJBCBeDEEdabaJcGHdDBGIeCEccE{\dimexpr \blockExtraHeightIAIeCIJBCBeDEEdabaJcGHdDBGIeCEccE - 4ex/4}
\newlength\blockExtraHeightJAIeCIJBCBeDEEdabaJcGHdDBGIeCEccE
\settototalheight\blockExtraHeightJAIeCIJBCBeDEEdabaJcGHdDBGIeCEccE{\parbox{0.4\linewidth}{}}
\setlength\blockExtraHeightJAIeCIJBCBeDEEdabaJcGHdDBGIeCEccE{\dimexpr \blockExtraHeightJAIeCIJBCBeDEEdabaJcGHdDBGIeCEccE - 4ex/4}
\newlength\blockExtraHeightBAAIeCIJBCBeDEEdabaJcGHdDBGIeCEccE
\settototalheight\blockExtraHeightBAAIeCIJBCBeDEEdabaJcGHdDBGIeCEccE{\parbox{0.4\linewidth}{$\mi{issuer}$, $\mi{id\_token}$, $\mi{access\_token}$, $\mi{state}$}}
\setlength\blockExtraHeightBAAIeCIJBCBeDEEdabaJcGHdDBGIeCEccE{\dimexpr \blockExtraHeightBAAIeCIJBCBeDEEdabaJcGHdDBGIeCEccE - 4ex/4}
\newlength\blockExtraHeightBBAIeCIJBCBeDEEdabaJcGHdDBGIeCEccE
\settototalheight\blockExtraHeightBBAIeCIJBCBeDEEdabaJcGHdDBGIeCEccE{\parbox{0.4\linewidth}{$\mi{session\_cookie}$}}
\setlength\blockExtraHeightBBAIeCIJBCBeDEEdabaJcGHdDBGIeCEccE{\dimexpr \blockExtraHeightBBAIeCIJBCBeDEEdabaJcGHdDBGIeCEccE - 4ex/4}

 \begin{tikzpicture}
   \tikzstyle{xhrArrow} = [color=blue,decoration={markings, mark=at
    position 1 with {\arrow[color=blue]{triangle 45}}}, preaction
  = {decorate}]

    \matrix [column sep={2.8cm,between origins}, row sep=4ex]
  {

    \node[draw,anchor=base](Browser-start-0){Browser}; & \node[draw,anchor=base](RP-start-0){RP}; & \node[draw,anchor=base](IdP-start-0){IdP};\\
\node(Browser-0){}; & \node(RP-0){}; & \node(IdP-0){};\\[\blockExtraHeightAAIeCIJBCBeDEEdabaJcGHdDBGIeCEccE]
\node(Browser-1){}; & \node(RP-1){}; & \node(IdP-1){};\\[\blockExtraHeightBAIeCIJBCBeDEEdabaJcGHdDBGIeCEccE]
\node(Browser-2){}; & \node(RP-2){}; & \node(IdP-2){};\\[\blockExtraHeightCAIeCIJBCBeDEEdabaJcGHdDBGIeCEccE]
\node(Browser-3){}; & \node(RP-3){}; & \node(IdP-3){};\\[\blockExtraHeightDAIeCIJBCBeDEEdabaJcGHdDBGIeCEccE]
\node(Browser-4){}; & \node(RP-4){}; & \node(IdP-4){};\\[\blockExtraHeightEAIeCIJBCBeDEEdabaJcGHdDBGIeCEccE]
\node(Browser-5){}; & \node(RP-5){}; & \node(IdP-5){};\\[\blockExtraHeightFAIeCIJBCBeDEEdabaJcGHdDBGIeCEccE]
\node(Browser-6){}; & \node(RP-6){}; & \node(IdP-6){};\\[\blockExtraHeightGAIeCIJBCBeDEEdabaJcGHdDBGIeCEccE]
\node(Browser-7){}; & \node(RP-7){}; & \node(IdP-7){};\\[\blockExtraHeightHAIeCIJBCBeDEEdabaJcGHdDBGIeCEccE]
\node(Browser-8){}; & \node(RP-8){}; & \node(IdP-8){};\\[\blockExtraHeightIAIeCIJBCBeDEEdabaJcGHdDBGIeCEccE]
\node(Browser-9){}; & \node(RP-9){}; & \node(IdP-9){};\\[\blockExtraHeightJAIeCIJBCBeDEEdabaJcGHdDBGIeCEccE]
\node(Browser-10){}; & \node(RP-10){}; & \node(IdP-10){};\\[\blockExtraHeightBAAIeCIJBCBeDEEdabaJcGHdDBGIeCEccE]
\node(Browser-11){}; & \node(RP-11){}; & \node(IdP-11){};\\[\blockExtraHeightBBAIeCIJBCBeDEEdabaJcGHdDBGIeCEccE]
\node[draw,anchor=base](Browser-end-1){/Browser}; & \node[draw,anchor=base](RP-end-1){/RP}; & \node[draw,anchor=base](IdP-end-1){/IdP};\\
};
\draw[->] (Browser-0) to node [above=2.6pt, anchor=base]{\protostep{oicif-start-req} \textbf{POST /start}} node [below=-8pt, text width=0.5\linewidth, anchor=base]{\begin{center} $\mi{email}$\end{center}} (RP-0); 

\draw [dashed] (Browser-1.west) -- (IdP-1.east);
\node[draw=none,anchor=northwest,below=2ex,right=1ex] at (Browser-1.west) {Discovery and Registration (as in Figure~\ref{fig:oidc-auth-code-flow})};

\draw [dashed] (Browser-2.west) -- (IdP-2.east);
\node[draw=none,anchor=northwest,below=2ex,right=1ex] at (Browser-2.west) {Login:};

\draw[->] (RP-3) to node [above=2.6pt, anchor=base]{\protostep{oicif-start-resp} \textbf{Response}} node [below=-8pt, text width=0.5\linewidth, anchor=base]{\begin{center} Redirect to IdP $\mi{authEP}$ with $\mi{client\_id}$, $\mi{redirect\_uri}$, $\mi{state}$, $\mi{nonce}$\end{center}} (Browser-3); 

\draw[->] (Browser-4) to node [above=2.6pt, anchor=base]{\protostep{oicif-idp-auth-req-1} \textbf{GET $\mi{authEP}$}} node [below=-8pt, text width=0.5\linewidth, anchor=base]{\begin{center} $\mi{client\_id}$, $\mi{redirect\_uri}$, $\mi{state}$, $\mi{nonce}$\end{center}} (IdP-4); 

\draw[->] (IdP-5) to node [above=2.6pt, anchor=base]{\protostep{oicif-idp-auth-resp-1} \textbf{Response}} node [below=-8pt, text width=0.5\linewidth, anchor=base]{\begin{center} \end{center}} (Browser-5); 

\draw[->] (Browser-6) to node [above=2.6pt, anchor=base]{\protostep{oicif-idp-auth-req-2} \textbf{POST /auth}} node [below=-8pt, text width=0.5\linewidth, anchor=base]{\begin{center} $\mi{username}$, $\mi{password}$\end{center}} (IdP-6); 

\draw[->] (IdP-7) to node [above=2.6pt, anchor=base]{\protostep{oicif-idp-auth-resp-2} \textbf{Response}} node [below=-8pt, text width=0.5\linewidth, anchor=base]{\begin{center} Redirect to RP $\mi{redirect\_uri}$ with $\mi{issuer}$, fragment: $\mi{id\_token}$, $\mi{access\_token}$, $\mi{state}$,\end{center}} (Browser-7); 

\draw[->] (Browser-8) to node [above=2.6pt, anchor=base]{\protostep{oicif-redir-ep-req} \textbf{GET $\mi{redirect\_uri}$}} node [below=-8pt, text width=0.5\linewidth, anchor=base]{\begin{center} $\mi{issuer}$\end{center}} (RP-8); 

\draw[->] (RP-9) to node [above=2.6pt, anchor=base]{\protostep{oicif-redir-ep-resp} \textbf{Response}} node [below=-8pt, text width=0.5\linewidth, anchor=base]{\begin{center} \end{center}} (Browser-9); 

\draw[->] (Browser-10) to node [above=2.6pt, anchor=base]{\protostep{oifif-redir-ep-token-req} \textbf{POST /token}} node [below=-8pt, text width=0.5\linewidth, anchor=base]{\begin{center} $\mi{issuer}$, $\mi{id\_token}$, $\mi{access\_token}$, $\mi{state}$\end{center}} (RP-10); 

\draw[->] (RP-11) to node [above=2.6pt, anchor=base]{\protostep{oicif-redir-ep-token-resp} \textbf{Response}} node [below=-8pt, text width=0.5\linewidth, anchor=base]{\begin{center} $\mi{session\_cookie}$\end{center}} (Browser-11); 

\begin{pgfonlayer}{background}
\draw [color=gray] (Browser-start-0) -- (Browser-end-1);
\draw [color=gray] (RP-start-0) -- (RP-end-1);
\draw [color=gray] (IdP-start-0) -- (IdP-end-1);
\end{pgfonlayer}
\end{tikzpicture}}
  \caption{OpenID Connect implicit mode.}
  \label{fig:oidc-implicit-flow}
\end{figure}

\begin{figure}[h]
  \centering
  \input{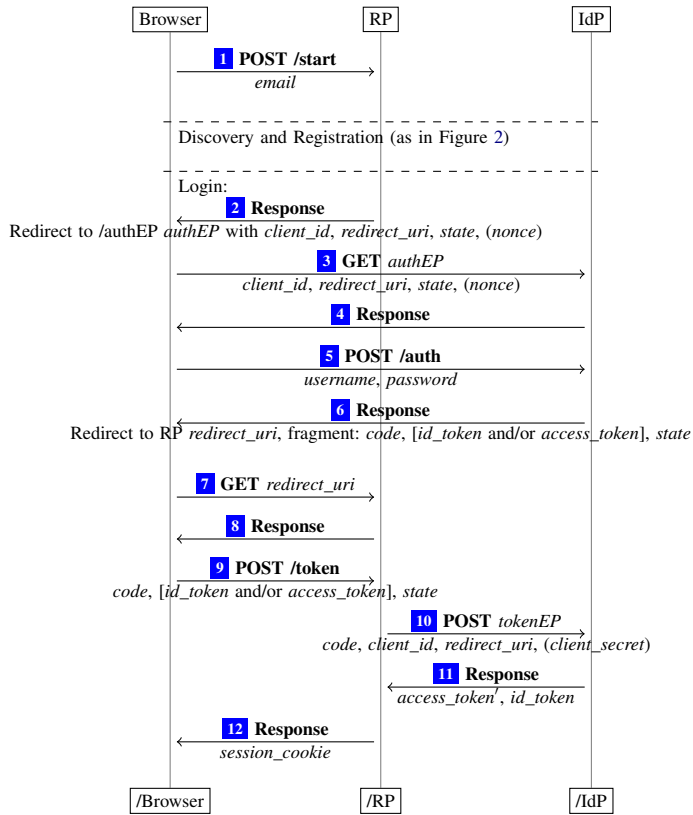}
  \caption{OpenID Connect hybrid mode.}
  \label{fig:oidc-hybrid-flow}
\end{figure}

\section{The IdP Mix-Up Attack}
\label{app:idp-mix-up}

\begin{figure}[tbp]
  \input{figure-oidc-hybrid-flow-attack}
  \caption{Attack on OpenID Connect hybrid mode.}
  \label{fig:oic-hf-att}
\end{figure}

As described in Section~\ref{sec:idp-mix-up} above, in the IdP Mix-Up
attack, an honest RP gets confused about which IdP is used in a login
flow. The honest RP assumes that the login uses the attacker's IdP and
interacts with this IdP, while the user's browser interacts with an
honest IdP and relays the data acquired at this IdP to the RP. As a
result, the attacker learns information such as authorization codes
and access tokens he is not supposed to know and that allow him to
break the authentication and authorization properties. 

There exist several variants of this
attack~\cite{FettKuestersSchmitz-CCS-2016,MainkaMladenovSchwenkWich-EuroSP-2017}.
Here, we describe two variants of this attack using the hybrid mode of
OIDC. The normal flow of the hybrid mode is depicted in
Figure~\ref{fig:oidc-hybrid-flow}, the attack is depicted in
Figures~\ref{fig:oic-hf-att} and~\ref{fig:oic-hf-att-contd} (without
the mitigation against the Mix-Up attack presented in
Section~\ref{sec:idp-mix-up}).

To start the login flow, the user selects an IdP at RP (by entering
her email address) in Step~\refprotostep{oichf-att-start-req}. This
step is the only difference between the two variants that we describe:
In Variant~1, the user selects a malicious IdP, say AIdP. In
Variant~2, the user selects an honest IdP, but the request is
intercepted by the attacker and altered such that the attacker
replaces the honest IdP by AIdP ($\mi{email}$
is replaced by $\mi{email}'$
in Steps~\refprotostep{oichf-att-start-req}
and~\refprotostep{oichf-att-start-req-manipulated} in
Figure~\ref{fig:oic-hf-att}).\footnote{This initial request is often
  unencrypted in practice, see~\cite{FettKuestersSchmitz-CCS-2016}.}

Now, RP starts with the discovery phase of the protocol. As RP thinks
that the user wants to login with AIdP, it retrieves the OIDC
configuration from AIdP (Steps~\refprotostep{oichf-att-conf-req}
and~\refprotostep{oichf-att-conf-resp}). In this configuration, the
attacker does not let all endpoint URLs point to himself, as would be
usual for OIDC, but instead sets the authorization endpoint to be the
one of HIdP. Next, the RP registers itself at AIdP
(Steps~\refprotostep{oichf-att-reg-req}
and~\refprotostep{oichf-att-reg-resp}). In this step, AIdP issues the
same client id to RP which RP is registered with at HIdP (client id
are always public). This is important as HIdP will later redirect the
user's browser back to RP and checks the redirect URI based on the
client id.

Next, RP redirects the user's browser to HIdP (Variant~1) or AIdP
(Variant~2) in Step~\refprotostep{oichf-att-start-resp}. In Variant~1
of the attack, a vigilant user might now be able to detect that she
tried to log in using AIdP but instead is redirected to HIdP. This
does not happen in Variant~2, but here the attacker needs to replace
the redirection to AIdP by a redirection to HIdP (which should not be
any problem if he succeeded in altering the first step of the
protocol).

The user then authenticates at HIdP and is redirected back to RP along
with an authorization code and an access token (depending on the
sub-mode of the hybrid flow, IdPs do not send id tokens in this step).
Now, RP retrieves the authorization code and the access token from the
user's browser and continues the login flow. As RP still assumes that
AIdP is used in this case, it tries to redeem the authorization code
for an id token (and a second access token) at AIdP in
Step~\refprotostep{oichf-att-token-req}.

\begin{figure}[btp]
  \input{figure-oidc-hybrid-flow-attack-contd}
  \caption{Attack on OpenID Connect hybrid mode (continued).}
  \label{fig:oic-hf-att-contd}
\end{figure}

As the authorization code has not been redeemed at HIdP yet, the code
is still valid and the attacker may start a second login flow
(pretending to be the user) at RP
(Steps~\refprotostep{oichf-att-start-req-att}ff.). The attacker skips
the authentication at HIdP and returns to RP with the authorization
code he has learned before. RP now redeems this code at HIdP and
receives an id token issued for the honest user and consequently
assumes that the attacker has the identity of the user and logs the
attacker in.

In another variation of the attack, if HIdP does not issue client
secrets to RPs, the attacker can also redeem the authorization code by
himself (Steps~\refprotostep{oichf-att-token-req-att2}f.). In this
case, the attacker receives an access token valid for the user's
account. With this access token, he can retrieve data of the user or
act on the user's behalf at HIdP. (As he redeems the authorization
code, he cannot use it to log himself into the RP in this case.)

In any case, the attacker can also respond to the authorization code
sent to his token endpoint in
Step~\refprotostep{oichf-att-token-resp-att} with a mock access token
and a mock id token (which will not be used in the following). In the
next step, the RP might then use the access token learned \emph{from
  the honest IdP} in Step~\refprotostep{oichf-att-redir-ep-token-req}
to retrieve data of the user from AIdP
(Steps~\refprotostep{oichf-att-token-resp-att}ff.).\footnote{Depending
  on the RP implementation, the RP might choose to use the mock access
  token or the one learned from the honest IdP in this step. In the
  real-world implementation mod\_auth\_openidc, the access token from
  the honest IdP was used.} Then the attacker learns also this access
token, which (as described in the paragraph above) grants him
unauthorized access to the user's account at HIdP.

This shows that, using the IdP Mix-Up attack, an attacker can
successfully impersonate users at RPs and access their data at honest
IdPs. The mitigation presented in Section~\ref{sec:idp-mix-up} would
have prevented the attack in
Step~\refprotostep{oichf-att-redir-ep-req}ff.

\clearpage
\section{The FKS Web Model}\label{app:web-model}

In this and the following two sections, we present the FKS model for
the web infrastructure as proposed
in~\cite{FettKuestersSchmitz-SP-2014},
\cite{FettKuestersSchmitz-TR-BrowserID-Primary-2015},
and~\cite{FettKuestersSchmitz-TR-OAuth-2015}, along with the addition
of a generic model for HTTPS web servers that harmonizes the behavior
of such servers and facilitates easier proofs.

\subsection{Communication Model}\label{app:communication-model}

We here present details and definitions on the basic concepts of the
communication model.

\subsubsection{Terms, Messages and Events} 
The signature $\Sigma$ for the terms and
messages considered in this work is the union of the following
pairwise disjoint sets of function symbols:
\begin{itemize}
\item constants $C = \addresses\,\cup\, \mathbb{S}\cup
  \{\True,\bot,\notdef\}$ where the three sets are pairwise disjoint,
  $\mathbb{S}$ is interpreted to be the set of ASCII strings
  (including the empty string $\varepsilon$), and $\addresses$ is
  interpreted to be a set of (IP) addresses,
\item function symbols for public keys, (a)symmetric
  en\-cryp\-tion/de\-cryp\-tion, and signatures: $\mathsf{pub}(\cdot)$,
  $\enc{\cdot}{\cdot}$, $\dec{\cdot}{\cdot}$, $\encs{\cdot}{\cdot}$,
  $\decs{\cdot}{\cdot}$, $\sig{\cdot}{\cdot}$,
  $\checksigThree{\cdot}{\cdot}{\cdot}$, and $\unsig{\cdot}$,
\item $n$-ary sequences $\an{}, \an{\cdot}, \an{\cdot,\cdot},
  \an{\cdot,\cdot,\cdot},$ etc., and
\item projection symbols $\pi_i(\cdot)$ for all $i \in \mathbb{N}$.
\end{itemize}
For strings (elements in $\mathbb{S}$), we use a
specific font. For example, $\cHttpReq$ and $\cHttpResp$
are strings. We denote by $\dns\subseteq \mathbb{S}$ the
set of domains, e.g., $\str{example.com}\in \dns$.  We
denote by $\methods\subseteq \mathbb{S}$ the set of methods
used in HTTP requests, e.g., $\mGet$, $\mPost\in \methods$.

The equational theory associated with the signature
$\Sigma$ is given in Figure~\ref{fig:equational-theory}.

\begin{figure}
\begin{align}
\dec{\enc x{\pub(y)}}{y} &= x\\
\decs{\encs x{y}}{y} &= x\\
\checksig{\sig{x}{y}}{\pub(y)} &= \True\\
\unsig{\sig{x}{y}} &= x\\
\pi_i(\an{x_1,\dots,x_n}) &= x_i \text{\;\;if\ } 1 \leq i \leq n \\
\proj{j}{\an{x_1,\dots,x_n}} &= \notdef \text{\;\;if\ } j
\not\in \{1,\dots,n\}
\end{align}
\caption{Equational theory for $\Sigma$.}\label{fig:equational-theory}
\end{figure}

\begin{definition}[Nonces and Terms]\label{def:terms}
  By $X=\{x_0,x_1,\dots\}$ we denote a set of variables and by
  $\nonces$ we denote an infinite set of constants (\emph{nonces})
  such that $\Sigma$, $X$, and $\nonces$ are pairwise disjoint. For
  $N\subseteq\nonces$, we define the set $\gterms_N(X)$ of
  \emph{terms} over $\Sigma\cup N\cup X$ inductively as usual: (1) If
  $t\in N\cup X$, then $t$ is a term. (2) If $f\in \Sigma$ is an
  $n$-ary function symbol in $\Sigma$ for some $n\ge 0$ and
  $t_1,\ldots,t_n$ are terms, then $f(t_1,\ldots,t_n)$ is a term.
\end{definition}

By $\equiv$ we denote the congruence relation on $\terms(X)$ induced
by the theory associated with $\Sigma$. For example, we have that
$\pi_1(\dec{\enc{\an{\str{a},\str{b}}}{\pub(k)}}{k})\equiv \str{a}$.

\begin{definition}[Ground Terms, Messages, Placeholders, Protomessages]\label{def:groundterms-messages-placeholders-protomessages}
  By $\gterms_N=\gterms_N(\emptyset)$, we denote the set of all terms
  over $\Sigma\cup N$ without variables, called \emph{ground terms}.
  The set $\messages$ of messages (over $\nonces$) is defined to be
  the set of ground terms $\gterms_{\nonces}$. 
  
  We define the set $V_{\text{process}} = \{\nu_1, \nu_2, \dots\}$ of
  variables (called placeholders). The set $\messages^\nu :=
  \gterms_{\nonces}(V_{\text{process}})$ is called the set of \emph{protomessages},
  i.e., messages that can contain placeholders.
\end{definition}

\begin{example}
  For example, $k\in \nonces$ and $\pub(k)$ are messages, where $k$
  typically models a private key and $\pub(k)$ the corresponding
  public key. For constants $a$, $b$, $c$ and the nonce $k\in
  \nonces$, the message $\enc{\an{a,b,c}}{\pub(k)}$ is interpreted to
  be the message $\an{a,b,c}$ (the sequence of constants $a$, $b$,
  $c$) encrypted by the public key $\pub(k)$.
\end{example}

\begin{definition}[Normal Form]
  Let $t$ be a term. The \emph{normal form} of $t$ is acquired by
  reducing the function symbols from left to right as far as possible
  using the equational theory shown in
  Figure~\ref{fig:equational-theory}. For a term $t$, we denote its
  normal form as $t\nf$.
\end{definition}

\begin{definition}[Pattern Matching]\label{def:pattern-matching}
  Let $\mi{pattern} \in \terms(\{*\})$ be a term containing the
  wildcard (variable $*$). We say that a term $t$ \emph{matches}
  $\mi{pattern}$ iff $t$ can be acquired from $\mi{pattern}$ by
  replacing each occurrence of the wildcard with an arbitrary term
  (which may be different for each instance of the wildcard). We write
  $t \sim \mi{pattern}$. For a sequence of patterns $\mi{patterns}$ we
  write $t \dot{\sim} \mi{patterns}$ to denote that $t$ matches at
  least one pattern in $\mi{patterns}$.

  For a term $t'$ we write $t'|\, \mi{pattern}$ to denote the term
  that is acquired from $t'$ by removing all immediate subterms of
  $t'$ that do not match $\mi{pattern}$. 
\end{definition}

\begin{example}
  For example, for a pattern $p = \an{\top,*}$ we have that $\an{\top,42} \sim p$, $\an{\bot,42} \not\sim p$, and \[\an{\an{\bot,\top},\an{\top,23},\an{\str{a},\str{b}},\an{\top,\bot}} |\, p = \an{\an{\top,23},\an{\top,\bot}}\ .\]
\end{example}

\begin{definition}[Variable Replacement]
  Let $N\subseteq \nonces$, $\tau \in \gterms_N(\{x_1,\ldots,x_n\})$,
  and $t_1,\ldots,t_n\in \gterms_N$. 

  By
  $\tau[t_1\!/\!x_1,\ldots,t_n\!/\!x_n]$ we denote the (ground) term obtained
  from $\tau$ by replacing all occurrences of $x_i$ in $\tau$ by
  $t_i$, for all $i\in \{1,\ldots,n\}$.
\end{definition}

\begin{definition}[Events and Protoevents]
  An \emph{event (over $\addresses$ and $\messages$)} is a term of the
  form $\an{a, f, m}$, for $a$, $f\in \addresses$ and $m \in
  \messages$, where $a$ is interpreted to be the receiver address and
  $f$ is the sender address. We denote by $\events$ the set of all
  events. Events over $\addresses$ and $\messages^\nu$ are called
  \emph{protoevents} and are denoted $\events^\nu$. By
  $2^{\events\an{}}$ (or $2^{\events^\nu\an{}}$, respectively) we
  denote the set of all sequences of (proto)events, including the
  empty sequence (e.g., $\an{}$, $\an{\an{a, f, m}, \an{a', f', m'},
    \dots}$, etc.). 
\end{definition}

\subsubsection{Notations}\label{app:notation}

\begin{definition}[Sequence Notations]
  For a sequence $t = \an{t_1,\dots,t_n}$ and a set $s$ we
  use $t \subsetPairing s$ to say that $t_1,\dots,t_n \in
  s$.  We define $\left. x \inPairing t\right. \iff \exists
  i: \left. t_i = x\right.$.
  For a term $y$ we write $t \plusPairing y$ to denote the sequence
  $\an{t_1,\dots,t_n,y}$. For a sequence $r = \an{r_1, \dots, r_m}$ we write $t \cup r$ to denote the sequence $\an{t_1, \dots, t_n, r_1, \dots, r_m}$.
  For a finite set $M$ with $M = \{m_1, \dots,m_n\}$ we use
  $\an{M}$ to denote the term of the form
  $\an{m_1,\dots,m_n}$. (The order of the elements does not
  matter; one is chosen arbitrarily.) 
\end{definition}

\begin{definition}\label{def:dictionaries}
  A \emph{dictionary over $X$ and $Y$} is a term of the
  form \[\an{\an{k_1, v_1}, \dots, \an{k_n,v_n}}\] where
  $k_1, \dots,k_n \in X$, $v_1,\dots,v_n \in Y$.
 We call every term $\an{k_i,v_i}$, $i\in
  \{1,\ldots,n\}$, an \emph{element} of the dictionary with
  key $k_i$ and value $v_i$.  We often write $\left[k_1:
    v_1, \dots, k_i:v_i,\dots,k_n:v_n\right]$ instead of
  $\an{\an{k_1, v_1}, \dots, \an{k_n,v_n}}$. We denote the
  set of all dictionaries over $X$ and $Y$ by $\left[X
    \times Y\right]$.
\end{definition}
We note that the empty dictionary is equivalent to the
empty sequence, i.e.,  $[] = \an{}$.  Figure
\ref{fig:dictionaries} shows the short notation for
dictionary operations. For a dictionary $z = \left[k_1:
  v_1, k_2: v_2,\dots, k_n:v_n\right]$ we write $k \in z$ to
say that there exists $i$ such that $k=k_i$. We write
$z[k_j]$ to refer to the value $v_j$. (Note that if a dictionary contains two elements $\an{k, v}$ and $\an{k, v'}$, then the notations and operations for dictionaries apply non-deterministically to one of both elements.) If $k \not\in z$, we
set $z[k] := \an{}$.

\begin{figure}[htb!]\centering
  \begin{align}
    \left[k_1: v_1, \dots, k_i:v_i,\dots,k_n:v_n\right][k_i] = v_i%
  \end{align}\vspace{-2.5em}
  \begin{align}
    \nonumber \left[k_1: v_1, \dots, k_{i-1}:v_{i-1},k_i: v_i, k_{i+1}:v_{i+1}\dots,k_n: v_n\right]-k_i =\\
         \left[k_1: v_1, \dots, k_{i-1}:v_{i-1},k_{i+1}:v_{i+1}\dots,k_n: v_n\right]
  \end{align}
  \caption{Dictionary operators with $1\le i\le n$.}\label{fig:dictionaries}
\end{figure}

Given a term $t = \an{t_1,\dots,t_n}$, we can refer to any
subterm using a sequence of integers. The subterm is
determined by repeated application of the projection
$\pi_i$ for the integers $i$ in the sequence. We call such
a sequence a \emph{pointer}:

\begin{definition}\label{def:pointer}
  A \emph{pointer} is a sequence of non-negative
  integers. We write $\tau.\ptr{p}$ for the application of
  the pointer $\ptr{p}$ to the term $\tau$. This operator
  is applied from left to right. For pointers consisting of
  a single integer, we may omit the sequence braces for
  brevity.
\end{definition}

\begin{example}
  For the term $\tau = \an{a,b,\an{c,d,\an{e,f}}}$ and the
  pointer $\ptr{p} = \an{3,1}$, the subterm of $\tau$ at
  the position $\ptr{p}$ is $c =
  \proj{1}{\proj{3}{\tau}}$. Also, $\tau.3.\an{3,1} =
  \tau.3.\ptr{p} = \tau.3.3.1 = e$.
\end{example}

To improve readability, we try to avoid writing, e.g.,
$\compn{o}{2}$ or $\proj{2}{o}$ in this document. Instead,
we will use the names of the components of a sequence that
is of a defined form as pointers that point to the
corresponding subterms. E.g., if an \emph{Origin} term is
defined as $\an{\mi{host}, \mi{protocol}}$ and $o$ is an
Origin term, then we can write $\comp{o}{protocol}$ instead
of $\proj{2}{o}$ or $\compn{o}{2}$. See also
Example~\ref{ex:url-pointers}.

\subsubsection{Atomic Processes, Systems and Runs} 
An atomic process takes its current state and an
event as input, and then (non-deterministi\-cally) outputs a new state
and a set of events.
\begin{definition}[Generic Atomic Processes and Systems]\label{def:atomic-process-and-process}
  A \emph{(generic) \ap} is a tuple $$p = (I^p, Z^p, R^p, s^p_0)$$ where
  $I^p \subseteq \addresses$, $Z^p \in \terms$ is a set of states,
  $R^p\subseteq (\events \times Z^p) \times (2^{\events^\nu\an{}}
  \times \terms(V_{\text{process}}))$ (input event and old state map to sequence of
  output events and new state), and $s^p_0\in Z^p$ is the initial
  state of $p$. For any new state $s$ and any sequence of nonces
  $(\eta_1, \eta_2, \dots)$ we demand that $s[\eta_1/\nu_1,
  \eta_2/\nu_2, \dots] \in Z^p$. A \emph{system} $\process$ is a
  (possibly infinite) set of \aps.
\end{definition}

\begin{definition}[Configurations]
  A \emph{configuration of a system $\process$} is a tuple $(S, E, N)$
  where the state of the system $S$ maps every atomic process
  $p\in \process$ to its current state $S(p)\in Z^p$, the sequence of
  waiting events $E$ is an infinite sequence\footnote{Here: Not in the
    sense of terms as defined earlier.} $(e_1, e_2, \dots)$ of events
  waiting to be delivered, and $N$ is an infinite sequence of nonces
  $(n_1, n_2, \dots)$.
\end{definition}

\begin{definition}[Concatenating terms and sequences]
  For a term $a = \an{a_1, \dots, a_i}$ and a sequence $b = (b_1, b_2,
  \dots)$, we define the \emph{concatenation} as $a \cdot b := (a_1,
  \dots, a_i, b_1, b_2, \dots)$.
  
\end{definition}

\begin{definition}[Subtracting from Sequences]
  For a sequence $X$ and a set or sequence $Y$ we define $X \setminus
  Y$ to be the sequence $X$ where for each element in $Y$, a
  non-deterministically chosen occurence of that element in $X$ is
  removed.
\end{definition}

\begin{definition}[Processing Steps]\label{def:processing-step}
  A \emph{processing step of the system $\process$} is of the form
  \[(S,E,N) \xrightarrow[p \rightarrow E_{\text{out}}]{e_\text{in}
    \rightarrow p} (S', E', N')\]
  where
  \begin{enumerate}
  \item $(S,E,N)$ and $(S',E',N')$ are configurations of $\process$,
  \item $e_\text{in} = \an{a, f, m} \in E$ is an event,
  \item $p \in \process$ is a process,
  \item $E_{\text{out}}$ is a sequence (term) of events
  \end{enumerate}
  such that there exists 
  \begin{enumerate}
  \item a sequence (term)
    $E^\nu_{\text{out}} \subseteq 2^{\events^\nu\an{}}$ of protoevents,
  \item a term $s^\nu \in \gterms_{\nonces}(V_{\text{process}})$, 
  \item a sequence $(v_1, v_2, \dots, v_i)$ of all placeholders appearing in $E^\nu_{\text{out}}$ (ordered lexicographically),
  \item a sequence $N^\nu = (\eta_1, \eta_2, \dots, \eta_i)$ of the first $i$ elements in $N$ 
  \end{enumerate}
  with
  \begin{enumerate}
  \item $((e_{\text{in}}, S(p)), (E^\nu_{\text{out}}, s^\nu)) \in R^p$
    and $a \in I^p$,
  \item $E_{\text{out}} = E^\nu_{\text{out}}[m_1/v_1, \dots, m_i/v_i]$
  \item $S'(p) = s^\nu[m_1/v_1, \dots, m_i/v_i]$ and $S'(p') = S(p')$ for all $p' \neq p$
  \item $E' = E_{\text{out}} \cdot (E \setminus \{e_{\text{in}}\})$ 
  \item $N' = N \setminus N^\nu$ 
  \end{enumerate}
  We may omit the superscript and/or subscript of the arrow.
\end{definition} 
Intuitively, for a processing step, we select one of the processes in
$\process$, and call it with one of the events in the list of waiting
events $E$. In its output (new state and output events), we replace
any occurences of placeholders $\nu_x$ by ``fresh'' nonces from $N$
(which we then remove from $N$). The output events are then prepended
to the list of waiting events, and the state of the process is
reflected in the new configuration.

\begin{definition}[Runs]
  Let $\process$ be a system, $E^0$ be sequence of events, and $N^0$ be
  a sequence of nonces. A \emph{run $\rho$ of a system $\process$
    initiated by $E^0$ with nonces $N^0$} is a finite sequence of
  configurations $((S^0, E^0, N^0),\dots,$ $(S^n, E^n, N^n))$ or an infinite sequence
  of configurations $((S^0, E^0, N^0),\dots)$ such that $S^0(p) = s_0^p$ for
  all $p \in \process$ and $(S^i, E^i, N^i) \xrightarrow{} (S^{i+1},
  E^{i+1}, N^{i+1})$ for all $0 \leq i < n$ (finite run) or for all $i \geq 0$  
  (infinite run).

  We denote the state $S^n(p)$ of a process $p$ at the end of a run $\rho$ by $\rho(p)$.
\end{definition}

Usually, we will initiate runs with a set $E^0$ containing infinite
trigger events of the form $\an{a, a, \str{TRIGGER}}$ for each $a \in
\addresses$, interleaved by address.

\subsubsection{Atomic Dolev-Yao Processes}  We next define
atomic Dolev-Yao processes, for which we require that the
messages and states that they output can be computed (more
formally, derived) from the current input event and
state. For this purpose, we first define what it means to
derive a message from given messages.

\begin{definition}[Deriving Terms]
  Let $M$ be a set of ground terms. We say that \emph{a
    term $m$ can be derived from $M$ with placeholders $V$} if there
  exist $n\ge 0$, $m_1,\ldots,m_n\in M$, and $\tau\in
  \gterms_{\emptyset}(\{x_1,\ldots,x_n\} \cup V)$ such that $m\equiv
  \tau[m_1/x_1,\ldots,m_n/x_n]$. We denote by $d_V(M)$ the set of all
  messages that can be derived from $M$ with variables $V$.
\end{definition}
For example, $a\in d_{\{\}}(\{\enc{\an{a,b,c}}{\pub(k)}, k\})$.

\begin{definition}[Atomic Dolev-Yao Process] \label{def:adyp} An \emph{atomic Dolev-Yao process
    (or simply, a DY process)} is a tuple $p = (I^p, Z^p,$ $R^p,
  s^p_0)$ such that $(I^p, Z^p, R^p, s^p_0)$ is an atomic process and
  (1) $Z^p \subseteq \gterms_{\nonces}$ (and hence, $s^p_0\in
  \gterms_{\nonces}$), and (2) for all events $e \in \events$,
  sequences of protoevents $E$, $s\in \gterms_{\nonces}$, $s'\in
  \gterms_{\nonces}(V_{\text{process}})$, with $((e, s), (E, s')) \in R^p$ it holds
  true that $E$, $s' \in d_{V_{\text{process}}}(\{e,s\})$.
\end{definition}

\begin{definition}[Atomic Attacker Process]\label{def:atomicattacker}
  An \emph{(atomic) attacker process for a set of sender addresses
    $A\subseteq \addresses$} is an atomic DY process $p = (I, Z, R,
  s_0)$ such that for all events $e$, and $s\in \gterms_{\nonces}$ we
  have that $((e, s), (E,s')) \in R$ iff $s'=\an{e, E, s}$ and
  $E=\an{\an{a_1, f_1, m_1}, \dots, \an{a_n, f_n, m_n}}$ with $n \in
  \mathbb{N}$, $a_1,\dots,a_n\in \addresses$, $f_0,\dots,f_n\in A$,
  $m_1,\dots,m_n\in d_{V_{\text{process}}}(\{e,s\})$.
\end{definition}

\subsection{Scripts}
We define scripts, which model client-side scripting
technologies, such as JavaScript. Scripts are defined
similarly to DY processes.
\begin{definition}[Placeholders for Scripts]\label{def:placeholder-sp}
  By $V_{\text{script}} = \{\lambda_1, \dots\}$ we denote an infinite set of variables
  used in scripts.
\end{definition}

\begin{definition}[Scripts]\label{def:sp} A \emph{script} is a relation $R\subseteq \terms \times
  \terms(V_{\text{script}})$ such that for all $s \in \terms$, $s' \in \terms(V_{\text{script}})$ with
  $(s, s') \in R$ it follows that $s'\in d_{V_{\text{script}}}(s)$.
\end{definition}
A script is called by the browser which provides it with state
information (such as the script's last state and limited information
about the browser's state) $s$. The script then outputs a term $s'$,
which represents the new internal state and some command which is
interpreted by the browser. The term $s'$ may contain variables
$\lambda_1, \dots$ which the browser will replace by (otherwise
unused) placeholders $\nu_1,\dots$ which will be replaced by nonces
once the browser DY process finishes (effectively providing the script
with a way to get ``fresh'' nonces).

Similarly to an attacker process, we define the
\emph{attacker script} $\Rasp$: 
\begin{definition}[Attacker Script]
  The attacker script $\Rasp$ outputs everything that is derivable
  from the input, i.e., $\Rasp=\{(s, s')\mid s\in \terms, s'\in
  d_{V_{\text{script}}}(s)\}$.
\end{definition}

\subsection{Web System}\label{app:websystem}

The web infrastructure and web applications are formalized by what is
called a web system. A web system contains, among others, a (possibly
infinite) set of DY processes, modeling web browsers, web servers, DNS
servers, and attackers (which may corrupt other entities, such as
browsers).

\begin{definition}\label{def:websystem}
  A \emph{web system $\completewebsystem=(\websystem,
    \scriptset,\mathsf{script}, E^0)$} is a tuple with its
  components defined as follows:

  The first component, $\websystem$, denotes a system
  (a set of DY processes) and is partitioned into the
  sets $\mathsf{Hon}$, $\mathsf{Web}$, and $\mathsf{Net}$
  of honest, web attacker, and network attacker processes,
  respectively.  

  Every $p\in \mathsf{Web} \cup \mathsf{Net}$ is an
  attacker process for some set of sender addresses
  $A\subseteq \addresses$. For a web attacker $p\in
  \mathsf{Web}$, we require its set of addresses $I^p$ to
  be disjoint from the set of addresses of all other web
  attackers and honest processes, i.e., $I^p\cap I^{p'} =
  \emptyset$ for all $p' \in \mathsf{Hon} \cup
  \mathsf{Web}$. Hence, a web attacker cannot listen to
  traffic intended for other processes. Also, we require
  that $A=I^p$, i.e., a web attacker can only use sender
  addresses it owns. Conversely, a network attacker may
  listen to all addresses (i.e., no restrictions on $I^p$)
  and may spoof all addresses (i.e., the set $A$ may be
  $\addresses$).

  Every $p \in \mathsf{Hon}$ is a DY process which
  models either a \emph{web server}, a \emph{web browser},
  or a \emph{DNS server}, as further described in the
  following subsections. Just as for web attackers, we
  require that $p$ does not spoof sender addresses and that
  its set of addresses $I^p$ is disjoint from those of
  other honest processes and the web attackers. 

  The second component, $\scriptset$, is a finite set of
  scripts such that $\Rasp\in \scriptset$. The third
  component, $\mathsf{script}$, is an injective mapping
  from $\scriptset$ to $\mathbb{S}$, i.e., by
  $\mathsf{script}$ every $s\in \scriptset$ is assigned its
  string representation $\mathsf{script}(s)$. 

  Finally, $E^0$ is an  (infinite) sequence of events, containing an
  infinite number of events of the form $\an{a,a,\trigger}$
  for every $a \in \bigcup_{p\in \websystem} I^p$.

  A \emph{run} of $\completewebsystem$ is a run of
  $\websystem$ initiated by $E^0$.
\end{definition}

\clearpage
\section{Message and Data
  Formats}\label{app:message-data-formats}

We now provide some more details about data and message
formats that are needed for the formal treatment of the web
model and the analysis  presented in the following.

\subsection{URLs}\label{app:urls}

\begin{definition}\label{def:url}
  A \emph{URL} is a term of the form
  $$\an{\tUrl, \mi{protocol}, \mi{host}, \mi{path},
    \mi{parameters}, \mi{fragment}}$$ with $\mi{protocol}$
  $\in \{\http, \https\}$
  (for \textbf{p}lain (HTTP) and \textbf{s}ecure (HTTPS)),
  $\mi{host} \in \dns$,
  $\mi{path} \in \mathbb{S}$,
  $\mi{parameters} \in \dict{\mathbb{S}}{\terms}$,
  and $\mi{fragment} \in \terms$.
  The set of all valid URLs is $\urls$.
\end{definition}

The $\mi{fragment}$ part of a URL can be omitted when
writing the URL. Its value is then defined to be $\bot$.

\begin{example} \label{ex:url-pointers}
  For the URL $u = \an{\tUrl, a, b, c, d}$, $\comp{u}{protocol} =
  a$. If, in the algorithm described later, we say $\comp{u}{path} :=
  e$ then $u = \an{\tUrl, a, b, c, e}$ afterwards. 
\end{example}

\subsection{Origins}\label{app:origins}
\begin{definition} An \emph{origin} is a term of the form
  $\an{\mi{host}, \mi{protocol}}$ with $\mi{host} \in
  \dns$ and $\mi{protocol} \in \{\http, \https\}$. We write
  $\origins$ for the set of all origins.  
\end{definition}

\begin{example}
  For example, $\an{\str{FOO}, \https}$ is the HTTPS origin
  for the domain $\str{FOO}$, while $\an{\str{BAR}, \http}$
  is the HTTP origin for the domain $\str{BAR}$.
\end{example}
\subsection{Cookies}\label{app:cookies}

\begin{definition} A \emph{cookie} is a term of the form
  $\an{\mi{name}, \mi{content}}$ where $\mi{name} \in
  \terms$, and $\mi{content}$ is a term of the form
  $\an{\mi{value}, \mi{secure}, \mi{session},
    \mi{httpOnly}}$ where $\mi{value} \in \terms$,
  $\mi{secure}$, $\mi{session}$, $\mi{httpOnly} \in
  \{\True, \bot\}$. We write $\cookies$ for the set of all
  cookies and $\cookies^\nu$ for the set of all cookies
  where names and values are defined over $\terms(V)$.
\end{definition}

If the $\mi{secure}$ attribute of a cookie is set, the
browser will not transfer this cookie over unencrypted HTTP
connections. If the $\mi{session}$ flag is set, this cookie
will be deleted as soon as the browser is closed. The
$\mi{httpOnly}$ attribute controls whether JavaScript has
access to this cookie.

Note that cookies of the form described here are only
contained in HTTP(S) requests. In responses, only the
components $\mi{name}$ and $\mi{value}$ are transferred as
a pairing of the form $\an{\mi{name}, \mi{value}}$.

\subsection{HTTP Messages}\label{app:http-messages-full}
\begin{definition}
  An \emph{HTTP request} is a term of the form shown in
  (\ref{eq:default-http-request}). An \emph{HTTP response}
  is a term of the form shown in
  (\ref{eq:default-http-response}).
  \begin{align}
    \label{eq:default-http-request}
    & \hreq{ nonce=\mi{nonce}, method=\mi{method},
      xhost=\mi{host}, xpath=\mi{path},
      parameters=\mi{parameters}, headers=\mi{headers},
      xbody=\mi{body}
    } \\
    \label{eq:default-http-response}
    & \hresp{ nonce=\mi{nonce}, status=\mi{status},
      headers=\mi{headers}, xbody=\mi{body} }
  \end{align}
  The components are defined as follows:
  \begin{itemize}
  \item $\mi{nonce} \in \nonces$ serves to map each
    response to the corresponding request 
  \item $\mi{method} \in \methods$ is one of the HTTP
    methods.
  \item $\mi{host} \in \dns$ is the host name in the HOST
    header of HTTP/1.1.
  \item $\mi{path} \in \mathbb{S}$ is a string indicating
    the requested resource at the server side
  \item $\mi{status} \in \mathbb{S}$ is the HTTP status
    code (i.e., a number between 100 and 505, as defined by
    the HTTP standard)
  \item $\mi{parameters} \in
    \dict{\mathbb{S}}{\terms}$ contains URL parameters
  \item $\mi{headers} \in \dict{\mathbb{S}}{\terms}$,
    containing request/response headers. The dictionary
    elements are terms of one of the following forms: 
    \begin{itemize}
    \item $\an{\str{Origin}, o}$ where $o$ is an origin,
    \item $\an{\str{Set{\mhyphen}Cookie}, c}$ where $c$ is
      a sequence of cookies,
    \item $\an{\str{Cookie}, c}$ where $c \in,
      \dict{\mathbb{S}}{\terms}$ (note that in this header,
      only names and values of cookies are transferred),
    \item $\an{\str{Location}, l}$ where $l \in \urls$,
    \item $\an{\str{Referer}, r}$ where $r \in \urls$,
    \item $\an{\str{Strict{\mhyphen}Transport{\mhyphen}Security},\True}$,
    \item $\an{\str{Authorization}, \an{\mi{username},\mi{password}}}$ where $\mi{username}$, $\mi{password} \in \mathbb{S}$,
    \item $\an{\str{ReferrerPolicy}, p}$ where $p \in \{\str{noreferrer}, \str{origin}\}$
    \end{itemize}
  \item $\mi{body} \in \terms$ in requests and responses. 
  \end{itemize}
  We write $\httprequests$/$\httpresponses$ for the set of
  all HTTP requests or responses, respectively.
\end{definition}

\begin{example}[HTTP Request and Response]
  \begin{align}
    \label{eq:ex-request}
    \nonumber \mi{r} := & \langle
                   \cHttpReq,
                   n_1,
                   \mPost,
                   \str{example.com},
                   \str{/show},
                   \an{\an{\str{index,1}}},\\ & \quad
                   [\str{Origin}: \an{\str{example.com, \https}}],
                   \an{\str{foo}, \str{bar}}
                \rangle \\
    \label{eq:ex-response} \mi{s} := & \hresp{ nonce=n_1,
      status=200,
      headers=\an{\an{\str{Set{\mhyphen}Cookie},\an{\an{\str{SID},\an{n_2,\bot,\bot,\True}}}}},
      xbody=\an{\str{somescript},x}}
  \end{align}
  \noindent
  An HTTP $\mGet$ request for the URL
  \url{http://example.com/show?index=1} is shown in
  (\ref{eq:ex-request}), with an Origin header and a body
  that contains $\an{\str{foo},\str{bar}}$. A possible
  response is shown in (\ref{eq:ex-response}), which
  contains an httpOnly cookie with name $\str{SID}$ and
  value $n_2$ as well as the string representation
  $\str{somescript}$ of the script
  $\mathsf{script}^{-1}(\str{somescript})$ (which should be
  an element of $\scriptset$) and its initial state
  $x$.
\end{example}

\subsubsection{Encrypted HTTP
  Messages} \label{app:http-messages-encrypted-full}
For HTTPS, requests are encrypted using the public key of
the server.  Such a request contains an (ephemeral)
symmetric key chosen by the client that issued the
request. The server is supported to encrypt the response
using the symmetric key.

\begin{definition} An \emph{encrypted HTTP request} is of
  the form $\enc{\an{m, k'}}{k}$, where $k \in terms$, $k' \in
  \nonces$, and $m \in \httprequests$. The corresponding
  \emph{encrypted HTTP response} would be of the form
  $\encs{m'}{k'}$, where $m' \in \httpresponses$. We call
  the sets of all encrypted HTTP requests and responses
  $\httpsrequests$ or $\httpsresponses$, respectively.
\end{definition}

We say that an HTTP(S) response matches or corresponds to
an HTTP(S) request if both terms contain the same nonce.

\begin{example}
  \begin{align}
    \label{eq:ex-enc-request} \ehreqWithVariable{r}{k'}{\pub(k_\text{example.com})} \\
    \label{eq:ex-enc-response} \ehrespWithVariable{s}{k'}
  \end{align} The term (\ref{eq:ex-enc-request}) shows an
  encrypted request (with $r$ as in
  (\ref{eq:ex-request})). It is encrypted using the public
  key $\pub(k_\text{example.com})$.  The term
  (\ref{eq:ex-enc-response}) is a response (with $s$ as in
  (\ref{eq:ex-response})). It is encrypted symmetrically
  using the (symmetric) key $k'$ that was sent in the
  request (\ref{eq:ex-enc-request}).
\end{example}

\subsection{DNS Messages}\label{app:dns-messages}
\begin{definition} A \emph{DNS request} is a term of the form
$\an{\cDNSresolve, \mi{domain}, \mi{n}}$ where $\mi{domain}$ $\in
\dns$, $\mi{n} \in \nonces$. We call the set of all DNS requests
$\dnsrequests$.
\end{definition}

\begin{definition} A \emph{DNS response} is a term of the form
$\an{\cDNSresolved, \mi{domain}, \mi{result}, \mi{n}}$ with $\mi{domain}$ $\in
\dns$, $\mi{result} \in
\addresses$, $\mi{n} \in \nonces$. We call the set of all DNS
responses $\dnsresponses$.
\end{definition}

DNS servers are supposed to include the nonce they received
in a DNS request in the DNS response that they send back so
that the party which issued the request can match it with
the request.

\subsection{DNS Servers}\label{app:DNSservers}

Here, we consider a flat DNS model in which DNS queries are
answered directly by one DNS server and always with the
same address for a domain. A full (hierarchical) DNS system
with recursive DNS resolution, DNS caches, etc.~could also
be modeled to cover certain attacks on the DNS system
itself.

\begin{definition}\label{def:dns-server}
  A \emph{DNS server} $d$ (in a flat DNS model) is modeled
  in a straightforward way as an atomic DY process
  $(I^d, \{s^d_0\}, R^d, s^d_0)$. It has a finite set of
  addresses $I^d$ and its initial (and only) state $s^d_0$
  encodes a mapping from domain names to addresses of the
  form
$$s^d_0=\langle\an{\str{domain}_1,a_1},\an{\str{domain}_2, a_2}, \ldots\rangle \ .$$ DNS
queries are answered according to this table (otherwise
ignored).
\end{definition}

\clearpage
\section{Detailed Description of the Browser Model}
\label{app:deta-descr-brows}
Following the informal description of the browser model in
Section~\ref{sec:fks-web-model}, we now present a formal
model. We start by introducing some notation and
terminology. 

\subsection{Notation and Terminology (Web Browser State)}

Before we can define the state of a web browser, we first
have to define windows and documents. 

\begin{sloppypar}
  \begin{definition} A \emph{window} is a term of the form
    $w = \an{\mi{nonce}, \mi{documents}, \mi{opener}}$ with
    $\mi{nonce} \in \nonces$,
    $\mi{documents} \subsetPairing \documents$ (defined
    below), $\mi{opener} \in \nonces \cup \{\bot\}$ where
    $\comp{d}{active} = \True$ for exactly one
    $d \inPairing \mi{documents}$ if $\mi{documents}$ is
    not empty (we then call $d$ the \emph{active document
      of $w$}). We write $\windows$ for the set of all
    windows. We write $\comp{w}{activedocument}$ to denote
    the active document inside window $w$ if it exists and
    $\an{}$ else.
  \end{definition}
\end{sloppypar}
\noindent We will refer to the window nonce as \emph{(window)
  reference}.

The documents contained in a window term to the left of the
active document are the previously viewed documents
(available to the user via the ``back'' button) and the
documents in the window term to the right of the currently
active document are documents available via the ``forward''
button.

A window $a$ may have opened a top-level window $b$ (i.e.,
a window term which is not a subterm of a document
term). In this case, the \emph{opener} part of the term $b$
is the nonce of $a$, i.e., $\comp{b}{opener} =
\comp{a}{nonce}$.

\begin{sloppypar}
  \begin{definition} A \emph{document} $d$ is a term of the
    form
    \begin{align*}
      \an{\mi{nonce}, \mi{location}, \mi{headers}, \mi{referrer}, \mi{script},
      \mi{scriptstate},\mi{scriptinputs}, \mi{subwindows},
      \mi{active}}  
    \end{align*}
    where $\mi{nonce} \in \nonces$,
    $\mi{location} \in \urls$,
    $\mi{headers} \in \dict{\mathbb{S}}{\terms}$,
    $\mi{referrer} \in \urls \cup \{\bot\}$,
    $\mi{script} \in \terms$,
    $\mi{scriptstate} \in \terms$,
    $\mi{scriptinputs} \in \terms$,
    $\mi{subwindows} \subsetPairing \windows$,
    $\mi{active} \in \{\True, \bot\}$. A \emph{limited
      document} is a term of the form
    $\an{\mi{nonce}, \mi{subwindows}}$ with $\mi{nonce}$,
    $\mi{subwindows}$ as above. A window
    $w \inPairing \mi{subwindows}$ is called a
    \emph{subwindow} (of $d$). We write $\documents$ for
    the set of all documents. For a document term $d$ we
    write $d.\str{origin}$ to denote the origin of the
    document, i.e., the term
    $\an{d.\str{location}.\str{host},
      d.\str{location}.\str{protocol}} \in \origins$.
  \end{definition}%
\end{sloppypar}%

\noindent We will refer to the document nonce as \emph{(document)
  reference}.

\begin{definition} For two window terms $w$
  and $w'$
  we write $w \windowChildOf w'$
  if
  $w \inPairing
  \comp{\comp{w'}{activedocument}}{subwindows}$. We write
  $\windowChildOfX$ for the transitive closure.
\end{definition}

\subsection{Web Browser State}
\label{sec:web-browser-state}

We can now define the set of states of web browsers. Note
that we use the dictionary notation that we introduced in
Definition~\ref{def:dictionaries}.

\begin{definition} The \emph{set of states
    $Z_\text{webbrowser}$
    of a web browser atomic Dolev-Yao process} consists of
  the terms of the form
  \begin{align*} \langle\mi{windows}, \mi{ids},
    \mi{secrets}, \mi{cookies}, \mi{localStorage},
    \mi{sessionStorage}, \mi{keyMapping},& \\\mi{sts},
    \mi{DNSaddress}, \mi{pendingDNS},
    \mi{pendingRequests}, \mi{isCorrupted}&\rangle
  \end{align*} where
  \begin{itemize}
  \item $\mi{windows} \subsetPairing \windows$,
  \item $\mi{ids} \subsetPairing \terms$,
  \item $\mi{secrets} \in \dict{\origins}{\terms}$,
  \item $\mi{cookies}$ is a dictionary over $\dns$ and
    sequences of $\cookies$, 
  \item $\mi{localStorage} \in \dict{\origins}{\terms}$,
  \item $\mi{sessionStorage} \in \dict{\mi{OR}}{\terms}$ for $\mi{OR} := \left\{\an{o,r}
    \middle|\, o \in \origins,\, r \in \nonces\right\}$,
  \item $\mi{keyMapping} \in \dict{\dns}{\terms}$,
  \item $\mi{sts} \subsetPairing \dns$,
  \item $\mi{DNSaddress} \in \addresses$,
  \item $\mi{pendingDNS} \in \dict{\nonces}{\terms}$,
  \item $\mi{pendingRequests} \in$ $\terms$,
  \item and $\mi{isCorrupted} \in \{\bot, \fullcorrupt,$ $
    \closecorrupt\}$.
  \end{itemize} 
\end{definition}

\subsection{Description of the Web Browser
  Relation}\label{app:descr-web-brows}

We will now define the relation $R_{\text{webbrowser}}$
of a standard HTTP browser. We first introduce some
notations and then describe the functions that are used for
defining the browser main algorithm. We then define the
browser relation.

\subsubsection{Helper Functions}
In the following description of the web browser relation
$R_{\text{webbrowser}}$
we use the helper functions $\mathsf{Subwindows}$,
$\mathsf{Docs}$,
$\mathsf{Clean}$,
$\mathsf{CookieMerge}$ and $\mathsf{AddCookie}$.

\paragraph{Subwindows}
Given a browser state $s$,
$\mathsf{Subwindows}(s)$
denotes the set of all pointers\footnote{Recall the
  definition of a pointer in Definition~\ref{def:pointer}.}
to windows in the window list $\comp{s}{windows}$,
their active documents, and (recursively) the subwindows of
these documents. We exclude subwindows of inactive
documents and their subwindows. With $\mathsf{Docs}(s)$
we denote the set of pointers to all active documents in
the set of windows referenced by $\mathsf{Subwindows}(s)$.
\begin{definition} 
  For a browser state $s$ we denote by
  $\mathsf{Subwindows}(s)$ the minimal set of
  pointers that satisfies the
  following conditions: (1) For all windows $w \inPairing
  \comp{s}{windows}$ there is a $\ptr{p} \in
  \mathsf{Subwindows}(s)$ such that $\compn{s}{\ptr{p}} =
  w$. (2) For all $\ptr{p} \in \mathsf{Subwindows}(s)$, the
  active document $d$ of the window $\compn{s}{\ptr{p}}$
  and every subwindow $w$ of $d$ there is a pointer
  $\ptr{p'} \in \mathsf{Subwindows}(s)$ such that
  $\compn{s}{\ptr{p'}} = w$.

  Given a browser state $s$, the set $\mathsf{Docs}(s)$ of
  pointers to active documents is the minimal set such that
  for every $\ptr{p} \in \mathsf{Subwindows}(s)$, there is
  a pointer $\ptr{p'} \in \mathsf{Docs}(s)$ with
  $\compn{s}{\ptr{p'}} =
  \comp{\compn{s}{\ptr{p}}}{activedocument}$.
\end{definition}

By $\mathsf{Subwindows}^+(s)$ and $\mathsf{Docs}^+(s)$ we
denote the respective sets that also include the inactive
documents and their subwindows.

\paragraph{Clean}
The function $\mathsf{Clean}$ will be used to determine
which information about windows and documents the script
running in the document $d$ has access to.
\begin{definition} Let $s$
  be a browser state and $d$
  a document. By $\mathsf{Clean}(s, d)$
  we denote the term that equals $\comp{s}{windows}$
  but with (1) all inactive documents removed (including
  their subwindows etc.), (2) all subterms that represent
  non-same-origin documents w.r.t.~$d$
  replaced by a limited document $d'$
  with the same nonce and the same subwindow list, and (3)
  the values of the subterms $\str{headers}$
  for all documents set to $\an{}$.
  (Note that non-same-origin documents on all levels are
  replaced by their corresponding limited document.)
\end{definition}

\paragraph{CookieMerge}
The function $\mathsf{CookieMerge}$
merges two sequences of cookies together: When used in the
browser, $\mi{oldcookies}$
is the sequence of existing cookies for some origin,
$\mi{newcookies}$
is a sequence of new cookies that was output by some
script. The sequences are merged into a set of cookies
using an algorithm that is based on the \emph{Storage
  Mechanism} algorithm described in RFC6265.
\begin{definition} \label{def:cookiemerge} For a sequence
  of cookies (with pairwise different names)
  $\mi{oldcookies}$
  and a sequence of cookies $\mi{newcookies}$,
  the set
  $\mathsf{CookieMerge}(\mi{oldcookies}, \mi{newcookies})$
  is defined by the following algorithm: From
  $\mi{newcookies}$
  remove all cookies $c$
  that have $c.\str{content}.\str{httpOnly} \equiv \True$.
  For any $c$,
  $c' \inPairing \mi{newcookies}$,
  $\comp{c}{name} \equiv \comp{c'}{name}$,
  remove the cookie that appears left of the other in
  $\mi{newcookies}$.
  Let $m$
  be the set of cookies that have a name that either
  appears in $\mi{oldcookies}$
  or in $\mi{newcookies}$,
  but not in both. For all pairs of cookies
  $(c_\text{old}, c_\text{new})$
  with $c_\text{old} \inPairing \mi{oldcookies}$,
  $c_\text{new} \inPairing \mi{newcookies}$,
  $\comp{c_\text{old}}{name} \equiv
  \comp{c_\text{new}}{name}$, add $c_\text{new}$
  to $m$
  if
  $\comp{\comp{c_\text{old}}{content}}{httpOnly} \equiv
  \bot$ and add $c_\text{old}$
  to $m$
  otherwise. The result of
  $\mathsf{CookieMerge}(\mi{oldcookies}, \mi{newcookies})$
  is $m$.
\end{definition}

\paragraph{AddCookie}
The function $\mathsf{AddCookie}$
adds a cookie $c$
received in an HTTP response to the sequence of cookies
contained in the sequence $\mi{oldcookies}$.
It is again based on the algorithm described in RFC6265 but
simplified for the use in the browser model.
\begin{definition} \label{def:addcookie} For a sequence of cookies (with pairwise different
  names) $\mi{oldcookies}$ and a cookie $c$, the sequence
  $\mathsf{AddCookie}(\mi{oldcookies}, c)$ is defined by the
  following algorithm: Let $m := \mi{oldcookies}$. Remove
  any $c'$ from $m$ that has $\comp{c}{name} \equiv
  \comp{c'}{name}$. Append $c$ to $m$ and return $m$.
\end{definition}

\paragraph{NavigableWindows}
The function $\mathsf{NavigableWindows}$ returns a set of
windows that a document is allowed to navigate. We closely
follow \cite{html5}, Section~5.1.4 for this definition.
\begin{definition} The set $\mathsf{NavigableWindows}(\ptr{w}, s')$
  is the set $\ptr{W} \subseteq
  \mathsf{Subwindows}(s')$ of pointers to windows that the
  active document in $\ptr{w}$ is allowed to navigate. The
  set $\ptr{W}$ is defined to be the minimal set such that
  for every $\ptr{w'}
  \in \mathsf{Subwindows}(s')$ the following is true: %
\begin{itemize}
\item If
  $\comp{\comp{\compn{s'}{\ptr{w}'}}{activedocument}}{origin}
  \equiv
  \comp{\comp{\compn{s'}{\ptr{w}}}{activedocument}}{origin}$
  (i.e., the active documents in $\ptr{w}$ and $\ptr{w'}$ are
  same-origin), then $\ptr{w'} \in \ptr{W}$, and
\item If ${\compn{s'}{\ptr{w}} \childof
    \compn{s'}{\ptr{w'}}}$ $\wedge$ $\nexists\, \ptr{w}''
  \in \mathsf{Subwindows}(s')$ with $\compn{s'}{\ptr{w}'}
  \childof \compn{s'}{\ptr{w}''}$ ($\ptr{w'}$ is a
  top-level window and $\ptr{w}$ is an ancestor window of
  $\ptr{w'}$), then $\ptr{w'} \in \ptr{W}$, and
\item If $\exists\, \ptr{p} \in \mathsf{Subwindows}(s')$
  such that $\compn{s'}{\ptr{w}'} \windowChildOfX
  \compn{s'}{\ptr{p}}$ \\$\wedge$
  $\comp{\comp{\compn{s'}{\ptr{p}}}{activedocument}}{origin}
  =
  \comp{\comp{\compn{s'}{\ptr{w}}}{activedocument}}{origin}$
  ($\ptr{w'}$ is not a top-level window but there is an
  ancestor window $\ptr{p}$ of $\ptr{w'}$ with an active
  document that has the same origin as the active document
  in $\ptr{w}$), then $\ptr{w'} \in \ptr{W}$, and
\item If $\exists\, \ptr{p} \in \mathsf{Subwindows}(s')$ such
  that $\comp{\compn{s'}{\ptr{w'}}}{opener} =
  \comp{\compn{s'}{\ptr{p}}}{nonce}$ $\wedge$ $\ptr{p} \in
  \ptr{W}$ ($\ptr{w'}$ is a top-level window---it has an
  opener---and $\ptr{w}$ is allowed to navigate the opener
  window of $\ptr{w'}$, $\ptr{p}$), then $\ptr{w'} \in
  \ptr{W}$. 
\end{itemize}
\end{definition}

\begin{algorithm}[p]
\caption{\label{alg:getnavigablewindow} Web Browser Model: Determine window for navigation.}
\begin{algorithmic}[1]
  \Function{$\mathsf{GETNAVIGABLEWINDOW}$}{$\ptr{w}$, $\mi{window}$, $\mi{noreferrer}$, $s'$}
    \If{$\mi{window} \equiv \wBlank$} \Comment{Open a new window when $\wBlank$ is used}
      \If{$\mi{noreferrer} \equiv \True$}
        \Let{$w'$}{$\an{\nu_9, \an{}, \bot}$}
      \Else
        \Let{$w'$}{$\an{\nu_9, \an{}, \comp{\compn{s'}{\ptr{w}}}{nonce} }$}
      \EndIf
      \Append{$w'$}{$\comp{s'}{windows}$}  \breakalgohook{2}\textbf{and} let
      $\ptr{w}'$ be a pointer to this new element in $s'$
      \State \Return{$\ptr{w}'$}
    \EndIf
    \LetNDST{$\ptr{w}'$}{$\mathsf{NavigableWindows}(\ptr{w},
      s')$}{$\comp{\compn{s'}{\ptr{w}'}}{nonce} \equiv
      \mi{window}$\breakalgohook{1}}{\textbf{return} $\ptr{w}$} %
    \State \Return{$\ptr{w'}$}
  \EndFunction
\end{algorithmic} %
\end{algorithm}
\begin{algorithm}[tbp]
\caption{\label{alg:getwindow} Web Browser Model: Determine same-origin window.}
\begin{algorithmic}[1]
  \Function{$\mathsf{GETWINDOW}$}{$\ptr{w}$, $\mi{window}$, $s'$}
    \LetNDST{$\ptr{w}'$}{$\mathsf{Subwindows}(s')$}{$\comp{\compn{s'}{\ptr{w}'}}{nonce} \equiv \mi{window}$\breakalgohook{1}}{\textbf{return} $\ptr{w}$} %
    \If{
      $\comp{\comp{\compn{s'}{\ptr{w}'}}{activedocument}}{origin}
      \equiv
      \comp{\comp{\compn{s'}{\ptr{w}}}{activedocument}}{origin}$
    }
      \State \Return{$\ptr{w}'$}
    \EndIf
    \State \Return{$\ptr{w}$}
  \EndFunction
\end{algorithmic} %
\end{algorithm}
\begin{algorithm}[tbp]
\caption{\label{alg:cancelnav} Web Browser Model: Cancel pending requests for given window.}
\begin{algorithmic}[1]
  \Function{$\mathsf{CANCELNAV}$}{$n$, $s'$}
    \State \textbf{remove all} $\an{n, \mi{req}, \mi{key}, \mi{f}}$ \textbf{ from } $\comp{s'}{pendingRequests}$ \textbf{for any} $\mi{req}$, $\mi{key}$, $\mi{f}$
    \State \textbf{remove all} $\an{x, \an{n, \mi{message}, \mi{url}}}$ \textbf{ from } $\comp{s'}{pendingDNS}$\breakalgohook{1} \textbf{for any} $\mi{x}$, $\mi{message}$, $\mi{url}$
    \State \Return{$s'$}
  \EndFunction
\end{algorithmic} %
\end{algorithm}
\begin{algorithm}[tbp]
\caption{\label{alg:send} Web Browser Model: Prepare headers, do DNS resolution, save message. }
\begin{algorithmic}[1]
  \Function{$\mathsf{HTTP\_SEND}$}{$\mi{reference}$, $\mi{message}$, $\mi{url}$, $\mi{origin}$, $\mi{referrer}$, $\mi{referrerPolicy}$, $s'$}
    \If{$\comp{\mi{message}}{host} \inPairing \comp{s'}{sts}$}
      \Let{$\mi{url}.\str{protocol}$}{$\https$}
    \EndIf
    \Let{ $\mi{cookies}$}{$\langle\{\an{\comp{c}{name}, \comp{\comp{c}{content}}{value}} | c\inPairing \comp{s'}{cookies}\left[\comp{\mi{message}}{host}\right]$} \label{line:assemble-cookies-for-request} \breakalgohook{1} $\wedge \left(\comp{\comp{c}{content}}{secure} \implies \left(\mi{url}.\str{protocol} = \https\right)\right) \}\rangle$ \label{line:cookie-rules-http}
    \Let{$\comp{\mi{message}}{headers}[\str{Cookie}]$}{$\mi{cookies}$}
    \If{$\mi{origin} \not\equiv \bot$}
      \Let{$\comp{\mi{message}}{headers}[\str{Origin}]$}{$\mi{origin}$}
    \EndIf
    \If{$\mi{referrerPolicy} \equiv \str{noreferrer}$} 
      \Let{$\mi{referrer}$}{$\bot$}
    \EndIf
    \If{$\mi{referrer} \not\equiv \bot$}
      \If{$\mi{referrerPolicy} \equiv \str{origin}$} 
        \Let{$\mi{referrer}$}{$\an{\cUrl, \mi{referrer}.\str{protocol}, \mi{referrer}.\str{host}, \str{/}, \an{}, \bot}$} \Comment{Referrer stripped down to origin.}
      \EndIf
      \Let{$\mi{referrer}.\str{fragment}$}{$\bot$} \Comment{Browsers do not send fragment identifiers in the Referer header.}
      \Let{$\comp{\mi{message}}{headers}[\str{Referer}]$}{$\mi{referrer}$}
    \EndIf
    \Let{$\comp{s'}{pendingDNS}[\nu_8]$}{$\an{\mi{reference},
        \mi{message}, \mi{url}}$} \label{line:add-to-pendingdns}
    \State \textbf{stop} $\an{\an{\comp{s'}{DNSaddress},a,
    \an{\cDNSresolve, \mi{message}.\str{host}, \nu_8}}}$, $s'$
  \EndFunction
\end{algorithmic} %
\end{algorithm}

\subsubsection{Notations for Functions and Algorithms} We
use the following notations to describe the browser
algorithms:

\paragraph{Non-deterministic chosing and iteration} The notation
$\textbf{let}\ n \leftarrow N$
is used to describe that $n$
is chosen non-de\-ter\-mi\-nis\-tic\-ally from the set $N$.
We write $\textbf{for each}\ s \in M\ \textbf{do}$
to denote that the following commands (until \textbf{end
  for}) are repeated for every element in $M$,
where the variable $s$
is the current element. The order in which the elements are
processed is chosen non-deterministically. %
We write, for example,
\begin{algorithmic}
  \LetST{$x,y$}{$\an{\str{Constant},x,y} \equiv
    t$}{doSomethingElse}
\end{algorithmic} %
for some variables $x,y$,
a string $\str{Constant}$,
and some term $t$
to express that $x := \proj{2}{t}$,
and $y := \proj{3}{t}$
if $\str{Constant} \equiv \proj{1}{t}$
and if $|\an{\str{Constant},x,y}| = |t|$,
and that otherwise $x$
and $y$
are not set and doSomethingElse is executed.

\paragraph{Stop without output} We write
\textbf{stop} (without further parameters) to denote that
there is no output and no change in the state.

\paragraph{Placeholders} In several places throughout the
algorithms presented next we use placeholders to generate
``fresh'' nonces as described in our communication model
(see Definition~\ref{def:terms}).
Figure~\ref{fig:browser-placeholder-list} shows a list of
all placeholders used.

\begin{figure}[b]
  \centering
  \begin{tabular}{|@{\hspace{1ex}}l@{\hspace{1ex}}|@{\hspace{1ex}}l@{\hspace{1ex}}|}\hline 
    \hfill Placeholder\hfill  &\hfill  Usage\hfill  \\\hline\hline
    $\nu_1$ & Algorithm~\ref{alg:browsermain}, new window nonces  \\\hline
    $\nu_2$ & Algorithm~\ref{alg:browsermain}, new HTTP request nonce   \\\hline
    $\nu_3$ & Algorithm~\ref{alg:browsermain}, lookup key for pending HTTP requests entry  \\\hline
    $\nu_4$ & Algorithm~\ref{alg:runscript}, new HTTP request nonce (multiple lines)  \\\hline
    $\nu_5$ & Algorithm~\ref{alg:runscript}, new subwindow nonce  \\\hline
    $\nu_6$ & Algorithm~\ref{alg:processresponse}, new HTTP request nonce  \\\hline
    $\nu_7$ & Algorithm~\ref{alg:processresponse}, new document nonce   \\\hline
    $\nu_8$ & Algorithm~\ref{alg:send}, lookup key for pending DNS entry  \\\hline
    $\nu_9$ & Algorithm~\ref{alg:getnavigablewindow}, new window nonce  \\\hline
    $\nu_{10}, \dots$ & Algorithm~\ref{alg:runscript}, replacement for placeholders in script output   \\\hline

  \end{tabular}
  
  \caption{List of placeholders used in browser algorithms.}
  \label{fig:browser-placeholder-list}
\end{figure}

\begin{algorithm}[tbp]
\caption{\label{alg:navback} Web Browser Model: Navigate a window backward. }
\begin{algorithmic}[1]
  \Function{$\mathsf{NAVBACK}$}{$\ptr{w}$, $s'$}
      \If{$\exists\, \ptr{j} \in
        \mathbb{N}, \ptr{j} > 1$ \textbf{such that}
        $\comp{\compn{\comp{\compn{s'}{\ptr{w'}}}{documents}}{\ptr{j}}}{active}
        \equiv \True$} %
        \Let{$\comp{\compn{\comp{\compn{s'}{\ptr{w'}}}{documents}}{\ptr{j}}}{active}$}{$\bot$}
        \Let{$\comp{\compn{\comp{\compn{s'}{\ptr{w'}}}{documents}}{(\ptr{j}-1)}}{active}$}{$\True$}
        \Let{$s'$}{$\mathsf{CANCELNAV}(\comp{\compn{s'}{\ptr{w}'}}{nonce},
        s')$}
        \EndIf
  \EndFunction
\end{algorithmic} %
\end{algorithm}
\begin{algorithm}[tbp]
\caption{\label{alg:navforward} Web Browser Model: Navigate a window forward. }
\begin{algorithmic}[1]
  \Function{$\mathsf{NAVFORWARD}$}{$\ptr{w}$, $s'$}
        \If{$\exists\, \ptr{j} \in \mathbb{N} $ \textbf{such that} $\comp{\compn{\comp{\compn{s'}{\ptr{w'}}}{documents}}{\ptr{j}}}{active} \equiv \True$ \breakalgohook{3} $\wedge$  $\compn{\comp{\compn{s'}{\ptr{w'}}}{documents}}{(\ptr{j}+1)} \in \mathsf{Documents}$} %
          \Let{$\comp{\compn{\comp{\compn{s'}{\ptr{w'}}}{documents}}{\ptr{j}}}{active}$}{$\bot$}
          \Let{$\comp{\compn{\comp{\compn{s'}{\ptr{w'}}}{documents}}{(\ptr{j}+1)}}{active}$}{$\True$}
          \Let{$s'$}{$\mathsf{CANCELNAV}(\comp{\compn{s'}{\ptr{w}'}}{nonce}, s')$}
        \EndIf
  \EndFunction
\end{algorithmic} %
\end{algorithm}
\begin{algorithm}[tp]
\caption{\label{alg:runscript} Web Browser Model: Execute a script.}
\begin{algorithmic}[1]
  \Function{$\mathsf{RUNSCRIPT}$}{$\ptr{w}$, $\ptr{d}$, $s'$}
    \Let{$\mi{tree}$}{$\mathsf{Clean}(s', \compn{s'}{\ptr{d}})$} \label{line:clean-tree}

    \Let{$\mi{cookies}$}{$\langle\{\an{\comp{c}{name}, \comp{\comp{c}{content}}{value}} | c \inPairing \comp{s'}{cookies}\left[  \comp{\comp{\compn{s'}{\ptr{d}}}{origin}}{host}  \right]$
     \breakalgohook{1} $\wedge\,\comp{\comp{c}{content}}{httpOnly} = \bot$ \breakalgohook{1} $\wedge\,\left(\comp{\comp{c}{content}}{secure} \implies \left(\comp{\comp{\compn{s'}{\ptr{d}}}{origin}}{protocol} \equiv \https\right)\right) \}\rangle$} \label{line:assemble-cookies-for-script}
    \LetND{$\mi{tlw}$}{$\comp{s'}{windows}$ \textbf{such that} $\mi{tlw}$ is the top-level window containing $\ptr{d}$} 
    \Let{$\mi{sessionStorage}$}{$\comp{s'}{sessionStorage}\left[\an{\comp{\compn{s'}{\ptr{d}}}{origin}, \comp{\mi{tlw}}{nonce}}\right]$} %
    \Let{$\mi{localStorage}$}{$\comp{s'}{localStorage}\left[\comp{\compn{s'}{\ptr{d}}}{origin}\right]$}
    \Let{$\mi{secrets}$}{$\comp{s'}{secrets}\left[\comp{\compn{s'}{\ptr{d}}}{origin}\right]$} \label{line:browser-secrets}
    \LetND{$R$}{$\mathsf{script}^{-1}(\comp{\compn{s'}{\ptr{d}}}{script})$} %
    \Let{$\mi{in}$}{$\langle\mi{tree}$, $\comp{\compn{s'}{\ptr{d}}}{nonce}, \comp{\compn{s'}{\ptr{d}}}{scriptstate}$, $\comp{\compn{s'}{\ptr{d}}}{scriptinputs}$, $\mi{cookies},$ \breakalgohook{1}  $\mi{localStorage}$, $\mi{sessionStorage}$, $\comp{s'}{ids}$, $\mi{secrets}\rangle$}\label{line:browser-scriptinputs}
    \LetND{$\mi{state}'$}{$\terms(V)$, \breakalgohook{1}
      $\mi{cookies}' \gets \mathsf{Cookies}^\nu$, \breakalgohook{1}
      $\mi{localStorage}' \gets \terms(V)$,\breakalgohook{1}
      $\mi{sessionStorage}' \gets \terms(V)$,\breakalgohook{1}
      $\mi{command} \gets \terms(V)$, \breakalgohook{1} 
      $\mi{out}^\lambda := \an{\mi{state}', \mi{cookies}', \mi{localStorage}',$ $\mi{sessionStorage}', \mi{command}}$
      \breakalgohook{1} \textbf{such that} $(\mi{in}, \mi{out}^\lambda) \in R$}  \label{line:trigger-script} 
    \Let{$\mi{out}$}{$\mi{out}^\lambda[\nu_{10}/\lambda_1, \nu_{11}/\lambda_2, \dots]$}

    \Let{$\comp{s'}{cookies}\left[\comp{\comp{\compn{s'}{\ptr{d}}}{origin}}{host}\right]$\breakalgohook{1}}{$\langle\mathsf{CookieMerge}(\comp{s'}{cookies}\left[\comp{\comp{\compn{s'}{\ptr{d}}}{origin}}{host}\right]$, $\mi{cookies}')\rangle$} \label{line:cookiemerge}
    \Let{$\comp{s'}{localStorage}\left[\comp{\compn{s'}{\ptr{d}}}{origin}\right]$}{$\mi{localStorage}'$}
    \Let{$\comp{s'}{sessionStorage}\left[\an{\comp{\compn{s'}{\ptr{d}}}{origin}, \comp{\mi{tlw}}{nonce}}\right]$}{$\mi{sessionStorage}'$}
    \Let{$\comp{\compn{s'}{\ptr{d}}}{scriptstate}$}{$state'$}
    \Switch{$\mi{command}$}
      \Case{$\an{\tHref, \mi{url},
          \mi{hrefwindow}, \mi{noreferrer}}$}
      \Let{$\ptr{w}'$}{$\mathsf{GETNAVIGABLEWINDOW}$($\ptr{w}$,
        $\mi{hrefwindow}$, $\mi{noreferrer}$, $s'$)} 
      \Let{$\mi{req}$}{$\hreq{ nonce=\nu_4, 
          method=\mGet, host=\comp{\mi{url}}{host},
          path=\comp{\mi{url}}{path},
          headers=\an{},
          parameters=\comp{\mi{url}}{parameters}, body=\an{}
        }$}
      \If{$\mi{noreferrer} \equiv \True$}
        \Let{$\mi{referrerPolicy}$}{$\str{noreferrer}$}
      \Else
        \Let{$\mi{referrerPolicy}$}{$\compn{s'}{\ptr{d}}.\str{headers}[\str{ReferrerPolicy}]$}
      \EndIf
      \Let{$s'$}{$\mathsf{CANCELNAV}(\comp{\compn{s'}{\ptr{w}'}}{nonce}, s')$}
      \CallFun{HTTP\_SEND}{$\comp{\compn{s'}{\ptr{w}'}}{nonce}$, $\mi{req}$, $\mi{url}$, $\bot$, $\mi{referrer}$, $\mi{referrerPolicy}$, $s'$} \label{line:send-href}
      \EndCase
      \Case{$\an{\tIframe, \mi{url}, \mi{window}}$}
        \Let{$\ptr{w}'$}{$\mathsf{GETWINDOW}(\ptr{w}, \mi{window}, s')$}
        \Let{$\mi{req}$}{$\hreq{
            nonce=\nu_4,
            method=\mGet,
            host=\comp{\mi{url}}{host},
            path=\comp{\mi{url}}{path},
            headers=\an{},
            parameters=\comp{\mi{url}}{parameters},
            body=\an{}
          }$}
        \Let{$\mi{referrer}$}{$s'.\ptr{w}'.\str{activedocument}.\str{location}$}
        \Let{$\mi{referrerPolicy}$}{$\compn{s'}{\ptr{d}}.\str{headers}[\str{ReferrerPolicy}]$}
        \Let{$w'$}{$\an{\nu_5, \an{}, \bot}$}
        \Let{$\comp{\comp{\compn{s'}{\ptr{w}'}}{activedocument}}{subwindows}$\breakalgohook{3}}{ $\comp{\comp{\compn{s'}{\ptr{w}'}}{activedocument}}{subwindows} \plusPairing w'$}
        \CallFun{HTTP\_SEND}{$\nu_5$, $\mi{req}$, $\mi{url}$, $\bot$, $\mi{referrer}$, $\mi{referrerPolicy}$, $s'$} \label{line:send-iframe}
      \EndCase
      \Case{$\an{\tForm, \mi{url}, \mi{method}, \mi{data}, \mi{hrefwindow}}$}
        \If{$\mi{method} \not\in \{\mGet, \mPost\}$} \footnote{The working draft for HTML5 allowed for DELETE and PUT methods in HTML5 forms. However, these have since been removed. See \url{http://www.w3.org/TR/2010/WD-html5-diff-20101019/\#changes-2010-06-24}.}
          \State \textbf{stop} 
        \EndIf
        \Let{$\ptr{w}'$}{$\mathsf{GETNAVIGABLEWINDOW}$($\ptr{w}$, $\mi{hrefwindow}$, $\bot$, $s'$)}
        \If{$\mi{method} = \mGet$}
          \Let{$\mi{body}$}{$\an{}$}
          \Let{$\mi{parameters}$}{$\mi{data}$}
          \Let{$\mi{origin}$}{$\bot$}
        \Else
          \Let{$\mi{body}$}{$\mi{data}$}
          \Let{$\mi{parameters}$}{$\comp{\mi{url}}{parameters}$}
          \Let{$\mi{origin}$}{$\comp{\compn{s'}{\ptr{d}}}{origin}$}
        \EndIf
        \Let{$\mi{req}$}{$\hreq{
            nonce=\nu_4,
            method=\mi{method},
            host=\comp{\mi{url}}{host},
            path=\comp{\mi{url}}{path},
            headers=\an{},
            parameters=\mi{parameters},
            xbody=\mi{body}
          }$}
        \Let{$\mi{referrer}$}{$\comp{\compn{s'}{\ptr{d}}}{location}$}
        \Let{$\mi{referrerPolicy}$}{$\compn{s'}{\ptr{d}}.\str{headers}[\str{ReferrerPolicy}]$}
        \Let{$s'$}{$\mathsf{CANCELNAV}(\comp{\compn{s'}{\ptr{w}'}}{nonce}, s')$}
        \CallFun{HTTP\_SEND}{$\comp{\compn{s'}{\ptr{w}'}}{nonce}$, $\mi{req}$, $\mi{url}$, $\mi{origin}$, $\mi{referrer}$, $\mi{referrerPolicy}$, $s'$} \label{line:send-form}
\algstore{myalg}
\end{algorithmic}
\end{algorithm}

\begin{algorithm}[tp]                    
\begin{algorithmic} [1]                   %
\algrestore{myalg}

      \EndCase
      \Case{$\an{\tSetScript, \mi{window}, \mi{script}}$}
        \Let{$\ptr{w}'$}{$\mathsf{GETWINDOW}(\ptr{w}, \mi{window}, s')$}
        \Let{$\comp{\comp{\compn{s'}{\ptr{w}'}}{activedocument}}{script}$}{$\mi{script}$}
        \State \textbf{stop} $\an{}$, $s'$
      \EndCase
      \Case{$\an{\tSetScriptState, \mi{window}, \mi{scriptstate}}$}
        \Let{$\ptr{w}'$}{$\mathsf{GETWINDOW}(\ptr{w}, \mi{window}, s')$}
        \Let{$\comp{\comp{\compn{s'}{\ptr{w}'}}{activedocument}}{scriptstate}$}{$\mi{scriptstate}$}
        \State \textbf{stop} $\an{}$, $s'$
      \EndCase
      \Case{$\an{\tXMLHTTPRequest, \mi{url}, \mi{method}, \mi{data}, \mi{xhrreference}}$}
        \If{$\mi{method} \in \{\mConnect, \mTrace, \mTrack\} \wedge \mi{xhrreference} \not\in \{\nonces, \bot\}$} 
          \State \textbf{stop} 
        \EndIf
        \If{$\comp{\mi{url}}{host} \not\equiv \comp{\comp{\compn{s'}{\ptr{d}}}{origin}}{host}$ \breakalgohook{3} $\vee$ $\mi{url} \not\equiv \comp{\comp{\compn{s'}{\ptr{d}}}{origin}}{protocol}$} 
          \State \textbf{stop} 
        \EndIf
        \If{$\mi{method} \in \{\mGet, \mHead\}$}
          \Let{$\mi{data}$}{$\an{}$}
          \Let{$\mi{origin}$}{$\bot$}
        \Else
          \Let{$\mi{origin}$}{$\comp{\compn{s'}{\ptr{d}}}{origin}$}
        \EndIf
        \Let{$\mi{req}$}{$\hreq{
            nonce=\nu_4,
            method=\mi{method},
            host=\comp{\mi{url}}{host},
            path=\comp{\mi{url}}{path},
            headers={},
            parameters=\comp{\mi{url}}{parameters},
            xbody=\mi{data}
          }$}
        \Let{$\mi{referrer}$}{$\comp{\compn{s'}{\ptr{d}}}{location}$}
        \Let{$\mi{referrerPolicy}$}{$\compn{s'}{\ptr{d}}.\str{headers}[\str{ReferrerPolicy}]$}
        \CallFun{HTTP\_SEND}{$\an{\comp{\compn{s'}{\ptr{d}}}{nonce}, \mi{xhrreference}}$, $\mi{req}$, $\mi{url}$, $\mi{origin}$, $\mi{referrer}$, $\mi{referrerPolicy}$, $s'$}\label{line:send-xhr}
      \EndCase
      \Case{$\an{\tBack, \mi{window}}$} \footnote{Note that
        navigating a window using the back/forward buttons
        does not trigger a reload of the affected
        documents. While real world browser may chose to
        refresh a document in this case, we assume that the
        complete state of a previously viewed document is
        restored. A reload can be triggered
        non-deterministically at any point (in the main algorithm).}
      \Let{$\ptr{w}'$}{$\mathsf{GETNAVIGABLEWINDOW}$($\ptr{w}$,
        $\mi{window}$, $\bot$, $s'$)} 
        \State $\mathsf{NAVBACK}$($\ptr{w}$, $s'$)
        \State \textbf{stop} $\an{}$, $s'$
      \EndCase
      \Case{$\an{\tForward, \mi{window}}$}
        \Let{$\ptr{w}'$}{$\mathsf{GETNAVIGABLEWINDOW}$($\ptr{w}$, $\mi{window}$, $\bot$, $s'$)}
        \State $\mathsf{NAVFORWARD}$($\ptr{w}$, $s'$)
        \State \textbf{stop} $\an{}$, $s'$
      \EndCase
      \Case{$\an{\tClose, \mi{window}}$}
        \Let{$\ptr{w}'$}{$\mathsf{GETNAVIGABLEWINDOW}$($\ptr{w}$, $\mi{window}$, $\bot$, $s'$)}
        \State \textbf{remove} $\compn{s'}{\ptr{w'}}$ from the sequence containing it 
        \State \textbf{stop} $\an{}$, $s'$
      \EndCase

      \Case{$\an{\tPostMessage, \mi{window}, \mi{message}, \mi{origin}}$}
        \LetND{$\ptr{w}'$}{$\mathsf{Subwindows}(s')$ \textbf{such that} $\comp{\compn{s'}{\ptr{w}'}}{nonce} \equiv \mi{window}$} %
        \If{$\exists \ptr{j} \in \mathbb{N}$ \textbf{such that} $\comp{\compn{\comp{\compn{s'}{\ptr{w'}}}{documents}}{\ptr{j}}}{active} \equiv \True$ \breakalgohook{3} $\wedge  (\mi{origin} \not\equiv \bot \implies \comp{\compn{\comp{\compn{s'}{\ptr{w'}}}{documents}}{\ptr{j}}}{origin} \equiv \mi{origin})$}    \label{line:append-pm-to-scriptinputs-condition} %
        \Let{$\comp{\compn{\comp{\compn{s'}{\ptr{w'}}}{documents}}{\ptr{j}}}{scriptinputs}$\breakalgohook{4}}{ $\comp{\compn{\comp{\compn{s'}{\ptr{w'}}}{documents}}{\ptr{j}}}{scriptinputs}$ \breakalgohook{4} $\plusPairing$
         $\an{\tPostMessage, \comp{\compn{s'}{\ptr{w}}}{nonce}, \comp{\compn{s'}{\ptr{d}}}{origin}, \mi{message}}$} \label{line:append-pm-to-scriptinputs}
        \EndIf
        \State \textbf{stop} $\an{}$, $s'$
      \EndCase
      \Case{else}
        \State \textbf{stop} 
      \EndCase
    \EndSwitch
  \EndFunction
\end{algorithmic} %
\end{algorithm}

\begin{algorithm}[tp]
\caption{\label{alg:processresponse} Web Browser Model: Process an HTTP response.}
\begin{algorithmic}[1]
\Function{$\mathsf{PROCESSRESPONSE}$}{$\mi{response}$, $\mi{reference}$, $\mi{request}$, $\mi{requestUrl}$, $s'$}
  \If{$\mathtt{Set{\mhyphen}Cookie} \in
    \comp{\mi{response}}{headers}$}
    \For{\textbf{each} $c \inPairing \comp{\mi{response}}{headers}\left[\mathtt{Set{\mhyphen}Cookie}\right]$, $c \in \mathsf{Cookies}$}
      \Let{$\comp{s'}{cookies}\left[\mi{request}.\str{host}\right]$\breakalgohook{3}}{$\mathsf{AddCookie}(\comp{s'}{cookies}\left[\mi{request}.\str{host}\right], c)$} \label{line:set-cookie}
    \EndFor
  \EndIf  
  \If{$\mathtt{Strict{\mhyphen}Transport{\mhyphen}Security} \in \comp{\mi{response}}{headers}$ $\wedge$ $\mi{requestUrl}.\str{protocol} \equiv \https$}
    \Append{$\comp{\mi{request}}{host}$}{$\comp{s'}{sts}$}
  \EndIf
  \If{$\str{Referer} \in \comp{request}{headers}$} 
    \Let{$\mi{referrer}$}{$\comp{request}{headers}[\str{Referer}]$}
  \Else
    \Let{$\mi{referrer}$}{$\bot$}
  \EndIf
  \If{$\mathtt{Location} \in \comp{\mi{response}}{headers} \wedge \comp{\mi{response}}{status} \in \{303, 307\}$} \label{line:location-header} 
    \Let{$\mi{url}$}{$\comp{\mi{response}}{headers}\left[\mathtt{Location}\right]$}
    \If{$\mi{url}.\str{fragment} \equiv \bot$}
      \Let{$\mi{url}.\str{fragment}$}{$\mi{requestUrl}.\str{fragment}$}
    \EndIf
    \Let{$\mi{method}'$}{$\comp{\mi{request}}{method}$} 
    \Let{$\mi{body}'$}{$\comp{\mi{request}}{body}$} 
    \If{$\str{Origin} \in \comp{request}{headers}$}
      \Let{$\mi{origin}$}{$\an{\comp{request}{headers}[\str{Origin}], \an{\comp{request}{host}, \mi{url}.\str{protocol}}}$}
    \Else
      \Let{$\mi{origin}$}{$\bot$}
    \EndIf
    \If{$\comp{\mi{response}}{status} \equiv 303 \wedge \comp{\mi{request}}{method} \not\in \{\mGet, \mHead\}$}
      \Let {$\mi{method}'$}{$\mGet$}
      \Let{$\mi{body}'$}{$\an{}$} \label{browser-remove-body}
    \EndIf
    \If{$\exists\, \ptr{w} \in \mathsf{Subwindows}(s')$ \textbf{such that} $\comp{\compn{s'}{\ptr{w}}}{nonce} \equiv \mi{reference}$} \Comment{Do not redirect XHRs.}
      \Let{$\mi{req}$}{$\hreq{
            nonce=\nu_6,
            method=\mi{method'},
            host=\comp{\mi{url}}{host},
            path=\comp{\mi{url}}{path},
            headers=\an{},
            parameters=\comp{\mi{url}}{parameters},
            xbody=\mi{body}'
          }$}
      \Let{$\mi{referrerPolicy}$}{$\mi{response}.\str{headers}[\str{ReferrerPolicy}]$}
      \CallFun{HTTP\_SEND}{$\mi{reference}$, $\mi{req}$, $\mi{url}$, $\mi{origin}$, $\mi{referrer}$, $\mi{referrerPolicy}$, $s'$}\label{line:send-redirect}
    \EndIf
  \EndIf
  \If{$\exists\, \ptr{w} \in \mathsf{Subwindows}(s')$ \textbf{such that} $\comp{\compn{s'}{\ptr{w}}}{nonce} \equiv \mi{reference}$} \Comment{normal response}
    \If{$\mi{response}.\str{body} \not\sim \an{*,*}$}
      \State \textbf{stop} $\{\}$, $s'$
    \EndIf
    \Let{$\mi{script}$}{$\proj{1}{\comp{\mi{response}}{body}}$}
    \Let{$\mi{scriptstate}$}{$\proj{2}{\comp{\mi{response}}{body}}$}
    \Let{$\mi{referrer}$}{$\mi{request}.\str{headers}[\str{Referer}]$}
    \Let{$d$}{$\an{\nu_7, \mi{requestUrl}, \mi{response}.\str{headers}, \mi{referrer}, \mi{script}, \mi{scriptstate}, \an{}, \an{}, \True}$} \label{line:take-script} \label{line:set-origin-of-document}
    \If{$\comp{\compn{s'}{\ptr{w}}}{documents} \equiv \an{}$}
      \Let{$\comp{\compn{s'}{\ptr{w}}}{documents}$}{$\an{d}$}
    \Else
      \LetND{$\ptr{i}$}{$\mathbb{N}$ \textbf{such that} $\comp{\compn{\comp{\compn{s'}{\ptr{w}}}{documents}}{\ptr{i}}}{active} \equiv \True$} %
      \Let{$\comp{\compn{\comp{\compn{s'}{\ptr{w}}}{documents}}{\ptr{i}}}{active}$}{$\bot$}
      \State \textbf{remove} $\compn{\comp{\compn{s'}{\ptr{w}}}{documents}}{(\ptr{i}+1)}$ and all following documents \breakalgohook{3} from $\comp{\compn{s'}{\ptr{w}}}{documents}$
      \Append{$d$}{$\comp{\compn{s'}{\ptr{w}}}{documents}$}
    \EndIf
    \State \textbf{stop} $\{\}$, $s'$
  \ElsIf{$\exists\, \ptr{w} \in \mathsf{Subwindows}(s')$, $\ptr{d}$ \textbf{such that} $\comp{\compn{s'}{\ptr{d}}}{nonce} \equiv \proj{1}{\mi{reference}} $ \breakalgohook{1}  $\wedge$  $\compn{s'}{\ptr{d}} = \comp{\compn{s'}{\ptr{w}}}{activedocument}$} \label{line:process-xhr-response} \Comment{process XHR response}
    \Let{$\mi{headers}$}{$\mi{response}.\str{headers} - \str{Set\mhyphen{}Cookie}$}
    \Append{\breakalgo{3}$\an{\tXMLHTTPRequest, \mi{headers}, \comp{\mi{response}}{body}, \proj{2}{\mi{reference}}}$}{$\comp{\compn{s'}{\ptr{d}}}{scriptinputs}$}
  \EndIf
\EndFunction
\end{algorithmic} %
\end{algorithm}
\begin{algorithm}[tp]
\caption{\label{alg:browsermain} Web Browser Model: Main Algorithm}
\begin{algorithmic}[1]
\Statex[-1] \textbf{Input:} $\an{a,f,m},s$
  \Let{$s'$}{$s$}

  \If{$\comp{s}{isCorrupted} \not\equiv \bot$}
    \Let{$\comp{s'}{pendingRequests}$}{$\an{m, \comp{s}{pendingRequests}}$} \Comment{Collect incoming messages}
    \LetND{$m'$}{$d_{V}(s')$} %
    \LetND{$a'$}{$\addresses$} %
    \State \textbf{stop} $\an{\an{a',a,m'}}$, $s'$
  \EndIf
  \If{$m \equiv \trigger$} \Comment{A special trigger message. }
    \LetND{$\mi{switch}$}{$\{\str{script},\str{urlbar},\str{reload},\str{forward}, \str{back}\}$} \label{line:browser-switch}  %
    \LetNDST{$\ptr{w}$}{$\mathsf{Subwindows}(s')$}{$\comp{\compn{s'}{\ptr{w}}}{documents} \neq \an{}$\breakalgohook{2}}{\textbf{stop}}%
    \Comment{Pointer to some window.}
    \LetNDST{$\ptr{tlw}$}{$\mathbb{N}$}{$\comp{\compn{s'}{\ptr{tlw}}}{documents} \neq \an{}$\breakalgohook{2}}{\textbf{stop}}%
    \Comment{Pointer to some top-level window.}
    \If{$\mi{switch} \equiv \str{script}$} \Comment{Run some script.}
      \Let{$\ptr{d}$}{$\ptr{w} \plusPairing \str{activedocument}$} \label{line:browser-trigger-document}  
      \CallFun{RUNSCRIPT}{$\ptr{w}$, $\ptr{d}$, $s'$}
    \ElsIf{$\mi{switch} \equiv \str{urlbar}$} \Comment{Create some new request.}
      \LetND{$\mi{newwindow}$}{$\{\True, \bot \}$}
      \If{$\mi{newwindow} \equiv \True$} \Comment{Create a new window.}
        \Let{$\mi{windownonce}$}{$\nu_1$}
        \Let{$w'$}{$\an{\mi{windownonce}, \an{}, \bot}$}
        \Append{$w'$}{$\comp{s'}{windows}$}
      \Else \Comment{Use existing top-level window.}
        \Let{$\mi{windownonce}$}{$s'.\ptr{tlw}.nonce$}
      \EndIf
      \LetND{$\mi{protocol}$}{$\{\http, \https\}$} \label{line:browser-choose-url} %
      \LetND{$\mi{host}$}{$\dns$} %
      \LetND{$\mi{path}$}{$\mathbb{S}$} %
      \LetND{$\mi{fragment}$}{$\mathbb{S}$} %
      \LetND{$\mi{parameters}$}{$\dict{\mathbb{S}}{\mathbb{S}}$} %
      \Let{$\mi{url}$}{$\an{\cUrl, \mi{protocol}, \mi{host}, \mi{path}, \mi{parameters}, \mi{fragment}}$}
      \Let{$\mi{req}$}{$\hreq{
          nonce=\nu_2,
          method=\mGet,
          host=\mi{host},
          path=\mi{path},
          headers=\an{},
          parameters=\mi{parameters},
          body=\an{}
        }$}
      \CallFun{HTTP\_SEND}{$\mi{windownonce}$, $\mi{req}$, $\mi{url}$, $\bot$, $\bot$, $\bot$, $s'$}\label{line:send-random}
    \ElsIf{$\mi{switch} \equiv \str{reload}$} \Comment{Reload some document.}
      \LetNDST{$\ptr{w}$}{$\mathsf{Subwindows}(s')$}{$\comp{\compn{s'}{\ptr{w}}}{documents} \neq \an{}$\breakalgohook{2}}{\textbf{stop}} \label{line:browser-reload-window}%
      \Let{$\mi{url}$}{$s'.\ptr{w}.\str{activedocument}.\str{location}$}
      \Let{$\mi{req}$}{$\hreq{ nonce=\nu_2, 
          method=\mGet, host=\comp{\mi{url}}{host},
          path=\comp{\mi{url}}{path},
          headers=\an{},
          parameters=\comp{\mi{url}}{parameters}, body=\an{}
        }$}
      \Let{$\mi{referrer}$}{$s'.\ptr{w}.\str{activedocument}.\str{referrer}$}
      \Let{$s'$}{$\mathsf{CANCELNAV}(\comp{\compn{s'}{\ptr{w}}}{nonce}, s')$}
      \CallFun{HTTP\_SEND}{$\comp{\compn{s'}{\ptr{w}}}{nonce}$, $\mi{req}$, $\mi{url}$, $\bot$, $\mi{referrer}$, $\bot$, $s'$}
    \ElsIf{$\mi{switch} \equiv \str{forward}$}
      \State $\mathsf{NAVFORWARD}$($\ptr{w}$, $s'$)
    \ElsIf{$\mi{switch} \equiv \str{back}$}
      \State $\mathsf{NAVBACK}$($\ptr{w}$, $s'$)
    \EndIf
  \ElsIf{$m \equiv \fullcorrupt$} \Comment{Request to corrupt browser}
    \Let{$\comp{s'}{isCorrupted}$}{$\fullcorrupt$}
    \State \textbf{stop} $\an{}$, $s'$
  \ElsIf{$m \equiv \closecorrupt$} \Comment{Close the browser}
    \Let{$\comp{s'}{secrets}$}{$\an{}$}  
    \Let{$\comp{s'}{windows}$}{$\an{}$}
    \Let{$\comp{s'}{pendingDNS}$}{$\an{}$}
    \Let{$\comp{s'}{pendingRequests}$}{$\an{}$}
    \Let{$\comp{s'}{sessionStorage}$}{$\an{}$}
    \State \textbf{let} $\comp{s'}{cookies} \subsetPairing \cookies$ \textbf{such that} \breakalgohook{1} $(c \inPairing \comp{s'}{cookies}) {\iff} (c \inPairing \comp{s}{cookies} \wedge \comp{\comp{c}{content}}{session} \equiv \bot$)
    \Let{$\comp{s'}{isCorrupted}$}{$\closecorrupt$}
    \State \textbf{stop} $\an{}$, $s'$

\algstore{myalg}
\end{algorithmic}
\end{algorithm}

\begin{algorithm}                     
\begin{algorithmic} [1]                   %
\algrestore{myalg}

  \ElsIf{$\exists\, \an{\mi{reference}, \mi{request}, \mi{url}, \mi{key}, f}$
      $\inPairing \comp{s'}{pendingRequests}$ \breakalgohook{0}
      \textbf{such that} $\proj{1}{\decs{m}{\mi{key}}} \equiv \cHttpResp$ } %
    \Comment{Encrypted HTTP response}
    \Let{$m'$}{$\decs{m}{\mi{key}}$}
    \If{$\comp{m'}{nonce} \not\equiv \comp{\mi{request}}{nonce}$}
      \State \textbf{stop}
    \EndIf
    \State \textbf{remove} $\an{\mi{reference}, \mi{request}, \mi{url}, \mi{key}, f}$ \textbf{from} $\comp{s'}{pendingRequests}$
    \CallFun{PROCESSRESPONSE}{$m'$, $\mi{reference}$, $\mi{request}$, $\mi{url}$, $s'$}
  \ElsIf{$\proj{1}{m} \equiv \cHttpResp$ $\wedge$ $\exists\, \an{\mi{reference}, \mi{request}, \mi{url}, \bot, f}$ $\inPairing \comp{s'}{pendingRequests}$ \breakalgohook{0}\textbf{such that} $\comp{m'}{nonce} \equiv \comp{\mi{request}}{key}$ } %
    \State \textbf{remove} $\an{\mi{reference}, \mi{request}, \mi{url}, \bot, f}$ \textbf{from} $\comp{s'}{pendingRequests}$
    \CallFun{PROCESSRESPONSE}{$m$, $\mi{reference}$, $\mi{request}$, $\mi{url}$, $s'$}
  \ElsIf{$m \in \dnsresponses$} \Comment{Successful DNS response}
      \If{$\comp{m}{nonce} \not\in \comp{s}{pendingDNS} \vee \comp{m}{result} \not\in \addresses \vee \comp{m}{domain} \not\equiv \comp{\proj{2}{\comp{s}{pendingDNS}}}{host}$}
        \State \textbf{stop} \label{line:browser-dns-response-stop}
      \EndIf
      \Let{$\an{\mi{reference}, \mi{message}, \mi{url}}$}{$\comp{s}{pendingDNS}[\comp{m}{nonce}]$}
      \If{$\mi{url}.\str{protocol} \equiv \https$}
        \AppendBreak{2}{$\langle\mi{reference}$, $\mi{message}$, $\mi{url}$, $\nu_3$, $\comp{m}{result}\rangle$}{$\comp{s'}{pendingRequests}$} \label{line:add-to-pendingrequests-https}
        \Let{$\mi{message}$}{$\enc{\an{\mi{message},\nu_3}}{\comp{s'}{keyMapping}\left[\comp{\mi{message}}{host}\right]}$} \label{line:select-enc-key}
      \Else
        \AppendBreak{2}{$\langle\mi{reference}$, $\mi{message}$, $\mi{url}$, $\bot$, $\comp{m}{result}\rangle$}{$\comp{s'}{pendingRequests}$} \label{line:add-to-pendingrequests}
      \EndIf
      \Let{$\comp{s'}{pendingDNS}$}{$\comp{s'}{pendingDNS} - \comp{m}{nonce}$}
      \State \textbf{stop} $\an{\an{\comp{m}{result}, a, \mi{message}}}$, $s'$
  \Else \Comment{Some other message}
    \CallFun{PROCESS\_OTHER}{$m$, $a$, $f$, $s'$}
  \EndIf
  \State \textbf{stop}

\end{algorithmic} %
\end{algorithm}

\subsubsection{Functions} \label{app:proceduresbrowser} In
the description of the following functions, we use $a$,
$f$,
$m$,
and $s$
as read-only global input variables. All other variables
are local variables or arguments.

\begin{itemize}
\item The function $\mathsf{GETNAVIGABLEWINDOW}$
  (Algorithm~\ref{alg:getnavigablewindow}) is called by the
  browser to determine the window that is \emph{actually}
  navigated when a script in the window $s'.\ptr{w}$
  provides a window reference for navigation (e.g., for
  opening a link). When it is given a window reference
  (nonce) $\mi{window}$,
  this function returns a pointer to a selected window term
  in $s'$:
  \begin{itemize}
  \item If $\mi{window}$
    is the string $\wBlank$,
    a new window is created and a pointer to that window is
    returned.
  \item If $\mi{window}$
    is a nonce (reference) and there is a window term with
    a reference of that value in the windows in $s'$,
    a pointer $\ptr{w'}$
    to that window term is returned, as long as the window
    is navigable by the current window's document (as
    defined by $\mathsf{NavigableWindows}$ above).
  \end{itemize}
  In all other cases, $\ptr{w}$
  is returned instead (the script navigates its own
  window).

\item The function $\mathsf{GETWINDOW}$
  (Algorithm~\ref{alg:getwindow}) takes a window reference
  as input and returns a pointer to a window as above, but
  it checks only that the active documents in both windows
  are same-origin. It creates no new windows.

\item The function $\mathsf{CANCELNAV}$
  (Algorithm~\ref{alg:cancelnav}) is used to stop any
  pending requests for a specific window. From the pending
  requests and pending DNS requests it removes any requests
  with the given window reference $n$.

\item The function $\mathsf{HTTP\_SEND}$
  (Algorithm~\ref{alg:send}) takes an HTTP request
  $\mi{message}$
  as input, adds cookie and origin headers to the message,
  creates a DNS request for the hostname given in the
  request and stores the request in $\comp{s'}{pendingDNS}$
  until the DNS resolution finishes. For normal HTTP
  requests, $\mi{reference}$
  is a window reference. For \xhrs, $\mi{reference}$
  is a value of the form $\an{\mi{document}, \mi{nonce}}$
  where $\mi{document}$
  is a document reference and $\mi{nonce}$
  is some nonce that was chosen by the script that
  initiated the request. $\mi{url}$
  contains the full URL of the request (this is mainly used
  to retrieve the protocol that should be used for this
  message, and to store the fragment identifier for use
  after the document was loaded). $\mi{origin}$
  is the origin header value that is to be added to the
  HTTP request.

\item The functions $\mathsf{NAVBACK}$
  (Algorithm~\ref{alg:navback}) and $\mathsf{NAVFORWARD}$
  (Algorithm~\ref{alg:navforward}), navigate a window
  forward or backward. More precisely, they deactivate one
  document and activate that document's succeeding document
  or preceding document, respectively. If no such
  successor/predecessor exists, the functions do not change
  the state.

\item The function $\mathsf{RUNSCRIPT}$
  (Algorithm~\ref{alg:runscript}) performs a script
  execution step of the script in the document
  $\compn{s'}{\ptr{d}}$
  (which is part of the window $\compn{s'}{\ptr{w}}$).
  A new script and document state is chosen according to
  the relation defined by the script and the new script and
  document state is saved. Afterwards, the $\mi{command}$
  that the script issued is interpreted.

\item The function $\mathsf{PROCESSRESPONSE}$
  (Algorithm~\ref{alg:processresponse}) is responsible for
  processing an HTTP response ($\mi{response}$)
  that was received as the response to a request
  ($\mi{request}$)
  that was sent earlier. In $\mi{reference}$,
  either a window or a document reference is given (see
  explanation for Algorithm~\ref{alg:send} above).
  $\mi{requestUrl}$
  contains the URL used when retrieving the document.

  The function first saves any cookies that were contained
  in the response to the browser state, then checks whether
  a redirection is requested (Location header). If that is
  not the case, the function creates a new document (for
  normal requests) or delivers the contents of the response
  to the respective receiver (for \xhr responses).

\end{itemize}

\subsubsection{Definition}
We can now define the \emph{relation $R_\text{webbrowser}$
  of a web browser atomic process} as follows:
\begin{definition}
  The pair
  $\left(\left(\an{\an{a,f,m}}, s\right), \left(M,
      s'\right)\right)$ belongs to $R_\text{webbrowser}$
  if\/f the non-deterministic
  Algorithm~\ref{alg:browsermain} (or any of the functions
  called therein), when given $\left(\an{a,f,m}, s\right)$
  as input, terminates with \textbf{stop}~$M$,~$s'$,
  i.e., with output $M$ and $s'$.
\end{definition}

Recall that $\an{a,f,m}$
is an (input) event and $s$
is a (browser) state, $M$
is a sequence of (output) protoevents, and $s'$
is a new (browser) state (potentially with placeholders for
nonces).

\subsection{Definition of Web Browsers}
\label{sec:web-browsers-1}

Finally, we define web browser atomic Dolev-Yao processes
as follows:
\begin{definition}[Web Browser atomic Dolev-Yao Process]\label{def:webbrowser}
  A web browser atomic Dolev-Yao process is an atomic
  Dolev-Yao process of the form
  $p = (I^p, Z_{\text{webbrowser}}, R_{\text{webbrowser}},
  s{_o}^p)$ for a set $I^p$
  of addresses, $Z_{\text{webbrowser}}$
  and $R_{\text{webbrowser}}$
  as defined above, and an initial state $s{_0}^p
  \in Z_{\text{webbrowser}}$.
\end{definition}

\FloatBarrier %

\clearpage

\section{Generic HTTPS Server Model}
\label{sec:generic-https-server-model}

This model will be used as the base for all servers in the following.
It makes use of placeholder algorithms that are later superseded by
more detailed algorithms to describe a concrete relation for an HTTPS
server.

\begin{definition}[Base state for an HTTPS server.]\label{def:generic-https-state}
  The state of each HTTPS server that is an instantiation of this
  relation must contain at least the following subterms:
  $\mi{pendingDNS} \in \dict{\nonces}{\terms}$,
  $\mi{pendingRequests} \in \dict{\nonces}{\terms}$ (both containing
  arbitrary terms), $\mi{DNSaddress} \in \addresses$ (containing the
  IP address of a DNS server),
  $\mi{keyMapping} \in \dict{\dns}{\terms}$ (containing a mapping from
  domains to public keys), $\mi{tlskeys} \in \dict{\dns}{\nonces}$
  (containing a mapping from domains to private keys), and
  $\mi{corrupt}\in\terms$ (either $\bot$ if the server is not
  corrupted, or an arbitrary term otherwise).
\end{definition}
We note that in concrete instantiations of the generic HTTPS server model, there is no need to extract information from these subterms or alter these subterms. 

Let $\nu_{n0}$
and $\nu_{n1}$
denote placeholders for nonces that are not used in the concrete
instantiation of the server. We now define the default functions of
the generic web server in
Algorithms~\ref{alg:simple-send}--\ref{alg:default-other-handler}, and
the main relation in Algorithm~\ref{alg:generic-server-main}.

\begin{algorithm}[thp]
\caption{\label{alg:simple-send} Generic HTTPS Server Model: Sending a DNS message (in preparation for sending an HTTPS message). }
\begin{algorithmic}[1]
  \Function{$\mathsf{HTTPS\_SIMPLE\_SEND}$}{$\mi{reference}$, $\mi{message}$, $s'$}
    \Let{$\comp{s'}{pendingDNS}[\nu_{n0}]$}{$\an{\mi{reference},
        \mi{message}}$}
    \State \textbf{stop} $\an{\an{\comp{s'}{DNSaddress},a,
    \an{\cDNSresolve, \mi{message}.\str{host}, \nu_{n0}}}}$, $s'$
  \EndFunction
\end{algorithmic} %
\end{algorithm}

\begin{algorithm}[thp]
\caption{\label{alg:default-https-response-handler} Generic HTTPS Server Model: Default HTTPS response handler.}
\begin{algorithmic}[1]
  \Function{$\mathsf{PROCESS\_HTTPS\_RESPONSE}$}{$m$, $\mi{reference}$, $\mi{request}$, $\mi{key}$, $a$, $f$, $s'$}
    \Stop{\DefStop}
  \EndFunction
\end{algorithmic} %
\end{algorithm}

\begin{algorithm}[thp]
\caption{\label{alg:default-trigger} Generic HTTPS Server Model: Default trigger event handler.}
\begin{algorithmic}[1]
  \Function{$\mathsf{TRIGGER}$}{$s'$}
    \Stop{\DefStop}
  \EndFunction
\end{algorithmic} %
\end{algorithm}

\begin{algorithm}[thp]
\caption{\label{alg:default-https-request-handler} Generic HTTPS Server Model: Default HTTPS request handler.}
\begin{algorithmic}[1]
  \Function{$\mathsf{PROCESS\_HTTPS\_REQUEST}$}{$m$, $k$, $a$, $f$, $s'$}
    \Stop{\DefStop}
  \EndFunction
\end{algorithmic} %
\end{algorithm}

\begin{algorithm}[h!]
\caption{\label{alg:default-other-handler} Generic HTTPS Server Model: Default handler for other messages.}
\begin{algorithmic}[1]
  \Function{$\mathsf{PROCESS\_OTHER}$}{$m$, $a$, $f$, $s'$}
    \Stop{\DefStop}
  \EndFunction
\end{algorithmic} %
\end{algorithm}

\FloatBarrier

\begin{algorithm}[t!]
\caption{\label{alg:generic-server-main} Generic HTTPS Server Model: Main relation of a generic HTTPS server}
\begin{algorithmic}[1]
\Statex[-1] \textbf{Input:} $\an{a,f,m},s$
  \If{$s'.\str{corrupt} \not\equiv \bot \vee m \equiv \corrupt$}
    \Let{$s'.\str{corrupt}$}{$\an{\an{a, f, m}, s'.\str{corrupt}}$}
    \LetND{$m'$}{$d_{V}(s')$}
    \LetND{$a'$}{$\addresses$}
    \Stop{$\an{\an{a',a,m'}}$, $s'$}
  \EndIf

   \If{$\exists\, m_{\text{dec}}$, $k$, $k'$, $\mi{inDomain}$ \textbf{such that} $\an{m_{\text{dec}}, k} \equiv \dec{m}{k'}$ $\wedge$ $\an{inDomain,k'} \in s.\str{tlskeys}$} %
     \LetST{$n$, $\mi{method}$, $\mi{path}$, $\mi{parameters}$, $\mi{headers}$, $\mi{body}$}{\breakalgohook{0}$\an{\cHttpReq, n, \mi{method}, \mi{inDomain}, \mi{path}, \mi{parameters}, \mi{headers}, \mi{body}} \equiv m_{\text{dec}}$\breakalgohook{0}}{\textbf{stop} \DefStop}
     \CallFun{PROCESS\_HTTPS\_REQUEST}{$m_\text{dec}$, $k$, $a$, $f$, $s'$}

  \ElsIf{$m \in \dnsresponses$} \Comment{Successful DNS response}
      \If{$\comp{m}{nonce} \not\in \comp{s}{pendingDNS} \vee \comp{m}{result} \not\in \addresses \vee \comp{m}{domain} \not\equiv s.\str{pendingDNS}[m.\str{nonce}].2.\str{host}$}
      \Stop{\DefStop}
      \EndIf
      \Let{$\an{\mi{reference}, \mi{request}}$}{$\comp{s}{pendingDNS}[\comp{m}{nonce}]$}
      \AppendBreak{2}{$\langle\mi{reference}$, $\mi{request}$, $\nu_{n1}$, $\comp{m}{result}\rangle$}{$\comp{s'}{pendingRequests}$} \label{line:generic-move-reference-to-pending-request} 
      \Let{$\mi{message}$}{$\enc{\an{\mi{request},\nu_{n1}}}{\comp{s'}{keyMapping}\left[\comp{\mi{request}}{host}\right]}$} \label{line:generic-select-enc-key}
      \Let{$\comp{s'}{pendingDNS}$}{$\comp{s'}{pendingDNS} - \comp{m}{nonce}$}\label{line:generic-remove-pendingdns}
      \Stop{$\an{\an{\comp{m}{result}, a, \mi{message}}}$, $s'$} \label{line:generic-send-https-request}

  \ElsIf{$\exists\, \an{\mi{reference}, \mi{request}, \mi{key}, f}$
      $\inPairing \comp{s'}{pendingRequests}$ \breakalgohook{0}
      \textbf{such that} $\proj{1}{\decs{m}{\mi{key}}} \equiv \cHttpResp$ } \label{line:generic-https-response}
    \Comment{Encrypted HTTP response}

    \Let{$m'$}{$\decs{m}{\mi{key}}$}
    \If{$\comp{m'}{nonce} \not\equiv \comp{\mi{request}}{nonce}$}
      \Stop{\DefStop}
    \EndIf
    \State \textbf{remove} $\an{\mi{reference}, \mi{request}, \mi{key}, f}$ \textbf{from} $\comp{s'}{pendingRequests}$\label{line:generic-remove-pending-request}
    \CallFun{PROCESS\_HTTPS\_RESPONSE}{$m'$, $\mi{reference}$, $\mi{request}$, $\mi{key}$, $a$, $f$, $s'$}
    \Stop{\DefStop}

  \ElsIf{$m \equiv \str{TRIGGER}$} \Comment{Process was triggered}
    \CallFun{PROCESS\_TRIGGER}{$s'$}

  \EndIf
  \Stop{\DefStop}
  
\end{algorithmic} %
\end{algorithm}

\FloatBarrier %

\clearpage
\section{Formal Model of OpenID Connect with a Network Attacker}
\label{app:model-oidc-auth}

We here present the full details of our formal model of OIDC which we
use to analyze the authentication and authorization properties. This
model contains a network attacker. We will later derive from this
model a model where the network attacker is replaced by a web
attacker. We use this modified model for the session integrity
properties.

We model OIDC as a web system (in the sense of
Appendix~\ref{app:websystem}). We call a web system
$\oidcwebsystem^n=(\bidsystem, \scriptset, \mathsf{script}, E^0)$
an \emph{OIDC web system with a network attacker} if it is of the
form described in what follows.

\subsection{Outline}\label{app:outlineoidcmodel}
The system $\bidsystem=\mathsf{Hon} \cup \mathsf{Net}$
consists of a network attacker process (in $\mathsf{Net}$),
a finite set $\fAP{B}$
of web browsers, a finite set $\fAP{RP}$
of web servers for the relying parties, a finite set $\fAP{IDP}$
of web servers for the identity providers, with
$\mathsf{Hon} := \fAP{B} \cup \fAP{RP} \cup \fAP{IDP}$.
More details on the processes in $\bidsystem$
are provided below. We do not model DNS servers, as they are subsumed
by the network attacker.
Figure~\ref{fig:scripts-in-w} shows the set of scripts $\scriptset$
and their respective string representations that are defined by the
mapping $\mathsf{script}$.
The set $E^0$ contains only the trigger events as specified in
Appendix~\ref{app:websystem}. 

\begin{figure}[htb]
  \centering
  \begin{tabular}{|@{\hspace{1ex}}l@{\hspace{1ex}}|@{\hspace{1ex}}l@{\hspace{1ex}}|}\hline 
    \hfill $s \in \scriptset$\hfill  &\hfill  $\mathsf{script}(s)$\hfill  \\\hline\hline
    $\Rasp$ & $\str{att\_script}$  \\\hline
    $\mi{script\_rp\_index}$ & $\str{script\_rp\_index}$  \\\hline
    $\mi{script\_rp\_get\_fragment}$ & $\str{script\_get\_fragment}$  \\\hline
    $\mi{script\_idp\_form}$ &  $\str{script\_idp\_form}$  \\\hline
  \end{tabular}
  
  \caption{List of scripts in $\scriptset$ and their respective string
    representations.}
  \label{fig:scripts-in-w}
\end{figure}

This outlines $\oidcwebsystem^n$. We will now define the DY processes in
$\oidcwebsystem^n$ and their addresses, domain names, and secrets in more
detail. 

\subsection{Addresses and Domain Names}\label{app:addresses-and-domain-names}
The set $\addresses$
contains for the network attacker in $\fAP{Net}$,
every relying party in $\fAP{RP}$,
every identity provider in $\fAP{IDP}$,
and every browser in $\fAP{B}$
a finite set of addresses each. By $\mapAddresstoAP$
we denote the corresponding assignment from a process to its address.
The set $\dns$
contains a finite set of domains for every relying party in
$\fAP{RP}$,
every identity provider in $\fAP{IDP}$,
and the network attacker in $\fAP{Net}$.
Browsers (in $\fAP{B})$ do not have a domain.

By $\mapAddresstoAP$ and $\mapDomain$ we denote the assignments from
atomic processes to sets of $\addresses$ and $\dns$, respectively.

\subsection{Keys and Secrets} The set $\nonces$
of nonces is partitioned into five sets, an infinite sequence $N$,
an finite set $K_\text{TLS}$,
an finite set $K_\text{sign}$,
and a finite set $\mathsf{Passwords}$. We therefore have
$
  \def\hereMaxHeightPhantom{\vphantom{K_{\text{p}}^\bidsystem}}
  \nonces = 
  \underbrace{N\hereMaxHeightPhantom}_{\text{infinite sequence}} 
  \dot\cup \underbrace{K_{\text{TLS}}\hereMaxHeightPhantom}_{\text{finite}} 
  \dot\cup \underbrace{K_{\text{sign}}\hereMaxHeightPhantom}_{\text{finite}} 
  \dot\cup \underbrace{\mathsf{Passwords}\hereMaxHeightPhantom}_{\text{finite}}
$.

\noindent
These sets are used as follows:
\begin{itemize}
\item The set $N$ contains the nonces that are available for each DY
  process in $\bidsystem$ (it can be used to create a run of
  $\bidsystem$).

\item The set $K_\text{TLS}$
  contains the keys that will be used for TLS encryption. Let
  $\mapTLSKey\colon \dns \to K_\text{TLS}$
  be an injective mapping that assigns a (different) private key to
  every domain. For an atomic DY process $p$
  we define
  $\mi{tlskeys}^p = \an{\left\{\an{d, \mapTLSKey(d)} \mid d \in
      \mapDomain(p)\right\}}$.

\item The set $K_\text{sign}$
  contains the keys that will be used by IdPs for signing id tokens. Let
  $\mathsf{signkey}\colon \fAP{IDP} \to K_\text{sign}$
  be an injective mapping that assigns a (different) signing key to
  every IdP. 

\item The set $\mathsf{Passwords}$
  is the set of passwords (secrets) the browsers share with the
  identity providers. These are the passwords the users use to log in
  at the IdPs.
\end{itemize}

\subsection{Identities and Passwords}\label{app:oidc-pidp-identities}
Identites consist, similar to email addresses, of a user name and a
domain part. For our model, this is defined as follows:
\begin{definition}
  An \emph{identity} (email address) $i$ is a term of the form
  $\an{\mi{name},\mi{domain}}$ with $\mi{name}\in \mathbb{S}$ and
  $\mi{domain} \in \dns$.

  Let $\IDs$
  be the finite set of identities. We say that an ID is
  \emph{governed} by the DY process to which the domain of the ID
  belongs. Formally, we define the mapping
  $\mapGovernor: \IDs \to \bidsystem$,
  $\an{\mi{name}, \mi{domain}} \mapsto \mapDomain^{-1}(\mi{domain})$.
  By $\IDs^y$
  we denote the set $\mathsf{governor}^{-1}(y)$.
\end{definition}%
The governor of an ID will usually be an IdP, but could also be the
attacker. Besides $\mapGovernor$, we define the following mappings:

\begin{itemize}
\item By $\mapIDtoPLI:\IDs \to \mathsf{Passwords}$ we denote the bijective
  mapping that assigns secrets to all identities.

\item   Let $\mapPLItoOwner: \mathsf{Passwords} \to \fAP{B}$ denote the mapping that
  assigns to each secret a browser that \emph{owns} this secret. Now,
  we define the mapping $\mapIDtoOwner: \IDs \to \fAP{B}$, $i \mapsto
  \mapPLItoOwner(\mapIDtoPLI(i))$, which assigns to each identity the
  browser that owns this identity (we say that the identity belongs to
  the browser).
\end{itemize}

\subsection{Corruption}
RPs and IdPs can become corrupted: If they receive the message
$\corrupt$, they start collecting all incoming messages in their state
and (upon triggering) send out all messages that are derivable from
their state and collected input messages, just like the attacker
process. We say that an RP or an IdP is \emph{honest} if the according
part of their state ($s.\str{corrupt}$) is $\bot$, and that they are
corrupted otherwise.

We are now ready to define the processes in $\websystem$ as well as
the scripts in $\scriptset$ in more detail.

\subsection{Network Attackers}\label{app:networkattackers-oidc} As mentioned, the network attacker
$\mi{na}$
is modeled to be a network attacker as specified in
Appendix~\ref{app:websystem}. We allow it to listen to/spoof all
available IP addresses, and hence, define $I^\mi{na} = \addresses$.
The initial state is
$s_0^\mi{na} = \an{\mi{attdoms}, \mi{tlskeys}, \mi{signkeys}}$,
where $\mi{attdoms}$
is a sequence of all domains along with the corresponding private keys
owned by the attacker $\mi{na}$,
$\mi{tlskeys}$
is a sequence of all domains and the corresponding public keys, and
$\mi{signkeys}$
is a sequence containing all public signing keys for all IdPs.

\subsection{Browsers}\label{app:browsers-oidc} 

Each $b \in \fAP{B}$
is a web browser atomic Dolev-Yao process as defined in
Definition~\ref{def:webbrowser}, with $I^b := \mapAddresstoAP(b)$
being its addresses.

To define the inital state, first let $\IDs_b :=
\mapIDtoOwner^{-1}(b)$ be
 the set of all IDs of $b$. We then define the set of passwords that a browser $b$ gives to an origin $o$: If the origin belongs to an IdP, then the user's passwords of this IdP are contained in the set. To define this mapping in the initial state, we first define for some process~$p$

\[
 \mathsf{Secrets}^{b,p} = \big\{ s \bigm| b = \mathsf{ownerOfSecret}(s)  \wedge  (\exists\, i : s = \mathsf{secretOfID}(i) \wedge i \in \mathsf{ID}^p)   \big\} \,.
\]

Then, the initial state $s_0^b$
is defined as follows: the key mapping maps every domain to its public
(TLS) key, according to the mapping $\mapTLSKey$;
the DNS address is an address of the network attacker; the list of
secrets contains an entry
$\an{\an{d,\https}, \an{\mathsf{Secrets}^{b,p}}}$
for each $p \in \fAP{RP} \cup \fAP{IDP}$
and $d \in \mathsf{dom}(p)$;
$\mi{ids}$ is $\an{\IDs_b}$; $\mi{sts}$ is empty.

\subsection{Relying Parties} \label{app:relying-parties-oidc}

A relying party $r \in \fAP{RP}$ is a web server modeled as an atomic
DY process $(I^r, Z^r, R^r, s^r_0)$ with the addresses
$I^r := \mapAddresstoAP(r)$.  Next, we define the set $Z^r$ of states
of $r$ and the initial state $s^r_0$ of~$r$.

\begin{algorithm}[tbp]
\caption{\label{alg:rp-oidc-http-response} Relation of a Relying
  Party $R^r$ -- Processing HTTPS Responses}
\begin{algorithmic}[1]
\Function{$\mathsf{PROCESS\_HTTPS\_RESPONSE}$}{$m$, $\mi{reference}$, $\mi{request}$, $\mi{key}$, $a$, $f$, $s'$}
  \Let{$\mi{session}$}{$s'.\str{sessions}[\mi{reference}[\str{session}]]$}
  \Let{$\mi{id}$}{$\mi{session}[\str{identity}]$} \label{line:rp-lookup-session}
  \Let{$\mi{issuer}$}{$s'.\str{issuerCache}[\mi{id}]$}
  \If{$\mi{reference}[\str{responseTo}] \equiv \str{WEBFINGER}$}
    \Let{$\mi{wf}$}{$m.\str{body}$}
    \If{$\mi{wf}[\str{subject}] \not\equiv \mi{id}$}
      \Stop
    \EndIf
    \If{$\mi{wf}[\str{links}][\str{rel}] \not\equiv \str{OIDC\_issuer}$}
      \Stop
    \EndIf
    \Let{$s'.\str{issuerCache}[\mi{id}]$}{$\mi{wf}[\str{links}][\str{href}]$} \label{line:rp-set-issuercache}
    \CallFun{START\_LOGIN\_FLOW}{$\mi{reference}[\str{session}]$, $s'$}
  \ElsIf{$\mi{reference}[\str{responseTo}] \equiv \str{CONFIG}$}
    \Let{$\mi{oidcc}$}{$m.\str{body}$}
    \If{$\mi{oidcc}[\str{issuer}] \not\equiv \mi{issuer}$}
      \Stop
    \EndIf
    \Let{$s'.\str{oidcConfigCache}[\mi{issuer}]$}{$\mi{oidcc}$}  \label{line:rp-set-oidc-config-cache}
    \CallFun{START\_LOGIN\_FLOW}{$\mi{reference}[\str{session}]$, $s'$}
  \ElsIf{$\mi{reference}[\str{responseTo}] \equiv \str{JWKS}$}
    \Let{$s'.\str{jwksCache}[\mi{issuer}]$}{$m.\str{body}$}
    \CallFun{START\_LOGIN\_FLOW}{$\mi{reference}[\str{session}]$, $s'$}
  \ElsIf{$\mi{reference}[\str{responseTo}] \equiv \str{REGISTRATION}$}
    \Let{$s'.\str{clientCredentialsCache}[\mi{issuer}]$}{$m.\str{body}$}  \label{line:rp-set-clientid}
    \CallFun{START\_LOGIN\_FLOW}{$\mi{reference}[\str{session}]$, $s'$}
  \ElsIf{$\mi{reference}[\str{responseTo}] \equiv \str{TOKEN}$}
    \If{$\str{token} \inPairing \mi{session}[\str{response\_type}] \wedge \mi{useAccessTokenNow} \equiv \True$}
      \CallFun{USE\_ACCESS\_TOKEN}{$\mi{reference}[\str{session}]$, $m.\str{body}[\str{access\_token}]$, $s'$} \label{line:rp-call-use-access-token-2}
    \EndIf
    \CallFun{CHECK\_ID\_TOKEN}{$\mi{reference}[\str{session}]$, $m.\str{body}[\str{id\_token}]$, $s'$} \label{line:rp-call-check-id-token-after-code}
  \EndIf
  \Stop
\EndFunction
\end{algorithmic} %
\end{algorithm}

\begin{algorithm}[tbp]
\caption{\label{alg:rp-oidc-http-request} Relation of a Relying
  Party $R^r$ -- Processing HTTPS Requests}
\begin{algorithmic}[1]
\Function{$\mathsf{PROCESS\_HTTPS\_REQUEST}$}{$m$, $k$, $a$, $f$, $s'$}
\Comment{\textbf{Process an incoming HTTPS request.} Other message types are handled in separate functions. $m$ is the incoming message, $k$ is the encryption key for the response, $a$ is the receiver, $f$ the sender of the message. $s'$ is the current state of the atomic DY process $r$.}%
     \If{$m.\str{path} \equiv \str{/}$} \label{line:rp-serve-index} \Comment{Serve index page.}
      \Let{$\mi{headers}$}{$[\str{ReferrerPolicy}{:}\str{origin}]$}
       \ParboxComment{Set the Referrer Policy for the index page of the RP (cf. Section~\ref{sec:state-leak-attack}).}
       \Let{$m'$}{$\encs{\an{\cHttpResp, m.\str{nonce}, 200, \mi{headers}, \an{\str{script\_rp\_index}, \an{}}}}{k}$}
       \ParboxComment{Send $\mi{script\_rp\_index}$ in HTTP response.}
       \Stop{$\an{\an{f,a,m'}}$, $s'$} \label{line:rp-send-index}
     \ElsIf{$m.\str{path} \equiv \str{/startLogin} \wedge m.\str{method} \equiv \mPost$} \label{line:rp-start-login-endpoint} \Comment{\textbf{Serve the request to start a new login.}}
       \If{$m.\str{headers}[\str{Origin}]  \not\equiv \an{m.\str{host}, \https}$}  
         \Stop{\DefStop} \Comment{Check the Origin header for CSRF protection.}
       \EndIf
       \Let{$\mi{id}$}{$m.\str{body}$}
       \Let{$\mi{sessionId}$}{$\nu_1$} \label{line:rp-choose-lsid} \ParboxComment{Session id is a freshly chosen nonce.}
       \Let{$s'.\str{sessions}[\mi{sessionId}]$}{$[\str{startRequest}{:}[\str{message}{:}m,\str{key}{:}k,\str{receiver}{:}a,\str{sender}{:}f], \str{identity}:\mi{id}]$}
       \ParboxComment{Create new session record.}
       \CallFun{START\_LOGIN\_FLOW}{$\mi{sessionId}$, $s'$}
\ParboxComment{Call the function that starts or proceeds with a login flow. Runs discovery/registration/etc.}%

     \ElsIf{$m.\str{path} \equiv \str{/redirect\_ep}$} \label{line:rp-redir-endpoint} \ParboxComment{\textbf{User is being redirected after authentication to the IdP.}}
       \Let{$\mi{sessionId}$}{$m.\str{headers}[\str{Cookie}][\str{sessionId}]$}
       \If{$\mi{sessionId} \not\in s'.\str{sessions}$}
         \Stop
       \EndIf
       \Let{$\mi{session}$}{$s'.\str{sessions}[\mi{sessionId}]$}
       \ParboxComment{Retrieve session data.}
       \Let{$\mi{id}$}{$\mi{session}[\str{id}]$}
       \Let{$\mi{issuer}$}{$s'.\str{issuerCache}[\mi{identity}]$}
       \ParboxComment{Issuer cache contains mappings from identites to issuers. Caches are being filled during discovery/registration in the function $\mathsf{START\_LOGIN\_FLOW}$.}
       \If{$m.\str{parameters}[\str{iss}] \not\equiv \mi{issuer}$}\label{line:rp-check-iss-redirect}
         \Stop{\DefStop}\ParboxComment{Check issuer parameter (cf. Section~\ref{sec:idp-mix-up}).}
       \EndIf
       \Let{$\mi{oidcConfig}$}{$s'.\str{oidcConfigCache}[\mi{issuer}]$}
       \ParboxComment{Retrieve OIDC configuration for issuer.}
       \Let{$\mi{responseType}$}{$\mi{session}[\str{response\_type}]$}

       \ParboxComment{Determines the OIDC flow to use, e.g., \texttt{code id\_token token} for a hybrid flow.}

       \If{$\mi{responseType} \equiv \an{\str{code}}$} \ParboxComment{Authorization code mode: Take data from URL parameters.}
         \Let{$\mi{data}$}{$m.\str{parameters}$}
       \Else \ParboxComment{Hybrid or implicit mode: Send the script $\str{script\_rp\_get\_fragment}$ to the browser to retrieve data from URL fragment}
         \If{$m.\str{method} \equiv \mGet$}
           \Let{$\mi{headers}$}{$\an{\an{\str{ReferrerPolicy}, \str{origin}}}$}
           \Let{$m'$}{$\encs{\an{\cHttpResp, m.\str{nonce}, 200, \mi{headers}, \an{\str{script\_rp\_get\_fragment}, \bot}}}{k}$} \label{line:rp-send-script-get-fragment}
           \Stop{$\an{\an{f,a,m'}}$, $s'$}
         \Else
           \ParboxComment{If this is a POST request, the script $\str{script\_rp\_get\_fragment}$ is sending the data from URL fragment.}
           \Let{$\mi{data}$}{$m.\str{body}$}
         \EndIf
       \EndIf
       \If{$\mi{data}[\str{state}] \not\equiv \mi{session}[\str{state}]$}
         \Stop \ParboxComment{Check $\mi{state}$ value.}
       \EndIf
       \Let{$s'.\str{sessions}[\mi{sessionId}][\str{redirectEpRequest}]$}{\breakalgohook{2}$[\str{message}{:}m,\str{key}{:}k,\str{receiver}{:}a,\str{sender}{:}f]$} \label{line:rp-set-redirect-ep-request-record} \ParboxComment{Store incoming request for later use in $\mathsf{CHECK\_ID\_TOKEN}$ (Algorithm~\ref{alg:rp-check-id-token}).}
       \If{$\str{id\_token} \inPairing \mi{responseType}$}
         \ParboxComment{Check if the chosen response type contains $\str{id\_token}$.}
         \If{$\str{code} \inPairing \mi{responseType}$} \ParboxComment{In hybrid mode, only one of the two id tokens must be checked(cf. Appendix~\ref{app:idp-mix-up}).}
           \LetND{$\mi{checkIdTokenNow}$}{$\{\True,\bot\}$}
         \ParboxComment{Non-deterministically decide whether to check first or second id token to capture all potential choices in real-world implementations (cf. Appendix~\ref{app:idp-mix-up}).}
         \Else
           \Let{$\mi{checkIdTokenNow}$}{$\True$}
         \ParboxComment{In non-hybrid modes, the id token is always checked.}
         \EndIf
         \If{$\mi{checkIdTokenNow} \equiv \True$}\footnote{Note that the OIDC standard requires that this id token has to be checked. In our model, however, we overapproximate by non-deterministically omitting this check.}
           \CallFun{CHECK\_ID\_TOKEN}{$\mi{sessionId}$, $\mi{data}[\str{id\_token}]$, $s'$} \label{line:rp-call-check-id-token-immediately} \ParboxComment{Check the id token and (if successful) log the user in. See Algorithm~\ref{alg:rp-check-id-token}.}
         \EndIf
       \EndIf
       \LetND{$\mi{useAccessTokenNow}$}{$\{\True, \bot\}$}
       \ParboxComment{Non-det. decide whether to use access token (authorization) or not.}
       \If{$\str{token} \inPairing \mi{responseType} \wedge \mi{useAccessTokenNow} \equiv \True$}
         \CallFun{USE\_ACCESS\_TOKEN}{$\mi{sessionId}$, $m.\str{body}[\str{access\_token}]$, $s'$} \label{line:rp-call-use-access-token-1} \ParboxComment{Use the access token at the IdP.}
       \EndIf
       \If{$\str{code} \inPairing \mi{responseType}$}
         \CallFun{SEND\_TOKEN\_REQUEST}{$\mi{sessionId}$, $m.\str{body}[\str{code}]$, $s'$}
       \ParboxComment{Retrieve a token from the token endpoint of the IdP.}
       \EndIf
     \EndIf
     \Stop
\EndFunction
\end{algorithmic} %
\end{algorithm}

\begin{algorithm}
\caption{\label{alg:rp-token-request} Relying Party $R^r$: Request to token endpoint.}
\begin{algorithmic}[1]
\Function{$\mathsf{SEND\_TOKEN\_REQUEST}$}{$\mi{sessionId}$, $\mi{code}$, $s'$}
  \Let{$\mi{session}$}{$s'.\str{sessions}[\mi{sessionId}]$}
  \Let{$\mi{identity}$}{$\mi{session}[\str{identity}]$}
  \Let{$\mi{issuer}$}{$s'.\str{issuerCache}[\mi{identity}]$}
  \Let{$\mi{credentials}$}{$s'.\str{clientCredentialsCache}[\mi{issuer}]$}
  \Let{$\mi{headers}$}{$[]$}
  \Let{$\mi{body}$}{$[\str{grant\_type}{:} \str{authorization\_code}, \str{code}{:}\mi{code}, \str{redirect\_uri}{:}\mi{session}[\str{redirect\_uri}]]$} \label{line:rp-prepare-atoken-from-code-req}

  \Let{$\mi{clientId}$}{$\mi{credentials}[\str{client\_id}]$}
  \Let{$\mi{clientSecret}$}{$\mi{credentials}[\str{client\_secret}]$} \label{line:rp-send-client-password-2}
  \If{$\mi{clientSecret} \equiv \an{}$}
    \Let{$\mi{body}[\str{client\_id}]$}{$\mi{clientId}$}
  \Else
    \Let{$\mi{headers}[\str{Authorization}]$}{$\an{\mi{clientId},\mi{clientSecret}}$}
  \EndIf
         
  \Let{$\mi{url}$}{$s'.\str{oidcConfigCache}[\mi{issuer}][\str{token\_ep}]$}
  \Let{$\mi{message}$}{$\hreq{ nonce=\nu_2,
      method=\mPost,
      xhost=\mi{url}.\str{domain},
      path=\mi{url}.\str{path},
      parameters=\mi{url}.\str{parameters},
      headers=\mi{headers},
      xbody=\mi{body}}$}  \label{line:rp-send-something-3}
  \CallFun{HTTPS\_SIMPLE\_SEND}{$[\str{responseTo}{:}\str{TOKEN},\str{session}{:}\mi{sessionId}]$, $\mi{message}$, $s'$} \label{line:rp-start-retrieve-code}
\EndFunction
\end{algorithmic} %
\end{algorithm}

\begin{algorithm}[tbp]
\caption{\label{alg:rp-use-access-token} Relying Party $R^r$: Using the access token (no response expected).}
\begin{algorithmic}[1]
\Function{$\mathsf{USE\_ACCESS\_TOKEN}$}{$\mi{sessionId}$, $\mi{token}$, $s'$}
  \Let{$\mi{session}$}{$s'.\str{sessions}[\mi{sessionId}]$}
  \Let{$\mi{identity}$}{$\mi{session}[\str{identity}]$}
  \Let{$\mi{issuer}$}{$s'.\str{issuerCache}[\mi{identity}]$}
  \Let{$\mi{headers}$}{$[\str{Authorization}:\an{\str{Bearer},\mi{token}}]$} 
         
  \Let{$\mi{url}$}{$s'.\str{oidcConfigCache}[\mi{issuer}][\str{token\_ep}]$}
  \LetND{$\mi{url}.\str{path}$}{$\mathbb{S}$}
  \Let{$\mi{message}$}{$\hreq{ nonce=\nu_3,
      method=\mPost,
      xhost=\mi{url}.\str{domain},
      path=\mi{url}.\str{path},
      parameters=\mi{url}.\str{parameters},
      headers=\mi{headers},
      xbody=\an{}}$}  
  \CallFun{HTTPS\_SIMPLE\_SEND}{$[\str{responseTo}{:}\str{RESOURCE\_USAGE},\str{session}{:}\mi{sessionId}]$, $\mi{message}$, $s'$}
\EndFunction
\end{algorithmic} %
\end{algorithm}

\begin{algorithm}[tbp]
\caption{\label{alg:rp-check-id-token} Relying Party $R^r$: Check id token.}
\begin{algorithmic}[1]
\Function{$\mathsf{CHECK\_ID\_TOKEN}$}{$\mi{sessionId}$, $\mi{id\_token}$, $s'$}
  \Comment{\textbf{Check id token validity and create service session.}}
  \Let{$\mi{session}$}{$s'.\str{sessions}[\mi{sessionId}]$}
  \ParboxComment{Retrieve session data.}
  \Let{$\mi{identity}$}{$\mi{session}[\str{identity}]$}
  \Let{$\mi{issuer}$}{$s'.\str{issuerCache}[\mi{identity}]$}
  \ParboxComment{Retrieve issuer.}
  \Let{$\mi{oidcConfig}$}{$s'.\str{oidcConfigCache}[\mi{issuer}]$}
  \ParboxComment{Retrieve OIDC configuration for that issuer.}
  \Let{$\mi{credentials}$}{$s'.\str{clientCredentialsCache}[\mi{issuer}]$}
  \ParboxComment{Retrieve OIDC credentials for issuer.}
  \Let{$\mi{jwks}$}{$s'.\str{jwksCache}[\mi{issuer}]$}
  \ParboxComment{Retrieve signing keys for issuer.}
  \Let{$\mi{data}$}{$\mathsf{extractmsg}(\mi{id\_token})$}
  \ParboxComment{Extract contents of signed id token.}
  \If{$\mi{data}[\str{iss}] \not\equiv \mi{issuer}$} \label{line:rp-start-id-token-checks}
    \Stop \ParboxComment{Check the issuer.}
  \EndIf
  \If{$\mi{data}[\str{aud}] \not\equiv \mi{credentials}[\str{client\_id}]$}
    \Stop \ParboxComment{Check the audience against own client id.}
  \EndIf
  \If{$\mathsf{checksig}(\mi{id\_token},\mi{jwks}) \not\equiv \top$}
    \Stop \ParboxComment{Check the signature of the id token.}
  \EndIf
  \If{$\mi{nonce} \in \mi{session}
    \wedge \mi{data}[\str{nonce}] \not\equiv \mi{session}[\str{nonce}]$}
      \Stop    \ParboxComment{If a nonce was used, check its value.}
  \EndIf
  \Let{$s'.\str{sessions}[\mi{sessionId}][\str{loggedInAs}]$}{$\an{\mi{issuer}, \mi{data}[\str{sub}]}$}
  \ParboxComment{User is now logged in. Store user identity and issuer.}
  \Let{$s'.\str{sessions}[\mi{sessionId}][\str{serviceSessionId}]$}{$\nu_4$} \label{line:rp-choose-service-session-id} \ParboxComment{Choose a new service session id.}
  \Let{$\mi{request}$}{$\mi{session}[\str{redirectEpRequest}]$}
  \ParboxComment{Retrieve stored meta data of the request from the browser to the redir. endpoint in order to respond to it now. The request's meta data was stored in $\mathsf{PROCESS\_HTTPS\_REQUEST}$ (Algorithm~\ref{alg:rp-oidc-http-request}). }
  \Let{$\mi{headers}$}{$[\str{ReferrerPolicy}{:}\str{origin}]$}
  \Let{$\mi{headers}[\cSetCookie]$}{$[\str{serviceSessionId}{:}\an{\nu_4,\top,\top,\top}]$}
  \ParboxComment{Create a cookie containing the service session id.}
  \Let{$m'$}{$\encs{\an{\cHttpResp, \mi{request}[\str{message}].\str{nonce}, 200, \mi{headers}, \str{ok}}}{\mi{request}[\str{key}]}$}
  \ParboxComment{Respond to browser's request to the redirection endpoint.}
  \Stop{$\an{\an{\mi{request}[\str{sender}],\mi{request}[\str{receiver}],m'}}$, $s'$} \label{line:rp-send-set-service-session}
\EndFunction
\end{algorithmic} %
\end{algorithm}

\begin{algorithm}[tbp]
\caption{\label{alg:rp-oidc-cont-login-flow} Relying Party $R^r$: Continuing in the login flow.}
\begin{algorithmic}[1]
\Function{$\mathsf{START\_LOGIN\_FLOW}$}{$\mi{sessionId}$, $s'$}
  \Let{$\mi{redirectUris}$}{$\{\an{\cUrl, \https, d, \str{/redirect\_ep}, \an{},\an{}} |\, d \in \mathsf{dom}(r)\}$} \label{line:rp-choose-redirect-uris} \Comment{Set of redirect URIs for all domains.}
  \Let{$\mi{session}$}{$s'.\str{sessions}[\mi{sessionId}]$}
  \Let{$\mi{identity}$}{$\mi{session}[\str{identity}]$}
  \If{$\mi{identity} \not\in s'.\str{issuerCache}$}
    \Let{$\mi{host}$}{$\mi{identity}.\str{domain}$} \label{line:rp-webfinger-domain}
    \Let{$\mi{path}$}{$\str{/.wk/webfinger}$}
    \Let{$\mi{parameters}$}{$[\str{resource}:\mi{identity}]$}
    \Let{$\mi{message}$}{$\hreq{ nonce=\nu_{5},
        method=\mGet,
        xhost=\mi{host},
        path=\mi{path},
        parameters=\mi{parameters},
        headers=\an{},
        xbody=\an{}}$}
    \CallFun{HTTPS\_SIMPLE\_SEND}{$[\str{responseTo}{:}\str{WEBFINGER}, \str{session}{:}\mi{sessionId}]$, $\mi{message}$, $s'$} \label{line:rp-create-webfinger-request}
  \EndIf
  \Let{$\mi{issuer}$}{$s'.\str{issuerCache}[\mi{identity}]$}
  \If{$\mi{issuer} \not\in s'.\str{oidcConfigCache}$}
    \Let{$\mi{host}$}{$\mi{issuer}$}
    \Let{$\mi{path}$}{$\str{/.wk/openid\mhyphen{}configuration}$}
    \Let{$\mi{message}$}{$\hreq{ nonce=\nu_{5},
        method=\mGet,
        xhost=\mi{host},
        path=\mi{path},
        parameters=[],
        headers=\an{},
        xbody=\an{}}$}
    \CallFun{HTTPS\_SIMPLE\_SEND}{$[\str{responseTo}{:}\str{CONFIG}, \str{session}{:}\mi{sessionId}]$, $\mi{message}$, $s'$} \label{line:rp-create-oidc-config-request}
  \EndIf
  \Let{$\mi{oidcConfig}$}{$s'.\str{oidcConfigCache}[\mi{issuer}]$}
  \If{$\mi{issuer} \not\in s'.\str{jwksCache}$}
    \Let{$\mi{url}$}{$\mi{oidcConfig}[\str{jwks\_uri}]$}
    \Let{$\mi{message}$}{$\hreq{ nonce=\nu_{5},
        method=\mGet,
        xhost=\mi{url}.\str{host},
        path=\mi{url}.\str{path},
        parameters=[],
        headers=\an{},
        xbody=\an{}}$}
    \CallFun{HTTPS\_SIMPLE\_SEND}{$[\str{responseTo}{:}\str{JWKS}, \str{session}{:}\mi{sessionId}]$, $\mi{message}$, $s'$} \label{line:rp-create-jwks-request}
  \EndIf
  \If{$\mi{issuer} \not\in s'.\str{clientCredentialsCache}$}
    \Let{$\mi{url}$}{$\mi{oidcConfig}[\str{reg\_ep}]$} \label{line:rp-create-registration-request}
    \Let{$\mi{message}$}{$\hreq{ nonce=\nu_{5},
        method=\mPost,
        xhost=\mi{url}.\str{host},
        path=\mi{url}.\str{path},
        parameters=[],
        headers=\an{},
        xbody=[\str{redirect\_uris}:\an{\mi{redirectUris}}]}$}
    \CallFun{HTTPS\_SIMPLE\_SEND}{$[\str{responseTo}{:}\str{REGISTRATION}, \str{session}:\mi{sessionId}]$, $\mi{message}$, $s'$} 
  \EndIf
  \Let{$\mi{credentials}$}{$s'.\str{clientCredentialsCache}[\mi{issuer}]$}

  \LetND{$\mi{responseType}$}{$\{\an{\str{code}}, \an{\str{id\_token}}, \an{\str{id\_token}, \str{token}}, \an{\str{code}, \str{id\_token}},$ \breakalgohook{6}$\an{\str{code}, \str{token}}, \an{\str{code}, \str{id\_token}, \str{token}} \}$}
  \LetND{$\mi{redirectUri}$}{$\mi{redirectUris}$}
  \Let{$\mi{data}$}{$[\str{response\_type}{:}\mi{responseType}, \str{redirect\_uri}{:}\mi{redirectUri},$ 
    \breakalgohook{6} $\str{client\_id}{:}\mi{credentials}[\str{client\_id}], \str{state}{:}\nu_6]$}
  \If{$\str{code} \not\inPairing \mi{responseType}$} \Comment{Implicit flow requires nonce.}
    \Let{$\mi{data}[\str{nonce}]$}{$\nu_7$}
  \EndIf
  \Let{$s'.\str{sessions}[\mi{sessionId}]$}{$s'.\str{sessions}[\mi{sessionId}] \cup \mi{data}$}
  \Let{$\mi{authEndpoint}$}{$\mi{oidcConfig}[\str{auth\_ep}]$}
  \Let{$\mi{authEndpoint}.\str{parameters}$}{$\mi{authEndpoint}.\str{parameters} \cup \mi{data}$}
  \Let{$\mi{headers}$}{$[\str{Location}{:}\mi{authEndpoint}, \str{ReferrerPolicy}{:}\str{origin}]$}
  \Let{$\mi{headers}[\cSetCookie]$}{$[\str{sessionId}{:}\an{\mi{sessionId},\top,\top,\top}]$}
  \Let{$\mi{request}$}{$s'.\str{sessions}[\mi{sessionId}][\str{startRequest}]$}
  \Let{$m'$}{$\encs{\an{\cHttpResp, \mi{request}[\str{message}].\str{nonce}, 303, \mi{headers}, \bot}}{\mi{request}[\str{key}]}$} 
  \Stop{$\an{\an{\mi{request}[\str{sender}],\mi{request}[\str{receiver}],m'}}$, $s'$} \label{line:rp-send-authorization-redir}
\EndFunction
\end{algorithmic} %
\end{algorithm}

\begin{definition}\label{def:relying-parties}
  A \emph{state $s\in Z^r$ of an RP $r$} is a term of the form
  $\langle\mi{DNSAddress}$, $\mi{pendingDNS}$, $\mi{pendingRequests}$,
  $\mi{corrupt}$, $\mi{keyMapping}$,
  $\mi{tlskeys}$, %
  $\mi{sessions}$, $\mi{issuerCache}$, $\mi{oidcConfigCache}$,
  $\mi{jwksCache}$, $\mi{clientCredentialsCache}\rangle$ with
  $\mi{DNSaddress} \in \addresses$,
  $\mi{pendingDNS} \in \dict{\nonces}{\terms}$,
  $\mi{pendingRequests} \in \dict{\nonces}{\terms}$,
  $\mi{corrupt}\in\terms$, $\mi{keyMapping} \in \dict{\dns}{\terms}$ ,
  $\mi{tlskeys} \in \dict{\dns}{K_\text{TLS}}$ (all former components
  as in Definition~\ref{def:generic-https-state}),
  $\mi{sessions} \in \dict{\nonces}{\terms}$,
  $\mi{issuerCache} \in \dict{\terms}{\terms}$,
  $\mi{oidcConfigCache} \in \dict{\terms}{\terms}$, and
  $\mi{jwksCache} \in \dict{\terms}{\terms}$.

  An \emph{initial state $s^r_0$ of $r$} is a state of $r$ with
  $s^r_0.\str{pendingDNS} \equiv \an{}$, $s^r_0.\str{pendingRequests} \equiv \an{}$, $s^r_0.\str{corrupt} \equiv \bot$,  $s^r_0.\str{keyMapping}$ being the
  same as the keymapping for browsers above, $s^r_0.\str{tlskeys} \equiv \mi{tlskeys}^r$, $s^r_0.\str{sessions} \equiv \an{}$,
  $s^r_0.\str{issuerCache} \equiv \an{}$,
  $s^r_0.\str{oidcConfigCache} \equiv \an{}$,
  $s^r_0.\str{jwksCache} \equiv \an{}$, and
  $s^r_0.\str{clientCredentialsCache} \equiv \an{}$.
\end{definition}

We now specify the relation $R^r$:
This relation is based on our model of generic HTTPS servers (see
Appendix~\ref{sec:generic-https-server-model}). Hence we only need to
specify algorithms that differ from or do not exist in the generic
server model. These algorithms are defined in
Algorithms~\ref{alg:rp-oidc-http-response}--\ref{alg:rp-oidc-cont-login-flow}.
(Note that in several places throughout these algorithms we use
placeholders to generate ``fresh'' nonces as described in our
communication model (see Definition~\ref{def:terms}).
Figure~\ref{fig:rp-placeholder-list} shows a list of all placeholders
used.)

The scripts that are used by the RP are described in
Algorithms~\ref{alg:script-rp-index}
and~\ref{alg:script-rp-get-fragment}. In these scripts, to extract the
current URL of a document, the function
$\mathsf{GETURL}(\mi{tree},\mi{docnonce})$
is used. We define this function as follows: It searches for the
document with the identifier $\mi{docnonce}$
in the (cleaned) tree $\mi{tree}$
of the browser's windows and documents. It then returns the URL $u$
of that document. If no document with nonce $\mi{docnonce}$
is found in the tree $\mi{tree}$, $\notdef$ is returned.

\begin{figure}[tbh]
  \centering
  \begin{tabular}{|@{\hspace{1ex}}l@{\hspace{1ex}}|@{\hspace{1ex}}l@{\hspace{1ex}}|}\hline 
    \hfill Placeholder\hfill  &\hfill  Usage\hfill  \\\hline\hline
    $\nu_1$ & new login session id  \\\hline
    $\nu_2$ & new HTTP request nonce  \\\hline
    $\nu_3$ & new HTTP request nonce  \\\hline
    $\nu_4$ & new service session id  \\\hline
    $\nu_5$ & new HTTP request nonce  \\\hline
    $\nu_6$ & new state value  \\\hline
    $\nu_{7}$ & new \emph{nonce} value (for the implicit flow)  \\\hline
  \end{tabular}
  
  \caption{List of placeholders used in the relying party algorithm.}
  \label{fig:rp-placeholder-list}
\end{figure}

\FloatBarrier
\begin{algorithm}[h!]
\caption{\label{alg:script-rp-index}Relation of $\mi{script\_rp\_index}$ }
\begin{algorithmic}[1]
\Statex[-1] \textbf{Input:} $\langle\mi{tree}$, $\mi{docnonce}$, $\mi{scriptstate}$, $\mi{scriptinputs}$, $\mi{cookies}$, $\mi{localStorage}$, $\mi{sessionStorage}$, $\mi{ids}$, $\mi{secrets}\rangle$
\Comment{\textbf{Script that models the index page of a relying party.} Users can initiate the login flow or follow arbitrary links. The script receives various information about the current browser state, filtered according to the access rules (same origin policy and others) in the browser.} 

\LetND{$\mi{switch}$}{$\{\str{auth},\str{link}\}$}
\ParboxComment{Non-deterministically decide whether to start a login flow or to follow some link.}

\If{$\mi{switch} \equiv \str{auth}$}
\ParboxComment{\textbf{Start login flow.}}

\Let{$\mi{url}$}{$\mathsf{GETURL}(\mi{tree},\mi{docnonce})$}
\ParboxComment{Retrieve own URL.}
\LetND{$\mi{id}$}{$\mi{ids}$}\label{line:script-rp-index-select-id} \ParboxComment{Retrieve one of user's identities.}
\Let{$\mi{url'}$}{$\an{\tUrl, \https, \mi{url}.\str{host}, \str{/startLogin},
    \an{}, \an{}}$}
\ParboxComment{Assemble URL.}
\Let{$\mi{command}$}{$\an{\tForm, \mi{url'}, \mi{\mPost}, \mi{id}, \bot}$}

\ParboxComment{Post a form including the identity to the  RP.}

\State \textbf{stop} $\an{s,\mi{cookies},\mi{localStorage},\mi{sessionStorage},\mi{command}}$ \label{line:script-rp-index-start-oidc-session} \ParboxComment{Finish script's run and instruct the browser to follow the command (form post).}

\Else  \ParboxComment{\textbf{Follow link.}}

  \LetND{$\mi{protocol}$}{$\{\http, \https\}$} 
  \ParboxComment{Non-deterministically select protocol (HTTP or HTTPS).}
  \LetND{$\mi{host}$}{$\dns$}
  \ParboxComment{Non-det. select host.}
  \LetND{$\mi{path}$}{$\mathbb{S}$}
  \ParboxComment{Non-det. select path.}
  \LetND{$\mi{fragment}$}{$\mathbb{S}$}
  \ParboxComment{Non-det. select fragment part.}
  \LetND{$\mi{parameters}$}{$\dict{\mathbb{S}}{\mathbb{S}}$} 
  \ParboxComment{Non-det. select parameters.}
  \Let{$\mi{url}$}{$\an{\cUrl, \mi{protocol}, \mi{host}, \mi{path}, \mi{parameters}, \mi{fragment}}$}
  \ParboxComment{Assemble URL.}

  \Let{$\mi{command}$}{$\an{\tHref, \mi{url}, \bot, \bot}$}
  \ParboxComment{Follow link to the selected URL.}
  \State \textbf{stop} $\an{s,\mi{cookies},\mi{localStorage},\mi{sessionStorage},\mi{command}}$
\ParboxComment{Finish script's run and instruct the browser to follow the command (follow link).}
\EndIf

\end{algorithmic} %
\end{algorithm}

\begin{algorithm}[h!]
\caption{Relation of $\mi{script\_rp\_get\_fragment}$ }\label{alg:script-rp-get-fragment}
\begin{algorithmic}[1]
\Statex[-1] \textbf{Input:} $\langle\mi{tree}$, $\mi{docnonce}$, $\mi{scriptstate}$, $\mi{scriptinputs}$, $\mi{cookies}$, $\mi{localStorage}$, $\mi{sessionStorage}$, $\mi{ids}$, $\mi{secrets}\rangle$
\Let{$\mi{url}$}{$\mathsf{GETURL}(\mi{tree},\mi{docnonce})$}
\Let{$\mi{url'}$}{$\an{\tUrl, \https, \mi{url}.\str{host}, \str{/redirect\_ep},
    [\str{iss} : \mi{url}.\str{parameters}[\str{iss}]], \an{}}$}
\Let{$\mi{command}$}{$\an{\tForm, \mi{url'}, \mi{\mPost}, \mi{url}.\str{fragment}, \bot}$}

\State \textbf{stop} $\an{s,\mi{cookies},\mi{localStorage},\mi{sessionStorage},\mi{command}}$

\end{algorithmic} %
\end{algorithm}

\FloatBarrier

\subsection{Identity Providers}  \label{app:idps}

An identity provider $i \in \fAP{IDP}$ is a web server modeled as
an atomic process $(I^i, Z^i, R^i, s_0^i)$ with the addresses
$I^i := \mapAddresstoAP(i)$. Next, we define the set $Z^i$ of states
of $i$ and the initial state $s^i_0$ of~$i$.

\begin{definition}\label{def:initial-state-idp}
  A \emph{state $s\in Z^i$ of an IdP $i$} is a term of the form
  $\langle\mi{DNSAddress}$, $\mi{pendingDNS}$, $\mi{pendingRequests}$,
  $\mi{corrupt}$, $\mi{keyMapping}$,
  $\mi{tlskeys}$, %
  $\mi{registrationRequests}$, (sequence of terms) $\mi{clients}$,
  (dict from nonces to terms) $\mi{records}$, (sequence of terms)
  $\mi{jwk}\rangle$ (signing key (only one)) with
  $\mi{DNSaddress} \in \addresses$,
  $\mi{pendingDNS} \in \dict{\nonces}{\terms}$,
  $\mi{pendingRequests} \in \dict{\nonces}{\terms}$,
  $\mi{corrupt}\in\terms$, $\mi{keyMapping} \in \dict{\dns}{\terms}$ ,
  $\mi{tlskeys} \in \dict{\dns}{K_\text{TLS}}$ (all former components
  as in Definition~\ref{def:generic-https-state}),
  $\mi{registrationRequests}\in\terms$,
  $\mi{clients}\in\dict{\terms}{\terms}$, $\mi{records}\in\terms$, and
  $\mi{jwk}\in K_\text{sign}$.

  An \emph{initial state $s^i_0$ of $i$} is a state of $i$ with
  $s^i_0.\str{pendingDNS} \equiv \an{}$,
  $s^i_0.\str{pendingRequests} \equiv \an{}$,
  $s^i_0.\str{corrupt} \equiv \bot$, $s^i_0.\str{keyMapping}$ being
  the same as the keymapping for browsers above,
  $s^i_0.\str{tlskeys} \equiv \mi{tlskeys}^i$,
  $s^i_0.\str{registrationRequests} \equiv \an{}$,
  $s^i_0.\str{clients} \equiv {an}$,
  $s^i_0.\str{records} \equiv \an{}$, and
  $s^i_0.\str{jwk} \equiv \mathsf{signkey}(i)$.
\end{definition}

We now specify the relation $R^i$:
As for the RPs above, this relation is based on our model of generic
HTTPS servers (see Appendix~\ref{sec:generic-https-server-model}). We
specify algorithms that differ from or do not exist in the generic
server model in Algorithms~\ref{alg:idp-oidc}
and~\ref{alg:idp-oidc-other}. As above,
Figure~\ref{fig:idp-placeholder-list} shows a list of all placeholders
used. Algorithm~\ref{alg:oidc-script-idp-form} shows the script
$\mathit{script\_idp\_form}$ that is used by IdPs.

\begin{figure}[htb]
  \centering
  \begin{tabular}{|@{\hspace{1ex}}l@{\hspace{1ex}}|@{\hspace{1ex}}l@{\hspace{1ex}}|}\hline 
    \hfill Placeholder\hfill  &\hfill  Usage\hfill  \\\hline\hline
    $\nu_1$ & new authorization code  \\\hline
    $\nu_2$, $\nu_3$ & new access tokens \\\hline
    $\nu_4$ & new client secret \\\hline
  \end{tabular}
  
  \caption{List of placeholders used in the identity provider algorithm.}
  \label{fig:idp-placeholder-list}
\end{figure}

\begin{algorithm}
\caption{\label{alg:idp-oidc} Relation of IdP $R^i$ -- Processing HTTPS Requests}
\begin{algorithmic}[1]
\Function{$\mathsf{PROCESS\_HTTPS\_REQUEST}$}{$m$, $k$, $a$, $f$, $s'$}
  \If{$m.\str{path} \equiv \str{/.wk/webfinger}$}
    \LetST{$\mi{user}$, $\mi{domain}$}{$\an{\mi{user},\mi{domain}} \equiv m.\str{parameters}[\str{resource}] \wedge \an{\mi{user}, \mi{domain}} \in \IDs^i$}{\textbf{stop}\ \DefStop} \label{line:idp-response-webfinger}
    \Let{$\mi{descriptor}$}{$\left[ \str{subject}: \an{\mi{user},\mi{domain}},  \str{links}:  \left[  \str{rel}: \str{OIDC\_issuer}, \str{href}: m.\str{host}   \right]   \right]$} \label{line:idp-webfinger-choose-issuer}
    \Let{$m'$}{$\encs{\an{\cHttpResp, m.\str{nonce}, 200, \an{}, \mi{descriptor}}}{k}$}
    \Stop{\StopWithMPrime} \label{line:idp-send-webfinger}
  \ElsIf{$m.\str{path} \equiv \str{/.wk/openid\mhyphen{}configuration}$}
    \Let{$\mi{metaData}$}{$\left[ \str{issuer}:  m.\str{host} \right]$}\label{line:idp-response-configuration}\footnote{Note that the metaData does not include the value $\str{response\_types\_supported}$ and $\str{subject\_types\_supported}$ since our model of RP assumes that all types are always supported.}
    \Let{$\mi{metaData}[\str{auth\_ep}]$}{$\an{\tUrl, \https, m.\str{host}, \str{/auth}, \an{}, \an{}}$}
    \Let{$\mi{metaData}[\str{token\_ep}]$}{$\an{\tUrl, \https, m.\str{host}, \str{/token}, \an{}, \an{}}$}
    \Let{$\mi{metaData}[\str{jwks\_uri}]$}{$\an{\tUrl, \https, m.\str{host}, \str{/jwks}, \an{}, \an{}}$}
    \Let{$\mi{metaData}[\str{reg\_ep}]$}{$\an{\tUrl, \https, m.\str{host}, \str{/reg}, \an{}, \an{}}$}
    \Let{$m'$}{$\encs{\an{\cHttpResp, m.\str{nonce}, 200, \an{}, \mi{metaData}}}{k}$}
    \Stop{\StopWithMPrime} \label{line:idp-send-oidc-config}
  \ElsIf{$m.\str{path} \equiv \str{/jwks}$} \label{line:idp-response-jwks}
    \Let{$m'$}{$\encs{\an{\cHttpResp, m.\str{nonce}, 201, \an{}, \mathsf{pub}(s'.\str{jwk})}}{k}$}
    \Stop{\StopWithMPrime} \label{line:idp-send-jwks}
  \ElsIf{$m.\str{path} \equiv \str{/reg} \wedge m.\str{method} \equiv \mPost$} \label{line:idp-process-reg-request}
    \Append{$\an{m,k,a,f}$}{$s'.\str{registrationRequests}$} 
    \Stop{\DefStop} \Comment{Stop here to let attacker choose the client id.}
  \ElsIf{$m.\str{path} \equiv \str{/auth}$}
    \If{$m.\str{method} \equiv \mGet$}
      \Let{$\mi{data}$}{$m.\str{parameters}$}
    \ElsIf{$m.\str{method} \equiv \mPost$}
      \Let{$\mi{data}$}{$m.\str{body}$}
    \EndIf
    \Let{$m'$}{$\encs{\an{\cHttpResp, m.\str{nonce}, 200, \an{\an{\str{ReferrerPolicy}, \str{origin}}}, \an{\str{script\_idp\_form}, \mi{data}}}}{k}$}
    \Stop{\StopWithMPrime}\label{line:idp-send-form}
  \ElsIf{$m.\str{path} \equiv \str{/auth2} \wedge m.\str{method} \equiv \mPost \wedge m.\str{headers}[\str{Origin}]  \equiv \an{m.\str{host}, \https}$}  \label{line:idp-receive-auth2}
    \Let{$\mi{identity}$}{$m.\str{body}[\str{identity}]$}
    \Let{$\mi{password}$}{$m.\str{body}[\str{password}]$}
    \If{$\mi{identity}.\str{domain} \not\in \mathsf{dom}(i)$}
      \Stop{\DefStop}
    \EndIf
    \If{$\mi{password} \not\equiv \mapIDtoPLI(\mi{identity})$}
      \Stop{\DefStop}
    \EndIf
    \Let{$\mi{responseType}$}{$m.\str{body}[\str{response\_type}]$}
    \Let{$\mi{clientId}$}{$m.\str{body}[\str{client\_id}]$}
    \Let{$\mi{redirectUri}$}{$m.\str{body}[\str{redirect\_uri}]$}
    \Let{$\mi{state}$}{$m.\str{body}[\str{state}]$}
    \Let{$\mi{nonce}$}{$m.\str{body}[\str{nonce}]$}

    \If{$\mi{clientId} \not\in s'.\str{clients}$}
      \Stop{\DefStop}
    \EndIf    
    \Let{$\mi{clientInfo}$}{$s'.\str{clients}[\mi{clientId}]$}
    \If{$\mi{redirectUri} \not\inPairing \mi{clientInfo}[\str{redirect\_uris}]$}
      \Stop{\DefStop}
    \EndIf

    \Let{$\mi{record}$}{$[ \str{client\_id} : \mi{clientId} ] $}
    \Let{$\mi{record}[\str{redirect\_uri}]$}{$\mi{redirectUri}$}
    \Let{$\mi{record}[\str{subject}]$}{$\mi{identity}$}
    \Let{$\mi{record}[\str{issuer}]$}{$m.\str{host}$}
    \Let{$\mi{record}[\str{nonce}]$}{$\mi{nonce}$}

    \Let{$\mi{record}[\str{code}]$}{$\nu_1$}
    \Let{$\mi{record}[\str{access\_tokens}]$}{$\an{\nu_2,\nu_3}$}\footnote{Note that an IdP may issue two different access tokens in hybrid mode: one access token along with the authorization response, and one access token along the token response.}

    \Append{$\mi{record}$}{$s'.\str{records}$}

\algstore{myalg}
\end{algorithmic}
\end{algorithm}

\begin{algorithm}                     
\begin{algorithmic} [1]                   %
\algrestore{myalg}

    \Let{$\mi{responseData}$}{$[]$}
    \If{$\str{code} \inPairing \mi{responseType}$}
      \Let{$\mi{responseData}[\str{code}]$}{$\nu_1$}
    \EndIf

    \If{$\str{token} \inPairing \mi{responseType}$}
      \Let{$\mi{responseData}[\str{access\_token}]$}{$\nu_2$}
      \Let{$\mi{responseData}[\str{token\_type}]$}{$\str{bearer}$}
    \EndIf

    \If{$\str{id\_token} \inPairing \mi{responseType}$}
      \Let{$\mi{idTokenBody}$}{$[ \str{iss}: \mi{record}[\str{issuer}], \str{sub} :\mi{record}[\str{subject}],$ \breakalgohook{9} $\str{aud} : \mi{record}[\str{client\_id}], \str{nonce} : \mi{record}[\str{nonce}]  ]$} \label{line:idp-create-id-token-auth2}

      \Let{$\mi{responseData}[\str{id\_token}]$}{$\sig{\mi{idTokenBody}}{s'.\str{jwk}}$} 

    \EndIf

    \If{$\mi{state} \not\equiv \an{}$}
      \Let{$\mi{responseData}[\str{state}]$}{$\mi{state}$}
    \EndIf
    \If{$\mi{responseType} \equiv \an{\str{code}}$} \Comment{Authorization Code Mode}
      \Let{$\mi{redirectUri}.\str{parameters}$}{$\mi{redirectUri}.\str{parameters} \cup \mi{responseData}$}
    \Else \Comment{Implicit/Hybrid Mode}
      \If{$\str{code} \not\inPairing \mi{responseType} \wedge \str{id\_token} \inPairing \mi{responseType} \wedge \mi{nonce} \equiv \an{}$}
        \Stop{\DefStop}\Comment{Nonce is required in implicit mode.}
      \EndIf
      \Let{$\mi{redirectUri}.\str{fragment}$}{$\mi{redirectUri}.\str{fragment} \cup \mi{responseData}$}
    \EndIf
    \Let{$\mi{redirectUri}.\str{parameters}[\str{iss}]$}{$\mi{record}[\str{issuer}]$} \label{line:idp-add-iss-param-to-redirecturi}
    \Let{$m'$}{$\encs{\an{\cHttpResp, m.\str{nonce}, 303, \an{\an{\str{Location}, \mi{redirectUri}}}, \an{}}}{k}$}
    \Stop{\StopWithMPrime} \label{line:idp-send-auth-resp}

  \ElsIf{$m.\str{path} \equiv \str{/token} \wedge m.\str{method} \equiv \mPost$}\label{line:idp-token-ep}
    \If{$\str{client\_id} \in m.\str{body}$} \Comment{Only client id is provided, no client secret.}
      \Let{$\mi{clientId}$}{$m.\str{body}[\str{client\_id}]$}
      \Let{$\mi{clientSecret}$}{$\an{}$}
    \Else
      \Let{$\mi{clientId}$}{$m.\str{headers}[\str{Authorization}].\str{username}$}
      \Let{$\mi{clientSecret}$}{$m.\str{headers}[\str{Authorization}].\str{password}$}
    \EndIf
    \Let{$\mi{clientInfo}$}{$s'.\str{clients}[\mi{clientId}]$}
    \If{$\mi{clientInfo} \equiv \an{} \vee \mi{clientInfo}[\str{client\_secret}] \not\equiv \mi{clientSecret}$}
      \Stop{\DefStop}
    \EndIf

    \Let{$\mi{code}$}{$m.\str{body}[\str{code}]$}
    \LetST{$\mi{record}$, $\mi{ptr}$}{$\mi{record} \equiv s'.\str{records}.\mi{ptr} \wedge \mi{record}[\str{code}] \equiv \mi{code} \wedge \mi{code} \not\equiv \bot$}{\textbf{stop}\ \DefStop}
    \If{$\mi{record}[\str{client\_id}] \not\equiv \mi{clientId}$} 
      \Stop{\DefStop}
    \EndIf
    \If{\textbf{not} $( \mi{record}[\str{redirect\_uri}] \equiv m.\str{body}[\str{redirect\_uri}] \vee (|\mi{clientInfo}[\str{redirect\_uris}]| = 1 \wedge \str{redirect\_uri} \not\in m.\str{body}))$}
      \Stop{\DefStop} \Comment{If only one redirect URI is registered, it can be omitted.}
    \EndIf

    \Let{$s'.\str{records}.\mi{ptr}[\str{code}]$}{$\bot$} \Comment{Invalidate code}

    \LetND{$\mi{accessTokenChoice}$}{$\{1,2\}$}
    \Let{$\mi{accessToken}$}{$\mi{record}[\str{access\_tokens}].\mi{accessTokenChoice}$}

    \Let{$\mi{idTokenBody}$}{$[ \str{iss}: \mi{record}[\str{issuer}] ]$} \label{line:idp-create-id-token-from-code}
    \Let{$\mi{idTokenBody}[sub]$}{$\mi{record}[\str{subject}]$}
    \Let{$\mi{idTokenBody}[aud]$}{$\mi{record}[\str{client\_id}]$}
    \Let{$\mi{idTokenBody}[nonce]$}{$\mi{record}[\str{nonce}]$}

    \Let{$\mi{id\_token}$}{$\sig{\mi{idTokenBody}}{s'.\str{jwk}}$}

    \Let{$m'$}{$\encs{\an{\cHttpResp, m.\str{nonce}, 200, \an{}, [\str{access\_token}{:} \mi{accessToken}, \str{token\_type}{:} \str{bearer}, \str{id\_token}{:} \mi{id\_token}]}}{k}$}      
    \Stop{\StopWithMPrime} \label{idp-send-token-ep}
  \EndIf
\EndFunction
\end{algorithmic} %
\end{algorithm}

\begin{algorithm}
\caption{\label{alg:idp-oidc-other} Relation of IdP $R^i$ -- Processing other messages.}
\begin{algorithmic}[1]
  \Function{$\mathsf{PROCESS\_OTHER}$}{$m$, $a$, $f$, $s'$} 
    \Let{$\mi{clientId}$}{$m$} \Comment{$m$ is client id chosen by and sent by an attacker process.}
    \If{$\mi{clientId} \in s'.\str{clients}$}
      \Stop{\DefStop}
    \EndIf
    \LetST{$m$, $k$, $a$, $f$}{$\an{m,k,a,f} \inPairing s'.\str{registrationRequests}$}{\textbf{stop}}
    \State \textbf{remove} $\an{m,k,a,f}$ \textbf{from} $s'.\str{registrationRequests}$
    \Let{$\mi{redirectUris}$}{$m.\str{body}[\str{redirect\_uris}]$}
    \Let{$\mi{regResponse}$}{$[ \str{client\_id}: \mi{clientId} ]$}
    \LetND{$\mi{issueSecret}$}{$\{\top,\bot\}$}
    \If{$\mi{issueSecret} \equiv \top$}
      \Let{$\mi{clientSecret}$}{$\nu_4$}
      \Let{$\mi{regResponse}[\str{client\_secret}]$}{$\mi{clientSecret}$}
    \EndIf
    \Let{$\mi{clientInfo}$}{$\mi{regResponse}$}
    \Let{$\mi{clientInfo}[\str{redirect\_uris}]$}{$\mi{redirectUris}$}
    \Let{$s'.\str{clients}[\mi{clientId}]$}{$\mi{clientInfo}$}
    \Let{$m'$}{$\encs{\an{\cHttpResp, m.\str{nonce}, 201, \an{}, \mi{regResponse}}}{k}$} 
    \Stop{\StopWithMPrime} \label{line:idp-send-reg-response}
\EndFunction
\end{algorithmic} %
\end{algorithm}

\begin{algorithm}
\caption{\label{alg:oidc-script-idp-form} Relation of $\mi{script\_idp\_form}$ }
\begin{algorithmic}[1]
\Statex[-1] \textbf{Input:} $\langle\mi{tree}$, $\mi{docnonce}$, $\mi{scriptstate}$, $\mi{scriptinputs}$, $\mi{cookies}$, $\mi{localStorage}$, $\mi{sessionStorage}$, $\mi{ids}$, $\mi{secrets}\rangle$
\Let{$\mi{url}$}{$\mathsf{GETURL}(\mi{tree},\mi{docnonce})$} 
\Let{$\mi{url}'$}{$\an{\tUrl, \https, \mi{url}.\str{host}, \str{/auth2},
    \an{}, \an{}}$}
\Let{$\mi{formData}$}{$\mi{scriptstate}$}
\LetND{$\mi{identity}$}{$\mi{ids}$} \label{line:script-idp-form-select-id}
\LetND{$\mi{secret}$}{$\mi{secrets}$}
\Let{$\mi{formData}[\str{identity}]$}{$\mi{identity}$}
\Let{$\mi{formData}[\str{password}]$}{$\mi{secret}$}
\Let{$\mi{command}$}{$\an{\tForm, \mi{url}', \mi{\mPost}, \mi{formData}, \bot}$}
\State \textbf{stop} $\an{s,\mi{cookies},\mi{localStorage},\mi{sessionStorage},\mi{command}}$

\end{algorithmic} %
\end{algorithm}

\FloatBarrier %

\section{Formal Model of OpenID Connect with Web Attackers}
\label{app:model-oidc-auth-webattackers}

We now derive $\oidcwebsystem^w$
(an \emph{OIDC web system with web attackers}) from $\oidcwebsystem^n$
by replacing the network attacker with a finite set of web attackers.
(Note that we more generally speak of an \emph{OIDC web system} if it
is not important what kind of attacker the web system contains.)

\begin{definition}
  An \emph{OIDC web system with web attackers}, $\oidcwebsystem^w$,
  is an OIDC web system
  $\oidcwebsystem^n=(\bidsystem, \scriptset, \mathsf{script}, E^0)$
  with the following changes:
  \begin{itemize}
  \item We have $\bidsystem = \mathsf{Hon} \cup \mathsf{Web}$,
    in particular, there is no network attacker. The set
    $\mathsf{Web}$
    contains a finite number of web attacker processes. The set
    $\mathsf{Hon}$
    is as described above, and additionally contains a DNS server $d$
    as defined below.
  \item The set of IP addresses $\mathsf{IPs}$ contains no IP addresses for the network
    attacker, but instead a finite set of IP addresses for each web
    attacker.
  \item The set of Domains $\mathsf{Doms}$
    contains no domains for the network attacker, but instead a finite
    set of domains for each web attacker.

  \item All honest parties use the DNS server $d$ as their DNS server.
  \end{itemize}
\end{definition}

\subsection{DNS Server}
\label{sec:dns-server-1}

The DNS server $d$
is a DNS server as defined in Definition~\ref{def:dns-server}. Its
initial state $s_0^d$
contains only pairings $\an{D, i}$
such that $i \in \mathsf{addr}(\mathsf{dom}^{-1}(D))$,
i.e., any domain is resolved to an IP address belonging to the owner
of that domain (as defined in Appendix~\ref{app:addresses-and-domain-names}).

\subsection{Web Attackers}
\label{sec:web-attackers}

Web attackers, as opposed to network attackers, can only use their own
IP addresses for listening to and sending messages. Therefore, for any
web attacker process $w$
we have that $I^w = \mathsf{addr}(w)$.
The inital states of web attackers are defined parallel to those of
network attackers, i.e., the initial state for a web attacker process
$w$
is $s_0^w = \an{\mi{attdoms}^w, \mi{tlskeys}, \mi{signkeys}}$,
where $\mi{attdoms}^w$
is a sequence of all domains along with the corresponding private keys
owned by the attacker $w$,
$\mi{tlskeys}$
is a sequence of all domains and the corresponding public keys, and
$\mi{signkeys}$
is a sequence containing all public signing keys for all IdPs.

\clearpage
\section{Formal Security Properties}\label{app:form-secur-prop}

The security properties for OIDC are formally defined as follows.

\subsection{Authentication}
\label{sec:fprop-authentication}

Intuitively, authentication for $\oidcwebsystem^n$
means that an attacker should not be able to login at an (honest) RP
under the identity of a user unless certain parties involved in the
login process are corrupted. As explained above, being logged in at an
RP under some user identity means to have obtained a service token for
this identity from the RP. 

\begin{definition}[Service Sessions]\label{def:service-sessions}
  We say that there is a \emph{service session identified
  by a nonce $n$
  for an identity $\mi{id}$
  at some RP $r$} in a configuration $(S, E, N)$
  of a run $\rho$
  of an OIDC web system
  iff there exists some session id $x$
  and a domain $d \in \mathsf{dom}(\mathsf{governor}(\mi{id}))$
  such that
  $S(r).\str{sessions}[x][\str{loggedInAs}] \equiv \an{d, \mi{id}}$
  and $S(r).\str{sessions}[x][\str{serviceSessionId}] \equiv n$.
\end{definition}

\begin{definition}[Authentication Property]\label{def:property-authn-a} Let $\oidcwebsystem^n$ be an OIDC web
  system with a network attacker. We say that \emph{$\oidcwebsystem^n$
    is secure w.r.t.~authentication} iff for every run $\rho$
  of $\oidcwebsystem^n$,
  every configuration $(S, E, N)$
  in $\rho$,
  every $r\in \fAP{RP}$
  that is honest in $S$,
  every browser $b$
  that is honest in $S$,
  every identity $\mi{id} \in \mathsf{ID}$
  with $\mathsf{governor}(\mi{id})$
  being an honest IdP, every service session identified by some nonce
  $n$
  for $\mi{id}$
  at $r$,
  $n$
  is not derivable from the attackers knowledge in $S$
  (i.e., $n \not\in d_{\emptyset}(S(\fAP{attacker}))$).
\end{definition}

\subsection{Authorization}
\label{sec:fprop-authorization}

Intuitively, authorization for $\oidcwebsystem^n$
means that an attacker should not be able to obtain or use a protected
resource available to some honest RP at an IdP for some user unless
certain parties involved in the authorization process are corrupted.

\begin{definition}\label{def:client-id-issued}
  We say that a client id $c$
  \emph{has been issued to $r$
    by $i$}
  iff $i$ has
  sent a response to a registration request from $r$
  in Line~\ref{line:idp-send-reg-response} of
  Algorithm~\ref{alg:idp-oidc-other} and this response contains $c$
  in its  body under the dictionary key
  $\str{client\_id}$.
\end{definition}

\begin{definition}[Authorization Property]\label{def:property-authz-a} 

  Let $\oidcwebsystem^n$
  be an OIDC web system with a network attacker. We say that
  \emph{$\oidcwebsystem^n$
    is secure w.r.t.~authorization} iff for every run $\rho$
  of $\oidcwebsystem^n$,
  every configuration $(S, E, N)$
  in $\rho$,
  every $r\in \fAP{RP}$
  that is honest in $S$,
  every $i\in \fAP{IdP}$
  that is honest in $S$,
  every browser $b$ that is honest in $S$,
  every identity $\mi{id} \in \mathsf{ID}^i$ owned by $b$,
  every nonce $n$, every term $x \inPairing S(i).\str{records}$
  with $x[\str{subject}] \equiv \mi{id}$,
  $n \inPairing x[\str{access\_tokens}]$,
  and the client id $x[\str{client\_id}]$ has been issued by $i$ to $r$, we have that $n$
  is not derivable from the attackers knowledge in $S$
  (i.e., $n \not\in d_{\emptyset}(S(\fAP{attacker}))$).
\end{definition}

\subsection{Session Integrity for Authentication and Authorization}
\label{sec:fprop-session-integrity}

The two session integrity properties capture that an attacker should
be unable to forcefully log a user in to some RP. This includes
attacks such as CSRF and session swapping.

\subsubsection{Session Integrity Property for Authentication}
This security property captures that (a) a user should only be logged
in when the user actually expressed the wish to start an OIDC flow
before, and (b) if a user expressed the wish to start an OIDC flow
using some honest identity provider and a specific identity, then user
is not logged in under a different identity.

We first need to define notations for the processing steps that
represent important events during a flow of an OIDC web system.

\begin{definition}[User is logged in]\label{def:user-logged-in}
  For a run $\rho$
  of an OIDC web system with web attacker $\oidcwebsystem^w$
  we say that a browser $b$
  was authenticated to an RP $r$
  using an IdP $i$
  and an identity $u$
  in a login session identified by a nonce $\mi{lsid}$
  in processing step $Q$ in $\rho$ with
  $$Q = (S, E, N) \xrightarrow[r \rightarrow  
  E_{\text{out}}]{} (S', E', N')$$  (for some $S$, $S'$, $E$, $E'$, $N$, $N'$)
  and some event $\an{y,y',m} \in E_{\text{out}}$
  such that $m$
  is an HTTPS response matching an HTTPS request sent by $b$
  to $r$
  and we have that in the headers of $m$
  there is a header of the form
  $\an{\str{Set\mhyphen{}Cookie},
    [\str{serviceSessionId}{:}\an{\mi{ssid},\top,\top,\top}]}$ for
  some nonce $\mi{ssid}$
  and we have that there is a term $g$
  such that $S(r).\str{sessions}[\mi{lsid}] \equiv g$,
  $g[\str{serviceSessionId}] \equiv \mi{ssid}$,
  and $g[\str{loggedInAs}] \equiv \an{d, u}$
  with $d \in \mathsf{dom}(i)$.
  We then write $\mathsf{loggedIn}_\rho^Q(b, r, u, i, \mi{lsid})$.
\end{definition}

\begin{definition}[User started a login flow]
  For a run $\rho$
  of an OIDC web system with web attacker $\oidcwebsystem^w$
  we say that the user of the browser $b$
  started a login session identified by a nonce $\mi{lsid}$
  at the RP $r$
  in a processing step $Q$
  in $\rho$
  if (1) in that processing step, the browser $b$
  was triggered, selected a document loaded from an origin of $r$,
  executed the script $\mi{script\_rp\_index}$
  in that document, and in that script, executed the
  Line~\ref{line:script-rp-index-start-oidc-session} of
  Algorithm~\ref{alg:script-rp-index}, and (2) $r$
  sends an HTTPS response corresponding to the HTTPS request sent by
  $b$
  in $Q$
  and in that response, there is a header of the form
  $\an{\str{Set\mhyphen{}Cookie},
    [\str{sessionId}{:}\an{\mi{lsid},\top,\top,\top}]}$. We then write
  $\mathsf{started}_\rho^Q(b, r, \mi{lsid})$.
\end{definition}

\begin{definition}[User authenticated at an IdP]
  For a run $\rho$
  of an OIDC web system with web attacker $\oidcwebsystem^w$
  we say that the user of the browser $b$
  authenticated to an IdP $i$
  using an identity $u$
  for a login session identified by a nonce $\mi{lsid}$
  at the RP $r$
  if there is a processing step $Q$
  in $\rho$   with
  $$Q = (S, E, N) \xrightarrow[]{}
  (S', E', N')$$ (for some $S$, $S'$, $E$, $E'$, $N$, $N'$) in which the browser $b$
  was triggered, selected a document loaded from an origin of $i$,
  executed the script $\mi{script\_idp\_form}$
  in that document, and in that script, (1) in
  Line~\ref{line:script-idp-form-select-id} of
  Algorithm~\ref{alg:oidc-script-idp-form}, selected the identity $u$,
  and (2) we have that the $\mi{scriptstate}$
  of that document, when triggered, contains a nonce $s$
  such that $\mi{scriptstate}[\str{state}] \equiv s$
  and $S(r).\str{sessions}[\mi{lsid}][\str{state}] \equiv s$.
  We then write
  $\mathsf{authenticated}_\rho^Q(b, r, u, i, \mi{lsid})$.
\end{definition}

\begin{definition}[RP uses an access token]\label{def:rp-uses-access-token}
  For a run $\rho$
  of an OIDC web system with web attacker $\oidcwebsystem^w$
  we say that the RP $r$
  uses some access token $t$
  in a login session identified by the nonce $\mi{lsid}$
  established with the browser $b$
  at an IdP $i$ if there is a processing step $Q$ in $\rho$ with
  $$Q = (S, E, N) \xrightarrow[]{}
  (S', E', N')$$  (for some $S$, $S'$, $E$, $E'$, $N$, $N'$) in which (1) $r$
  calls the function $\mathsf{USE\_ACCESS\_TOKEN}$
  with the first two parameters being $\mi{lsid}$
  and $t$,
  (2)
  $S(r).\str{issuerCache}[S(r).\str{sessions}[\mi{lsid}][\str{identity}]]
  \in \mathsf{dom}(i)$, and (3)
  $\an{\str{sessionid}, \an{\mi{lsid}, y, z, z'}} \inPairing S(b).\str{cookies}[d]$
  for $d \in \mathsf{dom}(r)$,
  $y, z, z' \in \terms$.
  We then write $\mathsf{usedAuthorization}_\rho^Q(b, r, i, \mi{lsid})$.
\end{definition}

\begin{definition}[RP acts on the user's behalf]\label{def:rp-acts-on-users-behalf}
  For a run $\rho$
  of an OIDC web system with web attacker $\oidcwebsystem^w$
  we say that the RP $r$
  acts on behalf of the user with the identity $u$
  at an honest IdP $i$
  in a login session identified by the nonce $\mi{lsid}$
  established with the browser $b$
  if there is a processing step $Q$ in $\rho$ with
  $$Q = (S, E, N) \xrightarrow[]{}
  (S', E', N')$$  (for some $S$, $S'$, $E$, $E'$, $N$, $N'$) in which (1) $r$
  calls the function $\mathsf{USE\_ACCESS\_TOKEN}$
  with the first two parameters being $\mi{lsid}$
  and $t$,
  (2) we have that there is a term $g$
  such that $g \inPairing S(i).\str{records}$
  with $t \inPairing g[\str{access\_tokens}]$
  and $g[\str{subject}] \equiv u$,
  and (3)
  $\an{\str{sessionid}, \an{\mi{lsid}, y, z, z'}} \inPairing
  S(b).\str{cookies}[d]$ for $d \in \mathsf{dom}(r)$,
  $y, z, z' \in \terms$.
  We then write
  $\mathsf{actsOnUsersBehalf}_\rho^Q(b, r, u, i, \mi{lsid})$.
\end{definition}

For session integrity for authentication we say that a user that is
logged in at some RP must have expressed her wish to be logged in to
that RP in the beginning of the login flow. If the IdP is honest, then the user must also have
authenticated herself at the IdP with the same user account that RP
uses for her identification. This excludes, for example, cases where
(1) the user is forcefully logged in to an RP by an attacker that
plays the role of an IdP, and (2) where an attacker can force a user
to be logged in at some RP under a false identity issued by an honest
IdP.

\begin{definition} [Session Integrity for
  Authentication]\label{def:property-si-authn}
  Let $\oidcwebsystem^w$
  be an OIDC web system with web attackers. We say that
  \emph{$\oidcwebsystem^w$
    is secure w.r.t.~session integrity for authentication} iff for
  every run $\rho$
  of $\oidcwebsystem^w$, every processing step $Q$ in $\rho$ with
  $$Q = (S, E, N) \xrightarrow[]{}
  (S', E', N')$$ (for some $S$,
  $S'$,
  $E$,
  $E'$,
  $N$,
  $N'$),
  every browser $b$
  that is honest in $S$,
  every $i\in \fAP{IdP}$,
  every identity $u$
  that is owned by $b$,
  every $r\in \fAP{RP}$
  that is honest in $S$,
  every nonce $\mi{lsid}$,
  and $\mathsf{loggedIn}_\rho^Q(b, r, u, i, \mi{lsid})$
  we have that (1) there exists a processing step $Q'$
  in $\rho$
  (before $Q$)
  such that $\mathsf{started}_\rho^{Q'}(b, r, \mi{lsid})$,
  and (2) if $i$
  is honest in $S$,
  then there exists a processing step $Q''$
  in $\rho$
  (before $Q$)
  such that
  $\mathsf{authenticated}_\rho^{Q''}(b, r, u, i, \mi{lsid})$.
\end{definition}

For session integrity for authorization we say that if an RP uses some
access token at some IdP in a session with a user, then that user
expressed her wish to authorize the RP to interact with some IdP. If
the IdP is honest, and the RP acts on the user's behalf at the IdP
(i.e., the access token is bound to the user's identity), then the
user authenticated to the IdP using that identity.

\begin{definition} [Session Integrity for
  Authorization]\label{def:property-si-authz}
  Let $\oidcwebsystem^w$
  be an OIDC web system with web attackers. We say that
  \emph{$\oidcwebsystem^w$
    is secure w.r.t.~session integrity for authentication} iff for
  every run $\rho$
  of $\oidcwebsystem^w$, every processing step $Q$ in $\rho$ with
  $$Q = (S, E, N) \xrightarrow[]{}
  (S', E', N')$$  (for some $S$, $S'$, $E$, $E'$, $N$, $N'$), every browser $b$
  that is honest in $S$,
  every $i\in \fAP{IdP}$,
  every identity $u$
  that is owned by $b$,
  every $r\in \fAP{RP}$
  that is honest in $S$,
  every nonce $\mi{lsid}$,
  we have that (1) if
  $\mathsf{usedAuthorization}_\rho^Q(b, r, i, \mi{lsid})$
  then there exists a processing step $Q'$
  in $\rho$
  (before $Q$)
  such that $\mathsf{started}_\rho^{Q'}(b, r, \mi{lsid})$,
  and (2) if $i$
  is honest in $S$
  and $\mathsf{actsOnUsersBehalf}_\rho^Q(b, r, u, i, \mi{lsid})$
  then there exists a processing step $Q''$
  in $\rho$
  (before $Q$)
  such that
  $\mathsf{authenticated}_\rho^{Q''}(b, r, u, i, \mi{lsid})$.
\end{definition}

\clearpage
\section{Proof of Theorem~\ref{thm:theorem-1}}
\label{app:proof-oidc}

Before we prove Theorem~\ref{thm:theorem-1}, in order to provide a
quick overview, we first provide a proof sketch. We then show some
general properties of OIDC web systems with a network attacker, and
then proceed to prove the authentication, authorization, and session
integrity properties separately.

\subsection{Proof Sketch}
For authentication and authorization, we first show that the secondary
security properties from Section~\ref{sec:second-secur-prop} hold true
(see
Lemmas~\ref{lemma:issuer-cache-integrity}--\ref{lemma:attacker-does-not-learn-id-tokens}
below). We then assume that the authentication/authorization
properties do not hold, i.e., that there is a run $\rho$
of $\oidcwebsystem^n$
that does not satisfy authentication or authorization, respectively.
Using
Lemmas~\ref{lemma:issuer-cache-integrity}--\ref{lemma:attacker-does-not-learn-id-tokens},
it then only requires a few steps to lead the respective assumption to
a contradication and thereby show that $\oidcwebsystem^n$
enjoys authentication/authorization.

For the session integrity properties, we follow a similar scheme. We
first show Lemma~\ref{lemma:attacker-does-not-learn-state},
which essentially says that a web attacker is unable to get hold of
the $\mi{state}$
value that is used in a session between an honest browser $b$,
an honest RP $r$,
and an honest IdP $i$.
(Recall that the $\mi{state}$
value is essential for session integrity.) We then show session
integrity for authentication/authorization by starting from the latest
``known'' processing steps in the respective flows (e.g., for
authentication, $\mathsf{loggedIn}_\rho^Q(b, r, u, i, \mi{lsid})$)
and tracking through the OIDC flows to show the existence of the
earlier processing steps (e.g.,
$\mathsf{started}_\rho^{Q'}(b, r, \mi{lsid})$)
and their respective properties.

\subsection{Properties of $\oidcwebsystem^n$}

Let $\oidcwebsystem^n = (\bidsystem, \scriptset, \mathsf{script}, E^0)$
be an OIDC web system with a network attacker. Let $\rho$ be a run of $\oidcwebsystem^n$. We write
$s_x = (S^x,E^x,N^x)$ for the states in $\rho$.

\begin{definition}\label{def:contains}
  We say that a term $t$ \emph{is derivably contained in (a term) $t'$
    for (a set of DY processes) $P$ (in a processing step $s_i
    \rightarrow s_{i+1}$ of a run $\rho=(s_0,s_1,\ldots)$)} if $t$ is
  derivable from $t'$ with the knowledge available to $P$, i.e.,
  \begin{align*}
    t \in d_{\emptyset}(\{t'\} \cup \bigcup_{p\in P}S^{i+1}(p))
  \end{align*}

\end{definition}

\begin{definition}\label{def:leak}
  We say that \emph{a set of processes $P$ leaks a term $t$ (in a
    processing step $s_i \rightarrow s_{i+1}$) to a set of processes
    $P'$} if there exists a message $m$ that is emitted (in $s_i
  \rightarrow s_{i+1}$) by some $p \in P$ and $t$ is derivably
  contained in $m$ for $P'$ in the processing step $s_i \rightarrow
  s_{i+1}$. If we omit $P'$, we define $P' := \bidsystem \setminus
  P$. If $P$ is a set with a single element, we omit the set notation.
\end{definition}

\begin{definition}\label{def:creating}
  We say that an DY process $p$ \emph{created} a message $m$ (at
  some point) in a run if $m$ is derivably contained in a message
  emitted by $p$ in some processing step and if there is no earlier
  processing step where $m$ is derivably contained in a message
  emitted by some DY process $p'$.
\end{definition}

\begin{definition}\label{def:accepting}
  We say that \emph{a browser $b$ accepted} a message (as a response
  to some request) if the browser decrypted the message (if it was an
  HTTPS message) and called the function $\mathsf{PROCESSRESPONSE}$,
  passing the message and the request (see
  Algorithm~\ref{alg:processresponse}).
\end{definition}

\begin{definition}\label{def:knowing}
  We say that an atomic DY process \emph{$p$ knows a term $t$} in some
  state $s=(S,E,N)$ of a run if it can derive the term from its
  knowledge, i.e., $t \in d_{\emptyset}(S(p))$.
\end{definition}

\begin{definition}\label{def:initiating}
  We say that a \emph{script initiated a request $r$} if a browser
  triggered the script (in Line~\ref{line:trigger-script} of
  Algorithm~\ref{alg:runscript}) and the first component of the
  $\mi{command}$ output of the script relation is either $\tHref$, 
  $\tIframe$, $\tForm$, or $\tXMLHTTPRequest$ such that the browser
  issues the request $r$ in the same step as a result.
\end{definition}

The following lemma captures properties of RP when it uses HTTPS. For
example, the lemma says that other parties cannot decrypt messages
encrypted by RP.

\subsection{Proof of Authentication}
\label{sec:proof-property-a}

We here want to show that every OIDC web system is secure
w.r.t.~authentication, and therefore assume that there exists an
OIDC web system that is not secure w.r.t.~authentication. We then
lead this to a contradiction, thereby showing that all OIDC web
systems are secure w.r.t.~authentication. In detail, we assume:

\begin{lemma}[Integrity of Issuer Cache]\label{lemma:issuer-cache-integrity}
  For any run $\rho$
  of an OIDC web system $\oidcwebsystem^n$
  with a network attacker or an OIDC web system $\oidcwebsystem^w$
  with web attackers, every configuration $(S, E, N)$
  in $\rho$,
  every IdP $i$
  that is honest in $S$,
  every identity $\mi{id} \in \mathsf{ID}^i$,
  every relying party $r$
  that is honest in $S$,
  we have that $S(r).\str{issuerCache}[\mi{id}] \equiv \an{}$
  (not set) or
  $S(r).\str{issuerCache}[\mi{id}] \in \mathsf{dom}(i)$.
\end{lemma}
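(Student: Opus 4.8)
\textbf{Proof plan for Lemma~\ref{lemma:issuer-cache-integrity} (Integrity of Issuer Cache).}
The plan is to track every way in which an entry $S(r).\str{issuerCache}[\mi{id}]$ can be written and show that in each case the written value is a domain of $\mathsf{governor}(\mi{id}) = i$, under the hypothesis that $i$ is honest. Inspecting the RP algorithms, the only assignment to $\str{issuerCache}$ occurs in Line~\ref{line:rp-set-issuercache} of Algorithm~\ref{alg:rp-oidc-http-response}, inside the branch $\mi{reference}[\str{responseTo}] \equiv \str{WEBFINGER}$, where the value stored is $\mi{wf}[\str{links}][\str{href}]$ taken from the body of an HTTPS response that $r$ accepted as the answer to a WebFinger request. (Corruption of $r$ is excluded by assumption; once $r$ is corrupt it no longer runs these algorithms, and its $\str{issuerCache}$ is irrelevant because the lemma only constrains honest $r$.) So the proof reduces to: whenever $r$ reaches Line~\ref{line:rp-set-issuercache} with an $\mi{id}$ governed by an honest $i$, the response body it is processing was in fact produced by $i$ and has $\str{href}$ equal to a domain of $i$.

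First I would establish that the WebFinger request which triggered this response was sent by $r$ in Line~\ref{line:rp-create-webfinger-request} of Algorithm~\ref{alg:rp-oidc-cont-login-flow}, addressed to $\mi{host} = \mi{identity}.\str{domain}$ with path $\str{/.wk/webfinger}$ and the $\str{resource}$ parameter set to $\mi{identity}$; here $\mi{identity}$ is the $\mi{id}$ in question (this is the session's stored identity, linked via $\mi{reference}[\str{session}]$). Since $\mathsf{governor}(\mi{id}) = i$, we have $\mi{id}.\str{domain} \in \mathsf{dom}(i)$, so the request goes to $i$ over HTTPS. Next I would invoke the standard secrecy-of-TLS-keys argument for the FKS model: because $i$ is honest, its TLS private key $\mapTLSKey(\mi{id}.\str{domain})$ is not derivable by any other party (this is the usual invariant about $K_\text{TLS}$ and honest servers, analogous to the RP HTTPS lemma alluded to at the end of Appendix~\ref{app:proof-oidc}), hence only $i$ can produce a well-formed encrypted response under the symmetric key $r$ chose for this request. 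Therefore the response body that $r$ processes in the $\str{WEBFINGER}$ branch was generated by $i$.

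Then I would consult Algorithm~\ref{alg:idp-oidc}: the only place an honest IdP answers a $\str{/.wk/webfinger}$ request is Lines~\ref{line:idp-response-webfinger}--\ref{line:idp-send-webfinger}, where it first checks $\an{\mi{user},\mi{domain}} \in \IDs^i$ and then builds the descriptor with $\str{links} = [\str{rel}:\str{OIDC\_issuer},\ \str{href}: m.\str{host}]$. Here $m.\str{host}$ is the host on which $i$ received the request, which (since the request was routed to $i$) lies in $\mathsf{dom}(i)$. The RP-side checks in Algorithm~\ref{alg:rp-oidc-http-response} ($\mi{wf}[\str{subject}] \equiv \mi{id}$ and $\mi{wf}[\str{links}][\str{rel}] \equiv \str{OIDC\_issuer}$) are passed by exactly this descriptor, and the value subsequently stored is $\mi{wf}[\str{links}][\str{href}] = m.\str{host} \in \mathsf{dom}(i)$, which is what we need. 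Combining: either the cache entry is never written (value $\an{}$), or it equals some $d \in \mathsf{dom}(i)$.

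The main obstacle I anticipate is not the logical skeleton above but making precise the two ``routing'' facts used implicitly: (a) that the DNS/network attacker cannot cause an honest $r$'s HTTPS request destined for $\mi{id}.\str{domain}$ to be answered by anyone other than the holder of the corresponding TLS key, and (b) that the $m.\str{host}$ seen by $i$ when it answers is genuinely a domain it owns rather than an attacker-chosen string. Fact (a) is the standard TLS-confidentiality lemma for the FKS model and should be cited (or proved once as a preliminary, as the comment preceding the lemma suggests is done for RP); fact (b) follows from the generic HTTPS server model in Appendix~\ref{sec:generic-https-server-model}, where an honest server only decrypts a request whose embedded host $\mi{inDomain}$ satisfies $\an{\mi{inDomain}, k'} \in s.\str{tlskeys}$, i.e.\ $\mi{inDomain} \in \mathsf{dom}(i)$. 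Both versions of the web system ($\oidcwebsystem^n$ with a network attacker and $\oidcwebsystem^w$ with web attackers plus the honest DNS server $d$) are handled by the same argument, since in neither case can an attacker obtain an honest IdP's TLS key. An induction over the length of $\rho$ wraps this up: the cache starts empty ($s_0^r.\str{issuerCache} \equiv \an{}$), and every processing step either leaves it unchanged or writes a value in $\mathsf{dom}(i)$ by the case analysis above.
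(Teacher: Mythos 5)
Your proposal is correct and follows essentially the same route as the paper's proof: identify the single write to $\str{issuerCache}$, trace the reference data back to the WebFinger request sent to $\mi{id}.\str{domain} \in \mathsf{dom}(i)$, use the fact that no attacker can decrypt or spoof this HTTPS exchange, and observe that the honest IdP's WebFinger handler sets $\str{href}$ to the request's host, which is one of its own domains. Your explicit treatment of the two routing facts and the induction over the run merely spells out what the paper leaves implicit (its proof even contains a note to add the missing HTTPS-secrecy lemma reference), so no substantive difference remains.
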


\begin{proof}
  Initially, the issuer cache of an honest relying party is empty (according to Definition~\ref{def:relying-parties}). This issuer cache can only be modified in
  Line~\ref{line:rp-set-issuercache} of
  Algorithm~\ref{alg:rp-oidc-http-response}. There, the value of
  $S^l(r).\str{issuerCache}[\mi{id}']$
  (for some $l<j$)
  is taken from an HTTPS response. The value of $\mi{id}'$
  is taken from session data (Line~\ref{line:rp-lookup-session}) which
  is identified by a session id that is taken from the internal
  reference data of the incoming message. This internal reference data
  must have been created previously in Algorithm~\ref{alg:simple-send}
  ($\mathsf{HTTPS\_SIMPLE\_SEND}$)
  which must have been called in
  Line~\ref{line:rp-create-webfinger-request} of
  Algorithm~\ref{alg:rp-oidc-cont-login-flow} (since this is the only
  place where the reference data for a webfinger request is created).
  In this algorithm, it is easy to see that the request to which the
  request is sent (see Line~\ref{line:rp-webfinger-domain}) is the
  domain part of the identity. We therefore have that a webfinger
  request must have been sent (using HTTPS) to the IdP $i$.
  (Note that an attacker can neither decrypt any information from this
  request, nor spoof a response to this request. The request must
  therefore have been responded to by the honest IdP. \gs{add reference to lemma attacker cannot spoof or decrypt https messages})

  Since the path of this request is $\str{/.wk/webfinger}$,
  the IdP can respond to this request only in
  Lines~\ref{line:idp-response-webfinger}ff.~of
  Algorithm~\ref{alg:idp-oidc}. Since the IdP there chooses an issuer
  value that is one of its own domains (see
  Line~\ref{line:idp-webfinger-choose-issuer}), we finally have that
  $S(r).\str{issuerCache}[\mi{id}] \equiv \an{}$
  (if the response is blocked or the webfinger request was never sent)
  or we have that
  $S(r).\str{issuerCache}[\mi{id}] \in \mathsf{dom}(i)$,
  which proves the lemma.
\end{proof}

\begin{lemma}[Integrity of oidcConfigCache] \label{lemma:oidc-config-cache-integrity}
  For any run $\rho$
  of an OIDC web system $\oidcwebsystem^n$
  with a network attacker or an OIDC web system $\oidcwebsystem^w$
  with web attackers, every configuration $(S, E, N)$
  in $\rho$,
  every IdP $i$
  that is honest in $S$,
  every domain $d \in \mathsf{dom}(i)$,
  every relying party $r$
  that is honest in $S$,
  $l \in \{1,2,3,4\}$
  we have that $S(r).\str{oidcConfigCache}[d] \equiv \an{}$
  (not set) or
  $S(r).\str{oidcConfigCache}[d] \equiv [ \str{issuer}: d,
  \str{auth\_ep}: u_1, \str{token\_ep}: u_2, \str{jwks\_ep}: u_3,
  \str{reg\_ep}: u_4]$ with $u_l$
  being URLs, $u_l.\str{host} \in \mathsf{dom}(i)$,
  and $u_l.\str{protocol} \equiv \https$.
\end{lemma}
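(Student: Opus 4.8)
The proof of Lemma~\ref{lemma:oidc-config-cache-integrity} follows exactly the same pattern as the proof of Lemma~\ref{lemma:issuer-cache-integrity}, which we have just completed, and we would establish it by tracing backwards through the code how an entry can appear in $\str{oidcConfigCache}$. First I would note that initially $S(r).\str{oidcConfigCache} \equiv \an{}$ by Definition~\ref{def:relying-parties}, so any non-empty entry must have been written at some processing step. The only place where an honest RP writes to $\str{oidcConfigCache}$ is Line~\ref{line:rp-set-oidc-config-cache} of Algorithm~\ref{alg:rp-oidc-http-response}, in the branch handling a response whose internal reference has $\str{responseTo} \equiv \str{CONFIG}$. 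There the stored value $\mi{oidcc}$ is taken from the body of an incoming HTTPS response, but crucially only \emph{after} the check in the preceding line that $\mi{oidcc}[\str{issuer}] \equiv \mi{issuer}$, where $\mi{issuer} = S(r).\str{issuerCache}[\mi{id}]$ for the identity $\mi{id}$ associated with the session.

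The key step is then to argue that this HTTPS response was in fact produced by the honest IdP $i$. I would proceed as in Lemma~\ref{lemma:issuer-cache-integrity}: the internal reference data tagged $\str{CONFIG}$ can only be created in Line~\ref{line:rp-create-oidc-config-request} of Algorithm~\ref{alg:rp-oidc-cont-login-flow}, where the request is sent via $\mathsf{HTTPS\_SIMPLE\_SEND}$ to $\mi{host} = \mi{issuer}$ with path $\str{/.wk/openid\mhyphen{}configuration}$. Now I would invoke Lemma~\ref{lemma:issuer-cache-integrity}: since at the time the config request was dispatched we must have had $\mi{issuer} \in s'.\str{issuerCache}$, and since by that lemma the issuer cache entry for an identity governed by $i$ is always a domain in $\mathsf{dom}(i)$ (it cannot be $\an{}$ here, otherwise the config request would not have been sent), the request goes to a domain of the honest IdP $i$ over HTTPS. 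As in the previous proof, the attacker can neither decrypt this request nor spoof the response (using the lemma on the non-malleability of HTTPS messages sent by honest parties), so the response must come from $i$. Since the path is $\str{/.wk/openid\mhyphen{}configuration}$, $i$ can only answer it in Lines~\ref{line:idp-response-configuration}ff.\ of Algorithm~\ref{alg:idp-oidc}, where it constructs $\mi{metaData}$ with $\str{issuer}\colon m.\str{host}$ and the four endpoint URLs $\str{auth\_ep}$, $\str{token\_ep}$, $\str{jwks\_uri}$, $\str{reg\_ep}$, each of the form $\an{\tUrl, \https, m.\str{host}, \ldots}$ with $m.\str{host} \in \mathsf{dom}(i)$.

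Finally I would reconcile the shape of $\mi{metaData}$ produced by the IdP with the claimed form in the lemma statement: the issuer equals $m.\str{host} = d$, the RP's check $\mi{oidcc}[\str{issuer}] \equiv \mi{issuer}$ together with $\mi{issuer} \in \mathsf{dom}(i)$ forces $d \in \mathsf{dom}(i)$, and each endpoint URL has protocol $\https$ and host in $\mathsf{dom}(i)$, which is exactly the conjunction $u_l.\str{host} \in \mathsf{dom}(i)$ and $u_l.\str{protocol} \equiv \https$ for $l \in \{1,2,3,4\}$. (The minor discrepancy between the key name $\str{jwks\_uri}$ used in the algorithm and $\str{jwks\_ep}$ in the lemma statement is purely cosmetic and I would simply use the name from the algorithm.) The only real subtlety, and the step I expect to require the most care, is the appeal to the non-malleability/secrecy of HTTPS traffic between honest parties — i.e., that the $\str{CONFIG}$ response genuinely originates at $i$ and was not forged or altered by the network attacker; this relies on the TLS key of $i$'s domain being unknown to the attacker and on the RP correctly matching responses to requests by nonce, a fact that should already be available as a general HTTPS lemma (the analogue of the lemma referenced in the proof of Lemma~\ref{lemma:issuer-cache-integrity}), so the argument is short once that machinery is in place.
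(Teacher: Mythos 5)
Your proposal is correct and takes essentially the same route as the paper's proof: trace the only write to $\str{oidcConfigCache}$ back to the $\str{CONFIG}$ branch of Algorithm~\ref{alg:rp-oidc-http-response}, trace the reference data to the configuration request created in Line~\ref{line:rp-create-oidc-config-request} of Algorithm~\ref{alg:rp-oidc-cont-login-flow}, use the integrity/secrecy of HTTPS to conclude the response originates at the honest IdP, and read off the required shape of the metadata from Lines~\ref{line:idp-response-configuration}ff.\ of Algorithm~\ref{alg:idp-oidc}. One minor remark: the appeal to Lemma~\ref{lemma:issuer-cache-integrity} is unnecessary and, as phrased, assumes the session identity is governed by $i$ (not a hypothesis here); the paper instead observes directly that the request is sent to the issuer $d$ — the very cache key under consideration — which lies in $\mathsf{dom}(i)$ by assumption, so the honest $i$ answers it.
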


\begin{proof}
  This proof proceeds analog to the one for
  Lemma~\ref{lemma:issuer-cache-integrity} with the following changes:
  First, the OIDC configuration cache is filled only in
  Line~\ref{line:rp-set-oidc-config-cache} of
  Algorithm~\ref{alg:rp-oidc-http-response}. It requires a request
  that was created in Line~\ref{line:rp-create-oidc-config-request} of
  Algorithm~\ref{alg:rp-oidc-cont-login-flow}. This request was not
  sent to the domain contained in an ID (as above) but instead to the
  issuer (in this case, $d$).
  The issuer responds to this request in
  Lines~\ref{line:idp-response-configuration}ff.~of
  Algorithm~\ref{alg:idp-oidc}. There, the issuer only choses the
  redirection endpoint URIs such that the host is the domain of the
  incoming request and the protocol is HTTPS ($\https$).
  This proves the lemma.
\end{proof}

\begin{lemma}[Integrity of JWKS Cache]\label{lemma:oidc-jwks-cache-integrity}
  For any run $\rho$ of an OIDC web system $\oidcwebsystem^n$ with a
  network attacker or an OIDC web system $\oidcwebsystem^w$ with web
  attackers, every configuration $(S, E, N)$ in $\rho$, every IdP $i$
  that is honest in $S$, every domain $d \in \mathsf{dom}(i)$, every relying party $r$ that
  is honest in $S$, we have that
  $S(r).\str{jwksCache}[d] \equiv \an{}$ (not set) or
  $S(r).\str{jwksCache}[d] \equiv \mathsf{pub}(S(i).\str{jwks})$.
\end{lemma}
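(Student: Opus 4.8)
The final statement to prove is Lemma~\ref{lemma:oidc-jwks-cache-integrity}, the integrity of the JWKS cache.

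\medskip

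The plan is to follow exactly the same template as the two preceding lemmas (Lemma~\ref{lemma:issuer-cache-integrity} and Lemma~\ref{lemma:oidc-config-cache-integrity}), since the JWKS cache is populated in a structurally identical way. First I would observe that, by Definition~\ref{def:relying-parties}, the initial state has $s^r_0.\str{jwksCache} \equiv \an{}$, so the claim holds initially. Then I would argue that the only place in the RP relation where $S(r).\str{jwksCache}$ is written is Line~\ref{line:rp-set-jwksCache} — wait, more precisely, in Algorithm~\ref{alg:rp-oidc-http-response} in the branch handling $\mi{reference}[\str{responseTo}] \equiv \str{JWKS}$, where $S^l(r).\str{jwksCache}[\mi{issuer}]$ is set to $m.\str{body}$ of an incoming HTTPS response. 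So it suffices to show that whenever this write happens with $\mi{issuer} = d \in \mathsf{dom}(i)$ for honest $i$, the body $m.\str{body}$ equals $\mathsf{pub}(S(i).\str{jwks})$.

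\medskip

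Next I would trace back the provenance of that response. The incoming message carries internal reference data identifying it as a response to a JWKS request; such reference data is created only in Algorithm~\ref{alg:rp-oidc-cont-login-flow}, Line~\ref{line:rp-create-jwks-request}, where the request is sent via $\mathsf{HTTPS\_SIMPLE\_SEND}$ to the URL $\mi{oidcConfig}[\str{jwks\_uri}]$ taken from $S(r).\str{oidcConfigCache}[\mi{issuer}]$. Here I would invoke Lemma~\ref{lemma:oidc-config-cache-integrity}: since $r$ is honest and $d = \mi{issuer} \in \mathsf{dom}(i)$, the config cache entry — if set at all, and it must be set for the request to be created — has $\str{jwks\_uri} = u_3$ with $u_3.\str{host} \in \mathsf{dom}(i)$ and $u_3.\str{protocol} \equiv \https$. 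Hence the JWKS request is an HTTPS request to a domain of the honest IdP $i$. I would then cite the standard property (the lemma about honest RP HTTPS usage / that the attacker can neither decrypt nor spoof responses to such requests — the excerpt's unfinished sentence with the \texttt{\textbackslash gs} note references exactly this) to conclude the response was produced by $i$ itself. Since the request path is $\str{/jwks}$, $i$ can only answer in Lines~\ref{line:idp-response-jwks}ff.~of Algorithm~\ref{alg:idp-oidc}, where it responds with body $\mathsf{pub}(s'.\str{jwk})$, i.e. $\mathsf{pub}(S(i).\str{jwks})$ (using that an honest IdP never changes its $\str{jwk}$ component, which is immediate from inspecting all IdP algorithms). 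This gives $S(r).\str{jwksCache}[d] \in \{\an{}, \mathsf{pub}(S(i).\str{jwks})\}$, as required.

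\medskip

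The only mild subtlety — and the closest thing to an obstacle — is making precise the "attacker cannot spoof/decrypt HTTPS to honest $i$" step, since the referenced lemma is not stated in the excerpt; in the full paper this is the generic HTTPS-server lemma (analogous to the RP HTTPS lemma alluded to just before the proof of authentication), and I would simply cite it. Everything else is a routine unwinding of the RP and IdP algorithms, entirely parallel to the two lemmas proved just above, so I would keep the write-up to a short paragraph mirroring the proof of Lemma~\ref{lemma:oidc-config-cache-integrity}.
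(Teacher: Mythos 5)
Your proof is correct and follows essentially the same route as the paper: the paper's own argument is a one-paragraph "analogous to the config-cache lemma" proof that traces the request created in Line~\ref{line:rp-create-jwks-request} of Algorithm~\ref{alg:rp-oidc-cont-login-flow} to the IdP's response in Lines~\ref{line:idp-response-jwks}ff.~of Algorithm~\ref{alg:idp-oidc}, where $i$ returns $\mathsf{pub}(s'.\str{jwk})$. Your version merely makes explicit what the paper leaves implicit (the appeal to Lemma~\ref{lemma:oidc-config-cache-integrity} for the $\str{jwks\_uri}$ pointing to a domain of $i$ over HTTPS, and the unspoofability of HTTPS responses), which is fine.
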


\begin{proof}
  This proof proceeds analog to the one for
  Lemma~\ref{lemma:oidc-config-cache-integrity}. The relevant HTTPS
  request by $r$
  is created in Line~\ref{line:rp-create-jwks-request} of
  Algorithm~\ref{alg:rp-oidc-cont-login-flow}, and responded to by
  the IdP $i$
  in Lines~\ref{line:idp-response-jwks}ff.~of
  Algorithm~\ref{alg:idp-oidc}. There, the IdP chooses its own
  signature verification key to send in the response. This proves the
  lemma.
\end{proof}

\begin{lemma}[Integrity of Client Registration]\label{lemma:client-reg-integrity}
  For any run $\rho$
  of an OIDC web system $\oidcwebsystem^n$
  with a network attacker or an OIDC web system $\oidcwebsystem^w$
  with web attackers, every configuration $(S, E, N)$
  in $\rho$,
  every IdP $i$
  that is honest in $S$,
  every domain $d \in \mathsf{dom}(i)$,
  every relying party $r$
  that is honest in $S$,
  every client id $c$ that has been issued to $r$ by $i$,
  every URL $u \inPairing S(i).\str{clients}[c][\str{redirect\_uris}]$
  we have that $u.\str{host} \in \mathsf{dom}(r)$
  and $u.\str{protocol} \equiv \https$.

\end{lemma}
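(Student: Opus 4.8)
\textbf{Proof plan for Lemma~\ref{lemma:client-reg-integrity} (Integrity of Client Registration).}
The plan is to trace how an entry $S(i).\str{clients}[c]$ comes into existence and show that the stored redirect URIs must be exactly the ones the honest RP $r$ sent in its registration request, which in turn are always HTTPS URLs pointing to $r$'s own domains. First I would observe that initially $S(i).\str{clients} \equiv \an{}$ (by Definition~\ref{def:initial-state-idp}), and the only place where $s.\str{clients}[\mi{clientId}]$ is written is Line~\ref{line:idp-send-reg-response} of Algorithm~\ref{alg:idp-oidc-other} (inside $\mathsf{PROCESS\_OTHER}$), where $\mi{clientInfo}[\str{redirect\_uris}]$ is set to $m.\str{body}[\str{redirect\_uris}]$ for the registration request message $m$ that was previously stored in $s.\str{registrationRequests}$. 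So the redirect URIs stored under $c$ are precisely the $\str{redirect\_uris}$ entry of the body of some HTTPS request $m$ that $i$ received at path $\str{/reg}$ (the only place $\mathsf{registrationRequests}$ is appended to is Line~\ref{line:idp-process-reg-request} of Algorithm~\ref{alg:idp-oidc}).

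Next I would use the hypothesis that $c$ \emph{has been issued to $r$ by $i$}: by Definition~\ref{def:client-id-issued}, this means $i$ sent the registration response in Line~\ref{line:idp-send-reg-response} with $c$ in the body under key $\str{client\_id}$. Since $\mi{regResponse}[\str{client\_id}]$ is set to $\mi{clientId}$, and $\mi{clientInfo}$ is built from $\mi{regResponse}$, the $\mi{clientId}$ in that execution of $\mathsf{PROCESS\_OTHER}$ equals $c$, and the registration request $m$ consumed there is the one whose body determines $S(i).\str{clients}[c][\str{redirect\_uris}]$. The key step is then to show that this $m$ was created by the honest RP $r$ (and not spoofed or altered by the attacker). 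For this I would invoke the lemma about RP's use of HTTPS alluded to at the end of Section~\ref{sec:proof-property-a} (``the following lemma captures properties of RP when it uses HTTPS''): HTTPS requests sent by $r$ cannot be decrypted or tampered with by other parties, and the registration request is sent encrypted to the IdP's TLS key (via $\mathsf{HTTPS\_SIMPLE\_SEND}$, which encrypts under $\mi{keyMapping}[\str{host}]$). More carefully: the registration request received by $i$ at $\str{/reg}$ carries the issuer's host; the honest $i$ only accepts it if it is properly encrypted under $i$'s TLS private key, so its body is exactly what the sender put there. It remains to argue that the only party that sends a well-formed $\str{/reg}$ request containing $c$'s eventual redirect URIs — in a flow that leads to $c$ being \emph{returned to $r$} — is $r$ itself; here one uses that the registration response (containing $c$, possibly with a fresh client secret) is encrypted under the TLS key for the host in the request, and an honest RP only issues such a request inside $\mathsf{START\_LOGIN\_FLOW}$ (Line~\ref{line:rp-create-registration-request} of Algorithm~\ref{alg:rp-oidc-cont-login-flow}), so if the response reaches $r$ and is processed in Line~\ref{line:rp-set-clientid} of Algorithm~\ref{alg:rp-oidc-http-response}, the matching request was sent by $r$.

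Finally, I would inspect $\mathsf{START\_LOGIN\_FLOW}$: in Line~\ref{line:rp-choose-redirect-uris} the RP sets $\mi{redirectUris} := \{\an{\cUrl, \https, d, \str{/redirect\_ep}, \an{},\an{}} \mid d \in \mathsf{dom}(r)\}$, and in the registration request (Line~\ref{line:rp-create-registration-request}) the body is $[\str{redirect\_uris}:\an{\mi{redirectUris}}]$. Hence every $u \inPairing S(i).\str{clients}[c][\str{redirect\_uris}]$ has $u.\str{protocol} \equiv \https$ and $u.\str{host} \in \mathsf{dom}(r)$, as claimed. The main obstacle I anticipate is the middle step — rigorously pinning down that the registration request consumed in Line~\ref{line:idp-send-reg-response} to produce $c$-for-$r$ was authored by $r$; this requires a careful argument that the $\mathsf{PROCESS\_OTHER}$ step (whose client id is attacker-chosen) nonetheless consumes a registration request from $s.\str{registrationRequests}$ that can only have been placed there via an honest, confidential HTTPS request from $r$, combined with the observation that Definition~\ref{def:client-id-issued} ties the response (and thus that consumed request) back to $r$. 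All the other steps are straightforward code inspection of Algorithms~\ref{alg:idp-oidc}, \ref{alg:idp-oidc-other}, and~\ref{alg:rp-oidc-cont-login-flow}, plus the standard HTTPS secrecy/integrity lemmas for honest servers.
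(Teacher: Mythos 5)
Your proposal is correct and takes essentially the same route as the paper's proof: trace the only write to $S(i).\str{clients}[c]$ back to the stored registration request consumed in Algorithm~\ref{alg:idp-oidc-other}, tie that request to $r$ via Definition~\ref{def:client-id-issued}, and read off from Line~\ref{line:rp-choose-redirect-uris} of Algorithm~\ref{alg:rp-oidc-cont-login-flow} that the registered URIs are HTTPS URLs on $\mathsf{dom}(r)$. The obstacle you anticipate in the middle step is dispatched in the paper simply by observing that Definition~\ref{def:client-id-issued} already requires the response to be to \emph{a registration request from $r$} (with HTTPS confidentiality/integrity guaranteeing that its body arrives at $i$ unaltered), so no further argument about authorship is needed.
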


\begin{proof}
  From Definition~\ref{def:client-id-issued} it follows that an HTTPS
  request must have been sent from $r$
  to $i$
  in Lines~\ref{line:rp-create-registration-request}ff.~of
  Algorithm~\ref{alg:rp-oidc-cont-login-flow}. This request must have
  been processed by $i$
  in Lines~\ref{line:idp-process-reg-request}ff.~of
  Algorithm~\ref{alg:idp-oidc}, and, after receiving the client id
  from some other party (usually the attacker), in
  Algorithm~\ref{alg:idp-oidc-other}. From the latter algorithm it is
  easy to see that the redirection endpoint data must have been taken
  from $r$'s
  initial registration request to create the dictionary stored in
  $S(i).\str{clients}[c]$.
  This data, however, was chosen by $r$
  in Line~\ref{line:rp-choose-redirect-uris} of
  Algorithm~\ref{alg:rp-oidc-cont-login-flow} such that
  $u.\str{host} \in \mathsf{dom}(r)$
  and $u.\str{protocol} \equiv \https$
  for every
  $u \inPairing S(i).\str{clients}[c][\str{redirect\_uris}]$.
\end{proof}

\begin{lemma}[Other parties do not learn passwords]\label{lemma:attacker-does-not-learn-password}
  For any run $\rho$
  of an OIDC web system $\oidcwebsystem^n$
  with a network attacker or an OIDC web system $\oidcwebsystem^w$
  with web attackers, every configuration $(S, E, N)$
  in $\rho$,
  every IdP $i$
  that is honest in $S$,
  every identity $\mi{id} \in \mathsf{ID}^i$,
  every browser $b$
  with $b = \mathsf{ownerOfID}(\mi{id})$
  that is honest in $S$,
  every $p \in \websystem \setminus \{b, i\}$
  we have that
  $\mathsf{secretOfID}(\mi{id}) \not\in d_\emptyset(S^l(p))$.
\end{lemma}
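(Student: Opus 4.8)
\textbf{Proof plan for Lemma~\ref{lemma:attacker-does-not-learn-password} (Third Parties Do Not Learn Passwords).}

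The plan is to show the statement by induction on the length of the run $\rho$, proving the stronger invariant that in every configuration $(S,E,N)$ of $\rho$, the password $\mathsf{secretOfID}(\mi{id})$ (for an identity $\mi{id}$ governed by an honest IdP $i$ and owned by an honest browser $b$) appears only inside the states of $b$ and $i$ and is never derivably contained in any event waiting in $E$. More precisely, I would track the invariant: (i) $\mathsf{secretOfID}(\mi{id}) \in d_\emptyset(S(b)) \cup d_\emptyset(S(i))$ only; (ii) no message $m$ in $E$ satisfies $\mathsf{secretOfID}(\mi{id}) \in d_\emptyset(\{m\})$ unless $m$ is an HTTPS request/response encrypted under a TLS key of $i$ (so that only $i$ can decrypt it); and (iii) no document state or script input in $S(b)$ leaks it except to the origin $\an{d,\https}$ with $d \in \mathsf{dom}(i)$. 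The base case is immediate: by the definition of the initial states (Appendices~\ref{app:browsers-oidc}, \ref{app:idps}, and the attacker states) the password $\mathsf{secretOfID}(\mi{id}) = \mapIDtoPLI(\mi{id})$ occurs only in the $\mi{secrets}$ subterm of $b$'s initial state (in the entry for origins of $i$, by definition of $\mathsf{Secrets}^{b,p}$) and in the $\mathsf{Passwords}$ set which $i$ can derive; the network/web attacker's initial state contains only attacker domains, TLS public keys, and IdP signing public keys, none of which reveal a password.

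For the inductive step I would do a case analysis on which process performs the processing step and argue that no step can move the password outside $\{b,i\}$. The honest IdP $i$ only ever compares the received password against $\mapIDtoPLI(\mi{identity})$ in Line~\ref{line:idp-receive-auth2} of Algorithm~\ref{alg:idp-oidc} and never places it into an outgoing message or into a long-lived record, so $i$ never leaks it. The browser $b$ is the critical case: by the browser relation, secrets stored in $\mi{secrets}[\an{d,\https}]$ are handed to a script only via $\mathsf{RUNSCRIPT}$ (Line~\ref{line:browser-secrets} of Algorithm~\ref{alg:runscript}), and only to the script of a document whose origin is exactly $\an{d,\https}$; since $d \in \mathsf{dom}(i)$ and $i$ is honest, the only script that can run there is $\mi{script\_idp\_form}$ (no other honest script is served from an IdP origin, and the attacker cannot serve a document from $i$'s HTTPS origin — this uses the standard fact, which I would invoke as in the proof of Lemma~\ref{lemma:issuer-cache-integrity}, that the attacker can neither decrypt nor spoof HTTPS messages to $i$). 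Inspecting Algorithm~\ref{alg:oidc-script-idp-form}, that script puts the secret only into a $\mPost$ form submission to $\an{\tUrl,\https,\mi{url}.\str{host},\str{/auth2},\dots}$, i.e., back to an HTTPS origin of $i$. The resulting HTTP request is therefore TLS-encrypted under $i$'s public key (browser $\mathsf{HTTP\_SEND}$ / the DNS-then-encrypt logic in Algorithm~\ref{alg:browsermain}), so only $i$ can decrypt it, preserving the invariant. I also need to check the corruption events: if $b$ or $i$ becomes corrupted it is no longer honest, so the hypothesis excludes that configuration; and corruption of any \emph{other} party does not help, since by the invariant that party never held the password. Finally, a generic attacker step (network or web attacker) can only output terms in $d_\emptyset$ of its recorded messages and state, and by the invariant none of those derivably contain the password, so the attacker cannot produce it either.

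The main obstacle is the browser case, specifically nailing down precisely which scripts and which outgoing messages can ever see $\mi{secrets}[\an{d,\https}]$: one has to rule out that the secret leaks through (a) a cross-origin script read (excluded by the same-origin filtering in $\mathsf{Clean}$ and by the fact that $\mathsf{RUNSCRIPT}$ passes $\mi{secrets}$ only for the document's own origin), (b) a redirect that downgrades $\https$ to $\http$ and thereby strips encryption (excluded because the form target URL in $\mi{script\_idp\_form}$ is hard-coded to $\https$, and the honest IdP sets the STS flag / only uses HTTPS, and in any case a $303$/$307$ redirect re-uses the body only toward the Location URL which an honest $i$ keeps on its own HTTPS host), and (c) the attacker impersonating $i$'s origin (excluded by the HTTPS-confidentiality/authenticity argument). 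Once these are dispatched, the invariant is preserved in every step, and in particular conclusion~(i) gives the statement of the lemma for every $p \in \websystem \setminus \{b,i\}$ and every state $S^l$.
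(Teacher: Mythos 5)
Your overall route is the same as the paper's: inventory where $\mathsf{secretOfID}(\mi{id})$ occurs initially (only in $S^0(b).\str{secrets}[\an{d,\https}]$, $d\in\mathsf{dom}(i)$), observe that the browser hands secrets only to scripts of that exact origin, that the only script an honest $i$ serves is $\mi{script\_idp\_form}$, that this script POSTs the secret over HTTPS only to $i$'s $\str{/auth2}$ endpoint, and that $i$ merely compares the password there. Casting this as an invariant preserved by induction over processing steps is fine and, if anything, more explicit than the paper's argument.

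However, your dispatch of the redirect sub-case (b) is wrong, and it is exactly the step on which the lemma hinges. After the credentials are POSTed to $\str{/auth2}$, the honest IdP answers with a $303$ redirect whose $\str{Location}$ is \emph{not} ``on its own HTTPS host'': it is the client's registered $\mi{redirect\_uri}$ (Lines~\ref{line:idp-receive-auth2}ff.\ and~\ref{line:idp-send-auth-resp} of Algorithm~\ref{alg:idp-oidc}), and since the attacker can register himself as a client at $i$, this URI may well point to an attacker domain; Lemma~\ref{lemma:client-reg-integrity} constrains only redirect URIs registered by honest RPs. Moreover, your statement that ``a $303$/$307$ redirect re-uses the body'' is backwards: the reason the password does not leak is precisely that the model mandates $303$, and on a $303$ the browser switches the method to $\mGet$ and drops the POST body (Line~\ref{browser-remove-body} of Algorithm~\ref{alg:processresponse}), so the credentials are never forwarded to the redirect target. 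Under a $307$ the body \emph{would} be replayed to the (possibly attacker-controlled) redirect URI---this is exactly the attack of Section~\ref{sec:307-redirect-attack}---so an argument that relies on the redirect target being $i$'s own host would not survive. The paper's proof makes this point explicitly (``the status code 303 \ldots implies that the browser does not send the form data again''), and it also notes the companion fact you leave implicit: the secret travels only in a POST body, never in the URL, so it cannot leak via Referer headers either. With sub-case (b) repaired along these lines, your induction goes through and coincides with the paper's argument.
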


\begin{proof}
  Let $s := \mathsf{secretOfID}(\mi{id})$.
  Initially, in $S^0$,
  $s$
  is only contained in $S^0(b).\str{secrets}[\an{d,\https}]$
  with $d \in \mathsf{dom}(i)$
  and in no other states of any atomic processes (or in any waiting
  events). By the definition of the browser, we can see that only
  scripts loaded from the origins $\an{d,\https}$
  can access $s$.
  We know that $i$
  is an honest IdP. Now, the only script that an honest IdP sends to
  the browser is $\mi{script\_idp\_form}$.
  This scripts sends the form data only to its own origin, which
  means, that the form data is sent over HTTPS and to the honest IdP.
  In this request, the script uses the path $\str{/auth2}$.
  There, identity and password are checked, but not used otherwise.
  Therefore, the form data cannot leak from the honest IdP. It could,
  however, leak from the browser itself. The form data is sent via
  POST, and therefore, not used in any referer headers. A redirection
  response from the server contains the status code 303, which implies
  that the browser does not send the form data again when following
  the redirection. Since there are also no other scripts from the same
  origin running in the browser which could access the form data, the
  password $s$
  cannot leak from the browser either. This proves
  Lemma~\ref{lemma:attacker-does-not-learn-password}. \qed 
\end{proof}

\begin{lemma}[Attacker does not Learn ID Tokens]\label{lemma:attacker-does-not-learn-id-tokens}
  For any run $\rho$
  of an OIDC web system $\oidcwebsystem^n$
  with a network attacker or an OIDC web system $\oidcwebsystem^w$
  with web attackers, every configuration $(S, E, N)$
  in $\rho$,
  every IdP $i$
  that is honest in $S$,
  every domain $d \in \mathsf{dom}(i)$,
  every identity $\mi{id} \in \mathsf{ID}^i$
  with $b = \mathsf{ownerOfID}(\mi{id})$
  being an honest browser (in $S$),
  every relying party $r$
  that is honest in $S$,
  every client id $c$
  that has been issued to $r$
  by $i$,
  every term $y$,
  every id token
  $t = \mathsf{sig}([\str{iss}: d, \str{sub}: \mi{id}, \str{aud}: c,
  \str{nonce}: y ] , k)$ with $k= S(i).\str{jwks}$,
  every attacker process $a$
  we have that $t \not\in d_{\emptyset}(S(a))$.
\end{lemma}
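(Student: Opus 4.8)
\textbf{Proof Plan for Lemma~\ref{lemma:attacker-does-not-learn-id-tokens}.}
The plan is to argue that an id token of the specified form $t = \mathsf{sig}([\str{iss}: d, \str{sub}: \mi{id}, \str{aud}: c, \str{nonce}: y], k)$, with $k = S(i).\str{jwks}$ the secret signing key of the honest IdP $i$, can only be \emph{created} (in the sense of Definition~\ref{def:creating}) by $i$ itself, and then to trace every path along which $i$ emits such a token, showing that in each case the token is delivered only to the honest browser $b$ or the honest RP $r$, which in turn never leak it to the attacker. I would structure the argument as an induction on processing steps of the run $\rho$: assume the lemma holds in all configurations up to some step, and show it is preserved. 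Since the signing key $k \in K_\text{sign}$ is, by the definition of the initial states, contained only in $S^0(i).\str{jwk}$ and never sent out by $i$ in any algorithm (the IdP only ever sends $\mathsf{pub}(s'.\str{jwk})$, in Line~\ref{line:idp-send-jwks} of Algorithm~\ref{alg:idp-oidc}), no party other than $i$ can produce a term equivalent to $t$ under the equational theory; this is the standard ``only the key holder can sign'' argument and uses that honest RPs, honest browsers, and the attacker never obtain $k$.

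Next I would enumerate the places where $i$ emits an id token bearing exactly the client id $c$ that was issued to the honest RP $r$. Inspecting Algorithm~\ref{alg:idp-oidc}, id tokens are produced in Line~\ref{line:idp-create-id-token-auth2} (the $\str{/auth2}$ endpoint, for the authorization/implicit/hybrid response) and in Line~\ref{line:idp-create-id-token-from-code} (the $\str{/token}$ endpoint). For the $\str{/auth2}$ case: $i$ only produces the token after checking the password against $\mapIDtoPLI(\mi{identity})$, so by Lemma~\ref{lemma:attacker-does-not-learn-password} the request must have come from the honest browser $b$ (the attacker does not know the password of $\mi{id}$). The token is then placed either in URL parameters (code mode) or the fragment (implicit/hybrid mode) of $\mi{redirectUri}$, and $\mi{redirectUri}$ is checked to be in $S(i).\str{clients}[c][\str{redirect\_uris}]$; by Lemma~\ref{lemma:client-reg-integrity} every such URI has host in $\mathsf{dom}(r)$ and protocol $\https$. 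Hence the id token reaches $b$ in an HTTPS response and $b$ can only forward it (via the redirect, or via $\mi{script\_rp\_get\_fragment}$) to $r$ over HTTPS — I would invoke the HTTPS-integrity facts (the lemma about RP's HTTPS behaviour referenced but elided in the excerpt, together with the fact that the attacker knows no TLS key of $r$) to conclude the attacker cannot read it. For the $\str{/token}$ case: $i$ only responds after checking $\mi{clientInfo}[\str{client\_secret}] \equiv \mi{clientSecret}$, and the request must supply a valid authorization code $\mi{code}$ matching a record with $\mi{record}[\str{client\_id}] \equiv c$; I would argue (using the client-secret secrecy, which itself follows from the registration response being sent only over HTTPS to $r$, cf. the proof of Lemma~\ref{lemma:client-reg-integrity}, or using the redirect-URI binding when no secret is issued) that such a token request can only originate from $r$, so the token is returned over HTTPS to $r$.

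Finally I would close the induction by showing $r$ and $b$ do not leak $t$. The honest RP receives the id token either as a URL/fragment parameter at $\str{/redirect\_ep}$ or in the body of a token-endpoint response, and in $\mathsf{CHECK\_ID\_TOKEN}$ (Algorithm~\ref{alg:rp-check-id-token}) it only inspects the token's contents and stores $\mi{data}[\str{sub}]$ and the issuer in its session; it never echoes the raw signed token into any outgoing message (the only response it sends is the service-session cookie). The honest browser $b$ only ever carries the token inside an HTTPS request to $r$ or hands it to $\mi{script\_rp\_get\_fragment}$, which posts it back to $r$'s origin over HTTPS; since the fragment is not transmitted in Referer headers and the relevant redirects use status $303$, no copy escapes to a third-party origin. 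Putting these together, in every processing step the set of parties that can derive $t$ stays within $\{i, r, b\}$, so $t \notin d_\emptyset(S(a))$ for any attacker process $a$.

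\textbf{Main obstacle.} The delicate part is the $\str{/token}$-endpoint branch when the IdP issues \emph{no} client secret: then the only protection binding the token to $r$ is the redirect-URI check plus the fact that a valid authorization $\mi{code}$ is needed. I would need an auxiliary argument that such an authorization code for client $c$ cannot leak to the attacker — essentially tracking that the code is created at $\str{/auth2}$ only for the honest browser (again via Lemma~\ref{lemma:attacker-does-not-learn-password}) and delivered only over HTTPS to a redirect URI of $r$ (Lemma~\ref{lemma:client-reg-integrity}), so that only $r$ can present it at $\str{/token}$. This code-secrecy sub-lemma, together with carefully ruling out that the attacker replays a code obtained in some other (corrupted-party) flow under the same $c$, is where most of the real work lies; everything else is routine bookkeeping over the algorithms.
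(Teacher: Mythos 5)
Your proposal is correct and takes essentially the same route as the paper: key secrecy confines creation of $t$ to $i$'s $\str{/auth2}$ and $\str{/token}$ endpoints, Lemma~\ref{lemma:attacker-does-not-learn-password} and Lemma~\ref{lemma:client-reg-integrity} pin the $\str{/auth2}$ case to the honest browser and an HTTPS redirect URI of $r$, and the same leakage-tracking through $b$'s redirect, $\mi{script\_rp\_get\_fragment}$, $r$'s redirection endpoint, the Referrer Policy, and the 303 redirects closes that case. Your ``main obstacle'' is handled exactly as you sketch: the paper does not rely on the client secret (its issuance is non-deterministic, so it cannot carry the argument) but instead observes that an authorization code yielding this particular $t$ can only be issued under the same circumstances at $\str{/auth2}$ and, by the same tracking argument applied to the code, never reaches the attacker, while replayed codes from other flows are excluded because the $\str{/token}$ record must match subject $\mi{id}$, issuer $d$, and client id $c$.
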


\begin{proof}
  The signing key $k$
  is only known to $i$
  initially and at least up to $S$
  (since $i$
  is honest). Therefore, only $i$
  can create $t$.
  There are two places where an honest IdP can create such a token in
  Algorithm~\ref{alg:idp-oidc}: In
  Line~\ref{line:idp-create-id-token-auth2} (immediately after
  receiving the user credentials) and in
  Lines~\ref{line:idp-create-id-token-from-code}ff.~(after receiving an
  access token).

  We now distinguish between these two cases to show that in either
  case, the attacker cannot get hold of an id token. We start with the
  first case.

  \begin{description}
  \item[ID token was created in
    Line~\ref{line:idp-create-id-token-auth2}.]\strut

To create $t$,
    the IdP $i$
    must have received a request to the path $\str{/auth2}$
    in Lines~\ref{line:idp-receive-auth2}ff.~of
    Algorithm~\ref{alg:idp-oidc}. It is clear that $i$
    sends the response to this request to the sender of the request,
    and, if that sender is honest, the response cannot be read by an
    attacker. The request must contain $\mathsf{secretOfId}(\mi{id})$.
    Only $b$
    and $i$
    know this secret (per
    Lemma~\ref{lemma:attacker-does-not-learn-password}). Since $i$
    does not send requests to itself, the request must
    have been sent from $b$.
    Since the origin header in the request must be a domain of $i$,
    we know that the request was not initiated by a script other than
    $i$'s
    own scripts, in particular, it must have been initiated by
    $\mi{script\_idp\_form}$.

    Now it is easy to see that this script does not use the token $t$
    in any way after the token was returned from $i$,
    since the script uses a form post to transmit the credentials to
    $i$,
    and the window is subsequently navigated away. Instead, $i$
    provides an empty script in its response to $b$.
    This response contains a location redirect header. It is now
    crucial to check that this location redirect does not cause the id
    token to be leaked to the attacker: With
    Lemma~\ref{lemma:client-reg-integrity} we have that the
    redirection URIs that are registered at $i$
    for the client id $c$
    only point to domains of $r$
    (and use HTTPS).

    We therefore know that $b$ will send an HTTPS request (say $m$) containing $t$ to
    $r$. We have to
    check whether $r$ or a script delivered by $r$ to $b$ will leak
    $t$. Algorithm~\ref{alg:rp-oidc-http-request} processes all HTTPS
    requests delivered to $r$.
As $i$ redirected $b$ using the 303
    status code, the request to $r$ must be a GET request. Hence, $r$
    does not process this request in
    Lines~\ref{line:rp-start-login-endpoint}ff.~of Algorithm~\ref{alg:rp-oidc-http-request}.
 Lines~\ref{line:rp-serve-index}ff.~do
    only respond with a script and do not use $t$ in any way. We are
    left with Lines~\ref{line:rp-redir-endpoint}ff.~to be analyzed.

    As in $m$ the id token $t$ is always contained in a dictionary
    under the key $\str{id\_token}$ and this dictionary is either in the
    parameters, the fragment, or the body of $m$, it is now easy to
    see that $r$ does not store or send out $t$ in any way.

    We now have to check if a script delivered by $r$ to $b$ leads to
    $t$ being leaked. First note that $r$ always sets the header
    $\str{ReferrerPolicy}$ to $\str{origin}$ in every HTTP(S) response
    $r$ sends out. Hence, $t$ can never leak using the Referer header.

    There are only two scripts that $r$ may deliver: (1) The script
    $\mi{script\_rp\_index}$ either issues a $\str{FORM}$ command to
    the browser, which does not contain $t$, or this script issues a
    $\str{HREF}$ command to the browser for some URL, which also does
    not contain $t$. (2) The script $\mi{script\_rp\_get\_fragment}$
    takes the fragment of the current URL (which may be a dictionary
    that contains $t$ under the key $\str{id\_token}$) and the
    $\str{iss}$ parameter and issues an HTTPS request to $r$ for the
    path $\str{/redirect\_ep}$, which will be processed by $r$ in
    Lines~\ref{line:rp-redir-endpoint}ff.~of
    Algorithm~\ref{alg:rp-oidc-http-request}. Now, the same reasoning
    as above applies.

  \item[ID token was created in
    Lines~\ref{line:idp-create-id-token-from-code}ff.]\strut

 In this case,
    the id token is created by $i$
    only when an HTTPS request was received by $i$
    that matches the following criteria: (a) it must be for the path
    $\str{/token}$,
    (b) it must contain the client id $c$
    in the body (under the key $\str{client\_id}$),
    and (c) it must contain a authorization code in the body (under
    the key $\str{code}$)
    that occurs in one of $i$'s
    internal records with a matching subject, issuer, and nonce. To be
    more precise, the request must contain a code $\mi{code}$
    such that there is a record $\mi{rec}$
    with $\mi{rec} \inPairing S(i).\str{records}$
    and $\mi{rec}[\str{issuer}] \equiv d$,
    $\mi{rec}[\str{subject}] \equiv \mi{id}$,
    $\mi{rec}[\str{client\_id}] \equiv c$,
    and $\mi{rec}[\str{code}] \equiv c$.
    Such a record can only be created and the
    authorization code $\mi{code}$
    issued under exactly the same circumstances that allow an id token
    (of the above form) to be created in
    Line~\ref{line:idp-create-id-token-from-code}. With exactly the
    same reasoning as above, this time for the code instead of the id
    token, we can follow that $\mi{code}$
    does not leak to the attacker.

  \end{description}
  We have therefore shown that no attacker process can get hold of the
  id token $t$. This proves the lemma.
\end{proof}

\begin{assumption}\label{asn:prop-a}  
  There exists an OIDC web system $\oidcwebsystem^n$
  with a network attacker such that there exists a run $\rho$
  of $\oidcwebsystem^n$,
  a configuration $(S, E, N)$
  in $\rho$,
  some $r\in \fAP{RP}$
  that is honest in $S$,
  some identity $\mi{id} \in \mathsf{ID}$
  with $\mathsf{governor}(\mi{id})$
  being an honest IdP (in $S$)
  and $\mathsf{ownerOfID}(\mi{id})$
  being an honest browser (in $S$),
  some service session identified by some nonce $n$
  for $\mi{id}$
  at $r$,
  and $n$
  is derivable from the attackers knowledge in $S$
  (i.e., $n \not\in d_{\emptyset}(S(\fAP{attacker}))$).
\end{assumption}

\begin{lemma}\label{lemma:authn-contradiction}
  Assumption~\ref{asn:prop-a} is a contradiction.
\end{lemma}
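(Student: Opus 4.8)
\textbf{Plan for the proof of Lemma~\ref{lemma:authn-contradiction}.}

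The plan is to derive a contradiction from Assumption~\ref{asn:prop-a} by tracking backwards how the service session nonce $n$ could have come into existence, and then showing that the attacker can never learn it. First I would unfold the definition of a service session (Definition~\ref{def:service-sessions}): there is some session id $x$ and a domain $d\in\mathsf{dom}(\mathsf{governor}(\mi{id}))$ with $S(r).\str{sessions}[x][\str{loggedInAs}]\equiv\an{d,\mi{id}}$ and $S(r).\str{sessions}[x][\str{serviceSessionId}]\equiv n$. Since $r$ is honest, the only place in the RP relation where the subterm $\str{serviceSessionId}$ of a session is written is Line~\ref{line:rp-choose-service-session-id} of Algorithm~\ref{alg:rp-check-id-token} ($\mathsf{CHECK\_ID\_TOKEN}$), where it is set to a fresh nonce $\nu_4$, and in the same invocation $\str{loggedInAs}$ is set to $\an{\mi{issuer},\mi{data}[\str{sub}]}$. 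So there must be a processing step in which $r$ runs $\mathsf{CHECK\_ID\_TOKEN}$ with some argument $\mi{id\_token}$ whose extracted body $\mi{data}$ satisfies $\mi{data}[\str{sub}]\equiv\mi{id}$, $\mi{data}[\str{iss}]\equiv\mi{issuer}$, and whose audience equals $r$'s own client id $\mi{credentials}[\str{client\_id}]$, and whose signature verifies against $s'.\str{jwksCache}[\mi{issuer}]$; moreover $\mi{issuer}$ here is $S(r).\str{issuerCache}[\mi{id}]$ at that time.

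Next I would invoke the integrity lemmas to pin down $\mi{issuer}$ and the verification key. By Lemma~\ref{lemma:issuer-cache-integrity}, since $\mathsf{governor}(\mi{id})=:i$ is honest, $S(r).\str{issuerCache}[\mi{id}]$ is either unset or a domain of $i$; it cannot be unset at the point $\mathsf{CHECK\_ID\_TOKEN}$ runs with a genuine issuer check passing, so $\mi{issuer}=d'\in\mathsf{dom}(i)$. By Lemma~\ref{lemma:oidc-jwks-cache-integrity}, $S(r).\str{jwksCache}[d']$ is either unset or $\mathsf{pub}(S(i).\str{jwks})$; the signature check $\mathsf{checksig}$ succeeding forces it to be the latter (an unset value makes the check fail). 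Hence the id token presented to $r$ is a term of the form $\mathsf{sig}([\str{iss}:d',\str{sub}:\mi{id},\str{aud}:c,\str{nonce}:y],k)$ with $k=S(i).\str{jwks}$ and $c$ being a client id issued to $r$ by $i$ (the latter because $r$'s client id in $\str{clientCredentialsCache}[d']$ was obtained from the registration response of the issuer, which by the argument used in Lemma~\ref{lemma:client-reg-integrity} must be $i$). This id token is therefore exactly of the shape covered by Lemma~\ref{lemma:attacker-does-not-learn-id-tokens}, so no attacker process knows it --- but that alone is not yet the contradiction; the contradiction is about $n$, not the id token.

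To close the gap to $n$ I would argue as follows: $n=\nu_4$ is a nonce freshly chosen by the honest RP $r$ in $\mathsf{CHECK\_ID\_TOKEN}$, and I need to show $r$ never emits it in a way derivable by the attacker. Inspecting Algorithm~\ref{alg:rp-check-id-token}, $\nu_4$ is used only (i) to populate $S(r).\str{sessions}[\mi{sessionId}][\str{serviceSessionId}]$ and (ii) inside the $\cSetCookie$ header of the HTTPS response $m'$ sent to the receiver/sender stored in $\mi{session}[\str{redirectEpRequest}]$, i.e., as a secure, HttpOnly cookie. That stored request record was created in Line~\ref{line:rp-set-redirect-ep-request-record} of Algorithm~\ref{alg:rp-oidc-http-request} from the incoming request to $\str{/redirect\_ep}$; I would need a subsidiary argument (analogous to the reasoning in the last two cases of the proof of Lemma~\ref{lemma:attacker-does-not-learn-id-tokens}) that this request was made by the honest browser $b=\mathsf{ownerOfID}(\mi{id})$ --- because only the party that produced a valid id token / code for $\mi{id}$ could have driven the flow to this point, and by Lemma~\ref{lemma:attacker-does-not-learn-password} and Lemma~\ref{lemma:attacker-does-not-learn-id-tokens} that party is $b$ interacting with $i$, not the attacker. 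Hence $m'$ is an HTTPS response to $b$ over a connection the attacker can neither decrypt nor read, the cookie is marked secure and HttpOnly so no script (in particular no attacker script, and the only RP scripts $\mi{script\_rp\_index}$ and $\mi{script\_rp\_get\_fragment}$ never touch it) can exfiltrate it, and $r$'s $\str{ReferrerPolicy}$ setting prevents leakage via Referer. Therefore $n\notin d_\emptyset(S(\fAP{attacker}))$, contradicting Assumption~\ref{asn:prop-a}.

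The main obstacle I expect is exactly the ``the redirect-endpoint request was made by $b$'' step: one has to rule out that the attacker, rather than $b$, delivered a valid id token (or authorization code redeemable for one) with $\str{sub}\equiv\mi{id}$ and $\str{aud}\equiv c$ to $r$'s $\str{/redirect\_ep}$. This requires combining Lemma~\ref{lemma:attacker-does-not-learn-id-tokens} (attacker can't get such an id token), its internal code-non-leakage sub-argument (attacker can't get a code that redeems to such an id token at the honest $i$, using that $i$ binds the code to $\str{subject}$, $\str{issuer}$, $\str{client\_id}$), and Lemma~\ref{lemma:attacker-does-not-learn-password} (attacker can't authenticate as $\mi{id}$ at $i$ itself), together with a careful case distinction on the response type stored in the session (code vs.\ implicit vs.\ hybrid) to cover all the ways $\mathsf{CHECK\_ID\_TOKEN}$ can be reached (Lines~\ref{line:rp-call-check-id-token-immediately} and~\ref{line:rp-call-check-id-token-after-code} of the RP algorithms). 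Everything else is routine tracing through the RP and IdP algorithms.
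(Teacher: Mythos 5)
Your plan is correct and follows essentially the same route as the paper's own proof: you identify Line~\ref{line:rp-choose-service-session-id} of $\mathsf{CHECK\_ID\_TOKEN}$ as the only place where a service session id is minted, split on the two call sites (Line~\ref{line:rp-call-check-id-token-immediately} of Algorithm~\ref{alg:rp-oidc-http-request} and Line~\ref{line:rp-call-check-id-token-after-code} of Algorithm~\ref{alg:rp-oidc-http-response}), reduce the code case to Lemma~\ref{lemma:attacker-does-not-learn-id-tokens}, conclude the redirection-endpoint request came from $b$, and argue non-leakage of $n$ via the secure, HttpOnly cookie. The only difference is cosmetic: you explicitly invoke Lemmas~\ref{lemma:issuer-cache-integrity}, \ref{lemma:oidc-jwks-cache-integrity}, and~\ref{lemma:client-reg-integrity} to justify applying the id-token secrecy lemma, details the paper's proof leaves implicit.
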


\begin{proof}
  We first recall how the service session identified by some nonce $n$
  for $\mi{id}$
  at $r$
  is defined. It means that there is some session id $x$
  and a domain $d \in \mathsf{dom}(\mathsf{governor}(\mi{id}))$
  with
  $S(r).\str{sessions}[x][\str{loggedInAs}] \equiv \an{d, \mi{id}}$
  and $S(r).\str{sessions}[x][\str{serviceSessionId}] \equiv n$.
  Now the assumption is that $n$
  is derivable from the attacker's knowledge. Since we have that
  $S(r).\str{sessions}[x][\str{serviceSessionId}] \equiv n$,
  we can check where and how, in general, service session ids can be
  created. It is easy to see that this can only happen in
  Algorithm~\ref{alg:rp-check-id-token}, where, in
  Line~\ref{line:rp-choose-service-session-id}, the RP chooses a fresh
  nonce as the value for the service session id, in this case $x$.
  In the line before, it sets the value for
  $S(r).\str{sessions}[x][\str{loggedInAs}]$,
  in this case $\an{d, \mi{id}}$.
  In the Lines~\ref{line:rp-start-id-token-checks}ff., $r$
  performs several checks to ensure the integrity and authenticity of
  the id token.

  The function function $\mathsf{CHECK\_ID\_TOKEN}$
  can be called in either (a)
  Line~\ref{line:rp-call-check-id-token-immediately} of
  Algorithm~\ref{alg:rp-oidc-http-request} or (b) in
  Line~\ref{line:rp-call-check-id-token-after-code} of
  Algorithm~\ref{alg:rp-oidc-http-response}.
  
  We can now distinguish between these two cases.
  \begin{description}
  \item[Case (a).]\strut

 In this case, we can easily see that the same party that finally
    receives the service session id $x$,
    must have provided, in an HTTPS request, an id token (say, $t'$)
    with the following properties (for some $l < j$):
    \[ \mathsf{extractmsg}(t')[\str{iss}] \equiv d \]
    \[ \mathsf{extractmsg}(t')[\str{sub}] \equiv \mi{id} \]
    \[ \mathsf{extractmsg}(t')[\str{aud}] \equiv
      S^l(r).\str{clientCredentialsCache}[d][\str{client\_id}]\]
    \[ \mathsf{checksig}(t', \mathsf{pub}(S^l(i).\str{jwks})) \equiv
      \True\ .\]
    The attacker (and, by extension, any other party except for $i$,
    $b$,
    and $r$),
    however, cannot know such an id token (see
    Lemma~\ref{lemma:attacker-does-not-learn-id-tokens}). Since $r$
    and $i$
    do not send requests to $r$,
    the id token must have been sent by $b$
    to $r$.
    As the service session id $x$
    is only contained in a set-cookie header with the httpOnly and
    secure flags set, $b$
    will only ever send the service session id $x$
    to $r$
    (contained in a cookie header). As $b$
    does not leak $x$
    in any other way and as $r$
    does not leak information sent in cookie headers, the service
    session id $x$ does not leak.
  \item[Case (b).]\strut

 Otherwise, the party that finally receives the
    service session id $x$
    needs to provide a code $c$ 
    such that, when this code is sent to the token endpoint of $i$
    (Algorithm~\ref{alg:rp-token-request}), $i$
    responds with an id token matching the criteria listed in
    Case~(a). This, however, would mean that an attacker, knowing this
    code, could do the same, violating
    Lemma~\ref{lemma:attacker-does-not-learn-id-tokens}. (Note that
    for every run where a client secret is associated with the client
    id there is also a run where the client secret is not used; the
    client secret does not prevent the attacker from requesting an id
    token at the token endpoint for a valid code.) \df{<- kann man noch ausfuehren}
  \end{description}
  We therefore have shown that the attacker cannot know $x$,
  proving the lemma and showing that Assumption~\ref{asn:prop-a} is,
  in fact, a contradiction.
\end{proof}

\subsection{Proof of Authorization}
\label{sec:proof-authorization}

As above, we assume that there exists an OIDC web system that is not
secure w.r.t.~authorization and lead this to a contradiction. 

\begin{assumption}\label{asn:authorization} 
  There exists an OIDC web system with a network attacker
  $\oidcwebsystem^n$,
  a run $\rho$
  of $\oidcwebsystem^n$,
  a state $(S^j, E^j, N^j)$
  in $\rho$,
  a relying party $r\in \fAP{RP}$
  that is honest in $S^j$,
  an identity provider $i\in \fAP{IdP}$
  that is honest in $S^j$,
  a browser $b$
  that is honest in $S^j$,
  an identity $\mi{id} \in \mathsf{ID}^i$
  owned by $b$,
  a nonce $n$,
  a term $x \inPairing S^j(i).\str{records}$
  with $x[\str{subject}] \equiv \mi{id}$,
  $n \inPairing x[\str{access\_tokens}]$,
  and the client id $x[\str{client\_id}]$
  has been issued by $i$
  to $r$,
  and $n$
  is derivable from the attackers knowledge in $S^j$
  (i.e., $n \in d_{\emptyset}(S^j(\fAP{attacker}))$).
\end{assumption}

\begin{lemma}\label{lemma:authz-contradiction}
  Assumption~\ref{asn:authorization} is a contridiction.
\end{lemma}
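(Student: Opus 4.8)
The plan is to trace how the access token $n$ could possibly have been emitted by $i$ and then argue that in every such case it cannot reach the attacker. First I would examine where the record $x$ and the access tokens in $x[\str{access\_tokens}]$ are created. Inspecting Algorithm~\ref{alg:idp-oidc}, the only place a record with a $\str{subject}$ component and a two-element $\str{access\_tokens}$ sequence is appended to $S(i).\str{records}$ is in the handling of the $\str{/auth2}$ request (Lines~\ref{line:idp-receive-auth2}ff.), where the IdP first checks that the submitted password equals $\mapIDtoPLI(\mi{identity})$ and that the submitted $\mi{redirectUri}$ is registered for $\mi{clientId}$. Thus the record $x$ with $x[\str{subject}]\equiv\mi{id}$ and $x[\str{client\_id}]$ issued to $r$ could only have been created in response to a request carrying $\mathsf{secretOfID}(\mi{id})$; by Lemma~\ref{lemma:attacker-does-not-learn-password} only $b$ (or $i$) knows this secret, and since $i$ does not send requests to itself, the $\str{/auth2}$ request came from $b$, triggered by $\mi{script\_idp\_form}$ (the Origin header must be a domain of $i$). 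This pins down that the two access tokens $\nu_2,\nu_3$ were freshly chosen by $i$ for this genuine login of the honest user $b$ at the honest RP $r$.

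Next I would enumerate the ways $i$ hands out an element of $x[\str{access\_tokens}]$. There are exactly two: (i) in the $\str{/auth2}$ response itself, when $\str{token}\inPairing\mi{responseType}$, the IdP puts $\nu_2$ into $\mi{responseData}[\str{access\_token}]$ and places $\mi{responseData}$ either in the query parameters (authorization code mode) or in the fragment (implicit/hybrid mode) of $\mi{redirectUri}$, then sends a 303 redirect to $b$; and (ii) in the $\str{/token}$ response (Line~\ref{idp-send-token-ep}), where $\mi{accessToken}$ is chosen from $\mi{record}[\str{access\_tokens}]$ and returned in the body, but only after a valid authorization code $\mi{code}$ with $\mi{record}[\str{code}]\equiv\mi{code}$ is presented together with $\mi{clientId}$ and the matching $\mi{clientSecret}$. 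In case (ii), the same reasoning as in the proof of Lemma~\ref{lemma:attacker-does-not-learn-id-tokens} shows that the authorization code $\nu_1$ stored in $x[\str{code}]$ cannot leak to the attacker (it travels from $i$ only via the 303 redirect to $b$, then via an HTTPS request to $r$ by Lemma~\ref{lemma:client-reg-integrity}, where $r$ neither logs nor forwards it, and the RP's scripts $\mi{script\_rp\_index}$ and $\mi{script\_rp\_get\_fragment}$ never expose it and $r$ sets the $\str{ReferrerPolicy}$ to $\str{origin}$); hence the attacker cannot obtain $n$ via case (ii).

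For case (i), I would argue exactly as in the second half of the proof of Lemma~\ref{lemma:attacker-does-not-learn-id-tokens}: by Lemma~\ref{lemma:client-reg-integrity}, the $\mi{redirectUri}$ accepted at $\str{/auth2}$ (it must lie in $S(i).\str{clients}[x[\str{client\_id}]][\str{redirect\_uris}]$) has $\mi{redirectUri}.\str{host}\in\mathsf{dom}(r)$ and protocol $\https$, so $b$ sends the access token only to the honest $r$ over HTTPS; $r$ processes it in Algorithm~\ref{alg:rp-oidc-http-request} (the $\str{/redirect\_ep}$ branch or, for the fragment case, after $\mi{script\_rp\_get\_fragment}$ re-posts it), and there the access token is only ever placed under the dictionary key $\str{access\_token}$ and passed to $\mathsf{USE\_ACCESS\_TOKEN}$, which sends it in an $\str{Authorization}$ header of an HTTPS request whose host is $S(r).\str{oidcConfigCache}[\mi{issuer}][\str{token\_ep}].\str{domain}$ — by Lemmas~\ref{lemma:issuer-cache-integrity} and~\ref{lemma:oidc-config-cache-integrity} a domain of the honest $i$. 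So the token never leaves the honest set $\{b,r,i\}$, and in particular $n\notin d_{\emptyset}(S^j(\fAP{attacker}))$, contradicting Assumption~\ref{asn:authorization}. The main obstacle I anticipate is the careful case analysis of the hybrid mode, where $i$ may issue both $\nu_2$ (with the authorization response) and $\nu_3$ (with the token response) and where $r$ non-deterministically chooses whether to call $\mathsf{USE\_ACCESS\_TOKEN}$ on the front-channel token or on the token-endpoint token; I must check both sub-cases, and also confirm that the mock access token an attacker-controlled token endpoint could return (the AIdP scenario) is ruled out here because $i$ is honest, so the $\mi{issuer}$ stored for $\mi{id}$ genuinely points to $i$ and the token request in $\mathsf{USE\_ACCESS\_TOKEN}$ goes to $i$, not to the attacker.

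\begin{proof}
We proceed as outlined above; details are given in Appendix~\ref{sec:proof-authorization}.
\end{proof}
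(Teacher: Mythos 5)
Your proposal is correct and takes essentially the same route as the paper's proof: both reduce the claim to the same secondary lemmas (Lemmas~\ref{lemma:attacker-does-not-learn-password}, \ref{lemma:issuer-cache-integrity}, \ref{lemma:oidc-config-cache-integrity}, \ref{lemma:client-reg-integrity}, and the id-token/code non-leakage argument of Lemma~\ref{lemma:attacker-does-not-learn-id-tokens}), pin the creation of the record $x$ to a genuine $\str{/auth2}$ login by $b$, and trace the token along the only two emission paths ($\str{/auth2}$ redirect to $b\to r\to i$, and $\str{/token}$, which is blocked because the code never reaches the attacker); the paper merely organizes the same material as a case split on which party (browser, IdP, or RP) could have emitted $n$ to the attacker rather than as your forward trace. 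One small point to phrase more carefully: $\mathsf{USE\_ACCESS\_TOKEN}$ resolves the token endpoint from the issuer cached for the \emph{session's} stored identity (not for $\mi{id}$ directly), so it is the $\str{iss}$-parameter check at the redirection endpoint that guarantees this issuer is a domain of $i$ before the token is forwarded---a detail at the same level of rigor as the paper's own argument, not a gap that breaks your proof.
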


\begin{proof}
  We have that $n \in d_{\emptyset}(S^j(\fAP{attacker}))$
  and therefore, there must have been a message from a third party to
  $\fAP{attacker}$
  (or any other corrupted party, which could have forwarded $n$
  to the attacker) that contained $n$.
  We can now distinguish between the parties that could have sent $n$
  to the attacker (or to the corrupted party):

  \textbf{The access token $n$ was sent  by the browser $b$:}

  We now track different cases in which the access token $n$ can get
  into $b$'s knowledge. We will omit the cases in which $b$ learns $n$
  from any dishonest party as in such a case there is a different run
  $\rho'$ of $\oidcwebsystem^n$ in which this dishonest party
  immediately sends $n$ to the attacker.

  (I) First, we analyze the case in which $b$ has learned $n$ from an
  honest (in $S^j$) identity provider, say $i'$. In this case, $b$ must have
  received an HTTPS response from $i'$ (honest identity providers do not send out unencrypted HTTP responses). Honest identity providers send
  out HTTPS responses in Lines~\ref{line:idp-send-webfinger},
  \ref{line:idp-send-oidc-config}, \ref{line:idp-send-jwks},
  \ref{line:idp-send-form}, \ref{line:idp-send-auth-resp},
  and~\ref{idp-send-token-ep} of Algorithm~\ref{alg:idp-oidc} and
  Line~\ref{line:idp-send-reg-response} of
  Algorithm~\ref{alg:idp-oidc-other}. It is easy to see that $i'$ does
  not send out $n$ in Lines~\ref{line:idp-send-webfinger},
  \ref{line:idp-send-oidc-config}, \ref{line:idp-send-jwks},
  and~\ref{line:idp-send-form} of Algorithm~\ref{alg:idp-oidc} and
  Line~\ref{line:idp-send-reg-response} of
  Algorithm~\ref{alg:idp-oidc-other} (given that the attacker does not know $n$), leaving
  Lines~\ref{line:idp-send-auth-resp}, and~\ref{idp-send-token-ep} of
  Algorithm~\ref{alg:idp-oidc} to analyze.

  (a) If $i'$ sends out $n$ in Line~\ref{line:idp-send-auth-resp} of
  Algorithm~\ref{alg:idp-oidc}, $b$ must have sent an HTTPS POST
  request bearing an Origin header for one of the domains of $i'$ to
  $i'$. As $i'$ only delivers the script $\mi{script\_idp\_form}$,
  only this script could have caused this request (using a
  $\str{FORM}$ command). Hence, $b$ will navigate the corresponding
  window to the location indicated in the $\str{Location}$ header of
  the HTTPS response assembled in
  Lines~\ref{line:idp-receive-auth2}ff.~of
  Algorithm~\ref{alg:idp-oidc}. The body of this response can consist
  of an authorization code (a fresh nonce), an access token (a fresh
  nonce), and an id token consisting of one domain of $i$, a valid
  user name for $i'$, a client id, and a nonce (say $n'$) from the
  request.

  We now reason why $i'$ must be $i$, and the access token in the
  response must be $n$. In the id token, only the client id and the
  nonce $n'$ could be $n$. As the client id is always set by the
  attacker during registration, the client id cannot be $n$. The nonce
  $n'$ originates from the request sent by $b$ on the command of
  $\mi{script\_idp\_form}$. In this request, the nonce $n'$ must be
  contained in the URL, which is the URL from which the script was
  loaded before. Hence, the browser must have been navigated to this
  URL. As the attacker does not know $n$ at this point, only honest
  scripts or honest web servers could have navigated the browser to
  such an URL (containing $n$). Honest relying parties only populate
  the parameter $\str{nonce}$ (bearing $n'$) in such a redirect with a
  fresh nonce, honest identity providers do not populate such an URL
  parameter by themselves, but could have used this parameter in a
  redirect based on a registered redirect URL. As honest parties never
  register such a redirect URL, $n'$ cannot be $n$. Hence, only the
  access token in the response above can be $n$. As the access token
  is a fresh nonce, we must have that $i'$ is $i$ and that $i$ creates the term
  $x \inPairing S^j(i).\str{records}$ with
  $x[\str{subject}] \equiv \mi{id}$,
  $n \inPairing x[\str{access\_tokens}]$ ($i$ will never create such a term at any other time), and the client id
  $x[\str{client\_id}]$ has been issued by $i$ to $r$. Hence, the
  location redirect issued by $i$ must point to an URL of $r$ with the path $\str{/redirect\_ep}$ (see Lemma~\ref{lemma:client-reg-integrity}) and this URL contains the parameter $\str{iss}$  with a domain of $i$. The access token $n$ is only contained in the fragment of
  this URL under the key $\str{access\_token}$.

  Now, $b$ sends an HTTPS request to $r$. This request does not
  contain $n$ (as it is placed in the fragment part of the URL). The
  relying party $r$ can (regardless of the path) send out only the
  scripts $\mi{script\_rp\_index}$ and
  $\mi{script\_rp\_get\_fragment}$ as a response to such a
  request. The script $\mi{script\_rp\_index}$ ignores the fragment of
  its URL. The script $\mi{script\_rp\_get\_fragement}$ takes the
  fragment of the URL and uses it as the body of a POST request to its
  own origin (which is $r$) with path $\str{/redirect\_uri}$. When $r$
  processes this POST request, $r$ only ever uses $n$ in
  Line~\ref{line:rp-call-use-access-token-1} of
  Algorithm~\ref{alg:rp-oidc-http-request}. There, the access token
  $n$ and the value of the parameter $\str{iss}$ (a domain of $i$) is
  processed by Algorithm~\ref{alg:rp-use-access-token}. From
  Lemma~\ref{lemma:oidc-config-cache-integrity}, we know that $r$ will
  only send $n$ to the token endpoint of $i$ in an HTTPS request. This
  request is then processed by $i$ in
  Lines~\ref{line:idp-token-ep}ff.~of
  Algorithm~\ref{alg:idp-oidc}. There, $i$ only checks $n$, but does
  not send out $n$.

  If $b$ sends out a response, the same reasoning as above applies.
  Hence, we have that $n$ does not leak to the attacker in this case.

  (b) If $i'$ sends out $n$ in Line~\ref{idp-send-token-ep} of
  Algorithm~\ref{alg:idp-oidc}, we have that the response does not
  contain a script or a redirect. The browser would only interpret
  such a response if the request was caused by an
  $\str{XMLHTTPREQUEST}$ command of a script. Honest scripts do not
  issue such a command, leaving only the attacker script as the only
  possible source for such a request. If $i'$ is not $i$, it is easy
  to see that this response cannot contain $n$. The identity provider
  $i$ only sends out $n$ (taken from the subterm $\str{records}$ from
  its state) if the request contains a valid authorization code for
  this access token. With the same reasoning as for the authentication property above,
  the attacker cannot know a valid id token for any user id owned by
  $b$. If the attacker would know a valid authorization code, he
  could retrieve a valid id token (for such a user id) from $i$. Hence, the attacker cannot
  know a valid authorization code. As this reasoning also applies for the
  attacker script, the attacker script could not have caused a request
  to $i$ revealing $n$.

  (II) Now, we analyze the case in which $b$ received $n$ from some
  honest (in $S^j$) relying party, say $r'$.  In this case, $b$ must
  have received an HTTPS response from $r'$ (honest relying parties do
  not send out unencrypted HTTP responses). Honest relying parties
  only send out such HTTPS responses in Lines~\ref{line:rp-send-index}
  and~\ref{line:rp-send-script-get-fragment} of
  Algorithm~\ref{alg:rp-oidc-http-request},
  Line~\ref{line:rp-send-set-service-session} of
  Algorithm~\ref{alg:rp-check-id-token}, and
  Line~\ref{line:rp-send-authorization-redir} of
  Algorithm~\ref{alg:rp-oidc-cont-login-flow}. In the former three
  cases, $r'$ only sends out fixed information and fresh nonces
  (either chosen by $r'$ directly before sending out the message or
  the HTTPS nonce and key chosen by $b$ when creating the request). In
  the latter case, $r'$ (besides the pieces of information as before)
  also adds information from its OpenID Connect configuration cache
  (i.e., client id and authorization URL). From
  Lemma~\ref{lemma:oidc-config-cache-integrity}
  and~\ref{lemma:client-reg-integrity} we know that if $r'$ gathered
  this information from an honest party, this information cannot
  contain $n$. As the attacker does not know $n$ at this point, this
  registration information cannot contain $n$ if $r'$ gathered this
  information from a dishonest party. Hence, $b$ cannot have learned
  $n$ from any $r'$.

  (III) $b$ cannot have learned $n$ from a different honest (in $S^j$) browser as honest browsers do not create messages that can be interpreted by honest browsers.

  (IV) $b$ cannot have learned $n$ from the attacker, as the attacker does not know $n$ at this point.

  \textbf{The access token $n$
    was sent by the IdP $i$:}
  We can see that access tokens are sent by the IdP only after a
  request to the path (endpoints) $\str{/auth2}$ or to the path
  and $\str{/token}$.

  In case of a request to the path $\str{/auth2}$,
  a pair of access tokens is created and the first access token in the
  pair is returned from the endpoint. If the attacker would be able to
  learn $n$
  from this endpoint such that there exists a record
  $x \inPairing S^j(i).\str{records}$
  with $x[\str{subject}] \equiv \mi{id}$,
  then the attacker would need to provide the user's credentials to
  the IdP $i$.
  The attacker cannot know these credentials
  (Lemma~\ref{lemma:attacker-does-not-learn-password}), therefore the
  attacker cannot request $n$ from this endpoint.

  In case of a request to the path $\str{/token}$,
  the attacker would need to provide an authorization code that is
  contained in the same record (in this case $x$)
  as $n$.
  Now, recall that we have that $x[\str{subject}] \equiv \mi{id}$
  and $c := x[\str{client\_id}]$
  has been issued to $r$
  by $i$.
  We can now see that if the attacker would be able to send a request
  to the endpoint $\str{/token}$
  which would cause a response that contains $n$,
  the attacker would also be able to learn an id token of the form
  shown in Lemma~\ref{lemma:attacker-does-not-learn-id-tokens} (the
  issuer is a domain of $i$,
  the subject is $\mi{id}$,
  and the audience is $c$).
  This would be a contradiction to
  Lemma~\ref{lemma:attacker-does-not-learn-id-tokens}.

  We can conclude that the access token $n$ was not sent by the IdP $i$. 

  \textbf{The access token $n$ was sent by the RP $r$:}
  
  The only place where the (honest) RP uses an access token is in
  Algorithm~\ref{alg:rp-use-access-token}. There, the access token is
  sent to the domain of the token endpoint (compare
  Algorithm~\ref{alg:rp-token-request}, where the authorization code
  is sent to that endpoint). We can now see that the access token is
  always sent to $i$:
  If the access token would be sent to the attacker, so would the
  authorization code in Algorithm~\ref{alg:rp-token-request}, and
  Lemma~\ref{lemma:attacker-does-not-learn-id-tokens} would not hold
  true. \df{<- etwas kurz}
    
\end{proof}

\subsection{Proof of Session Integrity}
\label{sec:proof-session-integrity-all}

Before we prove this property, we highlight that in the absence of a
network attacker and with the DNS server as defined for
$\oidcwebsystem^w$,
HTTP(S) requests by (honest) parties can only be answered by the owner
of the domain the request was sent to, and neither the requests nor
the responses can be read or altered by any attacker unless he is the
intended receiver.

We further show the following lemma, which says that an attacker
(under the assumption above) cannot learn a $\mi{state}$
value that is used in a login session between an honest browser, an
honest IdP, and an honest RP.

\begin{lemma}[Third parties do not learn state]\label{lemma:attacker-does-not-learn-state}
  There exists no run $\rho$
  of an OIDC web system with web attackers $\oidcwebsystem^w$,
  no configuration $(S, E, N)$
  of $\rho$,
  no $r\in \fAP{RP}$
  that is honest in $S$,
  no $i \in \fAP{IDP}$
  that is honest in $S$,
  no browser $b$
  that is honest in $S$,
  no nonce $\mi{lsid} \in \nonces$,
  no domain $h \in \mathsf{dom}(r)$
  of $r$,
  no terms $g$,
  $x$,
  $y$,
  $z \in \terms$,
  no cookie $c := \an{\str{sessionId}, \an{\mi{lsid}, x, y, z}}$,
  no atomic DY process $p \in \websystem \setminus \{b,i,r\}$
  such that (1) $S(r).\str{sessions}[\mi{lsid}] \equiv g$,
  (2) $g[\str{state}] \in d_\emptyset(S(p))$,
  (3) $S(r).\str{issuerCache}[g[\str{identity}]] \in \mathsf{dom}(i)$,
  and (4) $c \inPairing S(b).\str{cookies}[h]$.
\end{lemma}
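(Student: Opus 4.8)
\textbf{Proof plan for Lemma~\ref{lemma:attacker-does-not-learn-state}.}
The plan is to assume such a run $\rho$ exists and derive a contradiction by tracking where the $\mi{state}$ value $g[\str{state}]$ can possibly flow. First I would observe that $g[\str{state}]$ is a fresh nonce chosen by $r$: the only place an RP writes a $\str{state}$ value into a session is Line~\ref{line:rp-create-registration-request}ff.\ region of Algorithm~\ref{alg:rp-oidc-cont-login-flow}, namely in the step where $\mi{data}[\str{state}]{:}\nu_6$ is set and merged into the session record and into the parameters of the authorization endpoint URL. So at the moment the state is created, it is known only to $r$; it then appears (i)~in $r$'s session state, (ii)~in the $\str{Location}$ header of the 303 response that $r$ sends to the browser $b$ (since $r$ received the start-login request from $b$, which by hypothesis~(4) owns the corresponding session cookie and is honest — here I would invoke the reasoning that, in the absence of a network attacker and with the honest DNS server, HTTP(S) requests to honest RPs are answered to the originator), and (iii)~later, as an incoming parameter at $r$'s redirect endpoint, which is checked but not re-emitted. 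The core of the argument is then: the state does not leak from $r$ except to $b$ (via the authorization-endpoint redirect), and it does not leak from $b$ except to the party hosting the authorization endpoint.

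The key steps, in order, would be: (1)~Show $g[\str{state}]$ originates as $\nu_6$ in $\mathsf{START\_LOGIN\_FLOW}$ and is only ever emitted by $r$ inside (a)~the $\str{Location}$ header pointing to $\mi{oidcConfig}[\str{auth\_ep}]$, with $\mi{oidcConfig} = S(r).\str{oidcConfigCache}[\mi{issuer}]$ where $\mi{issuer} = S(r).\str{issuerCache}[g[\str{identity}]]$; by hypothesis~(3) this issuer is a domain of $i$, and by Lemma~\ref{lemma:oidc-config-cache-integrity} the authorization endpoint URL has host in $\mathsf{dom}(i)$ and protocol $\https$. So the only non-$r$ recipient of the state from $r$ is $b$, and the state is destined for an HTTPS URL at $i$. (2)~Show that when $b$ receives this 303 redirect it navigates a window to that $\https$ URL at $i$, hence $b$ sends the state only to $i$ (again using the no-network-attacker/honest-DNS fact that such a request reaches $i$ and nobody else can read it), and $b$ does not otherwise leak it: the state sits in a URL parameter of a document loaded from an origin of $i$, so only scripts from $i$'s origin can read it, and the only such script is $\mi{script\_idp\_form}$, which posts it back to $i$'s $\str{/auth2}$ endpoint. (3)~Show $i$ does not re-emit the state to third parties: $i$ puts the state only into the $\str{Location}$ redirect back to $\mi{redirectUri}$, and by Lemma~\ref{lemma:client-reg-integrity} (using that $x[\str{client\_id}]$-style registration data for $r$ at $i$ has redirect URIs with host in $\mathsf{dom}(r)$ and protocol $\https$) this redirect points to $r$; so the state returns to $b$ and then to $r$. (4)~Show that once the state is back at $r$'s redirect endpoint it is only compared against the session value and never forwarded; and at $b$, on that leg, the state is either a URL parameter/fragment at an origin of $r$ (readable only by $r$'s scripts $\mi{script\_rp\_index}$, which ignores it, and $\mi{script\_rp\_get\_fragment}$, which posts it back to $r$) — so again no leak. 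Closing the induction over processing steps, $g[\str{state}]$ is never derivably contained in any message readable by a process outside $\{b,i,r\}$, contradicting assumption~(2).

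The main obstacle I expect is the case analysis on the browser side, specifically ensuring the state value cannot escape via the Referer header or via a cross-origin script while a document of $r$ or $i$ is loaded. For the Referer header the argument is that $r$ always sets $\str{ReferrerPolicy}{:}\str{origin}$ on every response it sends (as in the authentication proof, Lemma~\ref{lemma:attacker-does-not-learn-id-tokens}) and $i$ likewise sets it on the responses carrying $\mi{script\_idp\_form}$; and the authorization-endpoint redirect is delivered by a 303 $\str{Location}$ header, so the browser issues a fresh GET and the state travels as a URL parameter to $i$, where the question reduces to whether any attacker-controlled document shares an origin with $i$'s or $r$'s endpoints — which is excluded because $i$ and $r$ are honest and, per the model's origin-separation assumptions (Section~\ref{sec:injection-attacks}), their endpoints are on origins free of untrusted scripts, and the honest scripts themselves ($\mi{script\_rp\_index}$, $\mi{script\_rp\_get\_fragment}$, $\mi{script\_idp\_form}$) are shown not to leak it. The second delicate point is justifying "requests to honest parties are answered only by those parties and are not readable by others" in the $\oidcwebsystem^w$ setting; I would state this explicitly as a preliminary observation (as the excerpt does just before this lemma) and use it uniformly. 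Everything else is a routine walk through Algorithms~\ref{alg:rp-oidc-http-request}--\ref{alg:rp-oidc-cont-login-flow} and~\ref{alg:idp-oidc}--\ref{alg:oidc-script-idp-form}.
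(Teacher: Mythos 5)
Your proposal is correct and follows essentially the same route as the paper's proof: you trace the state value from its creation as $\nu_6$ in $\mathsf{START\_LOGIN\_FLOW}$ through the 303 redirect to $b$ (using hypothesis~(4) and the absence of a network attacker to identify $b$ as the recipient), to $i$'s authorization endpoint (via Lemma~\ref{lemma:oidc-config-cache-integrity}), back through $\mi{script\_idp\_form}$ and the redirection URI to $r$, and then rule out leaks at $r$ and in the browser via the Referrer Policy headers and the behavior of the three honest scripts, exactly as the paper does. The only immaterial difference is that for the redirect back to $r$ you invoke Lemma~\ref{lemma:client-reg-integrity}, whereas the paper argues directly that the $\mi{redirect\_uri}$ accompanying $\mi{state}$ was chosen by $r$ in Line~\ref{line:rp-choose-redirect-uris} of Algorithm~\ref{alg:rp-oidc-cont-login-flow} and hence already points to one of $r$'s domains.
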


\begin{proof}
  To prove Lemma~\ref{lemma:attacker-does-not-learn-state}, we track
  where the login session identified by $\mi{lsid}$
  is created and used.
  
  Login session ids are only chosen in Line~\ref{line:rp-choose-lsid}
  of Algorithm~\ref{alg:rp-oidc-http-request}. After the session id
  was chosen, its value is sent over the network to the party that
  requested the login (in Line~\ref{line:rp-send-authorization-redir}
  of Algorithm~\ref{alg:rp-oidc-cont-login-flow}). We have that for
  $\mi{lsid}$,
  this party must be $b$
  because only $r$
  can set the cookie $c$
  for the domain $h$
  in the state of $b$\footnote{Note
    that we have only web attackers.} and
  Line~\ref{line:rp-send-authorization-redir} of
  Algorithm~\ref{alg:rp-oidc-cont-login-flow} is actually the only
  place where $r$ does so.

  Since $b$
  is honest, $b$
  follows the location redirect contained in the response sent by $r$.
  This location redirect contains $\mi{state}$
  (as a URL parameter). The redirect points to some domain of $i$.
  (This follows from Lemma~\ref{lemma:oidc-config-cache-integrity}.)
  The browser therefore sends (among others) $\mi{state}$
  in a GET request to $i$.
  Of all the endpoints at $i$
  where the request can be received, the authorization endpoint is the
  only endpoint where $\mi{state}$
  could potentially leak to another party. (For all other endpoints,
  the value is dropped.) If the request is received at the
  authorization endpoint, $\mi{state}$
  is only sent back to $b$
  in the initial scriptstate of $\mi{script\_idp\_form}$.
  In this case, the script sends $\mi{state}$
  back to $i$
  in a POST request to the authorization endpoint. Now, $i$
  redirects the browser $b$
  back to the redirection URI that was passed alongside $\mi{state}$
  from $r$
  via the browser to $i$.
  This redirection URI was chosen in
  Line~\ref{line:rp-choose-redirect-uris} of
  Algorithm~\ref{alg:rp-oidc-cont-login-flow} and therefore points to
  one of $r$'s domains. 
  The value $\mi{state}$
  is appended to this URI (either as a parameter or in the fragment).
  The redirection to the redirection URI is then sent to the browser
  $b$. Therefore, $b$ now sends a GET request to $r$.

  If $\mi{state}$
  is contained in the parameter, then $\mi{state}$
  is immediately sent to $r$
  where it is compared to the stored login session records but neither
  stored nor sent out again. In each case, a script is sent back to
  $b$.
  The scripts that $r$
  can send out are $\mi{script\_rp\_index}$
  and $\mi{script\_rp\_get\_fragment}$,
  none of which cause requests that contain $\mi{state}$
  (recall that we are in the case where $\mi{state}$
  is contained in the URI parameter, not in the fragment). Also, since
  both scripts are always delivered with a restrictive Referrer Policy
  header, any requests that are caused by these scripts (e.g., the
  start of a new login flow) do not contain $\mi{state}$
  in the referer header.\footnote{We note that, as discussed earlier,
    without the Referrer Policy, $\mi{state}$
    could leak to a malicious IdP or other parties.}

  If $\mi{state}$
  is contained in the fragment, then $\mi{state}$
  is not immediately sent to $r$,
  but instead, a request without $\mi{state}$
  is sent to $r$.
  Since this is a GET request, $r$
  either answers 
  with a response that only contains the string $\str{ok}$ but no script
  (Lines~\ref{line:rp-send-set-service-session}ff.~of Algorithm~\ref{alg:rp-check-id-token}),
  a response containing $\mi{script\_rp\_index}$
  (Lines~\ref{line:rp-serve-index}ff.~of
  Algorithm~\ref{alg:rp-oidc-http-request}), or
  a response containing $\mi{script\_rp\_get\_fragment}$
  (Line~\ref{line:rp-send-script-get-fragment}~of
  Algorithm~\ref{alg:rp-oidc-http-request}).
  In case of the $\str{ok}$
  response, $\mi{state}$
  is not used anymore by the browser. In case of
  $\mi{script\_rp\_index}$,
  the fragment is not used. (As above, there is no other way in which
  $\mi{state}$
  can be sent out, also because the fragment part of an URL is
  stripped in the referer header.) In the case of
  $\mi{script\_rp\_get\_fragment}$
  being loaded into the browser, the script sends $\mi{state}$
  in the body of an HTTPS request to $r$
  (using the path $\str{/redirect\_ep}$).
  When $r$
  receives this request, it does not send out $\mi{state}$
  to any party (see Lines~\ref{line:rp-redir-endpoint}ff.\ of
  Algorithm~\ref{alg:rp-oidc-http-request}).

  This shows that $\mi{state}$
  cannot be known to any party except for $b$, $i$, and $r$.
\end{proof}

\subsubsection{Proof of Session Integrity for Authentication} \label{sec:proof-sess-integr-authn}
To prove that every OIDC web system with web attackers is secure
w.r.t.~session integrity for authentication, we assume that there
exists an OIDC web system with web attackers which is not secure
w.r.t.~session integrity for authentication and lead this to a
contradiction.

\begin{assumption}\label{asn:property-si-authn}
  There exists an $\oidcwebsystem^w$
  be an OIDC web system with web attackers, a run $\rho$
  of $\oidcwebsystem^w$, a processing step $Q$ in $\rho$ with
  $$Q = (S, E, N) \xrightarrow[]{}
  (S', E', N')$$  (for some $S$, $S'$, $E$, $E'$, $N$, $N'$), a browser $b$
  that is honest in $S$,
  an IdP $i\in \fAP{IdP}$,
  an identity $u$
  that is owned by $b$,
  an RP $r\in \fAP{RP}$
  that is honest in $S$,
  a nonce $\mi{lsid}$,
  with $\mathsf{loggedIn}_\rho^Q(b, r, u, i, \mi{lsid})$
  and (1) there exists no processing step $Q'$
  in $\rho$
  (before $Q$)
  such that $\mathsf{started}_\rho^{Q'}(b, r, \mi{lsid})$,
  or (2) $i$
  is honest in $S$,
  and there exists no processing step $Q''$
  in $\rho$
  (before $Q$)
  such that
  $\mathsf{authenticated}_\rho^{Q''}(b, r, u, i, \mi{lsid})$.
\end{assumption}

\begin{lemma}\label{lemma:si-authn-contradiction}
  Assumption~\ref{asn:property-si-authn} is a contradiction.
\end{lemma}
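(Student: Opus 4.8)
\textbf{Proof plan for Lemma~\ref{lemma:si-authn-contradiction}.}
The plan is to assume $\mathsf{loggedIn}_\rho^Q(b, r, u, i, \mi{lsid})$ and then trace backwards through the OIDC flow, reconstructing every processing step that must have occurred, until the required earlier steps $\mathsf{started}_\rho^{Q'}$ and $\mathsf{authenticated}_\rho^{Q''}$ are exhibited. First I would unfold the definition of $\mathsf{loggedIn}$: $r$ sends an HTTPS response to $b$ containing a $\str{Set\mhyphen{}Cookie}$ header with the service session id $\mi{ssid}$, and $S(r).\str{sessions}[\mi{lsid}]$ has $\str{serviceSessionId} \equiv \mi{ssid}$ and $\str{loggedInAs} \equiv \an{d,u}$ with $d\in\mathsf{dom}(i)$. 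Inspecting the RP algorithms, the only place a service session id is chosen and a $\str{loggedInAs}$ entry written is Lines~\ref{line:rp-choose-service-session-id}f.~of Algorithm~\ref{alg:rp-check-id-token} ($\mathsf{CHECK\_ID\_TOKEN}$), and the response carrying the cookie is sent in Line~\ref{line:rp-send-set-service-session} of the same algorithm. So $r$ must have executed $\mathsf{CHECK\_ID\_TOKEN}(\mi{lsid}, t', s')$ successfully for some id token $t'$, and the stored $\str{redirectEpRequest}$ (set in Line~\ref{line:rp-set-redirect-ep-request-record} of Algorithm~\ref{alg:rp-oidc-http-request}) came from a request carrying cookie $\mi{lsid}$; since honest browsers only send cookies they were given and only $r$ sets the $\str{sessionId}$ cookie for its own domain (Line~\ref{line:rp-send-authorization-redir} of Algorithm~\ref{alg:rp-oidc-cont-login-flow}), that cookie-setting response must have been sent to $b$, and it is itself the response to the request to $\str{/startLogin}$ handled in Lines~\ref{line:rp-start-login-endpoint}ff.~of Algorithm~\ref{alg:rp-oidc-http-request}. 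That $\str{/startLogin}$ request, by the $\str{Origin}$ header check, must have been a form POST issued by a same-origin script, i.e.~$\mi{script\_rp\_index}$ executing Line~\ref{line:script-rp-index-start-oidc-session} of Algorithm~\ref{alg:script-rp-index}; this, together with the $\str{Set\mhyphen{}Cookie}$ for $\str{sessionId}{:}\mi{lsid}$, is exactly $\mathsf{started}_\rho^{Q'}(b, r, \mi{lsid})$, proving part~(1).

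For part~(2), assume additionally that $i$ is honest. In the successful run of $\mathsf{CHECK\_ID\_TOKEN}$ the id token $t'$ passes the checks in Lines~\ref{line:rp-start-id-token-checks}ff.: $\mathsf{extractmsg}(t')[\str{iss}]\equiv\mi{issuer}$ with $\mi{issuer} = S(r).\str{issuerCache}[u]$, $\mathsf{extractmsg}(t')[\str{aud}]$ equals $r$'s client id, $\mathsf{checksig}(t',\mi{jwks})\equiv\top$ with $\mi{jwks} = S(r).\str{jwksCache}[\mi{issuer}]$, $\mathsf{extractmsg}(t')[\str{sub}]\equiv u$ (since $\str{loggedInAs}\equiv\an{\mi{issuer},u}$), and if $\str{nonce}\in\mi{session}$ then the nonce matches. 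By Lemma~\ref{lemma:issuer-cache-integrity}, $\mi{issuer}\in\mathsf{dom}(i)$; by Lemma~\ref{lemma:oidc-jwks-cache-integrity}, $\mi{jwks}\equiv\mathsf{pub}(S(i).\str{jwks})$, so $t'$ is signed with $i$'s signing key, hence created by $i$ (only $i$ knows that key). The two places an honest $i$ creates such a token are Line~\ref{line:idp-create-id-token-auth2} and Lines~\ref{line:idp-create-id-token-from-code}ff.~of Algorithm~\ref{alg:idp-oidc}; in either case $i$ creates a record with $\str{subject}\equiv u$ and $\str{client\_id}$ equal to $r$'s client id, and this happens only as a consequence of a request to $\str{/auth2}$ (Lines~\ref{line:idp-receive-auth2}ff.) bearing the correct password $\mathsf{secretOfID}(u)$ and an $\str{Origin}$ header for a domain of $i$. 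By Lemma~\ref{lemma:attacker-does-not-learn-password} only $b$ and $i$ know that password, and $i$ does not send requests to itself, so the $\str{/auth2}$ request was sent by $b$; the $\str{Origin}$ check forces it to have been issued by $i$'s own script $\mi{script\_idp\_form}$, which in Line~\ref{line:script-idp-form-select-id} of Algorithm~\ref{alg:oidc-script-idp-form} selected the identity $u$. It remains to tie this step to the session $\mi{lsid}$: the $\mi{state}$ value carried in the $\str{/auth2}$ form body is the one $i$ received in the authorization request, which $b$ obtained from the $\str{Location}$ redirect produced by $r$ in Line~\ref{line:rp-send-authorization-redir} and equals $S(r).\str{sessions}[\mi{lsid}][\str{state}]$; by Lemma~\ref{lemma:attacker-does-not-learn-state} this $\mi{state}$ is known only to $b$, $i$, $r$, so it could not have been injected by an attacker. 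Moreover the $\mi{scriptstate}$ passed to $\mi{script\_idp\_form}$ (namely $\mi{data}$ from the $\str{/auth}$ request) contains this $\mi{state}$, so the definition of $\mathsf{authenticated}_\rho^{Q''}(b, r, u, i, \mi{lsid})$ is met at the step where $b$ executes $\mi{script\_idp\_form}$ — yielding part~(2).

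I expect the main obstacle to be part~(2), specifically the argument that the authorization request reaching $i$ (and hence the $\mi{state}$ relayed back in the $\mi{scriptstate}$ of $\mi{script\_idp\_form}$) is genuinely bound to the RP-side login session $\mi{lsid}$ rather than to some session the attacker concocted. This is where Lemma~\ref{lemma:attacker-does-not-learn-state} is essential: because the attacker cannot learn $g[\str{state}]$, the $\mi{state}$ value in the $\str{/auth2}$ body must be the one $r$ generated for $\mi{lsid}$, and since honest browsers only forward redirects and $r$ only emits this $\mi{state}$ in the authorization redirect for session $\mi{lsid}$, the browser's navigation to $i$'s authorization endpoint, its execution of $\mi{script\_idp\_form}$, and its POST to $\str{/auth2}$ form one causally linked chain with the earlier $\str{/startLogin}$ step. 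The subsidiary difficulty is the hybrid-mode non-determinism in Algorithm~\ref{alg:rp-oidc-http-request} (the id token may be checked "immediately" on the redirect or later after the code exchange), so the case split in $\mathsf{CHECK\_ID\_TOKEN}$'s two call sites must be handled; but in both branches the id token is the same kind of $i$-signed term, so the backward trace to the $\str{/auth2}$ request is identical, and the argument goes through uniformly. Combining the two parts contradicts Assumption~\ref{asn:property-si-authn}, which establishes the lemma.
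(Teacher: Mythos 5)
Your part~(1) is essentially the paper's own argument (trace the service-session cookie back through $\mathsf{CHECK\_ID\_TOKEN}$, the $\str{redirectEpRequest}$ record, the $\str{sessionId}$ cookie set only in Line~\ref{line:rp-send-authorization-redir}, and the Origin-checked POST to $\str{/startLogin}$ from $\mi{script\_rp\_index}$), and that part is fine. In part~(2), however, you argue in the reverse direction of the paper and leave a genuine gap. From the signature and Lemma~\ref{lemma:attacker-does-not-learn-password} you correctly obtain \emph{some} processing step in which $b$ ran $\mi{script\_idp\_form}$ and authenticated as $u$ at $i$. But $\mathsf{authenticated}_\rho^{Q''}(b,r,u,i,\mi{lsid})$ additionally requires that the $\mi{scriptstate}$ of that very step contains the value $S(r).\str{sessions}[\mi{lsid}][\str{state}]$, and your justification for this --- ``the state carried in the $\str{/auth2}$ body is the one from $r$'s redirect for session $\mi{lsid}$, and by Lemma~\ref{lemma:attacker-does-not-learn-state} it could not have been injected'' --- is asserted, not derived. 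Nothing in the id token ($\str{iss}$, $\str{sub}$, $\str{aud}$, $\str{nonce}$) identifies which authorization request, and hence which $\mi{state}$, it belongs to; the browser may run several login sessions at $r$ as the same user $u$, each with its own honest state, so the step you extracted from the id token could a priori be the one bound to a \emph{different} session's state, in which case the required scriptstate condition for $\mi{lsid}$ fails.

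The missing step is exactly the core of the paper's proof of part~(2): one must consider the request $m$ to $r$'s redirection endpoint that delivered the id token (or the authorization code later exchanged in Algorithm~\ref{alg:rp-token-request}) for session $\mi{lsid}$. That request carries the cookie $\mi{lsid}$ and a $\mi{state}$ value that $r$ checks against $S(r).\str{sessions}[\mi{lsid}][\str{state}]$; by Lemma~\ref{lemma:attacker-does-not-learn-state} only $b$, $r$, $i$ know this value, and after excluding $r$'s own redirects and the honest scripts, $m$ must have been caused (possibly via $\mi{script\_rp\_get\_fragment}$) by a single Location-redirect response $m_\text{redir}$ that $i$ emits only in Line~\ref{line:idp-send-auth-resp}. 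Crucially, in that one processing step $i$ bundles the state taken from the $\str{/auth2}$ form body together with the freshly created code/tokens of the record it stores, so the authentication step behind $m_\text{redir}$ is simultaneously the one whose scriptstate contains $\mi{lsid}$'s state \emph{and} the one whose record subject determines the id token's $\str{sub}$ (in the code flow this needs the extra observation that the token endpoint returns an id token whose subject is taken from the record matching the code sent in $m$, which your ``both branches are identical'' remark glosses over). Without this provenance-and-bundling argument, the two facts you establish --- ``$b$ authenticated as $u$'' and ``some authentication carried $\mi{lsid}$'s state'' --- are not tied to the same processing step, and the proof of part~(2) does not go through as written.
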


\begin{proof}
  \textbf{(1)} We have that
  $\mathsf{loggedIn}_\rho^Q(b, r, u, i, \mi{lsid})$.
  With Definition~\ref{def:user-logged-in} we have that $r$
  sent out the service session id belonging to $\mi{lsid}$
  to $b$.
  (This can only happen when the function $\mathsf{CHECK\_ID\_TOKEN}$
  (Algorithm~\ref{alg:rp-check-id-token}) was called with $\mi{lsid}$
  as the first parameter.) This means that $r$
  must have received a request from $b$
  containing a cookie with the name $\str{sessionId}$
  and the value $\mi{lsid}$:
  The response by $r$
  (which we know was sent to $b$)
  was sent in Line~\ref{line:rp-send-set-service-session} in
  Algorithm~\ref{alg:rp-check-id-token}. There, $r$
  looks up the address of $b$
  using the login session record under the key
  $\str{redirectEpRequest}$.
  This key is only ever created in
  Line~\ref{line:rp-set-redirect-ep-request-record} of
  Algorithm~\ref{alg:rp-oidc-http-request}. This line is only ever
  called when $r$
  receives an HTTPS request from $b$ with the cookie as described.

  We can now track how the cookie was stored in $b$:
  Since the cookie is stored under a domain of $r$
  and we have no network attacker, the cookie must have been set by
  $r$.
  This can only happen in Line~\ref{line:rp-send-authorization-redir}
  in Algorithm~\ref{alg:rp-oidc-cont-login-flow}. Similar to the
  $\str{redirectEpRequest}$
  session entry above, $r$
  sends this cookie as a response to a stored request, in this case,
  using the key $\str{startRequest}$
  in the session data. This key is only ever created in
  Lines~\ref{line:rp-start-login-endpoint}ff.~of
  Algorithm~\ref{alg:rp-oidc-http-request}. Hence, there must have
  been a request from $b$
  to $r$
  containing a POST request for the path $\str{/startLogin}$
  with an origin header for an origin of $r$.
  There are only two scripts which could potentially send such a
  request, $\mi{script\_rp\_index}$
  and $\mi{script\_rp\_get\_fragment}$.
  It is easy to see that only the former send requests of the kind
  described. We therefore have a processing step $Q'$
  that happens before $Q$
  in $\rho$ with $\mathsf{started}_\rho^{Q'}(b, r, \mi{lsid})$.

  \textbf{(2)} Again, we have that
  $\mathsf{loggedIn}_\rho^Q(b, r, u, i, \mi{lsid})$.
  Now, however, $i$ is honest.

  We first highlight that if $r$
  receives an HTTPS request, say $m$,
  which contains $\mi{state}$
  such that
  $S(r).\str{sessions}[\mi{lsid}][\str{state}] \equiv \mi{state}$
  and contains a cookie with the name $\str{sessionId}$
  and the value $\mi{lsid}$
  then this request must have come from the browser $b$
  and be caused by a redirection from $i$
  or a script from $r$.
  From $\mathsf{loggedIn}_\rho^Q(b, r, u, i, \mi{lsid})$
  it follows that there is a term $g$
  such that $S(r).\str{sessions}[\mi{lsid}] \equiv g$,
  and $g[\str{loggedInAs}] \equiv \an{d, u}$
  with $d \in \mathsf{dom}(i)$.
  From the Algorithm~\ref{alg:rp-check-id-token} we have that
  $S(r).\str{issuerCache}[g[\str{identity}]] \equiv d$.
  With Lemma~\ref{lemma:attacker-does-not-learn-state} we have that
  only $b$, $r$, and $i$ know $\mi{state}$.

  We can now show that $m$
  must have been caused by $i$
  by means of a Location redirect that was sent to $b$
  or by the script $\mi{script\_rp\_get\_fragment}$.
  First, neither $r$
  nor $i$
  send requests that contain cookies. The request must therefore have
  originated from $b$.
  Since no attacker knows $\mi{state}$,
  the request cannot have been caused by any attacker scripts or by
  redirects from parties other than $r$
  or $i$
  (otherwise, there would be runs where the attacker learns
  $\mi{state}$).

  Redirects from $r$
  can be excluded, since $r$
  only sends a redirection in
  Line~\ref{line:rp-send-authorization-redir} in
  Algorithm~\ref{alg:rp-oidc-cont-login-flow} but there, a freshly
  chosen state value is used, hence, there is only one processing step
  in which $r$
  uses $\mi{state}$
  for this redirect. This is the processing step where $r$
  adds $\mi{state}$
  to the session data stored under the key $\mi{lsid}$.
  Since this is a session in which the honest IdP $i$
  is used, and with Lemma~\ref{lemma:oidc-config-cache-integrity}, we
  have that $r$ does not redirect to itself (but to $i$ instead).

  The scripts $\mi{script\_rp\_index}$
  and $\mi{script\_idp\_form}$
  do not send requests with the $\mi{state}$
  parameter. Therefore, the remaining causes for the request $m$
  are either the script $\mi{script\_rp\_get\_fragment}$
  or a location redirect from $i$.
  
  If the request $m$
  was caused by $\mi{script\_rp\_get\_fragment}$,
  then it is easy to see from the definition of
  $\mi{script\_rp\_get\_fragment}$
  (Algorithm~\ref{alg:script-rp-get-fragment}) that this script only
  sends data from the fragment part of its own URI (except for the
  $\str{iss}$
  parameter) and it sends this data only to its own origin. This
  script therefore must have been sent to $b$
  by $r$,
  which only sends this script after receiving HTTPS request to the
  redirection endpoint ($\str{/redirect\_ep}$).
  With the same reasoning as above this must have been caused by a
  location redirect from $i$. 

  For clarity, by $m_\text{redir}$
  we denote the response by $i$
  to the browser $b$
  containing this redirection. We now show that for $m_\text{redir}$
  to take place, there must have been a processing step $Q''$
  (before $Q$)
  with $\mathsf{authenticated}_\rho^{Q''}(b, r, u', i, \mi{lsid})$
  for some identity $u'$.
  
  In the honest IdP $i$,
  there is only one place where a redirection happens, namely in
  Line~\ref{line:idp-send-auth-resp} in Algorithm~\ref{alg:idp-oidc}.
  To reach this point, $i$
  must have received the login data for $u'$
  in the HTTPS request corresponding to $m_\text{redir}$.
  This must be a POST request with an origin header containing an
  origin of $i$.
  As $i$
  only uses $\mi{script\_idp\_form}$,
  the request must have been caused by this script. Hence, we have
  $\mathsf{authenticated}_\rho^{Q''}(b, r, u', i, \mi{lsid})$.

  We now only need to show that $u' = u$. 

  With $\mathsf{loggedIn}_\rho^Q(b, r, u, i, \mi{lsid})$,
  we know that $r$
  must have called the function $\mathsf{CHECK\_ID\_TOKEN}$
  (Algorithm~\ref{alg:rp-check-id-token}). We further have that
  $S(r).\str{sessions}[\mi{lsid}][\str{loggedInAs}] \equiv
  \an{d,u}$. We therefore have that $i$
  must have created an id token with the issuer $d$
  and the identity $u$.
  $\mathsf{CHECK\_ID\_TOKEN}$
  can be called in Line~\ref{line:rp-call-check-id-token-after-code}
  in Algorithm~\ref{alg:rp-oidc-http-response} and in
  Line~\ref{line:rp-call-check-id-token-immediately} in
  Algorithm~\ref{alg:rp-oidc-http-request}. We now distinguish between
  these two cases.

  \textit{$\mathsf{CHECK\_ID\_TOKEN}$
    was called in Line~\ref{line:rp-call-check-id-token-after-code} in
    Algorithm~\ref{alg:rp-oidc-http-response}:} When the function is
  called in this line, there must have been an HTTPS request reference
  with the string $\str{TOKEN}$
  (cf. generic web server model, Algorithm~\ref{alg:simple-send}).
  Such a reference is only created in
  Line~\ref{line:rp-start-retrieve-code} of
  Algorithm~\ref{alg:rp-token-request}. With
  Lemma~\ref{lemma:oidc-config-cache-integrity} we know that this
  HTTPS request was sent to the token endpoint (path $\str{/token}$)
  of $i$
  (because the issuer, stored in the login session record, is $i$).
  Since the token endpoint returned an id token of the form described
  above, and $i$
  is honest, there must have been a record in $i$,
  say $v$,
  with $v[\str{subject}] \equiv u$.
  In the request to the token endpoint, $r$
  must have sent a nonce $c$
  such that $v[\str{code}] \equiv c$.
  This request, as already mentioned, must have been sent in
  Line~\ref{line:rp-start-retrieve-code} of
  Algorithm~\ref{alg:rp-token-request}. This means, that there must
  have been an HTTPS request to $i$
  containing the session id $\mi{lsid}$
  as a cookie, $c$,
  and $\mi{state}$.
  Such a request can only be the request $m$
  as shown above, hence there must have been the HTTPS response
  $m_\text{redir}$
  containing the values $c$
  and $\mi{state}$.
  Recall that we have the record $v$
  as shown above in the state of $i$.
  Such a record is only created in $i$
  if $\mathsf{authenticated}_\rho^{Q''}(b, r, u, i, \mi{lsid})$.
  Therefore, $u = u'$ in this case.
  
  \textit{$\mathsf{CHECK\_ID\_TOKEN}$
    was called in Line~\ref{line:rp-call-check-id-token-immediately}
    in Algorithm~\ref{alg:rp-oidc-http-request}:} In this case, the id
  token must have been contained in $m$
  and $m_\text{redir}$
  as above. Such an id token is only sent out in $m_\text{redir}$
  by $i$
  if $\mathsf{authenticated}_\rho^{Q''}(b, r, u, i, \mi{lsid})$.
  Therefore, $u = u'$ in every case.
\end{proof}

\subsubsection{Proof of Session Integrity for Authorization} \label{sec:proof-sess-integr-authz}
To prove that every OIDC web system with web attackers is secure
w.r.t.~session integrity for authorization, we assume that there
exists an OIDC web system with web attackers which is not secure
w.r.t.~session integrity for authorization and lead this to a
contradiction.

\begin{assumption} \label{asn:property-si-authz}
  There is a OIDC web system $\oidcwebsystem^w$
  with web attackers, a run $\rho$
  of $\oidcwebsystem^w$, a processing step $Q$ in $\rho$ with
  $$Q = (S, E, N) \xrightarrow[]{}
  (S', E', N')$$  (for some $S$, $S'$, $E$, $E'$, $N$, $N'$) a browser $b$
  that is honest in $S$,
  an IdP $i\in \fAP{IdP}$,
  an identity $u$
  that is owned by $b$,
  an RP $r\in \fAP{RP}$
  that is honest in $S$,
  a nonce $\mi{lsid}$,
  with (1) $\mathsf{usedAuthorization}_\rho^Q(b, r, i, \mi{lsid})$
  and there exists no processing step $Q'$
  in $\rho$
  (before $Q$)
  such that $\mathsf{started}_\rho^{Q'}(b, r, \mi{lsid})$,
  or (2) $i$
  is honest in $S$
  and $\mathsf{actsOnUsersBehalf}_\rho^Q(b, r, u, i, \mi{lsid})$
  and there exists no processing step $Q''$
  in $\rho$
  (before $Q$)
  such that
  $\mathsf{authenticated}_\rho^{Q''}(b, r, u, i, \mi{lsid})$.
\end{assumption}

\begin{lemma}\label{lemma:si-authz-contradiction}
  Assumption~\ref{asn:property-si-authz} is a contradiction.
\end{lemma}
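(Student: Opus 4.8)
\textbf{Proof plan for Lemma~\ref{lemma:si-authz-contradiction}.}
The plan is to mirror the structure of the proof of Lemma~\ref{lemma:si-authn-contradiction} (session integrity for authentication), since the two statements are structurally parallel and the authorization version reuses much of the same machinery. First I would observe that, as for the authentication case, we work in $\oidcwebsystem^w$ (web attackers only) with the DNS server fixed, so honest-to-honest HTTP(S) traffic is confidential and cannot be spoofed, and Lemma~\ref{lemma:attacker-does-not-learn-state} applies: in a login session $\mi{lsid}$ at honest $r$ that uses honest $i$, the $\mi{state}$ value is known only to $b$, $r$, and $i$. I would then split into the two cases of Assumption~\ref{asn:property-si-authz} and derive a contradiction in each.

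For part~(1), assume $\mathsf{usedAuthorization}_\rho^Q(b, r, i, \mi{lsid})$. By Definition~\ref{def:rp-uses-access-token}, $r$ called $\mathsf{USE\_ACCESS\_TOKEN}$ with $\mi{lsid}$ in step $Q$, and the browser $b$ holds a cookie $\an{\str{sessionid}, \an{\mi{lsid},\dots}}$ for a domain of $r$. As in the authentication proof, I would trace this login-session cookie backwards: it is a secure cookie set only by $r$ in Line~\ref{line:rp-send-authorization-redir} of Algorithm~\ref{alg:rp-oidc-cont-login-flow}, in response to a request stored under the key $\str{startRequest}$, and that session entry is created only in Lines~\ref{line:rp-start-login-endpoint}ff.\ of Algorithm~\ref{alg:rp-oidc-http-request} upon receiving a POST to $\str{/startLogin}$ with a matching \str{Origin} header. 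Only $\mi{script\_rp\_index}$ issues such a request, so there is a processing step $Q'$ before $Q$ with $\mathsf{started}_\rho^{Q'}(b, r, \mi{lsid})$ --- contradicting the assumption. (Note this half does not even need $i$ honest; the cookie trace alone suffices, exactly as in the authentication case.)

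For part~(2), assume $i$ is honest in $S$ and $\mathsf{actsOnUsersBehalf}_\rho^Q(b, r, u, i, \mi{lsid})$. By Definition~\ref{def:rp-acts-on-users-behalf}, $r$ used an access token $t$ in session $\mi{lsid}$ that appears in a record $g \inPairing S(i).\str{records}$ with $g[\str{subject}] \equiv u$ and $t \inPairing g[\str{access\_tokens}]$, and again $b$ holds the $\mi{lsid}$ cookie for $r$. The key step is to show that the access token $t$ bound to identity $u$ reached $r$ only through a flow in which $b$ actually authenticated at $i$ with $u$. I would argue: $r$ obtained $t$ either directly from the redirect (fragment, response type containing \str{token}) processed in Algorithm~\ref{alg:rp-oidc-http-request}, or from the token endpoint response in Algorithm~\ref{alg:rp-oidc-http-response} after redeeming a code. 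In both sub-cases, the value that carries $t$ (or the code that yields $t$) is tied to the $\mi{state}$ of session $\mi{lsid}$ --- $r$ checks $\mi{data}[\str{state}] \equiv \mi{session}[\str{state}]$ on the redirect endpoint --- and by Lemma~\ref{lemma:attacker-does-not-learn-state} no attacker knows this $\mi{state}$, so the request delivering it to $r$ must have come from $b$ and must have been caused either by a Location redirect from $i$ or by $\mi{script\_rp\_get\_fragment}$ (which itself is only served after a redirect-endpoint request, i.e.\ again ultimately caused by a redirect from $i$), exactly as in the authentication proof. Honest $i$ emits such a redirect only in Line~\ref{line:idp-send-auth-resp} of Algorithm~\ref{alg:idp-oidc}, and reaching that line requires a POST to $\str{/auth2}$ with a valid \str{Origin} header carrying the credentials for some identity $u'$, which (since $i$ only serves $\mi{script\_idp\_form}$) means $\mathsf{authenticated}_\rho^{Q''}(b, r, u', i, \mi{lsid})$ for some earlier $Q''$. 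Finally, the record $g$ with $g[\str{subject}] \equiv u$ and $t$ among its access tokens is created by $i$ in exactly that same $\str{/auth2}$ processing step (Lines~\ref{line:idp-receive-auth2}ff.), and that step stores $\mi{identity} = u$; if $t$ instead came via the token endpoint, the code redeemed there lies in the very same record, so again $g[\str{subject}] \equiv u = u'$. Hence $u' = u$ and $\mathsf{authenticated}_\rho^{Q''}(b, r, u, i, \mi{lsid})$ holds --- a contradiction.

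\textbf{Main obstacle.} The delicate part is part~(2): ruling out that $r$ acts on $u$'s behalf while $b$ never authenticated as $u$. This requires carefully showing that the access token $t$ (or the code redeemed for it), once bound in $i$'s record to identity $u$, cannot be rerouted into session $\mi{lsid}$ of $r$ by any path other than a genuine redirect from $i$ to $r$ following $b$'s authentication. The argument hinges on the $\mi{state}$-confidentiality lemma and on a case analysis over how $t$ enters $r$ (fragment vs.\ token-endpoint), and one must be careful that $\mi{script\_rp\_get\_fragment}$ does not open an alternative leak path --- this is where reusing the bookkeeping from the authentication proof (the $\str{redirectEpRequest}$/$\str{startRequest}$ session entries and the restrictive Referrer Policy) is essential. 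One subtlety specific to authorization, absent in authentication, is that in hybrid mode $i$ issues \emph{two} access tokens into one record; I would note that both live in the same record with the same $\str{subject}$, so the identity-binding argument is unaffected regardless of which one $r$ ends up using.
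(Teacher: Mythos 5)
Your proposal is correct and follows essentially the same route as the paper's proof: part (1) via tracing the $\mi{lsid}$ cookie back to the $\str{/startLogin}$ request caused by $\mi{script\_rp\_index}$, and part (2) by reusing the session-integrity-for-authentication machinery (state confidentiality via Lemma~\ref{lemma:attacker-does-not-learn-state}, the redirect $m_\text{redir}$ from honest $i$ emitted only at $\str{/auth2}$) together with a case distinction over the two call sites of $\mathsf{USE\_ACCESS\_TOKEN}$ (fragment vs.\ token endpoint) to bind $u' = u$ through the IdP's record containing $t$ (resp.\ the code). The extra remark on the two access tokens in hybrid mode is consistent with the model and does not change the argument.
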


\begin{proof}
  \textbf{(1)} We have that
  $\mathsf{usedAuthorization}_\rho^Q(b, r, i, \mi{lsid})$.
  With Definition~\ref{def:rp-uses-access-token} we have that $r$
  sent out the access token belonging to $\mi{lsid}$
  to $i$.
  This can only happen when the function $\mathsf{USE\_ACCESS\_TOKEN}$
  (Algorithm~\ref{alg:rp-use-access-token}) was called with
  $\mi{lsid}$.
  This function is called in
  Line~\ref{line:rp-call-use-access-token-1} of
  Algorithm~\ref{alg:rp-oidc-http-request} and in
  Line~\ref{line:rp-call-use-access-token-2} of
  Algorithm~\ref{alg:rp-oidc-http-response}. 

  In both cases, there must have been a request, say $m$,
  to $r$
  containing a cookie with the session id $\mi{lsid}$.
  In the former case, this is the request that is processed in the
  same processing step as calling the function
  $\mathsf{USE\_ACCESS\_TOKEN}$.
  In the latter case, there must have been an HTTPS request reference
  with the string $\str{TOKEN}$
  (cf. generic web server model, Algorithm~\ref{alg:simple-send}).
  Such a reference is only created in
  Line~\ref{line:rp-start-retrieve-code} of
  Algorithm~\ref{alg:rp-token-request}. To get to this point in the
  algorithm, a request as described above must have been received.
  Since we have web attackers (and no network attacker), it is easy to
  see that this request must have been sent by $b$.
  With the same reasoning as in the proof for session integrity for
  authentication, we now have that there exists a processing step $Q'$
  in $\rho$
  (before $Q$)
  such that $\mathsf{started}_\rho^{Q'}(b, r, \mi{lsid})$.

  \textbf{(2)} We have that $i$
  is honest and
  $\mathsf{actsOnUsersBehalf}_\rho^Q(b, r, u, i, \mi{lsid})$.
  From (1) we know that $r$
  must have received a request $m$
  from $b$
  containing a cookie with the session id $\mi{lsid}$.
  Therefore, we know that $m_\text{redir}$
  exists just as in the proof for
  Lemma~\ref{lemma:si-authn-contradiction}~(2). As in that proof, we
  have that
  $\mathsf{authenticated}_\rho^{Q''}(b, r, u', i, \mi{lsid})$
  for some identity $u'$. We therefore need to show that $u = u'$.

  With $\mathsf{actsOnUsersBehalf}_\rho^Q(b, r, u, i, \mi{lsid})$,
  we know that $r$
  must have called the function $\mathsf{USE\_ACCESS\_TOKEN}$
  with some access token $t$
  (Algorithm~\ref{alg:rp-use-access-token}). We further have that
  there is a term $g$
  such that $g \inPairing S(i).\str{records}$
  with $t \inPairing g[\str{access\_tokens}]$
  and $g[\str{subject}] \equiv u$.

  $\mathsf{USE\_ACCESS\_TOKEN}$
  can be called in Line~\ref{line:rp-call-use-access-token-2} in
  Algorithm~\ref{alg:rp-oidc-http-response} and in
  Line~\ref{line:rp-call-use-access-token-1} in
  Algorithm~\ref{alg:rp-oidc-http-request}. We now distinguish between
  these two cases.

  \textit{$\mathsf{USE\_ACCESS\_TOKEN}$
    was called in Line~\ref{line:rp-call-use-access-token-2} in
    Algorithm~\ref{alg:rp-oidc-http-response}:} When the function is
  called in this line, there must have been an HTTPS request reference
  with the string $\str{TOKEN}$
  (cf. generic web server model, Algorithm~\ref{alg:simple-send}).
  Such a reference is only created in
  Line~\ref{line:rp-start-retrieve-code} of
  Algorithm~\ref{alg:rp-token-request}. With
  Lemma~\ref{lemma:oidc-config-cache-integrity} we know that this
  HTTPS request was sent to the token endpoint (path $\str{/token}$)
  of $i$
  (because the issuer, stored in the login session record, is $i$).
  Since the token endpoint returned the access token $t$, and $i$
  is honest, there must have been a record in $i$,
  say $v$,
  with $v[\str{subject}] \equiv u$.
  In the request to the token endpoint, $r$
  must have sent a nonce $c$
  such that $v[\str{code}] \equiv c$.
  This request, as already mentioned, must have been sent in
  Line~\ref{line:rp-start-retrieve-code} of
  Algorithm~\ref{alg:rp-token-request}. This means, that there must
  have been an HTTPS request to $i$
  containing the session id $\mi{lsid}$
  as a cookie, $c$,
  and $\mi{state}$.
  Such a request can only be the request $m$
  as shown above, hence there must have been the HTTPS response
  $m_\text{redir}$
  containing the values $c$
  and $\mi{state}$.
  Recall that we have the record $v$
  as shown above in the state of $i$.
  Such a record is only created in $i$
  if $\mathsf{authenticated}_\rho^{Q''}(b, r, u, i, \mi{lsid})$.
  Therefore, $u = u'$ in this case.
  
  \textit{$\mathsf{USE\_ACCESS\_TOKEN}$
    was called in Line~\ref{line:rp-call-use-access-token-1} in
    Algorithm~\ref{alg:rp-oidc-http-request}:} In this case, the
  access token $t$
  must have been contained in $m$
  and $m_\text{redir}$
  as above. This access token is only sent out in $m_\text{redir}$
  by $i$
  if $\mathsf{authenticated}_\rho^{Q''}(b, r, u, i, \mi{lsid})$.
  Therefore, $u = u'$ in every case.
  
\end{proof}

\subsection{Proof of Theorem~\ref{thm:theorem-1}}
With
Lemmas~\ref{lemma:authn-contradiction},~\ref{lemma:authz-contradiction},~\ref{lemma:si-authn-contradiction}, and~\ref{lemma:si-authz-contradiction},
Theorem~\ref{thm:theorem-1} follows immediately.

\end{document}
